\def \({\left(}
\def \){\right)}
\def \[{\left[}
\def \]{\right]}
\newcommand{\tbf}[1]{{\textbf{#1}}}
\newcommand{\defeq}{\vcentcolon=}
\newcommand{\bW}{{\textbf{W}}}
\newcommand{\bZ}{{\textbf{Z}}}
\newcommand{\bw}{{\textbf{w}}}
\newcommand{\bX}{{\textbf{X}}}
\newcommand{\bx}{{\textbf{x}}}
\newcommand{\bz}{{\textbf{z}}}
\newcommand{\bs}{{\textbf{s}}}
\newcommand{\bS}{{\textbf{S}}}
\newcommand{\la}{\langle}
\newcommand{\ra}{\rangle}
\newcommand{\be}{\begin{equation}}
\newcommand{\ee}{\end{equation}}
\newcommand{\bea}{\begin{eqnarray}}
\newcommand{\eea}{\end{eqnarray}}
\newcommand*{\xhat}[1]{#1\kern-0.50em\hat{\phantom{#1}}}
\begin{document}
\title{Rank-one matrix estimation: analysis of algorithmic and information theoretic limits by the spatial coupling method}
\author{\name Jean Barbier$^{\dagger*}$, Mohamad Dia$^{\dagger}$ and Nicolas Macris$^{\dagger}$ \email firstname.lastname@epfl.ch \\
\addr $\dagger$ Laboratoire de Th\'eorie des Communications, Facult\'e Informatique et Communications,\\
Ecole Polytechnique F\'ed\'erale de Lausanne, CH-1015 Lausanne, Switzerland.\\
\addr $*$ International Center for Theoretical Physics, 11 Strada Costiera, I - 34151, Trieste, Italy.
\AND
\name Florent Krzakala \email florent.krzakala@ens.fr \\
\addr Laboratoire de Physique Statistique, CNRS, PSL Universit\'es et
Ecole Normale Sup\'erieure, \\Sorbonne Universit\'es et Universit\'e Pierre \& Marie Curie, 75005, Paris, France.
\AND
\name Lenka Zdeborov\'a \email lenka.zdeborova@gmail.com \\
\addr Institut de Physique Th\'éorique, CNRS, CEA, Universit\'e Paris-Saclay, \\ F-91191, Gif-sur-Yvette, France.
}
\editor{}

\maketitle
\begin{abstract}
  Factorizing low-rank matrices is a problem with many applications in
  machine learning and statistics, ranging from sparse PCA to
  community detection and sub-matrix localization. For probabilistic
  models in the Bayes optimal setting, general expressions for the
  mutual information have been proposed using powerful heuristic statistical
  physics computations via the replica and cavity methods,
  and proven in few specific cases by a variety of methods. Here, we use the spatial coupling methodology 
  developed in the framework of error correcting codes, to rigorously derive
  the mutual information for the symmetric rank-one case.  We characterize
  the detectability phase transitions in a large set of estimation
  problems, where we show that there exists a gap between what currently
  known polynomial algorithms (in particular spectral methods and
  approximate message-passing) can do and what is expected information
  theoretically. Moreover, we show that the computational gap vanishes for the proposed spatially coupled model, a promising feature with many possible applications. Our proof technique has an interest on its own and 
  exploits three essential ingredients: the interpolation
  method first introduced in statistical physics, the analysis of
  approximate message-passing algorithms first introduced in compressive sensing, and the theory of threshold
  saturation for spatially coupled systems first developed in coding theory. Our approach is very generic and can be
  applied to many other open problems in statistical estimation where
  heuristic statistical physics predictions are available.
\end{abstract}
\begin{keywords}
Sparse PCA, Wigner spike model, community detection, 
low-rank matrix estimation, spatial coupling, replica and cavity methods, interpolation method, approximate message-passing
\end{keywords}
\tableofcontents
\section{Introduction}
We consider the following probabilistic rank-one matrix estimation (or rank-one matrix factorization) problem: one has access to noisy observations
${\bf w} = (w_{ij})_{i,j=1}^n\in \mathbb{R}^{n,n}$ of the pair-wise product of the
components of a vector ${\bf s}=(s_i)_{i=1}^n \in \mathbb{R}^n$
where the components are i.i.d random variables distributed according to $S_i\sim P_0$,
$i=1,\dots, n$. The matrix elements of ${\bf w}$ are observed through a noisy
element-wise (possibly non-linear) output probabilistic channel
$P_{\rm out}(w_{ij}|s_is_j)$, with $i,j=1,\dots,n$. The goal 
is to estimate the vector ${\bf s}$ from ${\bf w}$, up to a global flip of sign in general, assuming that both distributions $P_0$ and $P_{\rm out}$ are 
known. We assume the noise to be symmetric so that $w_{ij}=w_{ji}$. There are many important problems
in statistics and machine learning that can be expressed in this way, among which:
\begin{itemize}
\item \emph{Sparse PCA}: Sparse principal component analysis (PCA) is a dimensionality reduction technique where one looks for a 
low-rank representation of a data matrix with sparsity constraints [\cite{zou2006sparse}]. The following is the simplest probabilistic 
symmetric version where one estimates a rank-one matrix. 
Consider a sparse random vector $\tbf{S}$, for instance
 drawn from a Gauss-Bernoulli distribution, and take an additive white Gaussian noise (AWGN) channel
 where the observations are 
  $W_{ij} = S_iS_j/\sqrt{n} + \Delta Z_{ij}$ whith $Z_{ij}\sim\mathcal{N}(0,1)$. Here\footnote{In this paper 
 $\mathcal{N}(x|m, \sigma^2) = (2\pi\sigma^2)^{-1/2}\exp(-(x-m)^2/2\sigma^2))$} 
  $P_{\rm out}(w_{ij}|s_is_j) = \mathcal{N}(w_{ij}|s_is_j/\sqrt{n}, \Delta)$.
\item \emph{Spiked Wigner model}: In this model the noise is still Gaussian, but the vector $\tbf{S}$ is assumed to be a Bernoulli random
  vector with i.i.d components $S_i \sim {\rm Ber}(\rho)$. This formulation is a particular
  case of the spiked covariance model in statistics introduced
  by [\cite{johnstone2004sparse,johnstone2012consistency}]. It has also
  attracted a lot of attention in the framework of random matrix theory
   (see for instance [\cite{baik2005phase}] and references therein).
\item \emph{Community detection}: In its simplest setting, one uses a Rademacher
  vector $\tbf{S}$ where each variable take values $S_i\in \{-1,1\}$
  depending on the ``community'' it belongs to. The observation model
  then introduces missing information and errors such that, for
  instance, $P_{\rm out}(w_{ij}|s_is_j)=p_1 \delta(w_{ij} - s_is_j) + p_2 \delta(w_{ij} +s_is_j) + (1-p_1-p_2) \delta(w_{ij})$, 
  where $\delta(\cdot)$ is the Delta dirac function. 
  These models have recently attracted a lot of attention both in statistics and machine learning contexts
  (see e.g. [\cite{bickel2009nonparametric,decelle2011asymptotic,karrer2011stochastic,saade2014spectral,massoulie2014community,ricci2016performance}]).
\item \emph{Sub-matrix localization}: This is the problem of finding a
  submatrix with an elevated mean in a large noisy matrix, as in
  [\cite{hajek2015submatrix,chen2014statistical}].
\item \emph{Matrix completion}: A last example is the matrix completion
  problem where a part of the information (the matrix elements) is hidden, while the rest is
  given with noise. For instance, a classical model is $P_{\rm out}(w_{ij}|s_is_j) = p\delta(w_{ij}) + (1-p) \mathcal{N}(w_{ij}|s_is_j,\Delta)$. Such problems have been extensively discussed over the last decades, in particular because of their connection to collaborative filtering (see for instance [\cite{candes2009exact,cai2010singular,keshavan2009matrix,saade2015matrix}]).
\end{itemize}

Here we shall consider the probabilistic formulation of these
problems and focus on estimation in the mean square
error (MSE) sense. We rigorously derive an 
explicit formula for the mutual information 
in the asymptotic limit, and for the
information theoretic minimal mean square error (MMSE).
Our results imply that in a large region of parameters, the
posterior expectation of the underlying signal, a quantity often
assumed intractable to compute, can be obtained using a
polynomial-time scheme via the approximate message-passing (AMP) 
framework [\cite{rangan2012iterative,matsushita2013low,6875223,deshpande2015asymptotic,lesieur2015phase}]. 
We also demonstrate the existence of a region where no
{\it known} tractable algorithm is able to find a solution
correlated with the ground truth. Nevertheless, we prove 
explicitly that it is information theoretically possible to do so (even in this region), and 
discuss the implications in terms of computational complexity.

The crux of our analysis rests on an "auxiliary" spatially coupled (SC) system. The hallmark of SC models is that one can tune them so that 
the gap between the algorithmic and information theoretic limits is eliminated, while at the same time the mutual information is maintained unchanged for the coupled and original models. Roughly speaking, this means that it is possible to algorithmically compute 
the information theoretic limit of the original model because a
suitable algorithm is optimal on the coupled system. 

Our proof technique has an interest by its own as it combines recent rigorous results in coding theory
along the study of capacity-achieving SC
codes [\cite{hassani2010coupled,kudekar2011threshold,yedla2014maxwell,
giurgiu2016,BDM_trans2017,DiaThesis}]
with other progress coming from developments in mathematical
physics of spin glass
theory [\cite{guerra2005introduction}].
Moreover, our proof exploits the ``threshold saturation'' phenomenon of the AMP algorithm and uses spatial coupling as a proof technique. From this point of view, we believe that the
theorem proven in this paper is relevant in a broader context going
beyond low-rank matrix estimation and can be applied for a wide range of inference problems where message-passing algorithm and spatial coupling can be applied. Furthermore, our work provides important results on the exact formula for the MMSE and on the optimality of the AMP algorithm. 

Hundreds of papers have been
published in statistics, machine learning or information theory using
the non-rigorous statistical physics approach. We believe that our
result helps setting a rigorous foundation of a broad line of
work. While we focus on rank-one symmetric matrix estimation, our
proof technique is readily extendable to more generic low-rank
symmetric matrix or low-rank symmetric tensor estimation. We also
believe that it can be extended to other problems of interest in
machine learning and signal processing. It has already been extended to linear estimation and compressed sensing [\cite{BDMK_alerton2016,BMDK_trans2017}].


We conclude this introduction by giving a few pointers to the recent literature  
on rigorous results.
For rank-one symmetric matrix estimation problems, AMP has been
introduced by [\cite{rangan2012iterative}], who also computed the
state evolution formula to analyze its performance, generalizing
techniques developed by [\cite{BayatiMontanari10}] and
[\cite{Montanari-Javanmard}]. State evolution was further studied by
[\cite{6875223}] and [\cite{deshpande2015asymptotic}]. In
[\cite{lesieur2015mmse,lesieur2015phase}], the generalization to
larger rank was also considered.
The mutual information was already computed in the special case when
$S_i\!=\!\pm 1\!\sim\! {\rm Ber}(1/2)$ in
[\cite{KoradaMacris}] where an equivalent spin glass model was analyzed.
The results of [\cite{KoradaMacris}] were first generalized in [\cite{krzakala2016mutual}] 
who, notably, obtained a generic matching upper
bound. The same formula was also rigorously computed following
the study of AMP in [\cite{6875223}] for spike models (provided,
however, that the signal was not {\it too} sparse) and in
[\cite{deshpande2015asymptotic}] for strictly symmetric community
detection. The general formula proposed by [\cite{lesieur2015mmse}] for the
conditional entropy and the MMSE on the basis of the heuristic cavity
method from statistical physics was first demonstrated in full
generality by the current authors in [\cite{barbierNIPS2016}]. 
This paper represents an extended version of [\cite{barbierNIPS2016}] that includes all the proofs and derivations along with more detailed discussions.
All preexisting proofs could not reach the more
interesting regime where a gap between the algorithmic and information
theoretic performances appears (i.e. in the presence of ``first order'' phase transition), leaving a gap with the statistical physics conjectured formula. Following the work of [\cite{barbierNIPS2016}], the replica formula for rank-one symmetric matrix estimation has been proven again several times using totally different techniques that involve the concentration's proof of the overlaps [\cite{lelarge2017,BM_stochastic2017}]. Our proof strategy does not require any concentration and it uses AMP and spatial coupling as proof techniques. Hence, our result has more practical implications in terms of proving the range of optimality of the AMP algorithm for both the underlying (uncoupled) and spatially coupled models.

This paper is organized as follows: the problem statement and the main results are given in Section \ref{sec:settings_results} along with a sketch of the proof, two applications for symmetric rank-one matrix estimation are presented in Section \ref{sec:examples}, the threshold saturation phenomenon and the relation between the underlying and spatially coupled models are proven in Section \ref{sec:thresh-sat} and Section \ref{sec:subadditivitystyle} respectively, the proof of the main results follows in Section \ref{sec:proof_theorem} and Section \ref{sec:proof_coro}.

A word about notations: in this paper, we use capital letters for random variables, and small 
letters for fixed realizations. Matrices and vectors are bold while scalars are not. 
Components of vectors or matrices are identified by the presence of lower indices.

\section{Setting and main results}\label{sec:settings_results}

\subsection{Basic underlying model}

A standard and natural setting is to consider an additive white gausian noise (AWGN) channel with variance $\Delta$ assumed to be known. The model reads
\begin{align} \label{eq:mainProblem}
w_{ij} = \frac{s_i s_j}{\sqrt{n}} + \sqrt{\Delta} z_{ij},
\end{align}
where ${\bf z} = (z_{ij})_{i,j=1}^n$ is a symmetric matrix with
$Z_{ij} \sim \mathcal{N}(0,1)$, $1\leq i\leq j\leq n$, and ${\bf s} = (s_i)_{i=1}^n$ has i.i.d components
$S_i\sim P_0$. We set $\mathbb{E}[S^2] = v$. Precise hypothesis on $P_0$ are given later.

Perhaps surprisingly, it turns out that the study of this Gaussian setting is 
sufficient to completely characterize all the problems discussed in the introduction, even 
if we are dealing with more complicated (noisy) observation models. This 
is made possible by a theorem of channel universality. Essentially, the theorem states that for any output channel $P_{\rm out}(w|y)$
such that at $y= 0$ the function $y\mapsto \log P_{\rm out}(w|y)$ is three times differentiable
with bounded second and third derivatives, then the
mutual information satisfies
\begin{equation}
  I(\tbf{S};\tbf{W}) = I (\tbf{S}\tbf{S}^{\intercal}; \tbf{S}\tbf{S}^{\intercal} + \sqrt{\Delta} \; \tbf{Z}) + O(\sqrt{n} ), 
\end{equation}
where $\Delta$ is the
inverse Fisher information (evaluated at $y = 0$) of the output
channel
\begin{equation}
  \Delta^{-1} \defeq \int dw P_{\rm out}(w|0) \left( \frac{\partial \log P_{\rm out}(w|y)}{\partial y}\Big|_{y=0}\right)^2\, . 
  \label{FisherInformation}
\end{equation}
This means that the mutual information per variable $ I(\tbf{S};\tbf{W})/n$ is asymptotically equal the mutual information per variable of an AWGN channel. Informally, it implies 
that we only have to compute the mutual
information for an ``effective'' Gaussian channel to take care of a wide range of
problems. The statement was conjectured in [\cite{lesieur2015mmse}] and can be proven by an 
application of the Lindeberg principle [\cite{deshpande2015asymptotic}], [\cite{krzakala2016mutual}].

\subsection{AMP algorithm and state evolution}\label{subsec:AMP_SE}

AMP has been applied for the rank-one symmetric matrix estimation problems by [\cite{rangan2012iterative}], who also computed the
state evolution formula to analyze its performance, generalizing
techniques developed by [\cite{BayatiMontanari10}] and
[\cite{Montanari-Javanmard}]. State evolution was further studied by
[\cite{6875223}] and [\cite{deshpande2015asymptotic}]. AMP is an iterative algorithm that provides an estimate $\hat{\bf s}^{(t)}(\bw)$, at each 
iteration $t \in \mathbb{N}$, of the vector $\bf s$. It turns out that tracking the asymptotic vector and matrix MSE of the AMP algorithm
is equivalent to running a simple recursion called {\it state evolution} (SE).  

The AMP algorithm reads
\begin{align}
\begin{cases}
\hat{s}_j^{(t)} = \eta_t((\bw \hat{\bs}^{(t-1)})_j - b^{(t-1)}\hat{\bs}^{(t-2)}_j), 
\\
b^{(t)}= \frac{1}{n}\sum_{i=1}^n \eta_t^\prime((\bw \hat{\bs}^{(t-1)})_i - b^{(t-1)}\hat{s}^{(t-2)}_i)
\end{cases}
\end{align}
for $j=1, \cdots, n$, where $\eta_t(y)$ is called the denoiser and $\eta_t^\prime(y)$ is the derivative w.r.t $y$.
The {\it denoiser} is the MMSE estimate associated to an ``equivalent scalar denoising problem''
\begin{align}\label{eq:scalarDen}
y = s + \Sigma(E) z,\qquad \Sigma(E)^{-2} \defeq \frac{v-E}{\Delta}\,.
\end{align}
with
$Z\sim \mathcal{N}(0,1)$ and
\begin{align}
 \eta(y) = \mathbb{E}[X|Y=y] = \frac{\int dx x P_0(x)e^{-\frac{(x - Y)^2}{2\Sigma(E)^{2}}}}{\int dx P_0(x)
e^{-\frac{(x - Y)^2}{2\Sigma(E)^{2}}}}\, ,
\end{align}
where $E$ is updated at each time instance $t$ according to the recursion \eqref{eq:state-evolution}.

Natural performance measures are the ``vector'' and ``matrix'' MSE's of the AMP estimator defined below.
\begin{definition}[Vector and matrix MSE of AMP]\label{def:mse_amp}
The vector and matrix MSE of the AMP estimator $\xhat{\bS}^{(t)}(\bW)$ at iteration $t$ are defined respectively  as follows
\begin{align}
{\rm Vmse}_{n, \rm AMP}^{(t)}(\Delta^{-1}) \defeq \frac{1}{n}\mathbb{E}_{\tbf{S}, \bW}\bigl[\| \xhat{\tbf{S}}^{(t)} - \tbf{S}\|_{2}^2\bigr],
\end{align}
\begin{align}
{\rm Mmse}_{n, \rm AMP}^{(t)}(\Delta^{-1}) \defeq \frac{1}{n^2}\mathbb{E}_{\tbf{S}, \bW}\bigl[\|\xhat{\tbf{S}}^{(t)} \xhat{\tbf{S}}^{{(t)}^\intercal} - \tbf{S} \tbf{S}^\intercal\|_{\rm F}^2\bigr],
\end{align}
where $\|A\|_{\rm F}^2 = \sum_{i,j} A_{ij}^2$ stands for the Frobenius norm of a matrix $A$.
\end{definition}

A remarkable fact that follows from a general theorem of [\cite{BayatiMontanari10}] (see [\cite{deshpande2015asymptotic}] for its use in the matrix case) is that the state evolution sequence tracks these two MSE's and thus allows to assess the performance of AMP. 
Consider the scalar denoising problem \eqref{eq:scalarDen}.
Hence, the (scalar) mmse function associated to this problem reads
\begin{align}\label{eq:SE_post}
 {\rm mmse}(\Sigma(E)^{-2}) \defeq \mathbb{E}_{S, Y}[(S - \mathbb{E}[X \vert Y])^2]\, .
\end{align}
The {\it state evolution sequence} $E^{(t)}$, $t\in \mathbb{N}$ is defined as 
\begin{align}\label{eq:state-evolution}
 E^{(t+1)} = {\rm mmse}(\Sigma(E^{(t)})^{-2}), \qquad E^{(0)} = v.
\end{align}
Since the ${\rm mmse}$ function is monotone decreasing (its argument has the dimension of a signal to noise ratio) it is easy to see that that $E^{(t)}$ is a decreasing non-negative sequence. Thus $\lim_{t\to +\infty} E^{(t)} \defeq E^{(\infty)}$ exists. 
One of the basic results of [\cite{BayatiMontanari10}], [\cite{deshpande2015asymptotic}] is 
\begin{align}\label{eq:mseampmatriciel}
\lim_{n\to +\infty} {\rm Vmse}_{n, \rm AMP}^{(t)}(\Delta^{-1}) = E^{(t)}, \quad {\rm and} \quad
\lim_{n\to +\infty} {\rm Mmse}_{n, \rm AMP}^{(t)}(\Delta^{-1}) = v^2 - (v- E^{(t)})^2.
\end{align}
We note that 
the results in [\cite{BayatiMontanari10}], [\cite{deshpande2015asymptotic}] are stronger in 
the sense that the non-averaged algorithmic mean square errors are tracked by state evolution with probability one.

Note that when $\mathbb{E}[S]=0$ then $v$ is an unstable fixed point, and as such, state evolution ``does not start'', 
in other words we have $E^{(t)} =v$. While this is not really a problem when one runs AMP in practice, for analysis 
purposes one can circumvent this problem by slightly biasing $P_0$ and remove the bias at the end of the analysis. For simplicity, we always assume that $P_0$ is biased so that $\mathbb{E}[S]$ is not zero.

{\bf Assumption 1:} 
In this work we assume that $P_0$ is discrete with bounded support. Moreover, we assume that $P_0$ is biased such that 
$\mathbb{E}[S]$ is non-zero.

A fundamental quantity computed by state evolution is the algorithmic threshold.
\begin{definition}[AMP threshold]\label{amp-thresh}
For $\Delta>0$ small enough, the fixed point equation corresponding to \eqref{eq:state-evolution} has a unique solution for all noise values in $]0, \Delta[$. We define $\Delta_{\rm AMP}$ as the supremum of all such $\Delta$.
\end{definition}

\subsection{Spatially coupled model}\label{subsec:SC}

The present spatially coupled construction is similar to the
one used for the coupled Curie-Weiss model [\cite{hassani2010coupled}] and is also similar to mean field spin glass systems 
introduced in [\cite{FranzToninelli,SilvioFranz}]. We consider a chain (or a ring) of underlying systems positioned at
$\mu\!\in\! \{0,\dots, L\}$ and coupled to neighboring blocks
$\{\mu\! -\! w, \dots, \mu\! +\!w\}$. Positions $\mu$ are taken modulo $L\!+\!1$ and the integer $w\!\in\!\{0,\ldots,L/2\}$ equals the size of the {\it coupling window}. The coupled model is
\begin{align} \label{eq:def_coupledSyst}
w_{i_\mu j_\nu} = s_{i_\mu} s_{j_\nu}\sqrt{\frac{\Lambda_{\mu\nu}}{n}} + z_{i_\mu j_\nu}\sqrt{\Delta} ,
\end{align}
where the index $i_\mu \!\in\! \{1, \dots, n\}$ (resp. $j_\nu$) belongs to the block $\mu$ (resp. $\nu$) along the ring, $\mathbf{\Lambda}$ is an $(L\!+\!1)\!\times\! (L\!+\!1)$ matrix which describes the strength of the coupling between blocks, and $Z_{i_\mu j_\nu}\!\sim \!\mathcal{N}(0,1)$ are i.i.d. For the analysis to work, the matrix elements have to be chosen appropriately. We assume that: 
\begin{itemize}
\item[i)] $\mathbf{\Lambda}$ is a doubly stochastic matrix; 
\item[ii)] $\Lambda_{\mu\nu}$ depends on
$\vert \mu \!- \!\nu\vert$; 
\item[iii)] $\Lambda_{\mu\nu}$ is not vanishing for $\vert \mu \!-\!\nu| \leq w$ and vanishes for $\vert \mu \!-\!\nu|\! > \!w$; 
\item[iv)] $\mathbf{\Lambda}$ is {\it smooth} in the sense $\vert \Lambda_{\mu\nu}\! -\! \Lambda_{\mu+1 \nu} \vert\! =\! \mathcal{O}(w^{-2})$ and $\Lambda^{*}\defeq\sup_{\mu, \nu}\Lambda_{\mu\nu} = \mathcal{O}(w^{-1})$; 
\item[v)] $\mathbf{\Lambda}$ has a non-negative Fourier transform. 
\end{itemize}
All these conditions can easily be met, the simplest example being a triangle of base $2w\!+\!1$ and height $1/(w\!+\!1)$, more precisely:
\begin{align}\label{triangle}
\Lambda_{\mu\nu} = 
\begin{cases}
\frac{1}{w+1}\biggl(1 - \frac{\vert\mu - \nu\vert}{w+1}\biggr), \qquad \vert\mu-\nu\vert \leq w\\
0, \qquad \vert\mu-\nu\vert > w
\end{cases}
\end{align}
We will always denote by $\mathcal{S}_\mu \defeq \{\nu \,|\, \Lambda_{\mu\nu} \neq 0 \}$ the set of $2w+1$ blocks coupled to block $\mu$.

The construction of the coupled system is completed by introducing a {\it seed} in the ring: we assume 
perfect knowledge of the signal components $\{s_{i_\mu}\}$ for $\mu\! \in \!\mathcal{B}\!\defeq \!\{-w\!-\!1, \dots, w\!-\!1\} \mod L\!+\!1$. 
This seed is what allows to close the gap between the algorithmic and information theoretic limits and therefore plays a crucial role. 
We sometimes refer to the seed as the {\it pinning construction}. 
Note that the seed can also be viewed as an ``opening'' of the chain with fixed boundary conditions.   

AMP has been applied for the rank-one symmetric matrix estimation problems by [\cite{rangan2012iterative}], who also computed the
state evolution formula to analyze its performance, generalizing
techniques developed by [\cite{BayatiMontanari10}] and
[\cite{Montanari-Javanmard}]. State evolution was further studied by
[\cite{6875223}] and [\cite{deshpande2015asymptotic}].

The AMP algorithm and the state evolution recursion [\cite{6875223,deshpande2015asymptotic}] can be easily adapted to the spatially coupled model as done in Section \ref{sec:thresh-sat}. The proof that the state evolution for the symmetric rank-one matrix estimation problem tracks the AMP on a spatially coupled model is an extension of the analysis done in [\cite{6875223,deshpande2015asymptotic}] for the uncoupled model.
The full re-derivation of such result would be lengthy and beyond the scope of our analysis. We thus assume that state evolution tracks the AMP performance for our coupled problem. However, we believe that the proof will be similar to the one done for the spatially coupled compressed sensing problem [\cite{Montanari-Javanmard}]. This assumption is vindicated numerically.

{\bf Assumption 2:} We consider the spatially coupled model \eqref{eq:def_coupledSyst} with $P_0$ satisfying Assumption 1. We assume that state evolution tracks the AMP algorithm for this model.

\subsection{Main results: basic underlying model}\label{subsec:results}

One of our central results is a proof of the expression for the asymptotic mutual information per variable via the so-called {\em replica symmetric (RS) potential}. This is the function $E\in [0, v] \mapsto i_{\rm RS}(E; \Delta)\in \mathbb{R}$ defined as
\begin{align}\label{eq:potentialfunction}
i_{\rm RS}(E;\Delta) \defeq \frac{(v-E)^2+v^2}{4\Delta} - \mathbb{E}_{S, Z}\biggl[\ln\biggl(\int dx\,P_0(x)
e^{-\frac{x^2}{2\Sigma(E)^{2}} 
+ x\bigl(\frac{S}{\Sigma(E)^{2}} + \frac{Z}{\Sigma(E)}\bigr)}\biggr)\biggr] \, ,
\end{align}
with $Z\!\sim\! \mathcal{N}(0,1)$, $S\!\sim\! P_0$. 
Most of our results will assume that $P_0$ is a discrete distribution over 
a finite bounded real alphabet $P_0(s) \!=\! \sum_{\alpha=1}^\nu p_\alpha \delta(s\! - \! a_\alpha)$ (see Assumption 1). 
Thus the only continuous integral in \eqref{eq:potentialfunction} is the Gaussian over $Z$. The extension to mixtures of continuous and discrete signals 
can be obtained by approximation methods not discussed in this paper (see e.g. the methods in [\cite{lelarge2017}]).

It turns out that both the information theoretic and algorithmic AMP thresholds are determined by the set of stationary points of \eqref{eq:potentialfunction} (w.r.t $E$).
It is possible to show that for all $\Delta > 0$ there always exist at 
least one stationary minimum.\footnote{
Note $E\!=\!0$ is never a stationary 
point (except for the trivial case of $P_0$ a single Dirac mass which we exclude from the discussion) and $E = v$ is stationary only if $\mathbb{E}[S] = 0$.} 
In this contribution 
we suppose that at most three stationary points exist, corresponding to situations with at most one phase transition as depicted in  Fig. \ref{fig:I-AMP} (see Assumption 3 below). 
Situations 
with multiple transitions could also be covered by our techniques.

{\bf Assumption 3:} 
We assume that $P_0$ is such that there exist at most three stationary points for the potential \eqref{eq:potentialfunction}.

To summarize, our main assumptions in this paper are:

\begin{enumerate}[label=(A\arabic*),noitemsep]
	\item The prior $P_0$ is discrete with bounded support. Moreover, we assume that $P_0$ is biased such that $\mathbb{E}[S]$ is non-zero. \label{ass:h1}
	\item We consider the spatially coupled model \eqref{eq:def_coupledSyst} with $P_0$ satisfying Assumption \ref{ass:h1}. We assume that state evolution tracks the AMP algorithm for this model. \label{ass:h2}
	\item  We assume that $P_0$ is such that there exist at most three stationary points for the potential \eqref{eq:potentialfunction}. \label{ass:h3}
\end{enumerate}

\begin{remark}
An important property of the replica symmetric potential is that the stationary points satisfy the state evolution fixed point equation. In other words 
$\partial i_{\rm RS}/\partial E = 0$ implies $E= {\rm mmse}(\Sigma(E)^{-2})$ and conversely. Moreover it is not difficult to see that the 
$\Delta_{\rm AMP}$ is given by the smallest solution of 
$\partial i_{\rm RS}/\partial E\!=\! \partial^2 i_{\rm RS}/\partial E^2 \!=\! 0$.
In other words the AMP threshold is the ``first'' horizontal inflexion point appearing in $i_{\rm RS}(E;\Delta)$ when $\Delta$ increases
from $0$ to $+\infty$.
\end{remark}

One of the main results of this paper is formulated in the following theorem which provides a proof of the conjectured single-letter formula for the asymptotic mutual information per variable.
%
\begin{theorem}[RS formula for the mutual information]\label{thm1}
Fix $\Delta > 0$ and let $P_0$ satisfy Assumptions \ref{ass:h1}-\ref{ass:h3}. Then 
\begin{align}\label{rsform}
 \lim_{n\to \infty} \frac{1}{n}I(\tbf{S} ; \tbf{W}) = \min_{E\in [0, v]}i_{\rm RS}(E;\Delta).
\end{align}
\end{theorem}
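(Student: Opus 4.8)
Our target is the two matching inequalities
$\limsup_n \tfrac1n I(\tbf{S};\tbf{W})\le \min_{E\in[0,v]} i_{\rm RS}(E;\Delta)$ and $\liminf_n \tfrac1n I(\tbf{S};\tbf{W})\ge \min_{E} i_{\rm RS}(E;\Delta)$; together these also yield existence of the limit. The easier direction is the upper bound, obtained from a Guerra--Toninelli interpolation between the rank-one model \eqref{eq:mainProblem} and the scalar denoising channel \eqref{eq:scalarDen} with effective noise $\Sigma(E)$, for a free trial parameter $E\in[0,v]$. Differentiating the interpolating mutual information and invoking the Nishimori identity, its derivative along the path is, up to a sign, an expectation quadratic in the overlap $\tfrac1n\tbf{S}^{\intercal}\la\tbf{x}\ra-(v-E)$; by the structure of the interpolation this remainder has a definite sign, so $\tfrac1n I(\tbf{S};\tbf{W})\le i_{\rm RS}(E;\Delta)+o_n(1)$ for every $E$ (the $o_n(1)$ being controlled via the bounded discrete support of $P_0$, Assumption \ref{ass:h1}), and minimizing over $E$ gives the upper bound.

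The content is the matching lower bound, which I would get from the spatially coupled chain \eqref{eq:def_coupledSyst} by combining the three advertised tools. \emph{Threshold saturation.} Adapting AMP and its state evolution to the coupled model (Assumption \ref{ass:h2}) and exploiting the seed together with hypotheses (i)--(v) on $\mathbf{\Lambda}$ (doubly stochastic, smooth, short-range), I prove that the coupled state-evolution profile initialised from the seed converges to the fixed-point profile whose bulk value is the \emph{global} minimiser $E^\star(\Delta)$ of $i_{\rm RS}(\,\cdot\,;\Delta)$, for every $\Delta$ below the potential threshold $\Delta_{\rm opt}$ at which $\arg\min_E i_{\rm RS}$ jumps (Assumption \ref{ass:h3} guarantees at most one such jump); for $\Delta\ge\Delta_{\rm opt}$ the uncoupled state evolution started at $E^{(0)}=v$ already reaches $E^\star(\Delta)$. \emph{Coupling is free.} A Fekete/subadditivity-type sandwich --- using double stochasticity and the $\mathcal{O}(w^{-2})$ smoothness of $\mathbf{\Lambda}$ to make the coupling perturbation negligible, and the $\mathcal{O}(w/L)$ density of the seed to make its cost negligible --- shows that the coupled and uncoupled mutual informations per variable share the same $n\to\infty$ limit once $L\to\infty$ then $w\to\infty$.

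\emph{Interpolation sum rule on the chain.} The same interpolation applied block-wise to the coupled model expresses $\tfrac1{n(L+1)}I_{\rm coupled}$ as the coupled replica-symmetric potential evaluated at an arbitrary profile $(E_\mu)$, minus a non-negative remainder measuring the fluctuation of the per-block overlap around $v-E_\mu$. The seed and threshold saturation pin the true coupled overlap profile to concentrate around $v-E^\star_\mu$; evaluating the sum rule there makes the remainder $o(1)$, so $\tfrac1{n(L+1)}I_{\rm coupled}=\min_E i_{\rm RS}(E;\Delta)+o(1)$. Feeding this into the previous step yields $\liminf_n \tfrac1n I(\tbf{S};\tbf{W})\ge\min_E i_{\rm RS}(E;\Delta)$ for $\Delta<\Delta_{\rm opt}$; for $\Delta>\Delta_{\rm opt}$ the lower bound is immediate since uncoupled AMP already reaches $E^\star$, and continuity of both sides in $\Delta$ disposes of $\Delta_{\rm opt}$. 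Together with the upper bound this proves \eqref{rsform}.

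I expect the main obstacle to be the sum-rule step: turning threshold saturation, which is a statement about the scalar state-evolution recursion, into a statement about the genuine posterior on the chain --- namely that the true per-block overlaps concentrate around the values selected by the seed --- without assuming overlap concentration a priori, which is precisely the point of using AMP and spatial coupling as proof devices rather than as heuristics. The attendant bookkeeping in the transfer step, showing that the coupling and the seed together distort the mutual information per variable only by $o_L(1)$ uniformly in $n$, is the other delicate point, and is where hypotheses (i)--(v) on $\mathbf{\Lambda}$ are consumed.
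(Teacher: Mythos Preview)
Your upper bound and your identification of the two spatial-coupling ingredients (threshold saturation and invariance of the mutual information) match the paper. The gap is exactly where you put your finger on it: your third step, the ``interpolation sum rule on the chain'', asks you to convert threshold saturation --- a purely deterministic statement about the state-evolution recursion --- into concentration of the \emph{posterior} per-block overlaps around the SE profile, so that the nonnegative remainder in the sum rule vanishes. You offer no mechanism for this, and indeed the paper does not prove any overlap concentration at all (it stresses this explicitly).

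The paper bypasses your obstacle by a different and more indirect architecture. It never evaluates a sum rule exactly; instead it combines AMP suboptimality with the I-MMSE relation to get, whenever the AMP fixed point $E^{(\infty)}$ coincides with the global minimiser of $i_{\rm RS}(\,\cdot\,;\Delta)$, the inequality
\[
\frac{d}{d\Delta}\Bigl(\min_E i_{\rm RS}(E;\Delta)\Bigr)\ \le\ \liminf_{n\to\infty}\frac{1}{n}\frac{dI(\tbf{S};\tbf{W})}{d\Delta},
\]
and then \emph{integrates} this in $\Delta$. This gives the RS formula on $]0,\Delta_{\rm AMP}[$ from the boundary $\Delta\to 0$ (Lemma~\ref{small-noise-lemma}); analytic continuation pushes it to $]0,\Delta_{\rm opt}]$ once one knows $\Delta_{\rm opt}\le\Delta_{\rm RS}$ (Lemma~\ref{analyticity-argument-lemma}); and for $\Delta>\Delta_{\rm opt}$ the same integration works from the boundary $\Delta_{\rm opt}$, provided one knows $\Delta_{\rm opt}=\Delta_{\rm RS}$. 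Spatial coupling enters \emph{only} to supply this last equality, through the chain $\Delta_{\rm RS}\le\Delta_{\rm AMP}^{\rm c}\le\Delta_{\rm opt}^{\rm c}=\Delta_{\rm opt}$ (Theorems~\ref{thres-sat-lemma} and~\ref{LemmaGuerraSubadditivityStyle}); it is never used to compute $I_{w,L}(\tbf{S};\tbf{W})$ itself. So the role you assign to spatial coupling --- directly producing the lower bound on the coupled mutual information via controlled overlaps --- is not what happens; its role is to identify thresholds so that the I-MMSE integration and analyticity arguments can be stitched together across the whole $\Delta$-axis. Note also that your split of regimes is inverted relative to the paper: the region $\Delta<\Delta_{\rm opt}$ is handled \emph{without} spatial coupling, and it is the region $\Delta>\Delta_{\rm opt}$ that ultimately requires it (through $\Delta_{\rm opt}=\Delta_{\rm RS}$).
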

\begin{proof}
See Section \ref{sec:proof_theorem}.
\end{proof}
The proof of the {\it existence of the limit} does not require the above hypothesis on $P_0$. Also, it was first shown
in [\cite{krzakala2016mutual}] that  
\begin{align}\label{eq:guerrabound}
\limsup_{n\to +\infty} \frac{1}{n}I(\tbf{S}; \tbf{W}) \leq \min_{E\in [0, v]}i_{\rm RS}(E;\Delta),
\end{align}
an inequality that {\it we will use} in the proof section. 
Note that, interestingly, and perhaps surprisingly, the analysis of [\cite{krzakala2016mutual}] leads to a sharp upper bound on the ``free energy'' for all finite $n$. We will make extensive use of this inequality
and for sake of completeness, we summarize its proof in Appendix~\ref{app:upperBound}. 


Theorem \ref{thm1} allows to compute the information theoretic phase transition threshold which we define in the following way.
\begin{definition}[Information theoretic or optimal threshold]\label{def-delta-opt} 
Define $\Delta_{\rm opt}$ as the first non-analyticity point of the mutual information as $\Delta$ 
increases. Formally
\begin{align}
 \Delta_{\rm opt} \defeq \sup\{\Delta\vert\lim_{n\to \infty}\frac{1}{n}I(\tbf{S}; \tbf{W}) \ \text{is analytic in} \ ]0, \Delta[\}.
\end{align}
\end{definition}
The information theoretic threshold is also called ``optimal threshold" because we expect $\Delta_{\rm AMP} \leq \Delta_{\rm opt}$. This is indeed proven in Lemma \ref{suboptimality}. 

When $P_0$ is s.t \eqref{eq:potentialfunction} has at most three stationary points, then $\min_{E\in [0, v]} i_{\rm RS}(E;\Delta)$ has at most one {\it non-analyticity} point 
denoted $\Delta_{\rm RS}$ (see Fig. \ref{fig:I-AMP}). In case of analyticity over all $\mathbb{R}_+$, we set $\Delta_{\rm RS} \!=\!\infty$. We call $\Delta_{\rm RS}$ the RS or potential threshold. 
Theorem~\ref{thm1} gives us a mean to concretely {\it compute} the information theoretic threshold: $\Delta_{\rm opt} \!=\! \Delta_{\rm RS}$.

\begin{figure}[!t]
\centering
\includegraphics[width=1\textwidth]{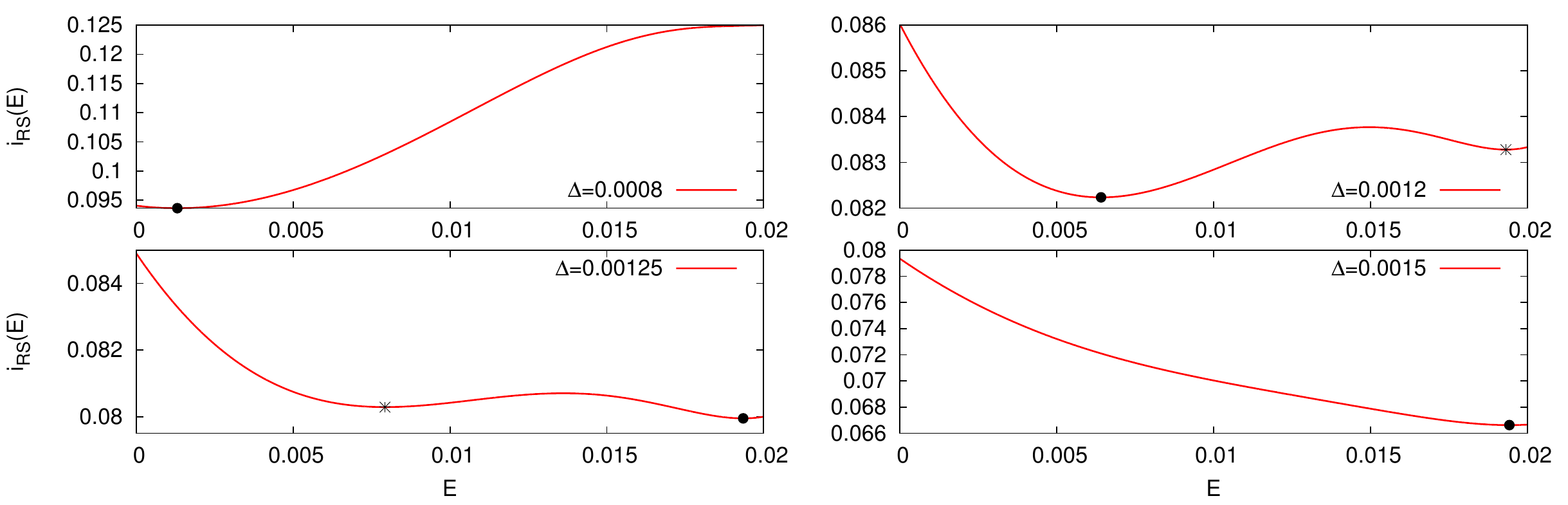}
\vspace*{-18pt}
\caption{The replica symmetric potential $i_{\rm RS}(E)$ for four values of $\Delta$ in the spiked Wigner model with $S_i \sim {\rm Ber}(\rho)$. The normalized mutual information is $\min i_{\rm RS}(E)$ (the black dot, while the black cross corresponds to the local minimum) and the asymptotic MMSE is ${\rm argmin} i_{\rm RS}(E)$, where $v\!=\!\rho$ in this case with $\rho\!=\!0.02$. From top left to bottom right: $i)$ For low noise values, here $\Delta\!=\!0.0008\!<\!\Delta_{\rm AMP}$, there exists a unique ``good'' minimum corresponding to the MMSE and AMP is Bayes optimal. $ii)$ As the noise increases, a second local ``bad'' minimum appears: this is the situation at $\Delta_{\rm AMP}\!<\!\Delta\!=\!0.0012\!<\!\Delta_{\rm RS}$. $iii)$ For $\Delta\!=\!0.00125\!>\!\Delta_{\rm RS}$, the ``bad'' minimum becomes the global one and the MMSE suddenly deteriorates. $iv)$ For larger values of $\Delta$, only the ``bad'' minimum exists. AMP can be seen as a naive minimizer of this curve starting from $E\!=\!v\!=\!0.02$. AMP can reach the global minimum in situations $i)$, $iii)$ and $iv)$. However, in situation $ii)$, when $\Delta_{\rm AMP}\!<\!\Delta\!<\!\Delta_{\rm RS}$, AMP is trapped by the local minimum with large MSE instead of reaching the global one corresponding to the MMSE.}\label{fig:I-AMP}
\end{figure}

From Theorem  \ref{thm1} we will also deduce the expressions for the \emph{vector} MMSE and the \emph{matrix} MMSE defined below.
\begin{definition}[Vector and matrix MMSE]\label{def:mmse}
The vector and matrix MMSE are defined respectively as follows
\begin{align}\label{eq:Vmmse}
 {\rm Vmmse}_n(\Delta^{-1}) \defeq \frac{1}{n}\mathbb{E}_{\tbf{S}, \tbf{W}}\Big[ \bigl\| \tbf{S} - 
 \mathbb{E}[\tbf{X}\vert\tbf{W}]\bigr\|_{2}^2 \Big].
\end{align}
\begin{align}\label{eq:Mmmse}
 {\rm Mmmse}_n(\Delta^{-1}) \defeq \frac{1}{n^2}\mathbb{E}_{\tbf{S}, \tbf{W}}\Big[ \bigl\| \tbf{S}\tbf{S}^{\intercal} - 
 \mathbb{E}[\tbf{X}\tbf{X}^{\intercal}\vert\tbf{W}]\bigr\|_{\rm F}^2 \Big].
\end{align}
\end{definition}
The conditional expectation $\mathbb{E}[\,\cdot\,\vert\tbf{W}]$ in Definition \ref{def:mmse} is w.r.t the posterior distribution
\begin{align}\label{eq:posterior_dist}
 P(\tbf{x}\vert \tbf{w}) = \frac{1}{\tilde{\mathcal{Z}}(\tbf{w})}e^{-\frac{1}{2\Delta}\sum_{i\leq j}\big(\frac{x_i x_j}{\sqrt n} - w_{ij}\big)^2}\prod_{i=1}^n P_0(x_i),
\end{align}
with the normalizing factor depending on the observation given by 
\begin{align}
\tilde{\mathcal{Z}}(\tbf{w}) = \int \big\{\prod_{i=1}^n dx_i P_0(x_i)\big\} e^{-\frac{1}{2\Delta}\sum_{i\leq j}\big(\frac{x_i x_j}{\sqrt n} - w_{ij}\big)^2}
\end{align}
The expectation $\mathbb{E}_{\tbf{S}, \tbf{W}}[\,\cdot\,]$ is the one w.r.t $P({\bf w})P({\bf s})=\tilde{\mathcal{Z}}(\tbf{w})\prod_{i=1}^{n} P_0(s_i)$. The expressions for the MMSE's in terms of \eqref{eq:potentialfunction} are given in the following corollary.
%
%
\begin{corollary}[Exact formula for the MMSE]\label{cor:MMSE}
For all $\Delta\neq \Delta_{\rm RS}$, the 
matrix MMSE  
is asymptotically 
\begin{align}\label{coro1}
\lim_{n\to\infty} {\rm Mmmse}_n(\Delta^{-1}) = v^2-(v-{\rm argmin}_{E\in[0,v]} i_{\rm RS}(E;\Delta))^2.
\end{align}
Moreover, if $\Delta \!<\!\Delta_{\rm AMP}$ or $\Delta\! >\! \Delta_{\rm RS}$, then 
the usual vector MMSE
satisfies 
\begin{align}\label{coro2}
\lim_{n\to \infty}{\rm Vmmse}_n(\Delta^{-1}) = {\rm argmin}_{E\in[0,v]} i_{\rm RS}(E;\Delta).
\end{align}
\end{corollary}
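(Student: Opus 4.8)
My plan is to deduce both identities from Theorem~\ref{thm1} through the I-MMSE relation, the crucial analytic device being the concavity of the mutual information in the inverse-noise parameter $\lambda\defeq\Delta^{-1}$, which lets one pass the $n\to\infty$ limit through a derivative in $\lambda$ at every $\Delta$ for which $\min_E i_{\rm RS}$ is differentiable. The matrix MMSE comes out of this directly; the vector MMSE is then obtained by sandwiching ${\rm Vmmse}_n$ between a bound from the optimality of the posterior mean and a bound from \eqref{coro1} together with the Nishimori identity.

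\emph{Matrix MMSE \eqref{coro1}.} Write $\langle\cdot\rangle$ for the posterior average \eqref{eq:posterior_dist}. Since for $i\le j$ the observation $w_{ij}/\sqrt\Delta=\sqrt\lambda\,s_is_j/\sqrt n+z_{ij}$ is a Gaussian channel with a $\lambda$-independent input law $(s_is_j/\sqrt n)_{i\le j}$, the map $\lambda\mapsto\frac1nI(\tbf{S};\tbf{W})$ is concave, and the I-MMSE identity applied coordinate-wise gives the exact finite-$n$ relation $\frac{d}{d\lambda}\frac1nI(\tbf{S};\tbf{W})=\frac14{\rm Mmmse}_n(\lambda)+\frac1{4n^2}\sum_{i=1}^n\mathbb{E}[(S_i^2-\langle x_i^2\rangle)^2]$, the last term being $O(1/n)$ since $P_0$ has bounded support. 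By Theorem~\ref{thm1} the left side converges pointwise to $\lambda\mapsto f(\lambda)\defeq\min_{E\in[0,v]}i_{\rm RS}(E;1/\lambda)$, a concave function whose only non-differentiability point is $\Delta=\Delta_{\rm RS}$ (Assumption~\ref{ass:h3}). A standard property of concave functions --- pointwise convergence forces convergence of derivatives at points of differentiability of the limit --- then yields $\lim_n{\rm Mmmse}_n(\lambda)=4f'(\lambda)$ for every $\Delta\neq\Delta_{\rm RS}$. To identify $f'(\lambda)$, note that for $\Delta\neq\Delta_{\rm RS}$ the minimizer $E^\star(\Delta)$ is unique and (one checks, using the sign of $\partial_E i_{\rm RS}$ at the endpoints $0$ and $v$) interior, hence stationary, so the envelope theorem gives $f'(\lambda)=\partial_\lambda i_{\rm RS}(E^\star;1/\lambda)$ with $E$ held fixed. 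Setting $q\defeq\Sigma(E)^{-2}=(v-E)\lambda$, a short Gaussian integration-by-parts on the scalar channel \eqref{eq:scalarDen} together with the Nishimori identity gives $\frac{d}{dq}\mathbb{E}_{S,Z}[\ln\int dx\,P_0(x)e^{-qx^2/2+x(qS+\sqrt q\,Z)}]=\frac12(v-{\rm mmse}(q))$, so $\partial_\lambda i_{\rm RS}(E;1/\lambda)=\frac14((v-E)^2+v^2)-\frac12(v-E)(v-{\rm mmse}((v-E)\lambda))$; evaluating at $E=E^\star$ and using the stationarity relation $E^\star={\rm mmse}(\Sigma(E^\star)^{-2})$ (the Remark following Assumption~\ref{ass:h3}) collapses this to $\frac14(v^2-(v-E^\star)^2)$, which is \eqref{coro1}.

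\emph{Vector MMSE \eqref{coro2}.} Upper bound: ${\rm Vmmse}_n\le{\rm Vmse}_{n,\rm AMP}^{(t)}$ for every $t$ because the posterior mean is the optimal estimator, so by \eqref{eq:mseampmatriciel} $\limsup_n{\rm Vmmse}_n\le E^{(t)}$, and letting $t\to\infty$, $\limsup_n{\rm Vmmse}_n\le E^{(\infty)}$; in both regimes $\Delta<\Delta_{\rm AMP}$ and $\Delta>\Delta_{\rm RS}$ the monotone recursion \eqref{eq:state-evolution} started at $E^{(0)}=v$ (not a fixed point, by the bias of Assumption~\ref{ass:h1}) converges to the largest state-evolution fixed point below $v$, which in these regimes coincides with ${\rm argmin}_{E}i_{\rm RS}(E;\Delta)$ (for $\Delta<\Delta_{\rm AMP}$ there is a unique fixed point; for $\Delta>\Delta_{\rm RS}$ the largest one is the ``bad'' minimum, which is then the global one --- cf.\ Fig.~\ref{fig:I-AMP}). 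Lower bound: the Nishimori identity gives ${\rm Vmmse}_n=v-\frac1n\mathbb{E}\|\langle\tbf{x}\rangle\|_2^2$ and ${\rm Mmmse}_n=v^2+O(1/n)-\frac1{n^2}\sum_{i,j}\mathbb{E}[\langle x_ix_j\rangle^2]$; applying Nishimori once more and then Jensen's inequality on $\langle\cdot\rangle$, $\frac1{n^2}\sum_{i,j}\mathbb{E}[\langle x_ix_j\rangle^2]=\mathbb{E}\big[\big\langle(\tfrac1n\tbf{x}\!\cdot\!\tbf{S})^2\big\rangle\big]\ge\mathbb{E}\big[(\tfrac1n\langle\tbf{x}\rangle\!\cdot\!\tbf{S})^2\big]\ge(v-{\rm Vmmse}_n)^2$. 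Combining with \eqref{coro1} gives $(v-{\rm argmin}_{E}i_{\rm RS}(E;\Delta))^2\ge(v-\liminf_n{\rm Vmmse}_n)^2$, i.e.\ $\liminf_n{\rm Vmmse}_n\ge{\rm argmin}_{E}i_{\rm RS}(E;\Delta)$. The two bounds pinch, proving both that the limit exists and that it equals ${\rm argmin}_{E}i_{\rm RS}(E;\Delta)$.

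\emph{Main difficulty.} Everything above is mechanical except one genuinely analytic step: passing $\lim_n$ through $\frac{d}{d\lambda}$ in the matrix-MMSE argument, since pointwise convergence of mutual informations says nothing a priori about the MMSE. It is the concavity of $\lambda\mapsto\frac1nI(\tbf{S};\tbf{W})$ that rescues the exchange, and only away from the kink at $\Delta_{\rm RS}$ --- which is exactly why \eqref{coro1} must omit $\Delta=\Delta_{\rm RS}$. The remaining ingredients (the I-MMSE bookkeeping, the scalar Gaussian integration by parts, the envelope theorem, the monotonicity of state evolution, and the Nishimori/Jensen manipulations) are routine.
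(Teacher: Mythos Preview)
Your proof is correct and follows essentially the same route as the paper. For the matrix MMSE you use the I-MMSE relation, concavity in $\lambda=\Delta^{-1}$ to exchange $\lim_n$ and $d/d\lambda$, and the envelope theorem to compute the derivative of $\min_E i_{\rm RS}$ --- exactly as in Section~\ref{sec:proof_coro}; for the vector MMSE your Nishimori/Jensen sandwich $\mathbb{E}[\langle q^2\rangle]\ge(\mathbb{E}[\langle q\rangle])^2$ is precisely the content of the paper's Lemma~\ref{lemma:Mmmse} (there phrased as nonnegativity of the overlap variance), and the upper bound via AMP suboptimality matches the paper's argument that $E_0=E^{(\infty)}$ outside $[\Delta_{\rm AMP},\Delta_{\rm RS}]$.
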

\begin{proof}
See Section \ref{sec:proof_coro}.
\end{proof}
It is natural to conjecture that the vector MMSE is given by ${\rm argmin}_{E\in[0,v]} i_{\rm RS}(E;\Delta)$ in the {\it whole} range $\Delta \neq \Delta_{\rm RS}$, but our proof does not quite yield the full statement. 

Another fundamental consequence of Theorem \ref{thm1} concerns the optimality of the performance of AMP. 

\begin{corollary}[Optimality of AMP]\label{perfamp}
For $\Delta <\Delta_{\rm AMP}$ or $\Delta > \Delta_{\rm RS}$, the AMP is asymptotically optimal in the sense that it yields upon convergence the asymptotic vector-MMSE and matrix-MMSE 
of Corollary \ref{cor:MMSE}. Namely, 
\begin{align}
\lim_{t\to +\infty}\lim_{n\to +\infty} {\rm Mmse}_{n, \rm AMP}^{(t)}(\Delta^{-1})
&= \lim_{n\to\infty} {\rm Mmmse}_n(\Delta^{-1}).
\label{ampcoro1}
\\
\lim_{t\to +\infty}\lim_{n\to +\infty} {\rm Vmse}_{n, \rm AMP}^{(t)}(\Delta^{-1}) & = \lim_{n\to \infty}{\rm Vmmse}_n(\Delta^{-1}).
\label{ampcoro2}
\end{align}
On the other hand, for $\Delta_{\rm AMP}<\Delta < \Delta_{\rm RS}$ the AMP algorithm is strictly suboptimal, namely 
\begin{align}
\lim_{t\to +\infty}\lim_{n\to +\infty} {\rm Mmse}_{n, \rm AMP}^{(t)}(\Delta^{-1})
& > \lim_{n\to\infty} {\rm Mmmse}_n(\Delta^{-1}).
\label{ampsub1}
\\
\lim_{t\to +\infty}\lim_{n\to +\infty} {\rm Vmse}_{n, \rm AMP}^{(t)}(\Delta^{-1}) & > \lim_{n\to \infty}{\rm Vmmse}_n(\Delta^{-1}).
\label{ampsub2}
\end{align}
\end{corollary}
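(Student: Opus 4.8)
The plan is to reduce the whole statement to the position of the limiting state-evolution fixed point $E^{(\infty)}$ relative to the global minimizer $E_\star:={\rm argmin}_{E\in[0,v]}i_{\rm RS}(E;\Delta)$ of the replica-symmetric potential, and then to translate scalar statements about $E^{(\infty)}$ and $E_\star$ into matrix and vector MSE statements via \eqref{eq:mseampmatriciel} and Corollary~\ref{cor:MMSE}. First I would locate $E^{(\infty)}$. The map $E\mapsto f(E):={\rm mmse}(\Sigma(E)^{-2})$ is non-decreasing on $[0,v]$, as a composition of the non-increasing ${\rm mmse}$ with the non-increasing $E\mapsto \Sigma(E)^{-2}=(v-E)/\Delta$. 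Since $E^{(0)}=v$ and $E^{(1)}=f(v)={\rm mmse}(0)=v-\mathbb{E}[S]^2<v$ by Assumption~\ref{ass:h1}, monotonicity of $f$ forces $(E^{(t)})_{t\ge0}$ to be non-increasing, so $E^{(\infty)}$ is the largest fixed point of $f$ in $[0,v]$; these fixed points coincide with the stationary points of \eqref{eq:potentialfunction} (the content of the Remark in Section~\ref{subsec:results}, together with the identity $\partial_E i_{\rm RS}=\tfrac{1}{2\Delta}(E-f(E))$), and none of them equals $v$ under Assumption~\ref{ass:h1}. That same identity shows $i_{\rm RS}$ is increasing immediately to the left of $E=v$, so the largest stationary point is a local minimum. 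Under Assumption~\ref{ass:h3} this gives: for $\Delta<\Delta_{\rm AMP}$ there is a unique stationary point (Definition~\ref{amp-thresh}) and it equals both $E^{(\infty)}$ and $E_\star$; when there are three stationary points $E_{\rm good}<E_{\rm saddle}<E_{\rm bad}$ one has $E^{(\infty)}=E_{\rm bad}$, with $E_\star=E_{\rm bad}$ if $\Delta>\Delta_{\rm RS}$ and $E_\star=E_{\rm good}<E_{\rm bad}$ if $\Delta_{\rm AMP}<\Delta<\Delta_{\rm RS}$.

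Next I would read off the matrix MSE. By \eqref{eq:mseampmatriciel}, $\lim_n {\rm Mmse}^{(t)}_{n,{\rm AMP}}=v^2-(v-E^{(t)})^2$, and letting $t\to\infty$ (using $E^{(t)}\to E^{(\infty)}$ and continuity) gives $\lim_t\lim_n {\rm Mmse}^{(t)}_{n,{\rm AMP}}=v^2-(v-E^{(\infty)})^2$. For $\Delta<\Delta_{\rm AMP}$ or $\Delta>\Delta_{\rm RS}$ we have $E^{(\infty)}=E_\star$, and Corollary~\ref{cor:MMSE} identifies $v^2-(v-E_\star)^2$ with $\lim_n {\rm Mmmse}_n$: this is \eqref{ampcoro1}. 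For $\Delta_{\rm AMP}<\Delta<\Delta_{\rm RS}$ we have $E^{(\infty)}=E_{\rm bad}>E_{\rm good}=E_\star$; since $x\mapsto v^2-(v-x)^2$ is strictly increasing on $[0,v]$, $v^2-(v-E_{\rm bad})^2>v^2-(v-E_{\rm good})^2=\lim_n {\rm Mmmse}_n$ (Corollary~\ref{cor:MMSE} applies since $\Delta\neq\Delta_{\rm RS}$), which is \eqref{ampsub1}. The vector MSE in the optimal range is equally immediate: \eqref{eq:mseampmatriciel} gives $\lim_t\lim_n {\rm Vmse}^{(t)}_{n,{\rm AMP}}=E^{(\infty)}=E_\star$ for $\Delta<\Delta_{\rm AMP}$ or $\Delta>\Delta_{\rm RS}$, and this is precisely the range in which Corollary~\ref{cor:MMSE} asserts $\lim_n {\rm Vmmse}_n=E_\star$; this gives \eqref{ampcoro2}.

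The only substantive step is \eqref{ampsub2}, the strict vector suboptimality for $\Delta_{\rm AMP}<\Delta<\Delta_{\rm RS}$, because Corollary~\ref{cor:MMSE} does not pin down $\lim_n {\rm Vmmse}_n$ there. Here I would bound ${\rm Vmmse}_n$ from both sides. For the lower bound, write $Q_n:=\tfrac1n\sum_i S_iX_i$ with $\tbf{X}$ drawn from the posterior \eqref{eq:posterior_dist}; Nishimori identities give ${\rm Vmmse}_n=v-\mathbb{E}[Q_n]$ and, up to $O(1/n)$ diagonal terms, ${\rm Mmmse}_n=v^2-\mathbb{E}[Q_n^2]+o(1)$, so Corollary~\ref{cor:MMSE} forces $\lim_n\mathbb{E}[Q_n^2]=(v-E_{\rm good})^2$; combining $0\le\mathbb{E}[Q_n]\le\mathbb{E}[Q_n^2]^{1/2}$ gives $\liminf_n {\rm Vmmse}_n\ge E_{\rm good}$. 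For the crucial strict upper bound $\limsup_n {\rm Vmmse}_n<E_{\rm bad}$, it suffices to exhibit one estimator of $\tbf{S}$ with asymptotic vector MSE strictly below $E_{\rm bad}$, and there are two natural candidates. (a) A spectral estimator built from the matrix-MMSE estimator $\widehat{M}:=\mathbb{E}[\tbf{X}\tbf{X}^{\intercal}\vert\tbf{W}]$: its leading eigenvector, rescaled by the square root of the leading eigenvalue and with sign fixed from its correlation with $\mathbf{1}$ (legitimate, and resolving the global sign ambiguity, because $\mathbb{E}[S]\neq0$), has vector MSE controlled by $\tfrac{1}{n^2}\mathbb{E}\|\widehat{M}-\tbf{S}\tbf{S}^{\intercal}\|_{\rm F}^2\to v^2-(v-E_{\rm good})^2$ through an elementary rank-one perturbation estimate. (b) The spatially coupled model of Section~\ref{subsec:SC}: by the threshold-saturation analysis of Section~\ref{sec:thresh-sat}, coupled AMP converges in the bulk to $E_{\rm good}$, hence the coupled vector MSE and then the coupled vector MMSE converge to $E_{\rm good}$; since the seed is a vanishing fraction of the chain and (Section~\ref{sec:subadditivitystyle}) the coupled and uncoupled models share their asymptotics, this transfers to $\limsup_n {\rm Vmmse}_n\le E_{\rm good}<E_{\rm bad}$. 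Because ${\rm Vmmse}_n$ is the infimum over all estimators, either route gives $\limsup_n {\rm Vmmse}_n<E_{\rm bad}=\lim_t\lim_n {\rm Vmse}^{(t)}_{n,{\rm AMP}}$, i.e. \eqref{ampsub2}.

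The main obstacle is exactly this last point. We deliberately avoid overlap concentration — which would give $ {\rm Vmmse}_n\to E_{\rm good}$ outright — so the strict bound $\limsup_n {\rm Vmmse}_n<E_{\rm bad}$ must come from the hands-on construction. In route (a) the work is to make the rank-one perturbation bound quantitative, controlling the leading eigenvalue and the eigenvector–signal alignment uniformly in $n$ and checking the resulting bound is genuinely below $E_{\rm bad}$, including when $E_{\rm good}$ is not small. In route (b) the work is to propagate the coupled/uncoupled comparison of Section~\ref{sec:subadditivitystyle} from the mutual information to the vector MMSE — precisely the input that Corollary~\ref{cor:MMSE} stops short of supplying. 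Everything else is bookkeeping on top of Theorem~\ref{thm1}, Corollary~\ref{cor:MMSE} and the state-evolution tracking \eqref{eq:mseampmatriciel}.
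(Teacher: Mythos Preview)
For \eqref{ampcoro1}, \eqref{ampcoro2} and \eqref{ampsub1} your argument is exactly the paper's: compute $\lim_t\lim_n$ of the AMP errors from \eqref{eq:mseampmatriciel}, note that $E^{(\infty)}=E_\star$ outside $[\Delta_{\rm AMP},\Delta_{\rm RS}]$ and $E^{(\infty)}>E_\star$ inside, and invoke Corollary~\ref{cor:MMSE}. The paper's whole proof is three sentences doing precisely this.

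Where you depart is on \eqref{ampsub2}, and here you are in fact more careful than the paper. You correctly observe that Corollary~\ref{cor:MMSE} does not identify $\lim_n{\rm Vmmse}_n$ for $\Delta\in(\Delta_{\rm AMP},\Delta_{\rm RS})$---the paper says so explicitly right after stating that corollary---so the strict inequality cannot simply be read off. The paper's own proof just asserts \eqref{ampsub2} from $E^{(\infty)}>{\rm argmin}_E i_{\rm RS}(E;\Delta)$ and ``monotonicity of $E^{(t)}$'', but the chain of inequalities actually established in the proof of Corollary~\ref{cor:MMSE} only yields $E_\star\le\liminf_n{\rm Vmmse}_n\le\limsup_n{\rm Vmmse}_n\le E^{(\infty)}$, the last inequality being non-strict (it comes from AMP being one particular estimator). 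So the paper does not in fact supply the missing step either.

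Your two proposed routes are sensible attempts to close this gap, but neither is complete as written. Route~(b) would require a coupled/uncoupled comparison at the level of the vector MMSE, which Section~\ref{sec:subadditivitystyle} does not provide (it only handles the mutual information, and the coupled and uncoupled signals are different objects). Route~(a) is more promising, but turning ``$\tfrac{1}{n^2}\mathbb{E}\|\widehat M-\tbf{S}\tbf{S}^\intercal\|_{\rm F}^2\to v^2-(v-E_\star)^2$'' into a vector-MSE bound strictly below $E^{(\infty)}$ throughout the whole interval needs a quantitative Davis--Kahan-type argument, together with control of the sign via $\mathbb{E}[S]\neq0$, that you would still have to carry out. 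In short: for \eqref{ampsub2} you have correctly identified a step that the paper's own proof glosses over, and your proposals point in the right direction but do not yet fill it.
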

\begin{proof}
See Section \ref{sec:proof_coro}.
\end{proof}

This leaves the region $\Delta_{\rm AMP} < \Delta < \Delta_{\rm RS}$
algorithmically open for efficient polynomial time algorithms. A
natural conjecture, backed up by many results in spin glass theory,
coding theory, planted models and the planted clique problems,
is:
\begin{conjecture}
  For $\Delta_{\rm AMP} < \Delta < \Delta_{\rm RS}$, no polynomial time efficient algorithm that outperforms AMP exists.
\end{conjecture}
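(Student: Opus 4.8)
The statement is a \emph{conjecture} about computational hardness in an average-case inference problem, and a full unconditional proof against all polynomial-time algorithms is beyond the reach of current complexity theory; the realistic target is therefore either a \emph{conditional} hardness result—reducing a well-established hard problem to rank-one estimation in the window $\Delta_{\rm AMP}<\Delta<\Delta_{\rm RS}$—or an \emph{unconditional} lower bound in a restricted but powerful model of computation. The plan is to pursue both routes in parallel.

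\textbf{Conditional route.} I would build a polynomial-time randomized reduction from the planted clique problem, at its conjectured hard density $k=o(\sqrt{N})$, to the symmetric rank-one model, following the template of Berthet--Rigollet for sparse PCA and the refined gadget constructions of Brennan--Bresler--Huleihel. Concretely: given a planted-clique instance one Gaussianizes the adjacency matrix, rescales, and pads with independent noise so that the resulting $\bW$ is distributed, in total variation, as \eqref{eq:mainProblem} with an effective prior $P_0$ and an effective noise level $\Delta_{\rm eff}$ that one forces to land strictly between $\Delta_{\rm AMP}$ and $\Delta_{\rm RS}$; any estimator of $\tbf S$ correlated with the ground truth then distinguishes planted from null instances. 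One must also close the loop on the easy side: show that the image of the reduction at $\Delta<\Delta_{\rm AMP}$ is genuinely tractable, which is exactly what Corollary \ref{perfamp} provides, so that the hardness is confined to the conjectured window and not an artifact of the encoding.

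\textbf{Unconditional-in-a-model route.} In parallel I would establish a low-degree-polynomial lower bound: bound the $L^2$ norm $\|L^{\leq D}\|$ of the degree-$\leq D$ projection of the likelihood ratio between the planted ensemble $\bW=\tbf S\tbf S^{\intercal}/\sqrt n+\sqrt\Delta\,\bZ$ and the pure-noise ensemble, and show it stays $O(1)$ for $D=\mathrm{polylog}(n)$ precisely when $\Delta>\Delta_{\rm AMP}$. This computation reduces to a sum over multigraphs weighted by moments of $P_0$, and the combinatorics should reproduce the state-evolution fixed-point equation \eqref{eq:state-evolution}; an analogous statement can be pursued for statistical-query algorithms and, with more effort, for constant-degree Sum-of-Squares. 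A complementary, more geometric line is to prove the overlap gap property for the posterior \eqref{eq:posterior_dist} throughout $\Delta_{\rm AMP}<\Delta<\Delta_{\rm RS}$—that pairs of near-optimal configurations have overlaps confined to two disjoint intervals—and then invoke the OGP $\Rightarrow$ failure-of-stable-algorithms machinery (Gamarnik--Sudan, Gamarnik--Jagannath) to rule out a broad class of local and noise-stable algorithms, which one expects to include suitably perturbed AMP itself.

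\textbf{Main obstacle.} Every route delivers something strictly weaker than the statement as literally worded, since unconditional hardness against \emph{all} efficient algorithms is not provable with present tools; the honest conclusion is ``hard unless planted clique is easy'' or ``hard for the low-degree/SQ/SoS/stable-algorithm classes.'' Granting that, the genuinely hard technical point is \emph{threshold sharpness}: proving the hard region is \emph{exactly} $(\Delta_{\rm AMP},\Delta_{\rm RS})$ rather than a strict sub-interval. For the reduction this means controlling $\Delta_{\rm eff}$ to leading order through a chain of cloning/rotation gadgets that map the clique signal onto a generic-$P_0$ signal, since reductions typically lose constant factors; for the low-degree method it means pinning the growth transition of $\|L^{\leq D}\|$ to the \emph{first} horizontal inflexion point of $i_{\rm RS}$—the characterization of $\Delta_{\rm AMP}$ recalled in the Remark after Assumption \ref{ass:h3}—which requires a tight, not merely order-of-magnitude, analysis of the multigraph sum. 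I expect the low-degree computation to be the most tractable entry point and the planted-clique reduction with matching constants to be the true bottleneck.
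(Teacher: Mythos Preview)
The paper does not prove this statement: it is explicitly labeled a \emph{conjecture} and left open. The only support the paper offers is heuristic---``statistical physics analogies with actual phase transition in nature suggest that this region will be hard for a very large class of algorithms''---together with the observation that the $\Delta_{\rm AMP}$ and $\Delta_{\rm opt}$ thresholds match known sparse-PCA predictions in the small-$\rho$ regime. There is therefore no ``paper's own proof'' to compare your proposal against.

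Your proposal is a sensible research program for \emph{attacking} the conjecture, and you correctly flag at the outset that an unconditional result against all polynomial-time algorithms is out of reach. But as a ``proof proposal'' for the statement as written in the paper, it is simply mismatched to the task: the paper is not claiming a theorem here, and none of the conditional (planted-clique reduction) or restricted-model (low-degree, SQ, SoS, OGP) routes you outline would establish the conjecture as stated---a point you yourself concede under ``Main obstacle.'' If the assignment was to reproduce the paper's argument, the honest answer is that there is none; if it was to sketch progress toward the conjecture, your low-degree and OGP directions are reasonable starting points, with the caveat that matching the thresholds \emph{exactly} (rather than up to constants) is, as you note, the hard part and remains open in the literature for generic $P_0$.
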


\subsection{Main results: coupled model}\label{subsec:coupled-results}

In this work, the spatially coupled construction is used for the purposes of the proof. 
However, one can also imagine interesting applications of the spatially coupled estimation problem, specially in view of the fact that
AMP turns out to be optimal for the spatially coupled system.
In coding theory for example, the use of spatially coupled systems as a proof device historically followed their
initial construction which was for engineering purposes and led to the construction of capacity achieving codes.

Our first crucial result states that the mutual information of the coupled and original systems are the same
in a suitable limit. The mutual information of the coupled system of length $L$ and with coupling window $w$ is denoted 
$I_{w, L}(\tbf{S}; \tbf{W})$.

\begin{theorem}[Equality of mutual informations]\label{LemmaGuerraSubadditivityStyle}
For any fixed $w$ s.t. $P_0$ satisfies Assumption \ref{ass:h1}, the following limits exist and are equal
\begin{align}
\lim_{L\to \infty}\lim_{n\to \infty} \frac{1}{n(L+1)}I_{w, L}(\tbf{S}; \tbf{W}) = \lim_{n\to \infty} \frac{1}{n}I(\tbf{S}; \tbf{W}).
\end{align}
\end{theorem}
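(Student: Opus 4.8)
The plan is to prove the equality by sandwiching $\frac{1}{n(L+1)}I_{w,L}(\tbf{S};\tbf{W})$ between $\frac1n I(\tbf{S};\tbf{W})\pm o_{n,L}(1)$. The existence of $\lim_{n}\frac1n I(\tbf{S};\tbf{W})$ for the uncoupled model is already available (as noted after Theorem~\ref{thm1}, it needs no hypothesis on $P_0$ beyond a finite second moment), so it suffices to show, for each fixed $w$,
\begin{align}
\Big|\tfrac{1}{n(L+1)}I_{w,L}(\tbf{S};\tbf{W}) - \tfrac1n I_{n}(\tbf{S};\tbf{W})\Big| \;\le\; \frac{C\,(2w+1)}{L+1} + \varepsilon_{n}(w),
\end{align}
where $I_n:=I(\tbf{S};\tbf{W})$ is the uncoupled mutual information with the $n$-dependence made explicit, $C$ depends only on $P_0$, and $\varepsilon_n(w)\to0$ as $n\to\infty$; letting $n\to\infty$ then $L\to\infty$ gives the theorem. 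The first error term is the price of the \emph{seed}: the $2w+1$ blocks of $\mathcal B$ carry a revealed signal, and removing or adjoining them changes the mutual information of the ring by at most $(2w+1)\,n\,H(S)$, which is $O(wn)$ since $P_0$ is discrete with bounded support (Assumption \ref{ass:h1}); divided by $n(L+1)$ with $L\to\infty$ it vanishes.

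For the inequality ``$\ge$'' I would run a Guerra--Toninelli interpolation with a parameter $t\in[0,1]$, from the coupled chain at $t=1$ to, at $t=0$, $L+1$ decoupled copies of the uncoupled channel \eqref{eq:mainProblem} on $n$ variables each. Along the path the genuine cross-block observations are weighted by $t\,\Lambda_{\mu\nu}$, and each site $i_\mu$ additionally receives an auxiliary scalar Gaussian observation of $s_{i_\mu}$ whose precision is tuned so that the \emph{total} effective signal-to-noise ratio at every site does not depend on $t$; this balancing is possible precisely because $\mathbf{\Lambda}$ is doubly stochastic, $\sum_\nu\Lambda_{\mu\nu}=\sum_\mu\Lambda_{\mu\nu}=1$. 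Computing the $t$-derivative of $\frac1{n(L+1)}\mathbb{E}\ln\mathcal Z_{w,L}(t)$ (with $\mathcal Z_{w,L}(t)$ the interpolated coupled partition function), Gaussian integration by parts together with the Nishimori identity turns it into (a positive constant times) the average of the quadratic form $\tfrac1n\langle\,\mathbf q^{\top}(\mathbf I-\mathbf{\Lambda})\mathbf q\,\rangle$ in the vector of per-block overlaps $q_\mu=\tfrac1n\sum_i x_{i_\mu}x'_{i_\mu}$, up to a boundary term supported on $\mathcal B$ and a telescoping remainder controlled by the smoothness $|\Lambda_{\mu\nu}-\Lambda_{\mu+1\,\nu}|=O(w^{-2})$, $\Lambda^{\ast}=O(w^{-1})$, which after normalization is $o(1)$. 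Because $\mathbf{\Lambda}$ has non-negative Fourier transform it is positive semidefinite, and being doubly stochastic with non-negative entries its eigenvalues lie in $[0,1]$, i.e.\ $0\preceq\mathbf{\Lambda}\preceq\mathbf I$; hence $\mathbf q^{\top}(\mathbf I-\mathbf{\Lambda})\mathbf q\ge0$ for every realization of the overlaps, the derivative has a fixed sign, and integrating over $t$ gives $I_{w,L}(1)\ge I_{w,L}(0)-O(wn)=(L+1)I_n-O(wn)$.

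For the converse ``$\le$'', the same device applies with the direction of the interpolation reversed and the bound $0\preceq\mathbf{\Lambda}$ used to sign the derivative the other way; equivalently one may adapt to the coupled chain the finite-$n$ upper bound \eqref{eq:guerrabound} (proved in \cite{krzakala2016mutual}), which gives $\frac1{n(L+1)}I_{w,L}\le\min_{\{E_\mu\}}i^{w,L}_{\rm RS}(\{E_\mu\};\Delta)$ over block-indexed profiles, and then check that, thanks again to the double stochasticity of $\mathbf{\Lambda}$, the minimiser is the constant profile, so that this minimum coincides with $\min_{E}i_{\rm RS}(E;\Delta)$; since $\lim_n\frac1n I_n\le\min_E i_{\rm RS}$ as well, this closes the sandwich. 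Taking $n\to\infty$ then $L\to\infty$ yields the stated equality (and in particular the existence of the coupled limit).

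\noindent\textbf{Main obstacle.} The crux is the bookkeeping inside the interpolation. One must verify that the $t=0$ endpoint really is $L+1$ exact copies of \eqref{eq:mainProblem} on $n$ variables even though the bare within-block coupling $\Lambda_{\mu\mu}=O(w^{-1})$ carries almost none of the signal-to-noise ratio, so the auxiliary scalar side-channels must shoulder the remainder and their insertion must be shown not to perturb the limiting free energy per variable; and one must show that the telescoping remainder produced by the spatial variation of $\mathbf{\Lambda}$ is genuinely $o(n(L+1))$, uniformly in $L$ for fixed $w$ --- this is exactly where the smoothness bound $|\Lambda_{\mu\nu}-\Lambda_{\mu+1\,\nu}|=O(w^{-2})$, the scaling $\Lambda^{\ast}=O(w^{-1})$, and the Fourier non-negativity of $\mathbf{\Lambda}$ are needed. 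Note that only \emph{one-sided} interpolation inequalities are used, never a matching two-sided estimate at fixed $n$, so no concentration of the overlaps is required here; overlap concentration enters only later, in Section~\ref{sec:proof_theorem}.
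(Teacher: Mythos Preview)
Your interpolation for the lower bound is essentially the paper's (interpolate between the coupling matrices $\mathbf I$ and $\mathbf\Lambda$, obtain $\mathbf q^\top(\mathbf I-\mathbf\Lambda)\mathbf q\ge 0$, conclude $i^{\rm per}_{n,w,L}\ge i^{\rm dec}_{n,L}=\tfrac1n I_n$), although the scalar side-channels you insert are unnecessary and in fact harmful: if you weight only the cross-block observations by $t$ and compensate with scalar channels, the $t=0$ endpoint is \emph{not} $L+1$ copies of~\eqref{eq:mainProblem} but a hybrid of weak within-block matrix observations plus scalar denoising. The clean route, which the paper takes, is to interpolate between two matrix-observation systems with coupling matrices $\mathbf\Lambda^{(1)}$ and $\mathbf\Lambda^{(0)}$ (one channel with noise $\Delta/t$, the other with noise $\Delta/(1-t)$); then the $t=0$ endpoint is exactly the decoupled system with coupling $\mathbf I$, no side information needed, and the derivative is $\hat{\mathbf q}^\top(\mathbf D^{(0)}-\mathbf D^{(1)})\hat{\mathbf q}$ plus an $O(1/n)$ diagonal term. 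The smoothness bound $|\Lambda_{\mu\nu}-\Lambda_{\mu+1\,\nu}|=O(w^{-2})$ plays no role here; it is used only in the threshold-saturation argument of Section~\ref{sec:thresh-sat}.

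The genuine gap is the upper bound. ``Reversing the direction of the interpolation'' cannot flip the sign of the inequality: the derivative $\mathbf q^\top(\mathbf I-\mathbf\Lambda)\mathbf q$ is non-negative regardless of which end you label $t=0$, so you only ever get $i^{\rm per}\ge i^{\rm dec}$ from this pair. Your alternative via the coupled replica bound does not close the sandwich either: from $\tfrac{1}{n(L+1)}I_{w,L}\le\min_{\{E_\mu\}}i^{w,L}_{\rm RS}=\min_E i_{\rm RS}$ and $\tfrac1n I_n\le\min_E i_{\rm RS}$ you cannot conclude $\tfrac{1}{n(L+1)}I_{w,L}\le\tfrac1n I_n$; and invoking equality $\tfrac1n I_n=\min_E i_{\rm RS}$ would be circular, since Theorem~\ref{thm1} relies on the present theorem. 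The paper resolves this by introducing a \emph{third} reference system, the fully connected one with $\Lambda^{(1)}_{\mu\nu}=(L+1)^{-1}$ (single nonzero eigenvalue $=1$), and interpolating again to get $i^{\rm per}_{n,w,L}\le i^{\rm con}_{n,L}$. Since the fully connected system is just the basic model on $n(L+1)$ variables, superadditivity of $n\mapsto nI_n$ (Lemma~\ref{lemma:superadditivity}) gives $\lim_n i^{\rm con}_{n,L}=\lim_n i^{\rm dec}_{n,L}$, and the sandwich $i^{\rm dec}\le i^{\rm per}\le i^{\rm con}$ collapses in the limit. The pinning is then removed by a direct $O(w/L)$ estimate (Proposition~\ref{lemma:openVSclosed}), which is the analogue of your first error term.
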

\begin{proof}
See Section \ref{sec:subadditivitystyle}.
\end{proof}
An immediate corollary is that the non-analyticity points (w.r.t $\Delta$) of the mutual informations
are the same in the coupled and underlying models.
In particular, defining 
$\Delta^{\rm c}_{\rm opt}\! \defeq \!
\sup\{\Delta\mid \lim_{L\to \infty}\lim_{n\to \infty} I_{w, L}(\tbf{S}; \tbf{W})/(n(L\!+\!1)) \ \text{is analytic in}\ ]0, \Delta[\}$,
we have 
$\Delta^{\rm c}_{\rm opt} \!= \!\Delta_{{\rm opt}}$.

The second crucial result states that the AMP threshold of the spatially coupled system is at least as good as $\Delta_{\rm RS}$ (threshold saturation result of Theorem \ref{thres-sat-lemma}). The analysis of
AMP applies to the coupled system as well
[\cite{BayatiMontanari10,Montanari-Javanmard}] and it can be shown that
the performance of AMP is assessed by SE. Let 
\begin{align}
E_{\mu}^{(t)} \!\defeq\!\lim_{n\to\infty}\frac{1}{n}\mathbb{E}_{{\tbf S}, {\tbf Z}}[\|{\tbf S}_{\mu} \!-\! \hat{\tbf S}_{\mu}^{(t)}\|_2^2]
\end{align}
 be the asymptotic average vector-MSE of the AMP estimate
$\hat{\tbf{S}}_\mu^{(t)}$ at time $t$ for the $\mu$-th ``block'' of ${\tbf S}$. We associate to each position $\mu\!\in\! \{0, \ldots, L\}$ an \emph{independent} scalar system with AWGN of the form $y\!=\!s\!+\!\Sigma_\mu({\tbf E}; \Delta)z$, with 
\begin{align}
\Sigma_\mu({\tbf E})^{-2} \defeq \frac{v\!-\! \sum_{\nu=0}^L\Lambda_{\mu\nu} E_\nu}{\Delta}
\end{align}
and $S\!\sim\! P_0$, $Z\!\sim \!\mathcal{N}(0,1)$. Taking into account knowledge of the signal components in the seed 
$\mathcal{B}$, SE reads
\begin{align}\label{coupled-state-evolution}
\begin{cases}
 E_{\mu}^{(t+1)} = {\rm mmse}(\Sigma_\mu({\tbf E}^{(t)}; \Delta)^{-2}), ~ E_\mu^{(0)} = v ~\text{for}~ \mu\in \{0,\ldots, L\}\setminus\mathcal{B}, 
 \\
 E_\mu^{(t)} = 0 ~\text{for}~ \mu\in \mathcal{B}, t\geq 0
 \end{cases}
\end{align}
where the mmse function is defined as in \eqref{eq:SE_post}. 

From the monotonicity of the mmse function we have 
 $E_\mu^{(t+1)} \leq E_\mu^{(t)}$ for all $\mu \in \{0, \ldots, L\}$, a partial order 
 which implies that $\lim_{t\to\infty} {\tbf E}^{(t)} = {\tbf E}^{(\infty)}$ exists. 
 This allows to define an algorithmic threshold for the coupled system on a finite chain:
$$
\Delta_{{\rm AMP}, w,L} \defeq \sup\{\Delta|E^{(\infty)}_\mu\!\le\! E_{\rm good}(\Delta) \ \forall \ \mu\}
$$
where $E_{\rm good}(\Delta)$ is the trivial fixed point solution of the SE starting with the initial condition $E^{(0)} = 0$. A more formal but equivalent definition of $\Delta_{{\rm AMP}, w,L}$ is given in Section \ref{sec:thresh-sat}. 

\begin{theorem}[Threshold saturation]\label{thres-sat-lemma} 
Let $\Delta^{\rm c}_{\rm AMP}$ be the algorithmic threshold on an infinite chain, $\Delta^{\rm c}_{\rm AMP} \!\defeq \!\liminf_{w\to \infty}\liminf_{L\to \infty} \Delta_{{\rm AMP}, w, L}$, s.t. $P_0$ satisfies Assumptions \ref{ass:h1} and \ref{ass:h2}.
We have $\Delta^{\rm c}_{\rm AMP} \!\geq \!\Delta_{\rm RS}$. 
\end{theorem}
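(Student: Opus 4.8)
The plan is to follow the now-standard potential-function (or ``energy landscape'') argument for threshold saturation of spatially coupled systems, adapted here to the vector state-evolution recursion \eqref{coupled-state-evolution}. The strategy has three ingredients. First, I would recast the coupled state evolution as a monotone recursion on profiles $\mathbf{E} = (E_\mu)_{\mu=0}^L \in [0,v]^{L+1}$, observing that the update map $\mathbf{E}\mapsto \mathrm{mmse}(\Sigma_\bullet(\mathbf{E})^{-2})$ is monotone nondecreasing in each $E_\nu$ (since $\mathrm{mmse}$ is decreasing and $\Sigma_\mu^{-2}$ is decreasing in each $E_\nu$). Hence the iteration from the maximal initial condition $E_\mu^{(0)}=v$ (off the seed) converges monotonically to the largest fixed-point profile $\mathbf{E}^{(\infty)}$ compatible with the pinning boundary $E_\mu=0$ on $\mathcal{B}$. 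The theorem then amounts to showing that for $\Delta<\Delta_{\rm RS}$ this limiting profile is pointwise below $E_{\rm good}(\Delta)$, i.e.\ the ``wave'' seeded at $\mathcal{B}$ invades the whole chain.

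Second — and this is the heart — I would introduce a coupled potential functional $\mathcal{F}_{w,L}(\mathbf{E})$ built from the single-letter RS potential $i_{\rm RS}$ of \eqref{eq:potentialfunction} and the coupling matrix $\boldsymbol{\Lambda}$, designed so that (i) its stationary points coincide with fixed points of \eqref{coupled-state-evolution}, and (ii) it is nonincreasing along the state-evolution trajectory. Conditions (i)–(v) on $\boldsymbol{\Lambda}$ are exactly what make this work: double stochasticity and the non-negative Fourier transform guarantee that the quadratic coupling term is well-behaved and that the potential decreases along the recursion, while smoothness $|\Lambda_{\mu\nu}-\Lambda_{\mu+1\,\nu}|=\mathcal{O}(w^{-2})$ and $\Lambda^*=\mathcal{O}(w^{-1})$ control discretization errors. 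Suppose, for contradiction, that $\Delta<\Delta_{\rm RS}$ but the fixed-point profile is not everywhere ``good''. Then there is a nontrivial monotone interpolating profile connecting the good value $E_{\rm good}$ to a larger value. One constructs from it a shift (a ``saturated'' profile translated by one block) and estimates the potential difference between a profile and its shift. On one hand this difference telescopes and is $\mathcal{O}(\Lambda^*) = \mathcal{O}(w^{-1})$; on the other hand, because $\Delta<\Delta_{\rm RS}$ the uncoupled potential $i_{\rm RS}(\cdot;\Delta)$ has its global minimum at the good branch, so any profile straddling the two branches must pay a strictly positive ``energy gap'' $\delta(\Delta)>0$ independent of $w$. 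Choosing $w$ large enough that $\mathcal{O}(w^{-1})<\delta(\Delta)$ yields a contradiction, forcing $E_\mu^{(\infty)}\le E_{\rm good}(\Delta)$ for all $\mu$ once $w$ (then $L$) are large; hence $\liminf_{w}\liminf_{L}\Delta_{{\rm AMP},w,L}\ge \Delta_{\rm RS}$, which is the claim.

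Third, a few structural facts need to be nailed down to make the above rigorous: that $\mathrm{mmse}(\Sigma(E)^{-2})$ is smooth and strictly monotone on $(0,v]$ under Assumption \ref{ass:h1} (discrete bounded prior, nonzero mean), so that $E_{\rm good}(\Delta)$ is well-defined and isolated for $\Delta<\Delta_{\rm RS}$; that the correspondence between stationary points of $i_{\rm RS}$ and fixed points of state evolution (the Remark after Assumption \ref{ass:h3}) transfers to the coupled potential; and that Assumption \ref{ass:h3} (at most three stationary points, hence a single first-order transition) guarantees the energy gap $\delta(\Delta)$ is strictly positive throughout $(\Delta_{\rm AMP},\Delta_{\rm RS})$ and that no spurious fixed points interfere. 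I would also invoke Assumption \ref{ass:h2} only to the extent that state evolution is the correct object to analyze; the present theorem is a purely analytic statement about the recursion \eqref{coupled-state-evolution}.

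The main obstacle I anticipate is the bookkeeping in the potential-difference estimate: one must show that the ``non-negative Fourier transform'' hypothesis on $\boldsymbol{\Lambda}$ is precisely what guarantees $\mathcal{F}_{w,L}$ is a Lyapunov function for the monotone recursion (this is the analogue of the coding-theory argument of \cite{yedla2014maxwell}), and then carefully bound the boundary/discretization contributions near the seed $\mathcal{B}$ and near the far end of the chain so that they remain $o(1)$ as $w\to\infty$ uniformly in $L$. Getting these error terms to be genuinely $\mathcal{O}(w^{-1})$ rather than $\mathcal{O}(1)$ — which is where the smoothness condition (iv) does its work — is the delicate part; everything else is a fairly mechanical adaptation of the energy-landscape proof of threshold saturation.
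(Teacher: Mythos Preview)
Your proposal follows essentially the same route as the paper: introduce a coupled RS potential $f_{\rm RS}^{\rm c}(\mathbf{E})$ whose stationary points coincide with fixed points of the coupled state evolution, assume for contradiction that a fixed-point profile fails to lie below $E_{\rm good}$, pass to a saturated monotone profile, compare the potential of that profile to its one-block shift, and derive a contradiction from an $\mathcal{O}(w^{-1})$ upper bound (via a second-order Taylor estimate and the smoothness of the profile) against a strictly positive energy gap $\delta f_{\rm RS}^{\rm u}(\Delta)>0$ for $\Delta<\Delta_{\rm RS}$.

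Two small corrections are worth making. First, the paper does \emph{not} use (and does not need) the Lyapunov property you list as (ii): the first-order term in the Taylor expansion of $f_{\rm RS}^{\rm c}(\mathrm{S}(\mathbf{E}))-f_{\rm RS}^{\rm c}(\mathbf{E})$ vanishes simply because the saturated profile coincides with the fixed point wherever $\delta E_\mu\neq 0$, so the gradient is zero there; no descent along the trajectory is invoked. Second, the non-negative Fourier transform hypothesis (v) on $\boldsymbol{\Lambda}$ plays no role whatsoever in this threshold-saturation argument --- only translation invariance, stochasticity, and the smoothness condition (iv) are used here; hypothesis (v) enters only in the separate interpolation argument (Section~\ref{sec:subadditivitystyle}) equating the coupled and uncoupled mutual informations. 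Once you drop those two misattributions, your sketch matches the paper's proof.
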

\begin{proof}
See Section \ref{sec:thresh-sat}.
\end{proof}
Our techniques also allow to prove the equality $\Delta^{\rm c}_{\rm AMP} = \Delta_{\rm RS}$, but this is not directly needed.

\subsection{Road map of the proof of the replica symmetric formula}

Here we give a road map of the proof of Theorem \ref{thm1} that will occupy Sections \ref{sec:thresh-sat}--\ref{sec:proof_theorem}. A fruitful idea is to concentrate on the question whether $\Delta_{\rm opt} = \Delta_{\rm RS}$. The proof of this 
equality automatically generates the proof of Theorem \ref{thm1}. 

We first prove in Section \ref{sec:delta-less-delta-opt} that $\Delta_{\rm opt}\leq \Delta_{\rm RS}$. This proof is based on a joint use of the I-MMSE relation (Lemma \ref{lemma:immse}), the replica bound \eqref{eq:guerrabound} and the suboptimality of the AMP algorithm. 
In the process of proving $\Delta_{\rm opt}\leq \Delta_{\rm RS}$, we in fact get as a direct bonus the proof of Theorem \ref{thm1} for $\Delta < \Delta_{\rm opt}$.  

The proof of $\Delta_{\rm opt}\geq \Delta_{\rm RS}$ requires the use of spatial coupling. The main strategy is to show
\begin{align}\label{eq:chain_delta}
\Delta_{\rm RS} \le \Delta^{\rm c}_{\rm AMP} \leq \Delta^{\rm c}_{\rm opt} = \Delta_{\rm opt}.
\end{align}
The first inequality in \eqref{eq:chain_delta} is proven in Section \ref{sec:thresh-sat} using methods first invented in coding theory: The algorithmic AMP
threshold of the spatially coupled system $\Delta^{\rm c}_{\rm AMP}$ saturates (tends in a suitable limit) towards $\Delta_{\rm RS}$, i.e. 
$ \Delta_{\rm RS} \le \Delta^{\rm c}_{\rm AMP}$ (Theorem \ref{thres-sat-lemma}). 
To prove the (last) equality we show in Section \ref{sec:subadditivitystyle} that the free energies, and hence the mutual informations, of the underlying and spatially coupled systems are equal in a suitable asymptotic limit (Theorem \ref{LemmaGuerraSubadditivityStyle}). This implies that their non-analyticities occur at the same 
point and hence $\Delta^{\rm c}_{\rm opt} = \Delta_{\rm opt}$. This is done through an interpolation which, although similar in spirit, is different than the one used 
to prove replica bounds (e.g. \eqref{eq:guerrabound}). In the process of showing 
$\Delta^{\rm c}_{\rm opt} = \Delta_{\rm opt}$, we will also derive the existence of the limit for $I(\tbf{S}; {\tbf W})/n$.
Finally, the second inequality is due the suboptimality of the AMP algorithm. This follows by a direct extension of the SE analysis of [\cite{6875223,deshpande2015asymptotic}] to the spatially coupled case as done in [\cite{Montanari-Javanmard}]. 

Once $\Delta_{\rm opt} = \Delta_{\rm RS}$ is established it is easy to put everything together 
and conclude the proof of Theorem \ref{thm1}. In fact all that remains is to prove Theorem \ref{thm1} for $\Delta >\Delta_{\rm opt}$.
This follows by an easy argument in section \ref{sec:everything-together} which combines  $\Delta_{\rm opt} = \Delta_{\rm RS}$, the replica bound \eqref{eq:guerrabound} and the suboptimality of the AMP algorithm. Note that in the proof sections that follow, we assume that Assumptions \ref{ass:h1}-\ref{ass:h3} hold.

\subsection{Connection with the planted Sherrington-Kirkpatrick spin glass model} 

Let us briefly discuss the connection of the matrix factorization problem \eqref{eq:mainProblem} with a statistical mechanical spin glass model which is a variant of the classic Sherrington-Kirkpatrick (SK) model. This is also the occasion to express the mutual information as a ``free energy" through a simple relation that will be used in various guises later on. 

Replacing $w_{ij} = n^{-1/2}s_is_j + \sqrt{\Delta} z_{ij}$ in \eqref{eq:posterior_dist} and simplifying the fraction after expanding the squares, the posterior distribution can be expressed in terms of $\bs, \bz$ as follows
\begin{align}\label{gibbsi}
P(\bx \vert \bs, \bz) = \frac{1}{\mathcal{Z}} e^{-\mathcal{H}(\bx\vert \bs, \bz)} \prod_{i=1}^n P_0(x_i),
\end{align}
where 
\begin{align}\label{hami}
\mathcal{H}(\bx\vert \bs, \bz) = \sum_{i\leq j=1}^n \Big(\frac{x_i^2 x_j^2}{2n\Delta} - \frac{s_i s_j x_i x_j}{n\Delta} - \frac{z_{ij} x_ix_j}{\sqrt{n \Delta}}\Big)
\end{align}
and 
\begin{align}\label{parti}
\mathcal{Z} = \int \Big\{\prod_{i=1}^n dx_i P_0(x_i)\Big\} e^{-\mathcal{H}(\bx\vert \bs, \bz)}.
\end{align}
In the language of statistical mechanics, \eqref{hami} is the ``Hamiltonian", \eqref{parti} the ``partition function", and \eqref{gibbsi} is the Gibbs distribution.
This distribution is random since it depends on the realizations of $\bS$, $\bZ$. Conditional expectations with respect to \eqref{gibbsi}
are denoted by the Gibbs ``bracket" $\langle - \rangle$. More precisely
\begin{align}
\mathbb{E}_{\bX\vert \bS, \bZ}\big[A(\bX) \vert \bS= \bs, \bZ=\bz\big] = \langle A(\bX) \rangle.
\end{align}
The free energy is defined as 
\begin{align}
f_n = -\frac{1}{n} \mathbb{E}_{\bS, \bZ}[ \ln \mathcal{Z}].
\end{align}
Notice the difference between $\tilde{\mathcal{Z}}$ in \eqref{eq:posterior_dist} and $\mathcal{Z}$ in \eqref{gibbsi}. The former is the partition function with a complete square, whereas the latter is the partition function that we obtain after expanding the square and simplifying the posterior distribution.

In Appendix \ref{app:freeEnergy}, we show that mutual information and free energy are essentially the same object up to a trivial term. For the present model
\begin{align}\label{eq:energy_MI}
\frac{1}{n} I(\tbf{S}; \bW) = -\frac{1}{n} \mathbb{E}_{\bS, \bZ}[ \ln \mathcal{Z}] + \frac{v^2}{4\Delta} + \frac{1}{4 \Delta n} (2\mathbb{E}[S^4] - v^2),
\end{align}
where recall $v = \mathbb{E}[S^2]$. This relationship turns out to be very practical and will be used several times.

For binary signals we have $s_i$ and $ x_i\in \{-1, +1\}$, so the model is a binary spin glass model. The first term in the Hamiltonian is a trivial constant,  the last term corresponds exactly to the SK model with random Gaussian interactions, and the second term can be interpreted as an external random field that biases the spins. This is sometimes called a ``planted" SK model.

The rest of the paper is organized as follows.
In Section \ref{sec:examples} we provide two examples of the symmetric rank-one matrix estimation problem. 
Threshold saturation and the invariance of the mutual information due the spatial coupling are shown in Section \ref{sec:thresh-sat} and \ref{sec:subadditivitystyle} respectively. The proof of Theorem \ref{thm1} follows in Section \ref{sec:proof_theorem}. Section \ref{sec:proof_coro} is dedicated to the proof of Corollary \ref{cor:MMSE} and Corollary \ref{perfamp}.

\section{Two Examples: spiked Wigner model and community detection} \label{sec:examples}
In order to illustrate our results, we shall present them here in the
context of two examples: the spiked Wigner model, where we close a
conjecture left open by [\cite{6875223}], and the case of asymmetric
community detection. 
\subsection{Spiked Wigner model}
\begin{figure}[!t]
\centering
\includegraphics[width=0.49\textwidth]{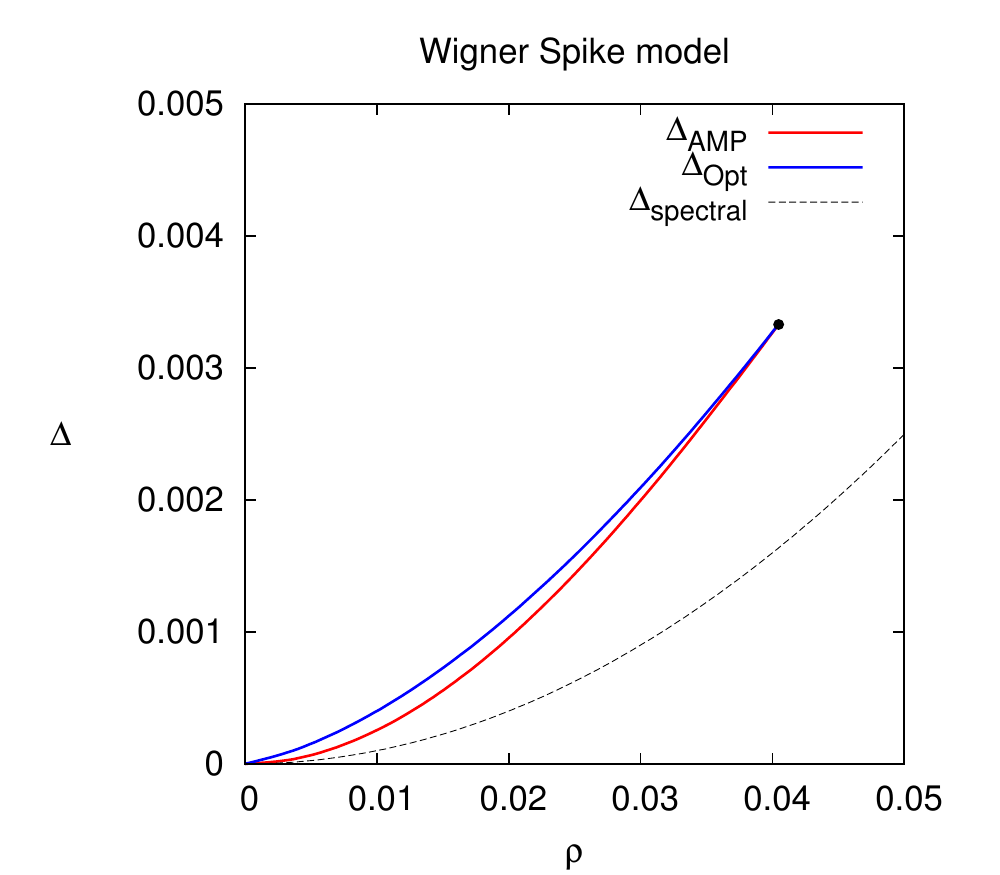}
\includegraphics[width=0.49\textwidth]{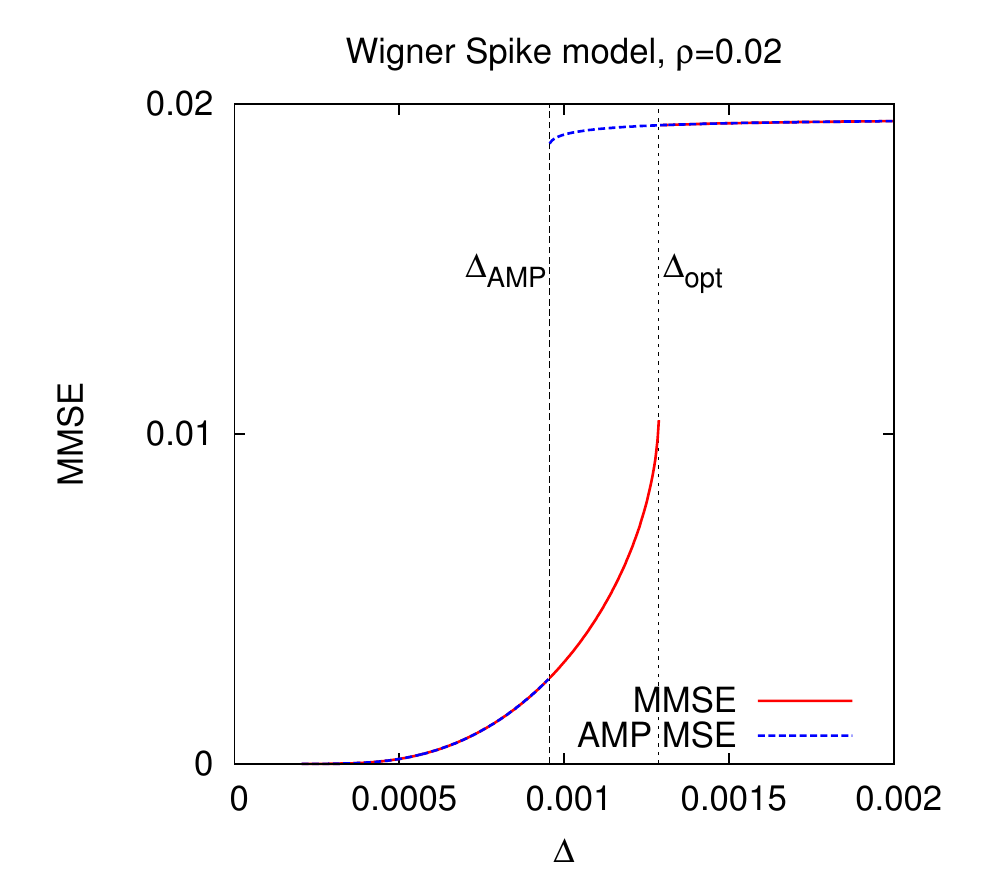}
\vspace*{-5pt}
\caption{Left: close up of the phase diagram in the density $\rho$
  (where $P_0(s)=\rho \delta(s-1) + (1-\rho)\delta(x)$) versus noise
  $\Delta$ plane for the rank-one spiked Wigner model (data from
  [\cite{lesieur2015phase}]). In [\cite{6875223}], the authors proved that AMP achieves
  the information theoretic MMSE for all noise as long as
  $\rho>0.041(1)$. We show that AMP is actually achieving the optimal
  reconstruction in the whole phase diagram except in the small region
  between the blue and red line. Notice the large gap with spectral
  methods (dotted black line).  Right: MMSE (red) at $\rho=0.02$ as a
  function of the noise variance $\Delta$. AMP (dashed blue) provably
  achieve the MMSE except in the region
  $\Delta_{\rm AMP}<\Delta<\Delta_{\rm opt}$. We conjecture that no
  polynomial-time algorithm will do better than AMP in this region.}
\label{fig:SPIKEMODEL}
\end{figure}
The first model is defined as follows: we are given data distributed
according to the spiked Wigner model where the vector $\tbf{s}$ is
assumed to be a Bernoulli $0/1$ random variable with probability
$\rho$. Data then consists of a sparse, rank-one matrix observed
through a Gaussian noise. In [\cite{6875223}], the authors proved that, for
$\rho>0.041$, AMP is a computationally efficient algorithm that
asymptotically achieves the information theoretically optimal mean-square error for {\it any} value of the noise $\Delta$.  

For very small densities (i.e. when $\rho$ is $o(1)$), there is a well
known large gap between what is information theoretically possible and
what is tractable with current algorithms in support recovery
[\cite{amini2008high}]. This gap is actually related to the planted
clique problem [\cite{d2007direct,barak2016nearly}], where it is
believed that no polynomial algorithm is able to achieve information
theoretic performances. It is thus perhaps not surprising that the
situation for $\rho<0.041$ becomes a bit more complicated. This is
summarized in Fig. \ref{fig:SPIKEMODEL} and discussed in
[\cite{lesieur2015phase}] on the basis of statistical physics
consideration which we now prove.

For such values of $\rho$, as $\Delta$ changes there is a region when
two local minima appears in $i_{\rm RS}(E; \Delta)$ (see the RS formula \eqref{eq:potentialfunction}). In particular for
$\Delta_{\rm AMP}<\Delta<\Delta_{\rm opt}$, the global minimum differs from the
AMP one and a computational gap appears (see right panel in Fig.
\ref{fig:SPIKEMODEL}). Interestingly, in this problem, the region
where AMP is Bayes optimal is still quite large.

The region where AMP is not Bayes optimal is perhaps the most
interesting one.  While this is by no means evident, statistical
physics analogies with actual phase transition in nature suggest that
this region will be hard for a very large class of algorithms. A fact
that add credibility to this prediction is the following: when looking
to small $\rho$ regime, we find that both the information
theoretic threshold {\it and} the AMP one corresponds to what has been
predicted in sparse PCA for sub-extensive values of $\rho$ [\cite{amini2008high}].

Finally, another interesting line of work for such probabilistic
models has appeared in the context of random matrix theory (see for
instance [\cite{baik2005phase}] and references therein). The focus is to
analyze the limiting distribution of the eigenvalues of the observed
matrix. The typical picture that emerges from this line of work is
that a sharp phase transition occurs at a well-defined critical value
of the noise. Above the threshold an outlier eigenvalue (and the
principal eigenvector corresponding to it) has a positive correlation
with the hidden signal. Below the threshold, however, the spectral
distribution of the observation is indistinguishable from that of the
pure random noise. In this model, this happens at
$\Delta_{\rm{spectral}} =\rho^2$. Note that for $\Delta>\Delta_{\rm{spectral}}$
spectral methods are not able to distinguish data coming from the model from
random ones, while AMP is able to sort (partly) data from noise for
any values of $\Delta$ and $\rho$.
%
%
\subsection{Asymmetric community detection}
\begin{figure}[!t]
\centering
\includegraphics[width=0.49\textwidth]{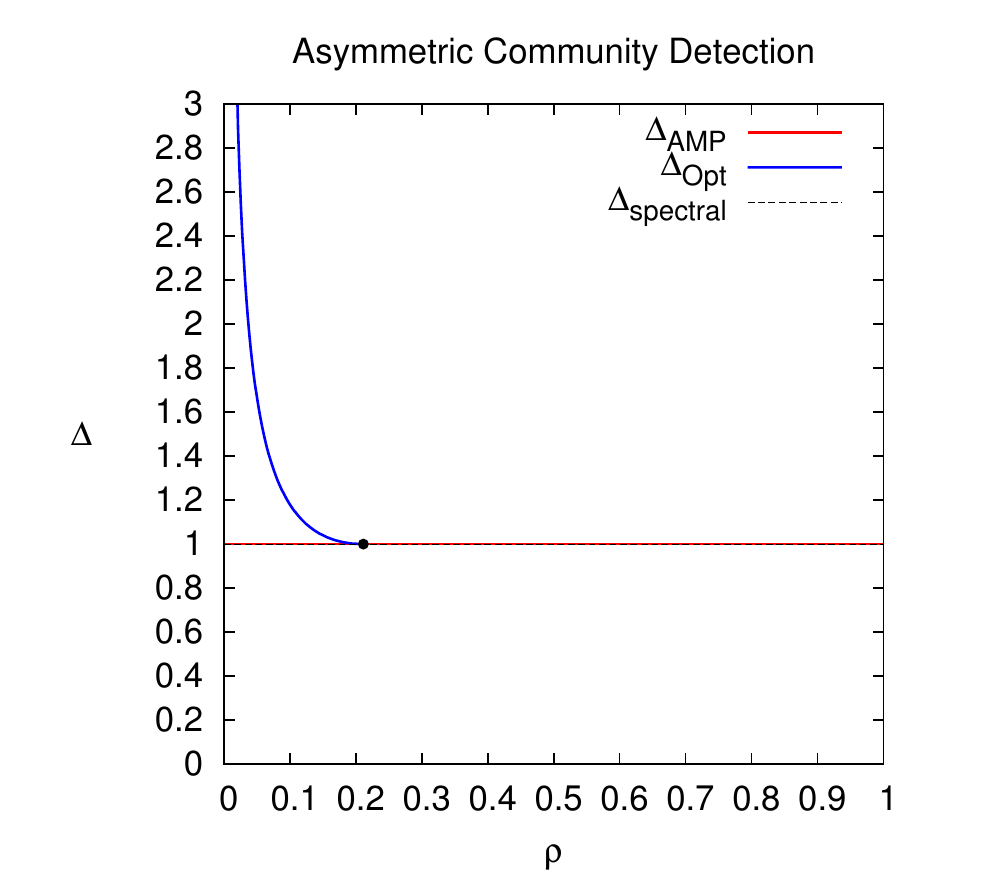}
\includegraphics[width=0.49\textwidth]{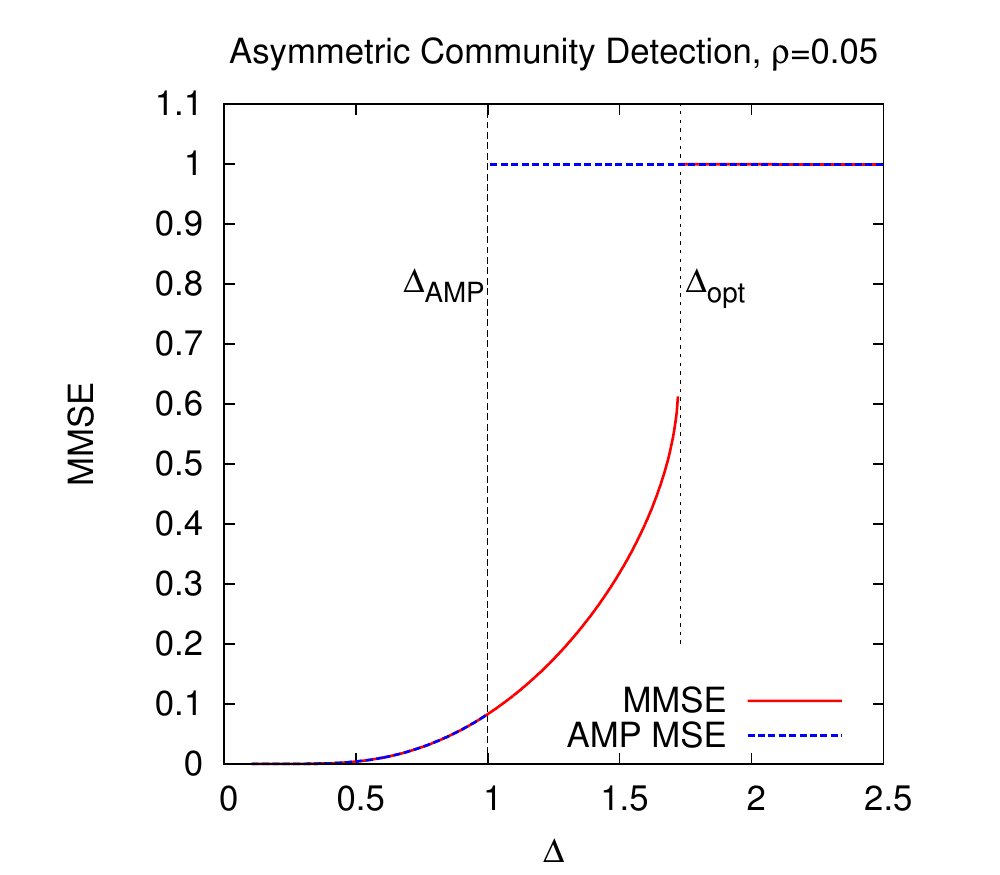}
\vspace*{-5pt}
\caption{Left: Phase diagram in the density $\rho$ (where
  $P_0(s)=\rho \delta(s-\sqrt{1-\rho/\rho}) +
  (1-\rho)\delta(s+\sqrt{\rho/1-\rho})$)
  corresponding to an asymmetric community detection problem with two
  communities. For $\rho>1-\sqrt{1/12}$ (black point), it is
  information theoretically impossible when $\Delta>1$ to find any
  overlap with the true communities and the optimal MMSE is equal $1$,
  while it is possible for $\Delta<1$.  AMP is again always achieving
  the MMSE, and spectral method can find a non-trivial overlap with
  the truth starting from $\Delta<1$. For $\rho<1-\sqrt{1/12}$,
  however, it is information theoretically possible to find
  information on the hidden assignment (below the blue line), but
  both AMP and spectral methods misses this information: Right:MMSE
  (red) at $\rho=0.05$ as a function of the noise variance
  $\Delta$. AMP (dashed blue) provably achieve the MMSE except in the
  region
  $\Delta_{\rm{spectral}}=\Delta_{\rm AMP}<\Delta<\Delta_{\rm opt}$. }\label{fig:CSMODEL}
\end{figure}
The second model is a problem of detecting two communities (groups) with
different sizes $\rho n$ and $(1\!-\!\rho) n$, that generalizes the one considered in
[\cite{deshpande2015asymptotic}]. 
One is given a graph
where the probability to have a link between nodes in the first group
is $p\!+\! \mu (1\!-\!\rho)/(\rho\sqrt{n})$, between those in the second
group is $p\!+\!\mu\rho/(\sqrt{n}(1\!-\!\rho))$, while interconnections appear
with probability $p\!-\!\mu/\sqrt{n}$. With this peculiar ``balanced''
setting, the nodes in each group have the same degree distribution with
mean $pn$, making them harder to distinguish.

According to the universality property described in Section \ref{sec:settings_results}, this is equivalent to the AWGN model \eqref{eq:mainProblem} with variance $\Delta\!=\!p(1\!-\!p)/\mu^2$ where each variable $s_i$ is chosen according to 
\begin{align}
P_0(s) = \rho \delta(s - \sqrt{(1 - \rho)/\rho}) +
(1 - \rho)\delta(s + \sqrt{\rho/(1 - \rho)}).
\end{align}

Our results for this problem\footnote{Note that here since $E = v = 1$ is an extremum of $i_{\rm RS}(E;\Delta)$,
one must introduce a small bias in $P_0$ and let it then tend to zero at the end of the proofs.} 
are
summarized in Fig. \ref{fig:CSMODEL}. For
$\rho\!>\rho_c\!=\!1/2\!-\!\sqrt{1/12}$ (black point), it is
asymptotically information theoretically {\it possible} to get an
estimation better than chance if and only if $\Delta\!<\!1$. When
$\rho\!<\!\rho_c$, however, it becomes possible for much larger values
of the noise. Interestingly, AMP and spectral methods have the same
transition and can find a positive correlation with the hidden
communities for $\Delta\!<\!1$, regardless of the value of $\rho$.
\section{Threshold saturation}\label{sec:thresh-sat}
The main goal of this section is to  prove that for a proper spatially coupled (SC) system, threshold saturation 
occurs (Theorem \ref{thres-sat-lemma} ), that is $\Delta_{\rm RS} \le \Delta^{\rm c}_{\rm AMP}$.
We begin with some preliminary formalism in Sec. \ref{stateuncoupled}, \ref{statecoupled} on state evolution for the underlying and coupled systems. 

\subsection{State evolution of the underlying system}\label{stateuncoupled}
First, define the following posterior average
\begin{align}
\langle A \rangle \defeq \frac{\int dx A(x) P_0(x)e^{-\frac{x^2}{2\Sigma(E,\Delta)^{2}} + x\big(\frac{s}{\Sigma(E,\Delta)^{2}} + \frac{z}{\Sigma(E,\Delta)} \big)} }{\int dx P_0(x)
e^{-\frac{x^2}{2\Sigma(E,\Delta)^{2}} + x\big(\frac{s}{\Sigma(E,\Delta)^{2}} + \frac{z}{\Sigma(E,\Delta)} \big)}}, \label{eq:defBracket_posteriorMean}
\end{align}
where $S\sim P_0$, $Z\sim\mathcal{N}(0,1)$. The dependence on these variables, as well as on $\Delta$ and $E$ is implicit and dropped from the notation of $\langle A \rangle$. 
Let us define the following operator.
\begin{definition}[SE operator] 
The state evolution operator associated with the underlying system is
\begin{align}
T_{\rm u}(E) \defeq {\rm mmse}(\Sigma(E)^{-2}) = \mathbb{E}_{S,Z}[(S - \langle X\rangle)^2], \label{def:SEop_underlying}
\end{align}
where $S \sim P_0$, $Z\sim \mathcal{N}(0,1)$.
%
\end{definition}
The fixed points of this operator play an important role. They can be viewed as the stationary points of the replica symmetric potential function $E\in [0,v] \mapsto i_{\rm RS}(E;\Delta)\in \mathbb{R}$
or equivalently of $f_{\rm RS}^{\rm u}: E\in [0,v]\mapsto f_{\rm RS}^{\rm u}(E)\in \mathbb{R}$ where 
\begin{align}
f_{\rm RS}^{\rm u}(E) \defeq i_{\rm RS}(E;\Delta) - \frac{v^2}{4\Delta}.
\end{align}
It turns out to be more convenient to work with $f_{\rm RS}^{\rm u}$ instead of $i_{\rm RS}$. We have 
\begin{lemma}\label{lemma:fixedpointSE_extPot_underlying}
Any fixed point of the SE corresponds to a stationary point of $f_{\rm RS}^{\rm u}$:
\begin{align}
 E = T_{{\rm u}}(E) \Leftrightarrow \frac{\partial f_{\rm RS}^{\rm u}(E;\Delta)}{\partial E}\big|_{E} =0.
\end{align}
\end{lemma}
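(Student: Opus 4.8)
The plan is to differentiate $f_{\rm RS}^{\rm u}$ explicitly in $E$ and recognise the result as proportional to $E - T_{\rm u}(E)$. Introduce $\lambda \defeq \Sigma(E)^{-2} = (v-E)/\Delta$ and set $g(\lambda) \defeq \mathbb{E}_{S,Z}\bigl[\ln \int dx\, P_0(x)\, e^{-\lambda x^2/2 + \lambda x S + \sqrt{\lambda}\, x Z}\bigr]$ with $S\sim P_0$, $Z\sim\mathcal{N}(0,1)$, so that $f_{\rm RS}^{\rm u}(E) = (v-E)^2/(4\Delta) - g\bigl((v-E)/\Delta\bigr)$ and, by the chain rule, $\partial_E f_{\rm RS}^{\rm u}(E) = -(v-E)/(2\Delta) + g'(\lambda)/\Delta$. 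Everything then reduces to computing $g'(\lambda)$.

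To evaluate $g'(\lambda)$ I would differentiate under the expectation — legitimate because Assumption \ref{ass:h1} makes $P_0$ compactly supported, so the integrand and its $\lambda$-derivatives are bounded uniformly on compacta in $(S,Z)$ with Gaussian tails in $Z$. Writing $\langle\,\cdot\,\rangle$ for the posterior average of \eqref{eq:defBracket_posteriorMean}, one gets $g'(\lambda) = \mathbb{E}_{S,Z}\bigl[-\frac12\langle X^2\rangle + S\langle X\rangle + \frac{Z}{2\sqrt\lambda}\langle X\rangle\bigr]$. For the last term apply Gaussian integration by parts in $Z$: since $\partial_z\langle X\rangle = \sqrt\lambda\bigl(\langle X^2\rangle - \langle X\rangle^2\bigr)$, we obtain $\mathbb{E}_Z\bigl[\frac{Z}{2\sqrt\lambda}\langle X\rangle\bigr] = \frac12\mathbb{E}_Z\bigl[\langle X^2\rangle - \langle X\rangle^2\bigr]$, hence $g'(\lambda) = \mathbb{E}_{S,Z}\bigl[S\langle X\rangle - \frac12\langle X\rangle^2\bigr]$. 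Now use the Nishimori identity for this Bayes-optimal scalar channel: $\langle X\rangle = \mathbb{E}[X\mid Y]$ is a function of $Y$ alone and the posterior law of $S$ given $Y$ coincides with that of $X$, so $\mathbb{E}_{S,Z}[S\langle X\rangle] = \mathbb{E}\bigl[\langle X\rangle^2\bigr]$, giving $g'(\lambda) = \frac12\mathbb{E}\bigl[\langle X\rangle^2\bigr]$. Since ${\rm mmse}(\lambda) = \mathbb{E}\bigl[(S-\langle X\rangle)^2\bigr] = v - 2\mathbb{E}[S\langle X\rangle] + \mathbb{E}[\langle X\rangle^2] = v - \mathbb{E}[\langle X\rangle^2]$, this is exactly $g'(\lambda) = \frac12\bigl(v - {\rm mmse}(\lambda)\bigr)$. (Alternatively one recognises $g(\lambda) = \lambda v/2 - I(\lambda)$ with $I(\lambda)$ the mutual information of $\tilde Y = \sqrt\lambda\, S + Z$ and invokes the I-MMSE relation $I'(\lambda) = \frac12{\rm mmse}(\lambda)$; same conclusion.)

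Plugging back, $\partial_E f_{\rm RS}^{\rm u}(E) = -(v-E)/(2\Delta) + \bigl(v - {\rm mmse}(\Sigma(E)^{-2})\bigr)/(2\Delta) = \bigl(E - {\rm mmse}(\Sigma(E)^{-2})\bigr)/(2\Delta) = \bigl(E - T_{\rm u}(E)\bigr)/(2\Delta)$, and since $\Delta>0$ this vanishes precisely when $E = T_{\rm u}(E)$, which is the claim. The only slightly delicate points are the differentiation–expectation interchange and the integration by parts, both routine under Assumption \ref{ass:h1}; the spurious factor $1/\sqrt\lambda$ cancels against the $\sqrt\lambda$ from $\partial_z\langle X\rangle$, and the boundary value $\lambda = 0$ (i.e. $E=v$, possible only when $\mathbb{E}[S]=0$, excluded by Assumption \ref{ass:h1}) is handled by continuity. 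The real content of the argument is the Nishimori simplification turning the naive derivative into $\frac12(v-{\rm mmse})$, so there is no genuine obstacle.
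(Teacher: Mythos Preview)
Your proof is correct and follows essentially the same route as the paper: differentiate the potential, apply Gaussian integration by parts to the $Z\langle X\rangle$ term, and use the Nishimori identity $\mathbb{E}[S\langle X\rangle]=\mathbb{E}[\langle X\rangle^2]$ to collapse the result. Your change of variable to $\lambda=\Sigma(E)^{-2}$ and the resulting explicit identity $\partial_E f_{\rm RS}^{\rm u}(E)=\bigl(E-T_{\rm u}(E)\bigr)/(2\Delta)$ make the bi-implication immediate, whereas the paper sets the derivative to zero first and then simplifies, leaving the converse direction implicit; but the underlying computation is the same.
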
 
\begin{proof}
See Appendix~\ref{app:nishi}.
\end{proof}

The asymptotic performance of the AMP algorithm can be tracked by iterating the SE recursion as follows (this is the same as equation \eqref{eq:state-evolution} expressed here with the help of 
$T_{\rm u}$)
\begin{align} \label{eq:SEiteration}
E^{(t+1)}=T_{\rm u}(E^{(t)}), \quad t\geq 0, \quad E^{(0)} = v, 
\end{align}
where the iteration is initialized without any knowledge about the signal other than its prior distribution 
(in fact, both the asymptotic vector and matrix MSE of the AMP are tracked by the previous recursion as reviewed in Section \ref{subsec:AMP_SE}). 
Let $E_{\rm good}(\Delta) = T_{\rm u}^{(\infty)}(0)$, the fixed point reached by initializing iterations at $E=0$. With our hypothesis on $P_0$ it is not difficult to see that
definition \ref{amp-thresh} is equivalent to 
\begin{align}\label{equivDeltaAMP}
\Delta_{{\rm AMP}} \defeq {\rm sup}\,\{\Delta>0\ \! |\ \! T_{{\rm u}}^{(\infty)}(v) = E_{\rm good}(\Delta)\}\,.
\end{align}
The following definition is handy
%
%
\begin{definition}[Bassin of attraction]
The basin of attraction of the good solution $E_{\rm good}(\Delta)$ is 
$\mathcal{V}_{\rm good} \defeq \{ E \ \! |\ \! T_{\rm u}^{(\infty )}(E) = E_{\rm good}(\Delta) \}$.
\end{definition}

Finally, we introduce the notion of {\it potential gap}. This is a function $\delta f_{\rm RS}^{\rm u}: \Delta\in \mathbb{R}_+\mapsto \delta f_{\rm RS}^{\rm u}(\Delta)\in \mathbb{R}$ defined as follows:
\begin{definition}[Potential gap] \label{def:fgap}
Define 
\begin{align}
 \delta f_{\rm RS}^{\rm u}(\Delta) \defeq {\rm inf} _{E \notin \mathcal{V}_{\rm good} } (f_{\rm RS}^{\rm u}(E)-f_{\rm RS}^{\rm u} (E_{\rm good}))
\end{align}
 as 
the potential gap, with the convention that the infimum over the empty set is $\infty$ (this happens for $\Delta < \Delta_{{\rm AMP}}$ where the complement of $\mathcal{V}_{\rm good}$ is the empty set).
\end{definition}
Our hypothesis on $P_0$ imply that 
\begin{align}
 \Delta_{\rm RS} \defeq {\rm sup}\,\{\Delta>0\ \! |\ \! \delta f_{\rm RS}^{\rm u}(\Delta) > 0\}
\end{align}

\subsection{State evolution of the coupled system}\label{statecoupled}
For the SC system, the performance of the AMP decoder is tracked by an \emph{MSE profile} (or just \emph{profile}) $\tbf{E}^{(t)}$, defined componentwise by 
\begin{align}
E^{(t)}_\mu = \lim_{n\to +\infty}\frac{1}{n}\mathbb{E}_{\tbf S, \tbf W}\| {\tbf S}_\mu - \hat{\tbf S}_{\mu}^{(t)}(\tbf W) \|_2^2.
\end{align}
It has $L+1$ components and describes the scalar MSE in each block $\mu$. Let us introduce the SE associated with the AMP algorithm for the inference over this SC system. First, denote the following posterior average at fixed $s,z$ and $\Delta$.
\begin{align}
\langle A \rangle_\mu  \defeq \frac{\int dx A(x) P_0(x)e^{-\frac{x^2}{2\Sigma_\mu ({\tbf E},\Delta)^{2}} + x\big(\frac{s}{\Sigma_\mu ({\tbf E},\Delta)^{2}} + \frac{z}{\Sigma_\mu ({\tbf E},\Delta)} \big)} }{\int dx P_0(x)
e^{-\frac{x^2}{2\Sigma_\mu ({\tbf E},\Delta)^{2}} + x\big(\frac{s}{\Sigma_\mu ({\tbf E},\Delta)^{2}} + \frac{z}{\Sigma_\mu ({\tbf E},\Delta)} \big)}}. \label{def:posterior_coupled}
\end{align}
%
where the \emph{effective noise variance} of the SC system is defined as
\begin{align} \label{eq:defSigma_mu}
\Sigma_\mu(\tbf{E})^{-2} \defeq \frac{v-\sum_{\nu \in \mathcal{S}_\mu} \Lambda_{\mu\nu}E_\nu}{\Delta},
\end{align}
where we recall $\mathcal{S}_\mu \defeq \{\nu \,|\, \Lambda_{\mu\nu} \neq 0 \}$ is the set of $2w+1$ blocks coupled to block $\mu$.
\begin{definition}[SE operator of the coupled system]
The state evolution operator associated with the coupled system (\ref{eq:def_coupledSyst}) is defined component-wise as
\begin{align}
[T_{\rm c}({\tbf E})]_\mu \defeq \mathbb{E}_{S,Z}[(S - \langle X\rangle_\mu)^2]. \label{def:SEop_SC}
\end{align}
$T_{\rm c}({\tbf E})$ is vector valued and here we have written its $\mu$-th component. 
\end{definition}
We assume perfect knowledge of the 
variables $\{s_{i_\mu}\}$ inside the blocks $\mu \in\mathcal{B}\defeq \{0:w-1\}\cup\{L-w:L\}$ as mentioned 
in Section \ref{subsec:SC}, that is $x_{i_\mu} = s_{i_\mu}$ for all $i_\mu$ such that $\mu\in\mathcal{B}$. This 
implies $E_\mu=0 \ \forall \ \mu \in \mathcal{B}$. We refer to this as the {\it pinning condition}. 
The SE iteration tracking the scalar MSE profile of the SC system reads for $\mu\notin \mathcal{B}$
\begin{equation}\label{equ:statevolutioncoupled}
E_\mu^{(t+1)} = [T_{\rm c}({\tbf E}^{(t)})]_\mu \quad \forall \ t\geq 0,
\end{equation}
with the initialization $E^{(0)}_\mu = v$. For $\mu\in \mathcal{B}$, the pinning condition forces $E_\mu^{(t)} = 0 \ \forall \ t$. This equation is the same as \eqref{coupled-state-evolution} but is expressed here in terms of 
the operator $T{\rm c}$.

Let us introduce a suitable notion of degradation that will be very useful for the analysis.
\begin{definition}[Degradation] \label{def:degradation}
A profile ${\tbf{E}}$ is degraded (resp. strictly degraded) w.r.t another one ${\tbf{G}}$, denoted as ${\tbf{E}} \succeq {\tbf{G}}$ (resp. ${\tbf{E}} \succ {\tbf{G}}$), if $E_\mu \ge  G_\mu \ \forall \ \mu$ 
(resp. if ${\tbf{E}} \succeq {\tbf{G}}$ and there exists some $\mu$ such that $E_\mu > G_\mu$).
\end{definition}
Define an error profile $\tbf{E}_{\rm good}(\Delta)$ as the vector with all $L+1$ components equal to $E_{\rm good}(\Delta)$.
\begin{definition} [AMP threshold of coupled ensemble]\label{def:AMPcoupled}
The AMP threshold of the coupled system is defined as 
\begin{align}
 \Delta^{\rm c}_{\rm AMP} \defeq {\liminf}_{w, L\to \infty}\, {\rm sup}\,\{\Delta>0\ \! |\ \! T_{{\rm c}}^{(\infty)}(\boldsymbol{v}) \prec \tbf{E}_{\rm good}(\Delta)\}
\end{align}
where $\boldsymbol{v}$ is the all $v$ vector. The ${\liminf}_{w, L \to \infty}$ is taken along sequences where {\it first} $L \to \infty$ and {\it then} $w\to\infty$. We also set for a finite system $\Delta_{{\rm AMP}, w, L} \defeq {\rm sup}\,\{\Delta>0\ \! |\ \! T_{{\rm c}}^{(\infty)}(\boldsymbol{v}) \prec \tbf{E}_{\rm good}(\Delta)\}$.
\end{definition}
%
%

The proof presented in the next subsection uses extensively the following monotonicity properties of the SE operators. 
\begin{lemma}
The SE operator of the SC system maintains degradation in space, 
i.e. ${\tbf E} \succeq  {\tbf G} \Rightarrow T_{\rm{c}}({\tbf E}) \succeq T_{\rm{c}}({\tbf{G}})$. This property is verified by $T_{\rm{u}}(E)$ for a scalar error as well.
\label{lemma:spaceDegrad}
\end{lemma}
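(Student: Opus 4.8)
The plan is to prove the monotonicity (degradation-preservation) of the state evolution operators by reducing everything to a single scalar fact: the ${\rm mmse}$ function appearing in \eqref{eq:SE_post} is non-decreasing in the noise variance $\Sigma^2$ (equivalently, non-increasing in the signal-to-noise ratio $\Sigma^{-2}$). This is precisely the monotonicity already invoked right after \eqref{eq:state-evolution}, so I may take it as given; it follows from the fact that observing $Y=s+\Sigma Z$ at a smaller $\Sigma$ is a less noisy (more informative) channel, hence the Bayes-optimal MMSE can only decrease. Equivalently one can use the I-MMSE relation or a direct differentiation argument, but the cleanest route is the data-processing/physical-degradation viewpoint on the scalar AWGN channel.

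First I would treat the scalar operator $T_{\rm u}$. Suppose $E \ge G$ with $E, G \in [0,v]$. Since $\Sigma(E)^{-2} = (v-E)/\Delta$ is non-increasing in $E$, we get $\Sigma(E)^{-2} \le \Sigma(G)^{-2}$, i.e. the effective scalar channel associated to $E$ is noisier than the one associated to $G$. By the monotonicity of ${\rm mmse}$, $T_{\rm u}(E) = {\rm mmse}(\Sigma(E)^{-2}) \ge {\rm mmse}(\Sigma(G)^{-2}) = T_{\rm u}(G)$. This establishes the scalar claim.

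Next I would handle the coupled operator $T_{\rm c}$ component by component. Fix $\mu$ and suppose ${\tbf E} \succeq {\tbf G}$, i.e. $E_\nu \ge G_\nu$ for all $\nu$. Because $\mathbf{\Lambda}$ has non-negative entries, $\sum_{\nu\in\mathcal{S}_\mu}\Lambda_{\mu\nu}E_\nu \ge \sum_{\nu\in\mathcal{S}_\mu}\Lambda_{\mu\nu}G_\nu$, and hence from \eqref{eq:defSigma_mu} we get $\Sigma_\mu({\tbf E})^{-2} \le \Sigma_\mu({\tbf G})^{-2}$: the effective scalar channel at block $\mu$ under profile ${\tbf E}$ is noisier than under ${\tbf G}$. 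Applying the same scalar ${\rm mmse}$ monotonicity gives $[T_{\rm c}({\tbf E})]_\mu = {\rm mmse}(\Sigma_\mu({\tbf E})^{-2}) \ge {\rm mmse}(\Sigma_\mu({\tbf G})^{-2}) = [T_{\rm c}({\tbf G})]_\mu$. Since this holds for every $\mu$, we conclude $T_{\rm c}({\tbf E}) \succeq T_{\rm c}({\tbf G})$.

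There is essentially no serious obstacle here: the content of the lemma is entirely carried by the scalar ${\rm mmse}$ monotonicity, plus the two structural facts that $E\mapsto\Sigma(E)^{-2}$ is decreasing and that $\mathbf{\Lambda}$ is non-negative so that the weighted average $\sum_\nu \Lambda_{\mu\nu}E_\nu$ respects the partial order $\succeq$. The only point requiring a word of care is making sure the arguments $\Sigma_\mu({\tbf E})^{-2}$ stay in the domain where ${\rm mmse}$ is defined and monotone, which is guaranteed because each $E_\nu\in[0,v]$ and $\mathbf{\Lambda}$ is doubly stochastic, so $\sum_\nu\Lambda_{\mu\nu}E_\nu\in[0,v]$ and thus $\Sigma_\mu({\tbf E})^{-2}\ge 0$. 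I would state this observation explicitly and then the proof is complete.
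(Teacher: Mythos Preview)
Your proof is correct and follows essentially the same approach as the paper: both arguments reduce the claim to (i) the observation that ${\tbf E}\succeq{\tbf G}$ forces $\Sigma_\mu({\tbf E})\ge\Sigma_\mu({\tbf G})$ for every $\mu$ (by non-negativity of $\mathbf\Lambda$ and the form of \eqref{eq:defSigma_mu}), and (ii) the monotonicity of the scalar ${\rm mmse}$ in the noise variance. The only cosmetic difference is that the paper supplements step (ii) with the explicit derivative formula $d\,{\rm mmse}(\Sigma^{-2})/d\Sigma^{-2}=-2\,\mathbb{E}_{X,Y}[\|X-\mathbb{E}[X\vert Y]\|_2^2\,{\rm Var}[X\vert Y]]\le 0$, whereas you invoke the data-processing/channel-degradation argument; both are valid justifications.
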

\begin{proof}
From \eqref{eq:defSigma_mu} one can immediately see that ${\tbf E} \succeq  {\tbf G} \Rightarrow \Sigma_\mu({\tbf E}) \geq \Sigma_\mu({\tbf G})\ \forall \ \mu$. 
Now, the SE operator \eqref{def:SEop_SC} can be interpreted as the mmse function associated to the Gaussian channel
$y = s +\Sigma_\mu(\tbf E, \Delta) z$.
This is an increasing function of  the noise intensity $\Sigma_\mu^2$: this is intuitively clear but we provide an explicit formula for the 
derivative below. Thus $[T_{{\rm c}}( {\tbf E})]_\mu \geq [T_{{\rm c}}( {\tbf G})]_\mu \ \forall \ \mu$, which means $T_{{\rm c}}({\tbf E}) \succeq T_{{\rm c}}({\tbf{G}})$.

The derivative of the mmse function of the Gaussian channel can be computed as 
\begin{align}
 \frac{d \,{\rm mmse}(\Sigma^{-2})}{d\,\Sigma^{-2}} = - 2\, \mathbb{E}_{X,Y}\Big[\|X - \mathbb{E}[X\vert Y]\|_2^2 \, {\rm Var}[X\vert Y]\Big].
\end{align}
This formula explicitly confirms that $T_{\rm u}(E)$ (resp. $[T_{{\rm c}}(\tbf E)]_\mu$) is an increasing function of $\Sigma^2$ (resp. $\Sigma_\mu^2$).
\end{proof}
\begin{corollary}
\label{cor:timeDegrad}
The SE operator of the coupled 
system maintains degradation in time, 
i.e., $T_{{\rm c}}({\tbf E}^{(t)}) \preceq  {\tbf E}^{(t)}\Rightarrow T_{{\rm c}}({\tbf E}^{(t+1)}) \preceq  {\tbf E}^{(t+1)}$. 
Similarly $T_{{\rm c}}({\tbf E}^{(t)}) \succeq  {\tbf E}^{(t)} \Rightarrow T_{{\rm c}}({\tbf E}^{(t+1)}) \succeq  {\tbf E}^{(t+1)}$. 
Furthermore, the limiting error profile ${\tbf E}^{(\infty)} \defeq T_{{\rm c}}^{(\infty)}({\tbf E}^{(0)})$
exists. These properties are verified by $T_{{\rm u}}(E)$ as well.
\end{corollary}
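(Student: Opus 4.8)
The plan is to derive Corollary~\ref{cor:timeDegrad} from Lemma~\ref{lemma:spaceDegrad} together with the definition of the SE iteration \eqref{equ:statevolutioncoupled}, handling the pinned blocks $\mu\in\mathcal{B}$ separately since they are frozen at $E_\mu^{(t)}=0$ for all $t$. First I would prove the time-monotonicity statement. Suppose $T_{\rm c}({\tbf E}^{(t)}) \preceq {\tbf E}^{(t)}$, i.e. ${\tbf E}^{(t+1)} \preceq {\tbf E}^{(t)}$ componentwise for $\mu\notin\mathcal{B}$ (and trivially $0=0$ on $\mathcal{B}$). Apply the space-degradation property of Lemma~\ref{lemma:spaceDegrad}: since ${\tbf E}^{(t+1)} \preceq {\tbf E}^{(t)}$ we get $T_{\rm c}({\tbf E}^{(t+1)}) \preceq T_{\rm c}({\tbf E}^{(t)}) = {\tbf E}^{(t+1)}$ on the non-pinned blocks, and on $\mathcal{B}$ both sides are $0$. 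Hence $T_{\rm c}({\tbf E}^{(t+1)}) \preceq {\tbf E}^{(t+1)}$, which is the claim. The reverse inequality ($\succeq$) is identical with all inequalities flipped.

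Next I would establish existence of the limiting profile. Instantiate the above with ${\tbf E}^{(0)} = {\boldsymbol v}$, the all-$v$ vector (and $0$ on $\mathcal{B}$). The base case $T_{\rm c}({\boldsymbol v}) \preceq {\boldsymbol v}$ holds because the mmse function is bounded above by the prior variance $v = \mathbb{E}[S^2]$ (equivalently, $E^{(1)}_\mu = {\rm mmse}(\Sigma_\mu^{-2}) \le v$ always, and $=0$ on $\mathcal B$). By the induction just proved, the sequence $t\mapsto {\tbf E}^{(t)} = T_{\rm c}^{(t)}({\boldsymbol v})$ is non-increasing in $t$ in the degradation partial order, i.e. each component $E_\mu^{(t)}$ is a non-increasing sequence in $t$; it is bounded below by $0$. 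A bounded monotone real sequence converges, so the componentwise limit ${\tbf E}^{(\infty)} \defeq T_{\rm c}^{(\infty)}({\boldsymbol v})$ exists. Finally, the scalar statements for $T_{\rm u}$ follow verbatim: $T_{\rm u}$ is monotone in $E$ by Lemma~\ref{lemma:spaceDegrad}, so $T_{\rm u}(E^{(0)})\le E^{(0)}$ (with $E^{(0)}=v$) propagates to a non-increasing, nonnegative, hence convergent sequence $E^{(t)}$, recovering the existence of $E^{(\infty)}$ already noted after \eqref{eq:state-evolution}.

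I do not expect a genuine obstacle here — this is a short order-theoretic argument. The only point requiring a little care is the bookkeeping of the pinned blocks: one must check that freezing $E_\mu^{(t)}=0$ on $\mathcal{B}$ is consistent with degradation (it is, since $0$ is the minimal value and $T_{\rm c}$ restricted to the non-pinned coordinates still only sees a degradation relation among all coordinates including the zeros), so the monotone-convergence argument goes through block by block. A second minor point is that ``maintains degradation in time'' is really the statement that the hypothesis $T_{\rm c}({\tbf E}^{(t)})\preceq{\tbf E}^{(t)}$ at one step is self-reproducing, which is exactly the one-line application of spatial monotonicity above; one should state clearly that this hypothesis is verified at $t=0$ for the natural initialization so that the whole SE trajectory is monotone.
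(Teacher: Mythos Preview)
Your proposal is correct and follows essentially the same approach as the paper: the paper's proof simply states that the degradation statements are a consequence of Lemma~\ref{lemma:spaceDegrad} and that the existence of the limits follows from the monotonicity of the operator together with boundedness of the scalar MSE. Your write-up is a more detailed unpacking of exactly this argument, with the additional (and helpful) explicit bookkeeping of the pinned blocks $\mu\in\mathcal{B}$.
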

\begin{proof}
The degradation statements are a consequence of Lemma~\ref{lemma:spaceDegrad}. The existence 
of the limits follows from the monotonicity of the operator and boundedness of the scalar MSE.
\end{proof}

Finally we will also need the following generalization of the (replica symmetric) potential function to a spatially coupled system:
\begin{align}
f_{\rm RS}^{\rm{c}}(\tbf{E}) = &\sum_{\mu=0}^L\sum_{\nu \in \mathcal{S}_\mu} \frac{\Lambda_{\mu\nu}}{4\Delta}(v-E_\mu) (v-E_\nu) \nonumber \\
- &\sum_{\mu=0}^L \mathbb{E}_{S,Z}\Big[\ln\Big(\int dx P_0(x)
 e^{-\frac{1}{2\Sigma_\mu(\tbf E)^{2}}\big(x^2 - 2 xS + xZ\Sigma_\mu(\tbf E,\Delta)\big)} \Big)\Big], \label{def:RSf_SC}
\end{align}
where $Z\sim \mathcal{N}(z|0,1)$ and $S \sim P_0(s)$.
As for the underlying system, the following Lemma links the SE and RS formulations.
\begin{lemma}\label{lemma:fixedpointSE_extPot}
If ${\tbf E}$ is a fixed point of \eqref{equ:statevolutioncoupled}, 
i.e. $ E_\mu = [T_{\rm c}({\tbf E})]_\mu \Rightarrow \frac{\partial f_{\rm RS}^{\rm{c}}({\tbf E})}{\partial E_\mu}\big|_{{{\tbf E}}} = 0 \ \forall \ \mu\in \mathcal{B}^{\rm c}=\{w:L-w-1\}$.
\end{lemma}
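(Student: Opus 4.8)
The plan is to prove Lemma \ref{lemma:fixedpointSE_extPot} by a direct differentiation of the coupled potential $f_{\rm RS}^{\rm c}$ with respect to a single component $E_\mu$, mimicking the argument behind Lemma \ref{lemma:fixedpointSE_extPot_underlying} (proven in Appendix~\ref{app:nishi}) but now keeping careful track of the coupling matrix $\boldsymbol{\Lambda}$. The key structural fact I would exploit is the Nishimori identity: in the Bayes-optimal scalar denoising problem $y = s + \Sigma_\mu(\tbf E) z$, one has the identity $\mathbb{E}_{S,Z}[\langle X\rangle_\mu S] = \mathbb{E}_{S,Z}[\langle X\rangle_\mu^2]$ and, more to the point, $\frac{\partial}{\partial \Sigma_\mu^{-2}}\mathbb{E}_{S,Z}[\ln(\cdots)]$ collapses, via integration by parts in $Z$ (Stein's lemma) and the Nishimori symmetry, to a clean expression in terms of $\mathbb{E}_{S,Z}[(S-\langle X\rangle_\mu)^2] = [T_{\rm c}(\tbf E)]_\mu$. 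Concretely, I expect $\frac{\partial}{\partial \Sigma_\mu^{-2}}\mathbb{E}_{S,Z}[\ln(\int dx\,P_0(x) e^{-\frac{1}{2\Sigma_\mu^{2}}(x^2-2xS+xZ\Sigma_\mu)})] = -\tfrac12\big(v - [T_{\rm c}(\tbf E)]_\mu\big)$ up to the standard bookkeeping, which is exactly the derivative of the corresponding quadratic term $\tfrac{(v-E_\mu)^2}{4\Delta}$-type contributions.

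The computation proceeds in the following steps. First, I would fix an index $\mu_0 \in \mathcal{B}^{\rm c} = \{w: L-w-1\}$ and differentiate $f_{\rm RS}^{\rm c}(\tbf E)$ with respect to $E_{\mu_0}$. The first (quadratic) term $\sum_{\mu}\sum_{\nu\in\mathcal{S}_\mu}\frac{\Lambda_{\mu\nu}}{4\Delta}(v-E_\mu)(v-E_\nu)$ contributes $-\frac{1}{2\Delta}\sum_{\nu\in\mathcal{S}_{\mu_0}}\Lambda_{\mu_0\nu}(v-E_\nu)$, where I have used symmetry of $\boldsymbol{\Lambda}$ ($\Lambda_{\mu\nu}$ depends on $|\mu-\nu|$) to combine the two occurrences of $E_{\mu_0}$ in the double sum and double stochasticity/symmetry to identify the neighbor set. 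Second, for the logarithmic term, the dependence on $E_{\mu_0}$ enters through every $\Sigma_\nu(\tbf E)^{-2} = (v - \sum_{\rho\in\mathcal{S}_\nu}\Lambda_{\nu\rho}E_\rho)/\Delta$ with $\mu_0\in\mathcal{S}_\nu$, i.e.\ for $\nu\in\mathcal{S}_{\mu_0}$, with $\frac{\partial \Sigma_\nu^{-2}}{\partial E_{\mu_0}} = -\Lambda_{\nu\mu_0}/\Delta$. Applying the chain rule and the scalar identity from the previous paragraph, each such $\nu$ contributes $-(-\Lambda_{\nu\mu_0}/\Delta)\cdot(-\tfrac12)(v-[T_{\rm c}(\tbf E)]_\nu) = -\frac{\Lambda_{\mu_0\nu}}{2\Delta}(v-[T_{\rm c}(\tbf E)]_\nu)$ with an overall sign flip coming from the $-\sum_\mu$ prefactor, so in total $+\frac{1}{2\Delta}\sum_{\nu\in\mathcal{S}_{\mu_0}}\Lambda_{\mu_0\nu}(v - [T_{\rm c}(\tbf E)]_\nu)$. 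Third, I add the two contributions: $\frac{\partial f_{\rm RS}^{\rm c}(\tbf E)}{\partial E_{\mu_0}} = \frac{1}{2\Delta}\sum_{\nu\in\mathcal{S}_{\mu_0}}\Lambda_{\mu_0\nu}\big([T_{\rm c}(\tbf E)]_\nu - E_\nu\big)$. Therefore, if $\tbf E$ is an SE fixed point, so that $E_\nu = [T_{\rm c}(\tbf E)]_\nu$ for all $\nu\in\mathcal{B}^{\rm c}$ — and $E_\nu = 0 = [T_{\rm c}(\tbf E)]_\nu$ trivially holds for $\nu\in\mathcal{B}$ by the pinning condition (since with $s$ known the scalar MSE vanishes) — every summand vanishes and $\frac{\partial f_{\rm RS}^{\rm c}}{\partial E_{\mu_0}}\big|_{\tbf E} = 0$.

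The main obstacle I anticipate is the careful justification of the scalar differentiation identity $\frac{\partial}{\partial \Sigma_\mu^{-2}}\mathbb{E}_{S,Z}[\ln(\cdots)] = -\tfrac12(v - [T_{\rm c}(\tbf E)]_\mu)$, which requires (i) differentiating under the expectation and the integral sign — legitimate here since $P_0$ has bounded discrete support by Assumption \ref{ass:h1}, so all moments are finite and everything is smooth in $\Sigma_\mu^{-2}>0$; (ii) a Gaussian integration by parts in $Z$ to trade the explicit $Z$-dependence for a derivative hitting the Gibbs average; and (iii) an application of the Nishimori identity to rewrite $\mathbb{E}_{S,Z}[\langle X^2\rangle_\mu]$, $\mathbb{E}_{S,Z}[\langle X\rangle_\mu S]$, and $\mathbb{E}_{S,Z}[\langle X\rangle_\mu^2]$ in terms of a single quantity, collapsing the expression to $v - \mathbb{E}_{S,Z}[(S-\langle X\rangle_\mu)^2]$. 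Since this is exactly the scalar computation already carried out for the underlying system in Appendix~\ref{app:nishi} (the coupled logarithmic term is, block by block, identical in form to the uncoupled one with $\Sigma(E)$ replaced by $\Sigma_\mu(\tbf E)$), I would simply invoke that appendix for the per-block identity and restrict the new work to the combinatorial chain-rule bookkeeping through $\boldsymbol{\Lambda}$ sketched above. A minor point worth stating explicitly is that the restriction $\mu_0\in\mathcal{B}^{\rm c}$ matters only because $E_{\mu_0}$ is a free variable there; the identity $\partial f_{\rm RS}^{\rm c}/\partial E_{\mu_0}=0$ is the statement that the fixed-point profile is stationary in all unpinned directions, which is all that is needed downstream.
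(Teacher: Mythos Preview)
Your proposal is correct and takes exactly the route the paper intends: the paper's own proof simply defers to the uncoupled computation of Lemma~\ref{lemma:fixedpointSE_extPot_underlying}, and you have supplied the chain-rule bookkeeping through $\boldsymbol{\Lambda}$ that this extension requires, arriving (up to a harmless overall sign in the final line) at the clean expression $\partial f_{\rm RS}^{\rm c}/\partial E_{\mu_0} = (2\Delta)^{-1}\sum_{\nu\in\mathcal{S}_{\mu_0}}\Lambda_{\mu_0\nu}\bigl(E_\nu - [T_{\rm c}(\tbf E)]_\nu\bigr)$.

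One point to tighten: your assertion that $[T_{\rm c}(\tbf E)]_\nu = 0$ for pinned $\nu\in\mathcal{B}$ does not follow from the definition $[T_{\rm c}(\tbf E)]_\nu = {\rm mmse}(\Sigma_\nu(\tbf E)^{-2})$, since $\Sigma_\nu(\tbf E)^{-2}$ stays finite at pinned blocks; knowing $s$ in the original coupled model does not make the \emph{effective scalar} channel noiseless. The paper is no more careful on this boundary issue, and in the only downstream application (Lemma~\ref{leminterp}) the derivative is invoked for the \emph{saturated} profile at indices $\mu_0\in\{\mu_*+1:\mu_{\rm max}\}$ where the relevant neighbors genuinely satisfy $E_\nu = [T_{\rm c}(\tbf E)]_\nu$ by construction, so the argument is sound for all purposes that matter; just rephrase that step rather than appealing to the pinning.
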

\begin{proof}
The proof is similar to the proof of Lemma \ref{lemma:fixedpointSE_extPot_underlying} in Appendix~\ref{app:nishi}. We skip the details for brevity.
\end{proof}

Now that we have settled the required definitions and properties, we can prove threshold saturation. 
\subsection{Proof of Theorem \ref{thres-sat-lemma}}
The proof will proceed by contradiction. Let ${\tbf E}^*$ a fixed point profile of
the SE iteration \eqref{equ:statevolutioncoupled}.
We suppose that ${\tbf E}^*$ {\it does not} satisfy ${\tbf E}^* \prec {\tbf E}_{\rm good}(\Delta)$, and exhibit a contradiction for $\Delta < \Delta_{\rm RS}$ and $w$ large enough
(but independent of $L$). Thus we must have ${\tbf E}^* \prec {\tbf E}_{\rm good}(\Delta)$. This is the statement of Theorem \ref{th:mainth_thsat} in Sec. \ref{endsat} and directly implies Theorem \ref{thres-sat-lemma}.

The pinning condition together with the monotonicity properties of the coupled SE operator (Lemma \ref{lemma:spaceDegrad} and Corollary \ref{cor:timeDegrad}) ensure that any fixed point 
profile ${\tbf E}^*$ which does not satisfy ${\tbf E}^* \prec {\tbf E}_{\rm good}(\Delta)$ necessarily has a shape as described in Fig.~\ref{fig:errorProfile}. We construct an associated \emph{saturated profile} $\tbf E$ as described in Fig.~\ref{fig:errorProfile}. 
From now on we work with a \emph{saturated profile} $\tbf E$ which verifies ${\tbf E} \succeq {\tbf E}^{*}$ and 
${\tbf E} \succeq {\tbf E}_{\rm good}(\Delta)$. In the following we will need the following operator.
\begin{figure}[!t]
\centering
\includegraphics[width=0.75\textwidth]{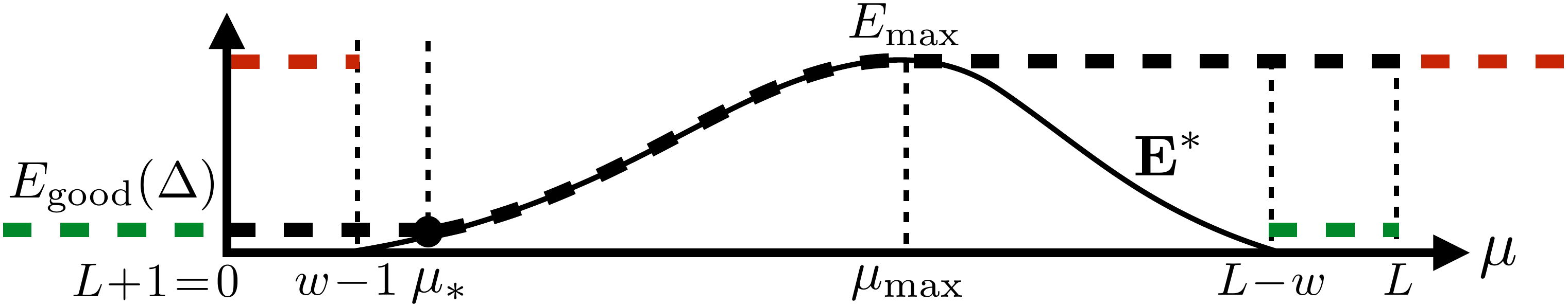}
\caption{A fixed point profile $\tbf{E}^{*}$ of the coupled SE iteration (solid line) is necessarily null
$\forall \ \mu \in \{0: w-1\}$
because of the pinning condition, and then it increases to reach $E_{{\rm max}} \in [0,v]$ at some $\mu_{\rm max}\in \{w:L-w-1\}$ (for a symmetric coupling matrix $\mu_{\rm max} = L/2$). 
Then, it starts decreasing and it is again null $\forall \ \mu \in \{L-w:L\}$. By definition, the associated \emph{saturated} profile $\tbf E$ (dashed line) starts 
at $E_{\rm good}(\Delta) \ \forall \ \mu \le \mu_*$, where $\mu_*$ is such that: $E^{*}_{\mu}\le E_{\rm good}(\Delta) \ \forall \ \mu\in\{0: \mu_*\}$ 
and $E^{*}_{\mu'} > E_{\rm good}(\Delta) \ \forall \ \mu'\in\{ \mu_*+1:L\}$. Then, ${\bf E}$ matches $\tbf E^{*} \ \forall \ \mu \in\{\mu_*:\mu_{\rm max}\}$ 
and saturates at $E_{{\rm max}} \ \forall \ \mu \ge \mu_{\rm max}$. The saturated profile is extended for $\mu <0$ and $\mu >L$ indices. The green (resp. red) branch 
shows that when the block indices of ${\bf E}$ are $\mu <0$ (resp. $\mu> L$), then $E_\mu$ equals to $E_{\rm good}(\Delta)$ (resp. $E_{\rm max}$). 
By construction, ${\tbf E}$ is non decreasing in $\mu$ and is degraded w.r.t the fixed point profile ${\tbf E} \succeq {\tbf E}^{*}$.}
\label{fig:errorProfile}
\end{figure}
\begin{definition}[Shift operator]
The shift operator ${\rm S}$ is defined componentwise as $[{\rm S}({\tbf E})]_\mu \defeq  E_{\mu-1}$.
\end{definition}
\subsubsection{Upper bound on the potential variation under a shift}
The first step in the proof of threshold saturation is based on the Taylor expansion of the RS free energy of the SC system. 
\begin{lemma}\label{leminterp}
Let ${\tbf E}$ be a saturated profile. 
Set ${\tbf E}_\lambda \defeq (1-\lambda) {\tbf E} + \lambda {\rm S}({\tbf E})$ for $\lambda\in [0,1]$ and $\delta {E}_{\mu} \defeq {E}_{\mu} -{E}_{\mu-1}$. There exists some $\lambda\in[0,1]$ such that
\begin{align}
f_{\rm RS}^{\rm{c}}({\rm S}({\tbf E})) &-f_{\rm RS}^{\rm{c}}({\tbf E}) = \frac{1}{2} \sum_{\mu,\mu'=0}^{L} \delta {E}_\mu \delta {E}_{\mu'} \frac{\partial^2 f_{\rm RS}^{\rm{c}}({\tbf E})}{\partial  E_{\mu}\partial  E_{\mu'}}\Big|_{{\tbf E}_\lambda}. \label{eq:lemma_secondDerivativeFsc}
\end{align}
\label{lemma:Fdiff_quadraticForm}
\end{lemma}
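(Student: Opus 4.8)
The statement is nothing more than a second-order Taylor expansion of the smooth function $\lambda \mapsto g(\lambda) \defeq f_{\rm RS}^{\rm c}({\tbf E}_\lambda)$ on $[0,1]$, combined with the observation that the first-order term vanishes. So the plan is: first, check that $g$ is $C^2$ on a neighborhood of $[0,1]$; second, apply Taylor's theorem with Lagrange remainder to write $g(1) - g(0) = g'(0) + \frac12 g''(\lambda)$ for some $\lambda\in[0,1]$; third, show $g'(0) = 0$; fourth, expand $g''(\lambda)$ via the chain rule to obtain the claimed quadratic form in the increments $\delta E_\mu$.

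For the smoothness, note that $f_{\rm RS}^{\rm c}$ is an explicit finite sum of a quadratic polynomial in the $E_\mu$'s and of terms $\mathbb{E}_{S,Z}[\ln(\cdots)]$ whose integrands depend smoothly on the $\Sigma_\mu({\tbf E})$, which in turn are smooth functions of ${\tbf E}$ wherever $\Sigma_\mu({\tbf E})^{-2} = (v - \sum_\nu \Lambda_{\mu\nu}E_\nu)/\Delta > 0$; since the prior has bounded support (Assumption \ref{ass:h1}) and $\Lambda$ is doubly stochastic, all the partition functions inside the logarithm are bounded away from $0$ and $\infty$ uniformly in $S,Z$, so differentiation under the expectation is justified and $f_{\rm RS}^{\rm c}\in C^\infty$ on the relevant domain; restricting to the segment ${\tbf E}_\lambda$ gives $g\in C^2$. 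Taylor's theorem then yields, for some $\lambda\in[0,1]$,
\begin{align}
g(1) - g(0) = g'(0) + \tfrac12 g''(\lambda).
\end{align}
Since ${\tbf E}_1 = {\rm S}({\tbf E})$ and ${\tbf E}_0 = {\tbf E}$, the left side is $f_{\rm RS}^{\rm c}({\rm S}({\tbf E})) - f_{\rm RS}^{\rm c}({\tbf E})$. By the chain rule, $\frac{d}{d\lambda}{\tbf E}_\lambda = {\rm S}({\tbf E}) - {\tbf E}$, whose $\mu$-th component is $E_{\mu-1} - E_\mu = -\delta E_\mu$, so
\begin{align}
g'(\lambda) = -\sum_{\mu=0}^L \delta E_\mu \,\frac{\partial f_{\rm RS}^{\rm c}}{\partial E_\mu}\Big|_{{\tbf E}_\lambda}, \qquad
g''(\lambda) = \sum_{\mu,\mu'=0}^L \delta E_\mu\,\delta E_{\mu'}\,\frac{\partial^2 f_{\rm RS}^{\rm c}}{\partial E_\mu \partial E_{\mu'}}\Big|_{{\tbf E}_\lambda},
\end{align}
which is exactly the right-hand side of \eqref{eq:lemma_secondDerivativeFsc} once we show $g'(0) = 0$.

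The one substantive point — and the only place anything must be argued rather than computed — is the vanishing of $g'(0)$. Here I would use that ${\tbf E}$ is a \emph{saturated profile} in the sense of Fig.~\ref{fig:errorProfile}: its components are constant ($=E_{\rm good}(\Delta)$) for $\mu \le \mu_*$ and constant ($=E_{\rm max}$) for $\mu \ge \mu_{\rm max}$, so $\delta E_\mu \ne 0$ only for $\mu \in \{\mu_*+1, \dots, \mu_{\rm max}\}$, a range contained in the ``bulk'' $\mathcal{B}^{\rm c} = \{w : L-w-1\}$. For $\mu$ in the bulk, either ${\tbf E}$ agrees with the fixed point ${\tbf E}^*$ (on $\{\mu_*:\mu_{\rm max}\}$) so that $\partial f_{\rm RS}^{\rm c}/\partial E_\mu|_{{\tbf E}} = 0$ by Lemma \ref{lemma:fixedpointSE_extPot}, or it is on the flat saturated stretch where the stationarity of the uncoupled potential at $E_{\rm good}$ (via Lemma \ref{lemma:fixedpointSE_extPot_underlying}) together with $\sum_\nu \Lambda_{\mu\nu} = 1$ forces the partial derivative to vanish as well; in either case every term in $g'(0) = -\sum_\mu \delta E_\mu\, \partial f_{\rm RS}^{\rm c}/\partial E_\mu|_{{\tbf E}}$ is zero. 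I expect this bookkeeping — matching each $\mu$ with a nonzero increment to a location where the gradient of $f_{\rm RS}^{\rm c}$ vanishes, being careful about the boundary indices $\mu_*$ and $\mu_{\rm max}$ and about the extension of the profile outside $\{0,\dots,L\}$ — to be the main (and essentially only) obstacle; the rest is routine calculus.
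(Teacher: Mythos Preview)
Your proposal is correct and follows essentially the same argument as the paper: a second-order Taylor expansion with Lagrange remainder, where the first-order term vanishes because $\delta E_\mu = 0$ on $\{0:\mu_*\}\cup\{\mu_{\rm max}+1:L\}$ and, on the remaining indices, the saturated profile satisfies $E_\mu = [T_{\rm c}({\tbf E})]_\mu$ so that $\partial f_{\rm RS}^{\rm c}/\partial E_\mu|_{\tbf E} = 0$ by Lemma~\ref{lemma:fixedpointSE_extPot}. Your separate treatment of the flat saturated stretches via Lemma~\ref{lemma:fixedpointSE_extPot_underlying} is unnecessary, since $\delta E_\mu = 0$ there already kills those terms.
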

\begin{proof}
Using the remainder Theorem, the free energy difference can be expressed as
\begin{align}
f_{\rm RS}^{\rm{c}}({\rm S}({\tbf E})) -f_{\rm RS}^{\rm{c}}({\tbf E}) 
= 
- \sum_{\mu=0}^{L} \delta {E}_{\mu} \frac{\partial f_{\rm RS}^{\rm{c}}({\tbf E})}{\partial  E_\mu}\Big|_{{\tbf E}}
+
\frac{1}{2}\sum_{\mu,\mu'=0}^{L} \delta {E}_{\mu} \delta {E}_{\mu'}  
\frac{\partial^2 f_{\rm RS}^{\rm{c}}({\tbf E})}{\partial E_{\mu}\partial  E_{\mu'}}\Big|_{ {\tbf E}_\lambda} 
. \label{eq:expFbsminusFb}
\end{align}
for some $\lambda\in [0,1]$.
By definition of the saturated profile $\tbf E$, we have $\delta E_\mu=0 \ \forall \ \mu \in\mathcal{A}\defeq\{0:\mu_*\}\cup \{\mu_{\rm max}+1:L\}$ 
and $E_\mu=[T_{\rm c}(\tbf E)]_\mu$ for $r\notin\mathcal{A}$. Recalling Lemma~\ref{lemma:fixedpointSE_extPot} we see that the
derivative in the first sum cancels for $r\notin\mathcal{A}$. Hence the first sum in (\ref{eq:expFbsminusFb}) vanishes.
\end{proof}
\begin{lemma} \label{lemma:smooth}
The saturated profile ${\tbf E}$ is smooth, i.e. $\delta E^* \defeq \underset{\mu}{\rm{max}} \,|\delta E_\mu| = \mathcal{O}(1/w)$ uniformly in $L$.
\end{lemma}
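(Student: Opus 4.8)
The plan is to bound $|\delta E_\mu|$ one position at a time, exploiting the three-part structure of the saturated profile of Fig.~\ref{fig:errorProfile}: on the two constant stretches (level $E_{\rm good}(\Delta)$ on the left, level $E_{\rm max}$ on the right) and on the extended green/red branches one trivially has $\delta E_\mu=0$, so only the ``bulk'' interval, where $\tbf E$ coincides with the fixed-point profile $\tbf E^*$, and the two gluing positions $\mu_*$ and $\mu_{\rm max}$ require attention. The whole proof then reduces to one bulk estimate plus the two corners.

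\emph{Bulk.} If $\mu,\mu-1\in\mathcal B^{\rm c}$ and $\tbf E\equiv\tbf E^*$ there, then the coupled SE fixed point equation \eqref{equ:statevolutioncoupled} gives $\delta E_\mu={\rm mmse}(\Sigma_\mu(\tbf E^*)^{-2})-{\rm mmse}(\Sigma_{\mu-1}(\tbf E^*)^{-2})$. By \eqref{eq:defSigma_mu}, $\Sigma_\mu(\tbf E^*)^{-2}-\Sigma_{\mu-1}(\tbf E^*)^{-2}=\Delta^{-1}\sum_\nu(\Lambda_{\mu-1,\nu}-\Lambda_{\mu\nu})E^*_\nu$; since $0\le E^*_\nu\le v$ while, by property (iv) of $\bsy\Lambda$, at most $2w+2$ of the differences $\Lambda_{\mu\nu}-\Lambda_{\mu-1,\nu}$ are non-zero and each is $\mathcal O(w^{-2})$, this quantity is $\mathcal O(1/w)$ uniformly in $\mu,L$. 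Combined with the global Lipschitz continuity of ${\rm mmse}$ on $\mathbb R_+$ with a constant depending only on $P_0$ — from the derivative identity in the proof of Lemma~\ref{lemma:spaceDegrad}, $|d\,{\rm mmse}(\Sigma^{-2})/d\,\Sigma^{-2}|=2\,\mathbb E[(X-\mathbb E[X|Y])^2{\rm Var}(X|Y)]\le 2({\rm diam\,supp\,}P_0)^4<\infty$ by Assumption~\ref{ass:h1} — this gives $|\delta E_\mu|=\mathcal O(1/w)$. \emph{Corners.} At $\mu_{\rm max}$, $\delta E_{\mu_{\rm max}+1}=E_{\rm max}-E_{\rm max}=0$ and $\delta E_{\mu_{\rm max}}=E^*_{\mu_{\rm max}}-E^*_{\mu_{\rm max}-1}=\mathcal O(1/w)$ by the bulk estimate. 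At $\mu_*$, the definition of $\mu_*$ gives $E^*_{\mu_*}\le E_{\rm good}(\Delta)<E^*_{\mu_*+1}$, and provided $\mu_*\ge w$ (so $\mu_*,\mu_*+1\in\mathcal B^{\rm c}$) the bulk estimate yields $E^*_{\mu_*+1}-E^*_{\mu_*}=\mathcal O(1/w)$, whence $0<\delta E_{\mu_*+1}=E^*_{\mu_*+1}-E_{\rm good}(\Delta)\le E^*_{\mu_*+1}-E^*_{\mu_*}=\mathcal O(1/w)$. Taking the maximum over $\mu$ gives $\delta E^*=\mathcal O(1/w)$ uniformly in $L$.

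The step I expect to be the real obstacle is the innocuous-looking inequality $\mu_*\ge w$, i.e.\ that $\tbf E^*$ has not yet overshot $E_{\rm good}(\Delta)$ at the first unpinned block: were $\mu_*=w-1$, the bulk estimate could not be carried across the seed boundary (the jump from the pinned $E^*_{w-1}=0$ to $E^*_w$ is $\mathcal O(1)$, not $\mathcal O(1/w)$), and the saturated profile would inherit an $\mathcal O(1)$ jump. To establish $E^*_w\le E_{\rm good}(\Delta)+\mathcal O(1/w)$ I would analyse the $\mathcal O(w)$-wide boundary layer $\{w,\dots,2w\}$ adjacent to the seed: since $\bsy\Lambda$ is doubly stochastic and symmetric with $\Lambda_{ww}=\mathcal O(1/w)$, a fraction $\tfrac12+\mathcal O(1/w)$ of the coupling weight at block $w$ still rests on the pinned (hence zero) blocks $\{0,\dots,w-1\}$, which strongly damps $E^*_w$. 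Combining this with the monotonicity of the profile on the left half-chain (Lemma~\ref{lemma:spaceDegrad}, Corollary~\ref{cor:timeDegrad}) and the $\mathcal O(1/w)$ bound on the increments inside $\mathcal B^{\rm c}$ already obtained, one gets a self-consistent constraint on the boundary-layer values of the schematic form $E^*_w\le T_{\rm u}(\tfrac12 E^*_{2w})+\mathcal O(1/w)$ with $E^*_{2w}=E^*_w+\mathcal O(1)$; the shape of $T_{\rm u}$ (increasing, $T_{\rm u}(0)>0$, at most three fixed points by Assumption~\ref{ass:h3}) then forces $E^*_w\le E_{\rm good}(\Delta)+\mathcal O(1/w)$ for $w$ large enough. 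This boundary-layer analysis is the genuinely delicate point, and the only place where the seed width $w$ and the $\mathcal O(w^{-1})$-regularity of $\bsy\Lambda$ really enter; everything else is the routine chain ``$\bsy\Lambda$ smooth $\Rightarrow$ effective-SNR profile smooth $\Rightarrow$ ${\rm mmse}$-image smooth''.
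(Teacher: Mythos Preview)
Your bulk estimate is correct but uses a different mechanism from the paper. You bound $\Sigma_\mu^{-2}-\Sigma_{\mu-1}^{-2}$ via property (iv), the $\mathcal O(w^{-2})$ second-difference smoothness of $\bsy\Lambda$, applied to the fixed-point profile $\tbf E^*$. The paper instead works with the \emph{saturated} profile $\tbf E$ throughout and uses the shift invariance $\Lambda_{\mu-1,\nu-1}=\Lambda_{\mu\nu}$ to rewrite
\[
\delta\Sigma_\mu^{-2}=\frac{1}{\Delta}\sum_{\nu\in\mathcal S_\mu}\Lambda_{\mu\nu}\,(E_\nu-E_{\nu-1});
\]
since $\tbf E$ is non-decreasing \emph{by construction}, the absolute value can be dropped and the sum telescopes to at most $2v\Lambda^*/\Delta=\mathcal O(1/w)$. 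This needs only $\Lambda^*=\mathcal O(1/w)$, not the stronger $\mathcal O(w^{-2})$ condition you invoke, and because the telescoping bound is uniform in $\mu$ the paper never splits off corners: it simply writes $\delta E_\mu=[T_{\rm c}(\tbf E)]_\mu-[T_{\rm c}(\tbf E)]_{\mu-1}$ for every $\mu\notin\mathcal A$ and is done.

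Your entire second paragraph --- the worry that $\mu_*=w-1$ could occur and the heuristic boundary-layer bootstrap forcing $E^*_w\le E_{\rm good}+\mathcal O(1/w)$ --- has no counterpart in the paper's proof. Your instinct that the identification $E_\mu=[T_{\rm c}(\tbf E)]_\mu$ is not literally exact within distance $w$ of the gluing points is actually correct (the paper is somewhat loose there, since $\tbf E\neq\tbf E^*$ on the constant stretches that the window $\mathcal S_\mu$ can reach), but the fix you sketch is speculative: the ``self-consistent constraint'' $E^*_w\le T_{\rm u}(\tfrac12 E^*_{2w})+\mathcal O(1/w)$ and the claim that Assumption~\ref{ass:h3} then forces $E^*_w$ near $E_{\rm good}$ are asserted rather than proven, and are far more elaborate than anything the paper attempts.
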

\begin{proof}
By definition of the saturated profile $\tbf E$ we have
$\delta E_\mu=0 \ \forall \ \mu \in\mathcal{A}\defeq\{0:\mu_*\}\cup \{\mu_{\rm max}+1:L\}$. For $\mu \notin\mathcal{A}$, we can replace 
the fixed point profile component $E_\mu$ by $[T_{\rm c}(\tbf E)]_\mu$ so that $\delta E_\mu = [T_{\rm c}(\tbf E)]_\mu - [T_{\rm c}(\tbf E)]_{\mu-1}$.
We will Taylor expand the SE operator. To this end, we define $\delta \Sigma_\mu^{-2} \defeq \Sigma_{\mu}({\tbf E})^{-2} - \Sigma_{\mu-1}({\tbf E})^{-2}$ for 
$\mu \in\{\mu^* +1:\mu_{\rm max}\}$. Recall that $\Lambda_{\mu-1,\nu-1} = \Lambda_{\mu\nu}$, $\Lambda_{\mu\nu} \geq 0$ and $\Lambda^{*}\defeq\sup_{\mu,\nu}\Lambda_{\mu\nu} = \mathcal{O}(1/w)$. Thus from \eqref{eq:defSigma_mu} we get 
\begin{align}
|\delta \Sigma_\mu^{-2} | = & \frac{1}{\Delta} \Big|\sum_{\nu \in \mathcal{S}_\mu} \Lambda_{\mu\nu} (E_{\nu} -E_{\nu-1})\Big| 
\nonumber \\ &
\leq 
\frac{\Lambda^{*}}{\Delta} \sum_{\nu \in \mathcal{S}_\mu} (E_{\nu} -E_{\nu-1})
\nonumber \\ &
\leq 
\frac{2v \Lambda^{*}}{\Delta} = \mathcal{O}(\frac{1}{w})
\label{eq:boundSigma_minus2}
%
\end{align}
where we have used $E_{\nu} -E_{\nu-1} \geq 0$ to get  rid of the absolute value.
Note that the first and second derivatives of the SE operator \eqref{def:SEop_SC} w.r.t $\Sigma_\mu^{-2}$ 
are bounded as long as the five first moments of the posterior \eqref{def:posterior_coupled} exist and are 
bounded (which is true under our assumptions). Then by Taylor expansion at first 
order in $\delta \Sigma_\mu^{-2}$ and using the remainder theorem, we obtain
\begin{align} \label{eq:deltamstar}
|\delta E_{\mu}| &= |[T_{\rm{c}}({\tbf E})]_{\mu} - [T_{\rm{c}}({\tbf E})]_{\mu-1}| 
\le |\delta \Sigma_\mu^{-2}| \Big|\frac{\partial [T_{\rm{c}}({\tbf E})]_{\mu}}{\partial \Sigma_\mu^{-2}}\Big| + \mathcal{O}(\delta \Sigma_\mu^{-4}) \le \mathcal{O}(\frac{1}{w}),
\end{align}
where the last inequality follows from \eqref{eq:boundSigma_minus2}.
\end{proof}

\begin{proposition} \label{lemma:dF_ge_Uoverw}
Let $\tbf E$ be a saturated profile. Then for all $\Delta > 0$ there exists a constant $0 < C(\Delta) < +\infty$ independent of $L$ such that 
\begin{align}
|f^{\rm c}_{\rm RS}({\rm S}({\tbf E})) -f^{\rm c}_{\rm RS}({\tbf E})| \leq \frac{C(\Delta)}{w} .
\end{align}
\end{proposition}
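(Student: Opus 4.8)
The plan is to start from the exact second-order expansion of Lemma~\ref{lemma:Fdiff_quadraticForm}, which reduces the statement to a uniform bound on the quadratic form $\frac{1}{2}\sum_{\mu,\mu'}\delta E_\mu\,\delta E_{\mu'}\,\partial^2 f_{\rm RS}^{\rm c}/\partial E_\mu\partial E_{\mu'}\big|_{\tbf E_\lambda}$. Two ingredients suffice: (a) every entry of the Hessian of $f_{\rm RS}^{\rm c}$ has modulus $\mathcal{O}(1/w)$, uniformly over all profiles with components in $[0,v]$ and over all pairs $(\mu,\mu')$; and (b) for a saturated profile, $\sum_\mu|\delta E_\mu|$ is bounded uniformly in $L$. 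The interpolant $\tbf E_\lambda = (1-\lambda)\tbf E + \lambda\,{\rm S}(\tbf E)$ occurring in \eqref{eq:lemma_secondDerivativeFsc} has all components in $[0,v]$ — both $\tbf E$ and its shift do, using the boundary extension of the saturated profile in Figure~\ref{fig:errorProfile} — so (a) applies at the relevant point.

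For (a) I would differentiate \eqref{def:RSf_SC} twice. The coupling term contributes $\partial^2/\partial E_\mu\partial E_{\mu'} = \Lambda_{\mu\mu'}/(2\Delta)$, of modulus at most $\Lambda^{*}/(2\Delta)=\mathcal{O}(1/w)$ by assumption iv) on $\mathbf{\Lambda}$ (Section~\ref{subsec:SC}). Writing $\gamma_\alpha\defeq\Sigma_\alpha(\tbf E)^{-2}$, which by \eqref{eq:defSigma_mu} is an affine function of the $E_\nu$'s, the log-partition term contributes $\sum_\alpha\ell''(\gamma_\alpha)\,\Lambda_{\alpha\mu}\Lambda_{\alpha\mu'}/\Delta^2$, where $\ell$ is the single-site log-partition function in \eqref{def:RSf_SC} seen as a function of the scalar inverse-noise parameter. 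Since $\mathbf{\Lambda}$ is doubly stochastic and $E_\nu\in[0,v]$, the argument $\gamma_\alpha$ stays in the fixed compact interval $[0,v/\Delta]$; on it $|\ell''|$ is bounded by a finite constant depending only on $P_0$ and $\Delta$, by the very same bounded-moments argument used in the proof of Lemma~\ref{lemma:smooth} (the posterior \eqref{def:posterior_coupled} has bounded support, hence all its moments are bounded uniformly, and the apparent $\gamma^{-1/2}$ behaviour at $\gamma=0$ disappears after averaging over the Gaussian $Z$). Finally $\sum_\alpha\Lambda_{\alpha\mu}\Lambda_{\alpha\mu'}\le\Lambda^{*}\sum_\alpha\Lambda_{\alpha\mu}=\Lambda^{*}=\mathcal{O}(1/w)$ by double stochasticity, so this term is $\mathcal{O}(1/w)$ as well. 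By the triangle inequality $\big|\partial^2 f_{\rm RS}^{\rm c}/\partial E_\mu\partial E_{\mu'}\big|\le K(\Delta)/w$ for a suitable finite $K(\Delta)$, uniformly in $(\mu,\mu')$ and in the profile.

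For (b), by construction (Figure~\ref{fig:errorProfile}) a saturated profile is non-decreasing, equal to $E_{\rm good}(\Delta)$ for $\mu\le\mu_*$ and to $E_{\rm max}$ for $\mu\ge\mu_{\rm max}$; hence $\delta E_\mu\ge0$ and $\sum_{\mu=0}^{L}\delta E_\mu = E_{\rm max}-E_{\rm good}(\Delta)\le v$. Combining (a), (b) and Lemma~\ref{lemma:Fdiff_quadraticForm},
\begin{align}
\big|f_{\rm RS}^{\rm c}({\rm S}(\tbf E))-f_{\rm RS}^{\rm c}(\tbf E)\big|\;\le\;\frac{1}{2}\,\frac{K(\Delta)}{w}\Big(\sum_{\mu=0}^{L}|\delta E_\mu|\Big)^{2}\;\le\;\frac{K(\Delta)\,v^{2}}{2\,w}\;=:\;\frac{C(\Delta)}{w},
\end{align}
with $0<C(\Delta)<\infty$ independent of $L$, which is the claim.

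The only genuinely delicate point is the uniform bound on $|\ell''|$ near $\gamma_\alpha=0$, i.e.\ in the infinite-effective-noise regime where the scalar denoising problem degenerates to sampling from $P_0$; this is controlled precisely because $\gamma_\alpha$ lives in the compact set $[0,v/\Delta]$ and $P_0$ has bounded support, so one can quote verbatim the moment bounds already used for Lemma~\ref{lemma:smooth}. Everything else is bookkeeping with assumption iv) on the coupling window and the double stochasticity of $\mathbf{\Lambda}$.
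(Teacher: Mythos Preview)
Your proof is correct and in fact somewhat cleaner than the paper's. Both arguments start from the exact second-order expansion of Lemma~\ref{lemma:Fdiff_quadraticForm} and both need that the derivative $\partial [T_{\rm c}(\tbf E)]_\nu/\partial\Sigma_\nu^{-2}$ (your $\ell''$) is uniformly bounded. The difference lies in how the quadratic form is controlled. The paper first invokes Lemma~\ref{lemma:smooth} to get $\delta E^{*}=\max_\mu|\delta E_\mu|=\mathcal{O}(1/w)$, and then bounds each of the two Hessian pieces in \eqref{eq:hessian_explicit} by pulling out one factor $\delta E^{*}$, one factor $\Lambda^{*}$, and counting the cardinalities $|\mathcal{S}_\mu|=2w+1$ and $\sum_{\mu'}|\mathcal{S}_\mu\cap\mathcal{S}_{\mu'}|=(2w+1)^2$; the remaining $\sum_\mu\delta E_\mu$ telescopes. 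You instead observe directly that every single Hessian entry is $\mathcal{O}(1/w)$, using $\sum_\alpha\Lambda_{\alpha\mu}\Lambda_{\alpha\mu'}\le\Lambda^{*}\sum_\alpha\Lambda_{\alpha\mu}=\Lambda^{*}$ from double stochasticity, and then bound the full double sum by $(K(\Delta)/w)\bigl(\sum_\mu\delta E_\mu\bigr)^2\le K(\Delta)v^2/w$ via two telescopings. This bypasses Lemma~\ref{lemma:smooth} entirely: your argument needs only the monotonicity of the saturated profile, not its smoothness. The paper's route makes the window locality more visible but is slightly heavier; yours is more economical and would also work for coupling matrices that are doubly stochastic with $\Lambda^{*}=\mathcal{O}(1/w)$ without any further structural assumption.
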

\begin{proof}
From Lemma~\ref{leminterp}, in order to compute the free energy difference between the shifted and non-shifted profiles, we need to compute the Hessian associated with this free energy. We have
\begin{align}
\frac{\partial f_{\rm RS}^{\rm{c}}({\tbf E})}{\partial {E_\mu}} &= \sum_{\nu \in \mathcal{S}_\mu} \frac{\Lambda_{\mu, \nu} E_\nu}{2\Delta} -\frac{1}{2} \sum_\nu \frac{\partial \Sigma_\nu(\tbf E)^{-2}}{\partial E_\mu} [T_{\rm{c}}({\tbf E})]_\nu - \frac{v}{2\Delta} 
\nonumber \\
&= \frac{1}{2\Delta}\Big(\sum_{\nu \in \mathcal{S}_\mu} \Lambda_{\mu, \nu}E_\nu+\sum_{\nu \in\mathcal{S}_\mu}\Lambda_{\mu, \nu} [T_{\rm{c}}({\tbf E})]_\nu-v\Big)
\end{align}
and 
\begin{align}
\frac{\partial^2 f_{\rm RS}^{\rm{c}}({\tbf E})}{\partial {E_\mu}\partial {E_{\mu'}}} &= \frac{1}{2\Delta}\Big( \Lambda_{\mu,\mu'}\mathbb{I}(\mu' \in \mathcal{S}_\mu) - \frac{1}{\Delta}\sum_{\nu \in \mathcal{S}_\mu\cap \mathcal{S}_{\mu'}} \Lambda_{\mu ,\nu}\Lambda_{\mu',\nu}\frac{\partial [T_{\rm{c}}({\tbf E})]_\nu }{\partial \, \Sigma_\nu^{-2}} \Big). 
\label{eq:hessian_explicit}
\end{align}
We can now estimate the sum in the Lemma~\ref{leminterp}. The contribution of the first term on the r.h.s of \eqref{eq:hessian_explicit} can be bounded as
\begin{align}
\frac{1}{2\Delta} \Big|\sum_{\mu=0}^L\sum_{\mu'\in \mathcal{S}_\mu}\Lambda_{\mu,\mu'}\delta E_{\mu}\delta E_{\mu'}\Big| \le \frac{\delta E^*\Lambda^*(2w+1)}{2\Delta} \Big|\sum_{\mu=0}^L\delta E_{\mu}\Big| \le \mathcal{O}(\frac{1}{w}), \label{eq:firstTermbound}
\end{align}
where we used the facts: $\delta E_\mu \geq 0$, the sum over $\mu=0, \cdots, L$ is telescopic, $E_\mu\in[0,v]$, $\Lambda^* = \mathcal{O}(1/w)$ and $\delta E^{*} = \mathcal(w^{-1})$ (Lemma~\ref{lemma:smooth}). We now bound the contribution of the second term on the r.h.s of \eqref{eq:hessian_explicit}. Recall the first derivative w.r.t $\Sigma_{\nu}^{-2}$ of the SE operator is bounded uniformly in $L$. Call this bound $K=\mathcal{O}(1)$. We obtain
\begin{align}
&\frac{1}{2\Delta^2} \Big|\sum_{\mu, \mu'=1}^L \delta E_\mu \delta E_{\mu '} \sum_{\nu \in \mathcal{S}_\mu\cap \mathcal{S}_{\mu'}}\Lambda_{\mu ,\nu}\Lambda_{\mu',\nu} \frac{\partial [T_{\rm{c}}({\tbf E})]_\nu }{\partial \Sigma_{\nu}^{-2}}\Big| \nonumber\\ 
\le \ &\frac{K\Lambda^{*2}\delta E^*}{2\Delta^2} \Big|\sum_{\mu=1}^L \delta E_\mu \sum_{\mu'\in \{\mu - 2w : \mu + 2w\}} {\rm card}(\mathcal{S}_\mu\cap \mathcal{S}_{\mu'})\Big| \le \mathcal{O}(\frac{1}{w}). \label{eq:last_secondTerm_boudHessian}
\end{align}
The last inequality follows from the following facts: the sum over $\mu=1, \cdots, L$ is telescopic, $\Lambda^* = \mathcal{O}(1/w)$, Lemma~\ref{lemma:smooth}, and for any fixed $\mu$ the following holds
\be
\sum_{\mu'\in \{\mu - 2w : \mu + 2w\}} {\rm card}(\mathcal{S}_\mu\cap \mathcal{S}_{\mu'}) = (2w+1)^2.
\ee
Finally, from \eqref{eq:firstTermbound}, \eqref{eq:last_secondTerm_boudHessian} and the triangle inequality we obtain
\begin{align}
\frac{1}{2}\Big| \sum_{\mu,\mu'=1}^{L} \delta {E}_\mu \delta {E}_{\mu'} \frac{\partial^2 f^{\rm c}_{\rm RS}({\tbf E})}{\partial  E_{\mu}\partial  E_{\mu'}}\Big|_{{\tbf E}_\lambda}\Big| = \mathcal{O}(\frac{1}{w})
\end{align}
uniformly in $L$. Combining this result with Lemma~\ref{leminterp} ends the proof.
\end{proof}
\subsubsection{Lower bound on the potential variation under a shift}
The second step in the proof is based on a direct evaluation of $f_{\rm RS}^{\rm c}({\rm S}(E)) - f_{\rm RS}^{\rm c}(E)$.
We first need the Lemma:
\begin{lemma}
Let ${\tbf E}$ be a saturated profile such that ${\tbf E}\succ {\tbf E}_{\rm good}(\Delta)$. Then $ E_{\rm max} \notin \mathcal{V}_{\rm good}$.
\label{lemma:outside_basin}
\end{lemma}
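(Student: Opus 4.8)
The plan is to argue by contradiction: suppose $E_{\rm max} \in \mathcal{V}_{\rm good}$, i.e. the point $E_{\rm max}$ lies in the basin of attraction of the good fixed point $E_{\rm good}(\Delta)$ of the scalar (underlying) SE operator $T_{\rm u}$. I would then leverage the shape of the saturated profile $\tbf E$ (Figure~\ref{fig:errorProfile}) together with the monotonicity properties of the coupled operator $T_{\rm c}$ established in Lemma~\ref{lemma:spaceDegrad} and Corollary~\ref{cor:timeDegrad}. Concretely, since $\tbf E$ saturates at the constant value $E_{\rm max}$ for all $\mu \ge \mu_{\rm max}$ (including the fictitious indices $\mu > L$), the effective noise parameter $\Sigma_\mu(\tbf E)^{-2}$ for blocks deep in the saturated region depends only on $E_{\rm max}$ (because $\bsy{\Lambda}$ is doubly stochastic and has coupling window $w$), so there the coupled operator reduces to the scalar one: $[T_{\rm c}(\tbf E)]_\mu = T_{\rm u}(E_{\rm max})$ for $\mu$ sufficiently far inside the plateau.

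The next step is to exploit that $\tbf E$ is a profile which is pointwise at least $E_{\rm good}(\Delta)$ and that, on the plateau, iterating $T_{\rm c}$ locally coincides with iterating $T_{\rm u}$ from $E_{\rm max}$. If $E_{\rm max} \in \mathcal{V}_{\rm good}$, then $T_{\rm u}^{(\infty)}(E_{\rm max}) = E_{\rm good}(\Delta)$; combined with the degradation-preserving property of $T_{\rm c}$ (which lets us sandwich the coupled iteration between scalar iterations run on the constant profiles $\boldsymbol{E_{\rm max}}$ and $\tbf E_{\rm good}(\Delta)$) this forces the coupled SE, started from a profile degraded below $\boldsymbol{E_{\rm max}}$ and above $\tbf E_{\rm good}(\Delta)$, to collapse down towards $\tbf E_{\rm good}(\Delta)$ — contradicting the fact that $\tbf E$ was built from a fixed point profile $\tbf E^*$ that does \emph{not} satisfy $\tbf E^* \prec \tbf E_{\rm good}(\Delta)$ and that $\tbf E \succeq \tbf E^* \succ \tbf E_{\rm good}(\Delta)$. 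More carefully, one shows $T_{\rm c}^{(\infty)}(\boldsymbol{E_{\rm max}}) = \tbf E_{\rm good}(\Delta)$ using $T_{\rm c}(\boldsymbol{E_{\rm max}}) = \boldsymbol{T_{\rm u}(E_{\rm max})}$ (constant profiles are mapped to constant profiles) and induction, so that $\boldsymbol{E_{\rm max}} \succeq \tbf E^* = T_{\rm c}^{(\infty)}(\tbf E^*)$ gives $\tbf E_{\rm good}(\Delta) = T_{\rm c}^{(\infty)}(\boldsymbol{E_{\rm max}}) \succeq T_{\rm c}^{(\infty)}(\tbf E^*) = \tbf E^*$, i.e. $\tbf E^* \preceq \tbf E_{\rm good}(\Delta)$, which (since $\tbf E^*$ cannot be constant equal to $E_{\rm good}$ given the hypothesis $\tbf E \succ \tbf E_{\rm good}$) yields the contradiction.

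The main obstacle I anticipate is handling the boundary/truncation carefully: the identity $[T_{\rm c}(\boldsymbol{E_{\rm max}})]_\mu = T_{\rm u}(E_{\rm max})$ relies on the extended profile genuinely being constant across a full coupling window around $\mu$, which is exactly why the saturated profile is extended to indices $\mu > L$ with value $E_{\rm max}$ (the red branch in Figure~\ref{fig:errorProfile}) and why double stochasticity of $\bsy{\Lambda}$ is needed, so $\sum_{\nu \in \mathcal{S}_\mu} \Lambda_{\mu\nu} E_{\rm max} = E_{\rm max}$. I would need to make precise that the relevant comparison only needs to hold on the constant-profile inputs $\boldsymbol{E_{\rm max}}$ and $\tbf E_{\rm good}(\Delta)$, for which the extended-profile subtlety disappears entirely since those are globally constant, so the monotonicity argument goes through cleanly. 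The remaining care is purely in bookkeeping the two-sided degradation $\tbf E_{\rm good}(\Delta) \preceq \tbf E^* \preceq \boldsymbol{E_{\rm max}}$ and invoking Corollary~\ref{cor:timeDegrad} for the existence and monotone convergence of the iterates.
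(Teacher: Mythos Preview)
Your argument is correct but takes a genuinely different route from the paper. The paper proceeds \emph{directly} and \emph{locally}: it observes that at the single index $\mu_{\rm max}$ one has
\[
E_{\rm max} = E^*_{\mu_{\rm max}} = [T_{\rm c}(\tbf E^*)]_{\mu_{\rm max}} \le [T_{\rm c}(\tbf E)]_{\mu_{\rm max}} \le T_{\rm u}(E_{\rm max}),
\]
the first inequality from $\tbf E \succeq \tbf E^*$ and monotonicity of $T_{\rm c}$, the second from $E_\nu \le E_{\rm max}$ for all $\nu$ (so the local signal-to-noise is at least as bad as for the constant profile $E_{\rm max}$). From $T_{\rm u}(E_{\rm max}) \ge E_{\rm max}$ and monotonicity of $T_{\rm u}$ one gets $T_{\rm u}^{(\infty)}(E_{\rm max}) \ge E_{\rm max} > E_{\rm good}(\Delta)$, hence $E_{\rm max}\notin\mathcal{V}_{\rm good}$. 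This is a two-line argument that never iterates the coupled operator.

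Your approach instead runs a \emph{global} sandwich: you iterate $T_{\rm c}$ from the constant profile $\boldsymbol{E_{\rm max}}$, use that constant profiles evolve by $T_{\rm u}$, and conclude $\tbf E^* \preceq T_{\rm c}^{(\infty)}(\boldsymbol{E_{\rm max}}) = \tbf E_{\rm good}(\Delta)$, hence $E_{\rm max} \le E_{\rm good}(\Delta)$, contradicting $\tbf E \succ \tbf E_{\rm good}(\Delta)$. This works, but note two small points: (i) your claim $\tbf E^* \succ \tbf E_{\rm good}(\Delta)$ is false as written (the pinned components of $\tbf E^*$ are zero), though it is not actually needed; (ii) $\tbf E^*$ is a fixed point of the \emph{pinned} iteration, not of the raw $T_{\rm c}$, so the step $T_{\rm c}^{(\infty)}(\tbf E^*) = \tbf E^*$ needs the observation $T_{\rm c}(\tbf E^*) \succeq \tbf E^*$ (unpinning can only raise the $\mathcal{B}$-components), which still yields $\tbf E^* \preceq T_{\rm c}^{(\infty)}(\tbf E^*)$ and the sandwich closes. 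The paper's local argument sidesteps all of this bookkeeping and is the cleaner proof here; your global comparison is a standard and reusable technique that would generalize more readily to settings where the single-index inequality is less transparent.
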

\begin{proof}
The fact that the error profile is non decreasing and the assumption that ${\tbf E} \succ {\tbf E}_{\rm good}(\Delta)$ imply that $E_{\rm max} > E_0 =E_{\rm good}(\Delta)$. Moreover,
$E_{\rm max} \le [T_{\rm c}({\tbf E})]_{\mu_{\rm max}} \le T_{\rm u}(E_{\rm max})$
where the first inequality follows from $\tbf E\succ \tbf E^*$ and the monotonicity of $T_{\rm c}$, while the second comes from the fact that $\tbf E$ is non decreasing. Combining these with the 
monotonicity of $T_{\rm u}$ gives $T_{\rm u}(E_{\rm max}) \ge  E_{\rm max}$ which implies $T_{\rm u}^{(\infty)}( E_{\rm max}) \ge  E_{\rm max}> E_{\rm good}(\Delta)$ which means $E_{\rm max} \notin \mathcal {V}_{\rm good}$.
\end{proof}
\begin{proposition}\label{lemma:FcSE_minus_FcE_le_deltaF}
Fix $\Delta < \Delta_{\rm RS}$ and let ${\tbf E}$ be a saturated profile such that ${\tbf E}\succ {\tbf E}_{\rm good}(\Delta)$. Then 
\begin{align}
|f^{\rm c}_{\rm RS}(S(\tbf{E})) - f^{\rm c}_{\rm RS}(\tbf{E})| \ge \delta f^{\rm u}_{\rm RS}(\Delta)
\end{align}
where $\delta f^{\rm u}_{\rm RS}(\Delta)$ is the potential gap (Definition~\ref{def:fgap}).
\end{proposition}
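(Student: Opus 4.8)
The plan is to compute the difference $f^{\rm c}_{\rm RS}(S(\tbf E)) - f^{\rm c}_{\rm RS}(\tbf E)$ \emph{exactly} and to recognize the answer as a difference of two values of the underlying potential $f^{\rm u}_{\rm RS}$. The mechanism is the translation invariance of the coupling matrix: since $\Lambda_{\mu\nu}$ depends only on $|\mu-\nu|$ we have $\Lambda_{\mu-1,\nu-1}=\Lambda_{\mu\nu}$ and $\mathcal{S}_{\mu-1}=\mathcal{S}_\mu-1$, so relabelling the summation indices $\mu\to\mu+1$, $\nu\to\nu+1$ in each sum defining $f^{\rm c}_{\rm RS}(S(\tbf E))$ turns it into the \emph{same} expression as $f^{\rm c}_{\rm RS}(\tbf E)$ but with the block index running over $\{-1,\ldots,L-1\}$ instead of $\{0,\ldots,L\}$ (here the saturated profile is understood to be extended to all $\mu\in\mathbb{Z}$ by the constants $E_{\rm good}(\Delta)$ for $\mu<0$ and $E_{\rm max}$ for $\mu>L$, exactly as in Fig.~\ref{fig:errorProfile}). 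Hence both the interaction double-sum and the single-site ``entropy'' sum telescope, and the difference collapses to two boundary contributions: the block $\mu=-1$ is gained and the block $\mu=L$ is lost.

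Next I would evaluate those two boundary blocks. The key observation is that the coupling window of block $-1$, namely $\mathcal{S}_{-1}=\{-1-w,\ldots,w-1\}$, lies entirely in the region where the saturated profile equals $E_{\rm good}(\Delta)$: the pinning condition forces $E^*_\mu=0\le E_{\rm good}(\Delta)$ for $\mu\in\{0:w-1\}$, hence $\mu_*\ge w-1$, and by construction $E_\mu=E_{\rm good}(\Delta)$ for $\mu\le\mu_*$ and for $\mu<0$. Similarly $\mathcal{S}_L=\{L-w,\ldots,L+w\}$ lies in the region where the profile equals $E_{\rm max}$, because $\mu_{\rm max}\le L-w-1$. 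On such constant stretches, double stochasticity $\sum_\nu\Lambda_{\mu\nu}=1$ gives $\Sigma_{-1}(\tbf E)^{-2}=(v-E_{\rm good}(\Delta))/\Delta=\Sigma(E_{\rm good}(\Delta))^{-2}$ and $\Sigma_L(\tbf E)^{-2}=\Sigma(E_{\rm max})^{-2}$, so the coupled single-site term at these blocks reduces to the \emph{underlying} scalar entropy term, while the boundary interaction terms become $(v-E_{\rm good}(\Delta))^2/(4\Delta)$ and $(v-E_{\rm max})^2/(4\Delta)$. Collecting the pieces and comparing with the definition of $i_{\rm RS}$ and $f^{\rm u}_{\rm RS}(E)=i_{\rm RS}(E;\Delta)-v^2/(4\Delta)=(v-E)^2/(4\Delta)-\mathbb{E}_{S,Z}[\ln(\cdots)]$ yields the clean identity
\begin{align}
f^{\rm c}_{\rm RS}(S(\tbf E)) - f^{\rm c}_{\rm RS}(\tbf E) = f^{\rm u}_{\rm RS}(E_{\rm good}(\Delta)) - f^{\rm u}_{\rm RS}(E_{\rm max}).
\end{align}

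Finally I would invoke Lemma~\ref{lemma:outside_basin} and the definition of the potential gap to conclude. Since $\tbf E\succ\tbf E_{\rm good}(\Delta)$, Lemma~\ref{lemma:outside_basin} gives $E_{\rm max}\notin\mathcal{V}_{\rm good}$, whereas $E_{\rm good}(\Delta)\in\mathcal{V}_{\rm good}$ as a fixed point of $T_{\rm u}$; therefore $f^{\rm u}_{\rm RS}(E_{\rm max})-f^{\rm u}_{\rm RS}(E_{\rm good}(\Delta))\ge \inf_{E\notin\mathcal{V}_{\rm good}}(f^{\rm u}_{\rm RS}(E)-f^{\rm u}_{\rm RS}(E_{\rm good}))=\delta f^{\rm u}_{\rm RS}(\Delta)$, which is moreover strictly positive because $\Delta<\Delta_{\rm RS}$. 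Taking absolute values in the displayed identity gives $|f^{\rm c}_{\rm RS}(S(\tbf E))-f^{\rm c}_{\rm RS}(\tbf E)| = f^{\rm u}_{\rm RS}(E_{\rm max})-f^{\rm u}_{\rm RS}(E_{\rm good}(\Delta)) \ge \delta f^{\rm u}_{\rm RS}(\Delta)$, as claimed. I expect the only delicate point to be the bookkeeping in the first two steps: verifying that after the index shift exactly the two advertised boundary blocks survive, and that both of these blocks lie at distance more than $w$ from the non-constant part of the saturated profile, so that $\Sigma_{-1}$ and $\Sigma_L$ genuinely reduce to the scalar values $\Sigma(E_{\rm good})$ and $\Sigma(E_{\rm max})$ — this is precisely where the structure of the saturated profile (pinning at both ends, monotonicity, $\mu_{\rm max}\le L-w-1$) is used.
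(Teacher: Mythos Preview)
Your proposal is correct and follows essentially the same approach as the paper: exploit the translation invariance $\Lambda_{\mu+1,\nu+1}=\Lambda_{\mu\nu}$ so that both the interaction double-sum and the single-site term telescope down to the boundary blocks $\mu=-1$ and $\mu=L$, use the constancy of the saturated profile on the windows $\mathcal{S}_{-1}$ and $\mathcal{S}_L$ together with double stochasticity to reduce $\Sigma_{-1}$ and $\Sigma_L$ to the scalar values $\Sigma(E_{\rm good})$ and $\Sigma(E_{\rm max})$, recognize the result as $f^{\rm u}_{\rm RS}(E_{\rm good})-f^{\rm u}_{\rm RS}(E_{\rm max})$, and finish via Lemma~\ref{lemma:outside_basin} and the definition of the potential gap. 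Your justification that $\mu_*\ge w-1$ and $\mu_{\rm max}\le L-w-1$ (the ``delicate bookkeeping'' you flag) is exactly what is needed and matches the paper's argument.
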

\begin{proof}
Set 
\begin{align*}
\mathcal{I}(\Sigma) \defeq \mathbb{E}_{S,Z}\Big[\ln\Big(\int dxP_0(x) e^{- \frac{1}{2\Sigma^2}\big(x^2 -2 Sx - 2\Sigma Zx\big)}\Big)\Big]\, .
\end{align*}
By \eqref{def:RSf_SC} 
\begin{align}
f^{\rm c}_{\rm RS}({\rm S}(\tbf{E})) - f^{\rm c}_{\rm RS}(\tbf{E})=&\sum_{\mu = -1}^{L-1} \sum_{\nu \in \mathcal{S}_{\mu}} \frac{\Lambda_{\mu+1\nu+1} }{4\Delta}(v\!-\!E_{\mu}) (v\!-\!E_{\nu}) - \sum_{\mu = 0}^{L} \sum_{\nu \in \mathcal{S}_{\mu}} \frac{\Lambda_{\mu\nu} }{4\Delta}(v\!-\!E_{\mu}) (v\!-\!E_{\nu}) \nonumber \\
& \quad - \sum_{\mu=0}^L \mathcal{I}(\Sigma_{\mu}({\rm S}({\bf E}))) + \sum_{\mu=0}^L \mathcal{I}(\Sigma_{\mu}(\bf E))\nonumber\\
= &\sum_{\nu \in \mathcal{S}_{-1}} \frac{\Lambda_{\mu\nu} }{4\Delta}(v\!-\!E_{\mu}) (v\!-\!E_{\nu}) - \sum_{\nu \in \mathcal{S}_{L}} \frac{\Lambda_{\mu\nu} }{4\Delta}(v\!-\!E_{\mu}) (v\!-\!E_{\nu}) \nonumber \\
& \quad - \mathcal{I}(\Sigma_{-1}({\bf E})) + \mathcal{I}(\Sigma_{L}(\bf E)), \label{eq:lastFcSE_minus_FcE}
\end{align}
where we used $\Lambda_{\mu+1\nu+1}=\Lambda_{\mu\nu}$ implying also $\Sigma_{\mu}({\rm S}({\bf E})) = \Sigma_{\mu-1}({\bf E})$ as seen from \eqref{eq:defSigma_mu}. Recall $\Sigma(E)^{-2}= (v - E)/\Delta$. Now looking at \eqref{eq:defSigma_mu}, one notices that thanks to the saturation of ${\tbf E}$, $\Sigma_{-1}({\tbf E}) = \Sigma(E_0)$ where $E_0=E_{\rm good}(\Delta)$ (see the green branch in Fig.~\ref{fig:errorProfile}), while $\Sigma_{L}({\tbf E}) = \Sigma(E_L)$ where $E_L = E_{\rm{max}}$ (see the red branch Fig.~\ref{fig:errorProfile}). Finally from \eqref{eq:lastFcSE_minus_FcE}, using that the coupling matrix is (doubly) stochastic and the saturation of ${\bf E}$
\begin{align}
f^{\rm c}_{\rm RS}({\rm S}(\tbf{E})) - f^{\rm c}_{\rm RS}(\tbf{E}) &= \Big[\frac{(v\!-\!E_{0})^2 }{4\Delta} - \mathcal{I}(\Sigma(E_{\rm good}(\Delta))) \Big] - \Big[\frac{(v\!-\!E_{L})^2 }{4\Delta} - \mathcal{I}(\Sigma(E_L)) \Big] \nonumber\\
&= f^{\rm u}_{\rm RS}(E_{\rm good}) - f^{\rm u}_{\rm RS}(E_{\rm max}) \le -\delta f^{\rm u}_{\rm RS}(\Delta),
\end{align}
where we recognized the potential function of the underlying system $f_{\rm RS}^{\rm u}(E;\Delta) = i_{\rm RS}(E;\Delta) - \frac{v^2}{4\Delta}$ and the last inequality is a direct application of Lemma~\ref{lemma:outside_basin} and Definition \ref{def:fgap}. Finally, using the positivity of $\delta f^{\rm u}_{\rm RS}(\Delta)$ for $\Delta<\Delta_{\rm{RS}}$, we obtain the desired result.
\end{proof}

\subsubsection{End of proof of threshold saturation}\label{endsat}

We now have the necessary ingredients in order to prove threshold saturation.
\begin{theorem}[Asymptotic performance of AMP for the coupled system] \label{th:mainth_thsat}
Fix $\Delta<\Delta_{\rm{RS}}$. Take a spatially coupled system with $w>C(\Delta)/\delta f^{\rm u}_{\rm RS}(\Delta)$ where $C(\Delta)$ is the constant in 
Proposition \ref{lemma:dF_ge_Uoverw}. Then any fixed point profile ${\tbf{E}^*}$ of the coupled state evolution iteration \eqref{equ:statevolutioncoupled} must satisfy ${\tbf{E}}^* \prec {\tbf{E}}_{\rm good}(\Delta)$.
\end{theorem}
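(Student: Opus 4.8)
The plan is to argue by contradiction, in the spirit of the threshold-saturation proofs from coding theory, feeding the two one-sided estimates on the variation of the coupled potential under the unit shift $\mathrm{S}$ that have just been established. Suppose that for some $\Delta<\Delta_{\rm RS}$ and some $w>C(\Delta)/\delta f^{\rm u}_{\rm RS}(\Delta)$ there exists a fixed point profile ${\tbf E}^*$ of \eqref{equ:statevolutioncoupled} with ${\tbf E}^*\not\prec{\tbf E}_{\rm good}(\Delta)$. First I would use the shape analysis summarized around Fig.~\ref{fig:errorProfile}: the pinning condition forces $E^*_\mu=0$ on the seed blocks, and the spatial and temporal monotonicity of $T_{\rm c}$ (Lemma~\ref{lemma:spaceDegrad} and Corollary~\ref{cor:timeDegrad}) force ${\tbf E}^*$ to be unimodal, rising from $0$ to a maximal value $E_{\rm max}\in[0,v]$ and decreasing back to $0$. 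Since $E_{\rm good}(\Delta)>0$ (the point $E=0$ is not a fixed point of $T_{\rm u}$, so iterating $T_{\rm u}$ from $0$ lands strictly above $0$), the assumption ${\tbf E}^*\not\prec{\tbf E}_{\rm good}(\Delta)$ forces $E^*_\mu>E_{\rm good}(\Delta)$ for some $\mu$, hence $E_{\rm max}>E_{\rm good}(\Delta)$. I then form the associated saturated profile ${\tbf E}$ exactly as in Fig.~\ref{fig:errorProfile}: non-decreasing in $\mu$, equal to $E_{\rm good}(\Delta)$ for small $\mu$, matching ${\tbf E}^*$ in the middle, and saturating at $E_{\rm max}$ for large $\mu$; by construction ${\tbf E}\succeq{\tbf E}^*$ and ${\tbf E}\succ{\tbf E}_{\rm good}(\Delta)$.

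Next I would simply feed this saturated profile into the two propositions of this subsection. Proposition~\ref{lemma:dF_ge_Uoverw} gives the upper bound $|f^{\rm c}_{\rm RS}(\mathrm{S}({\tbf E}))-f^{\rm c}_{\rm RS}({\tbf E})|\le C(\Delta)/w$ with $C(\Delta)$ independent of $L$ (this is where the smoothness of ${\tbf E}$ from Lemma~\ref{lemma:smooth}, the vanishing of the first-order Taylor term from Lemma~\ref{leminterp}, and the explicit Hessian estimate enter). Proposition~\ref{lemma:FcSE_minus_FcE_le_deltaF}, applicable precisely because ${\tbf E}\succ{\tbf E}_{\rm good}(\Delta)$, gives the lower bound $|f^{\rm c}_{\rm RS}(\mathrm{S}({\tbf E}))-f^{\rm c}_{\rm RS}({\tbf E})|\ge\delta f^{\rm u}_{\rm RS}(\Delta)$ (through the telescoping of the boundary terms and Lemma~\ref{lemma:outside_basin}, which places $E_{\rm max}$ outside $\mathcal{V}_{\rm good}$). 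Chaining the two yields $\delta f^{\rm u}_{\rm RS}(\Delta)\le C(\Delta)/w$, i.e. $w\le C(\Delta)/\delta f^{\rm u}_{\rm RS}(\Delta)$, which contradicts the choice of $w$; this step uses crucially that $\Delta<\Delta_{\rm RS}$, so that $\delta f^{\rm u}_{\rm RS}(\Delta)>0$ and the inequality is not vacuous. Because both bounds are uniform in $L$, the same $w$ works for all chain lengths simultaneously. Hence no fixed point profile can violate ${\tbf E}^*\prec{\tbf E}_{\rm good}(\Delta)$, proving the theorem; Theorem~\ref{thres-sat-lemma} then follows by taking $\liminf_{w,L\to\infty}$ in Definition~\ref{def:AMPcoupled}.

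Since the heavy lifting lives entirely in Propositions~\ref{lemma:dF_ge_Uoverw} and \ref{lemma:FcSE_minus_FcE_le_deltaF}, there is no serious obstacle remaining here; the only point requiring care is verifying that the constructed profile ${\tbf E}$ is simultaneously degraded with respect to the fixed point (${\tbf E}\succeq{\tbf E}^*$, which is what makes the first-order Taylor term vanish through the fixed-point relation and Lemma~\ref{lemma:fixedpointSE_extPot}, and what drives Lemma~\ref{lemma:outside_basin}) and \emph{strictly} degraded with respect to ${\tbf E}_{\rm good}(\Delta)$. This is exactly what the threshold index $\mu_*$ in the construction of Fig.~\ref{fig:errorProfile} is designed to guarantee, so the argument closes.
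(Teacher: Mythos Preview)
Your proposal is correct and follows essentially the same contradiction argument as the paper: assume a bad fixed point, build the associated saturated profile, and confront the upper bound of Proposition~\ref{lemma:dF_ge_Uoverw} with the lower bound of Proposition~\ref{lemma:FcSE_minus_FcE_le_deltaF} to force $w\le C(\Delta)/\delta f^{\rm u}_{\rm RS}(\Delta)$. You give more detail than the paper on why the saturated profile exists and why ${\tbf E}\succ{\tbf E}_{\rm good}(\Delta)$, but the logical skeleton is identical.
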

\begin{proof}
The proof is by contradiction. Fix $\Delta < \Delta_{\rm RS}$ and $w \ge C(\Delta)/\delta f^{\rm u}_{\rm RS}(\Delta)$.
We assume there exists a fixed point profile which does not satisfy $\tbf E^*\prec {\tbf{E}}_{\rm good}(\Delta)$. Then we construct
the associated saturated profile ${\tbf E}$. This profile satisfies both statements 
of Propositions ~\ref{lemma:dF_ge_Uoverw} and \ref{lemma:FcSE_minus_FcE_le_deltaF}. Therefore we must have
$\delta f^{\rm u}_{\rm RS}(\Delta)\le C(\Delta)/w$ which contradicts the choice  $w > C(\Delta)/\delta f^{\rm u}_{\rm RS}(\Delta)$.
We conclude that ${\tbf E}^* \prec {\tbf{E}}_{\rm good}(\Delta)$ must be true.
\end{proof}
%
%

Theorem \ref{thres-sat-lemma} is a direct corollary of Theorem~\ref{th:mainth_thsat} and Definition~\ref{def:AMPcoupled}. Take 
some $\Delta_* < \Delta_{\rm RS}$ and choose $w > C(\Delta_*)/\delta f^{\rm u}_{\rm RS}(\Delta_*)$. Then we have 
$\Delta_{{\rm AMP}, w, L} \geq \Delta_*$. Note that $\delta f^{\rm u}_{\rm RS}(\Delta_*) \to 0_+$ for $\Delta_*\to \Delta_{\rm RS}$. Thus  
Taking $L\to +\infty$ first and $w\to +\infty$ second we can make $\Delta_*$ as close to $\Delta_{\rm RS}$ as we wish. Therefore we obtain $\Delta_{\rm AMP}^{\rm c} \defeq\liminf_{L, w\to +\infty} \Delta_{{\rm AMP}, w, L} \geq \Delta_{\rm RS}$ where the limit is taken in the specified order.

\section{Invariance of the mutual information under spatial coupling}\label{sec:subadditivitystyle}
In this section we prove that the mutual information remains unchanged under spatial coupling in a suitable asymptotic limit 
(Theorem \ref{LemmaGuerraSubadditivityStyle}). 
We will compare the mutual informations of the four following variants of \eqref{eq:def_coupledSyst}. In each case, the signal $\bs$ has $n(L+1)$ i.i.d components.
\begin{itemize}
\item \emph{The fully connected:} If we choose $w = L/2$ and a homogeneous coupling matrix with elements $\Lambda_{\mu,\nu} = (L+1)^{-1}$ in \eqref{eq:def_coupledSyst}. This yields a homogeneous fully connected system equivalent to \eqref{eq:mainProblem} with $n(L+1)$ instead of $n$ variables. The associated mutual information per variable for fixed $L$ and $n$ is denoted by $i_{n, L}^{\rm con}$.
\item \emph{The SC pinned system:} This is the system studied in Section~\ref{sec:thresh-sat} to prove threshold saturation, with the pinning condition. In this case we choose $0<w < L/2$. The coupling matrix $\boldsymbol{\Lambda}$ is any matrix that fulfills the requirements in Section~\ref{subsec:SC} (the concrete example given there will do). The associated mutual information per variable is here denoted $i_{n, w, L}^{\rm cou}$. Note that $i_{n, w, L}^{\rm cou} = (n(L+1))^{-1} I_{w,L}(\tbf{S}; \bW)$
\item \emph{The periodic SC system:} This is the same SC system (with same coupling window and coupling matrix) but without the pinning condition. The associated mutual information per variable at fixed $L,w, n$ is denoted $i_{n, w, L}^{\rm per}$.
\item \emph{The decoupled system:} This corresponds simply to $L+1$ identical and independent systems of the form \eqref{eq:mainProblem} with $n$ variables each. This is equivalent to periodic SC system with $w = 0$. The associated mutual information per variable is  denoted $i_{n, L}^{\rm dec}$. Note that $i_{n, L}^{\rm dec} = n^{-1} I(\tbf{S}; \bW)$.
\end{itemize}
Let us outline the proof strategy.
In a first step, we use an interpolation method twice: first interpolating between the fully connected and periodic SC systems, and then between the decoupled and periodic SC systems. This will allow to sandwich the mutual information of the periodic SC system by those of the fully connected and decoupled systems respectively (see Lemma~\ref{lemma:freeEnergy_sandwich}). 
In the second step, using again a similar interpolation and Fekete's theorem for superadditive sequences, we prove that the decoupled and fully connected systems have asymptotically the same mutual information (see Lemma \ref{lemma:superadditivity} for the existence of the limit). From these results we deduce the proposition:
\begin{proposition}\label{lemma:freeEnergy_sandwich_limit} 
For any $0\leq w \leq L/2$ 
\begin{align}
\lim_{n\to +\infty} i_{n, w, L}^{\rm per}= \lim_{n\to +\infty}\frac{1}{n}I(\tbf{S}; \bW)
\end{align}
\end{proposition}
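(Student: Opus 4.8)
The plan is to establish Proposition~\ref{lemma:freeEnergy_sandwich_limit} via the two-step sandwiching strategy already outlined in the text, phrased in terms of the mutual information (equivalently free energy). First I would record the elementary observation that the seed in the pinned SC system involves only $O(w)$ blocks out of $L+1$, each contributing $O(1)$ to the mutual information, so $|i_{n,w,L}^{\rm cou} - i_{n,w,L}^{\rm per}| = O(w/L)$ uniformly in $n$; thus it suffices to prove the statement for the \emph{periodic} SC system, and then the pinned version follows by taking $L\to\infty$ (this is where Theorem~\ref{LemmaGuerraSubadditivityStyle} actually gets its final form). The heart of the argument is an interpolation in a parameter $t\in[0,1]$ that continuously morphs the coupling matrix $\boldsymbol{\Lambda}$ between two choices with the same block sizes: at $t=0$ the fully connected matrix $\Lambda_{\mu\nu}=(L+1)^{-1}$, at $t=1$ the genuine SC coupling matrix. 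One defines an interpolating Hamiltonian on $n(L+1)$ variables with effective couplings $\Lambda_{\mu\nu}(t) = (1-t)(L+1)^{-1} + t\,\Lambda_{\mu\nu}$ (or a $\sqrt{\cdot}$-weighted additive-noise version so that all intermediate models are themselves rank-one AWGN problems), computes $\frac{d}{dt}$ of the associated free energy per variable, and uses the Nishimori identities / Gaussian integration by parts to express the derivative as an average of a manifestly sign-definite (or $o(1)$) quantity, exactly as in the standard Guerra-Toninelli interpolation used for the replica bound \eqref{eq:guerrabound}. The point is that the leading term is a bilinear form in overlaps weighted by $\Lambda_{\mu\nu}(1)-\Lambda_{\mu\nu}(0)$, whose row sums vanish because both matrices are doubly stochastic, so up to $O(1/L)$ (and $O(1/n)$) boundary corrections the free energy is monotone — giving the one-sided bound $\lim_n i^{\rm con}_{n,L} \le \lim_n i^{\rm per}_{n,w,L} + o_L(1)$, and the symmetric interpolation from the decoupled system ($w=0$) gives $\lim_n i^{\rm dec}_{n,L} \ge \lim_n i^{\rm per}_{n,w,L} - o_L(1)$; this is the content of Lemma~\ref{lemma:freeEnergy_sandwich}.

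The second step is to show the two extreme systems have the same asymptotic mutual information per variable, i.e. $\lim_{n}i^{\rm con}_{n,L} = \lim_n i^{\rm dec}_{n,L} = \lim_n \frac1n I(\tbf S;\tbf W)$. The decoupled system is literally $L+1$ independent copies, so $i^{\rm dec}_{n,L} = n^{-1} I(\tbf S;\tbf W)$ exactly and there is nothing to prove there except that the $n\to\infty$ limit exists. For the fully connected system one uses a subadditivity/superadditivity argument: interpolating between one system of $n(L+1)$ variables and two independent systems of $n(L_1+1)$ and $n(L_2+1)$ variables (with $L+1=L_1+L_2+2$) and applying the same sign-definite interpolation bound yields superadditivity of $L\mapsto n(L+1)\,i^{\rm con}_{n,L}$ up to controlled corrections, so Fekete's lemma gives existence of $\lim_{L}i^{\rm con}_{n,L}$ for each $n$, and comparing with the single-block value forces it to equal $n^{-1}I(\tbf S;\tbf W)$ in the $n\to\infty$ limit; this is Lemma~\ref{lemma:superadditivity}. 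Combining: from the sandwich, $\lim_n i^{\rm dec}_{n,L} - o_L(1) \le \lim_n i^{\rm per}_{n,w,L} \le \lim_n i^{\rm con}_{n,L} + o_L(1)$, and since the two outer limits coincide and equal $\lim_n \frac1n I(\tbf S;\tbf W)$ independently of $L$, the inner limit is squeezed to the same value; the $o_L(1)$ terms disappear but in fact we do not even need to send $L\to\infty$ here because the bound on each side already has no $L$-dependence in the limit — giving Proposition~\ref{lemma:freeEnergy_sandwich_limit} for every fixed $0\le w\le L/2$.

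The main obstacle, and the step requiring real care, is controlling the error terms in the interpolation derivative: one must verify that the non-sign-definite remainder (coming from the difference between the ``square'' partition function $\tilde{\mathcal Z}$ and the ``expanded'' one $\mathcal Z$, from the self-interaction terms $x_i^2x_j^2/(2n\Delta)$, and from the $O(1/L)$ discrepancy in row sums near the boundary of the coupling window) is genuinely $o(1)$ uniformly — this needs the boundedness of the support of $P_0$ (Assumption~\ref{ass:h1}) to get uniform moment bounds, concentration-free estimates on the overlap fluctuations via Nishimori, and the smoothness properties (iv) of $\boldsymbol{\Lambda}$. A secondary technical point is to make sure the interpolation is set up so that every intermediate model is a legitimate inference problem (so that Nishimori identities apply and all overlap quantities are $\le v$ in absolute value), which is why one works with the $\sqrt{\Lambda_{\mu\nu}(t)/n}$-scaled AWGN form rather than interpolating the inverse-variance linearly. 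Once these estimates are in place the rest is bookkeeping with Fekete's lemma and the elementary $O(w/L)$ comparison between pinned and periodic systems.
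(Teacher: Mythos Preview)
Your overall architecture matches the paper's: sandwich the periodic SC mutual information between the decoupled and fully-connected systems (Lemma~\ref{lemma:freeEnergy_sandwich}), then show the two ends coincide in the $n\to\infty$ limit via superadditivity and Fekete (Lemma~\ref{lemma:superadditivity}). However, there is a genuine gap in the step you flag least, and some confusion in the bookkeeping.

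\textbf{The sign-definiteness argument is wrong as stated.} You write that the leading term in the $t$-derivative is a bilinear form in the block overlaps weighted by $\Lambda^{(1)}_{\mu\nu}-\Lambda^{(0)}_{\mu\nu}$, and that this is sign-definite ``because the row sums vanish''. Vanishing row sums is necessary (both matrices are doubly stochastic, so the all-ones vector is in the kernel of the difference) but nowhere near sufficient for semi-definiteness. The paper's actual mechanism is spectral: both coupling matrices are \emph{circulant} (periodic ring), hence simultaneously diagonalized in the Fourier basis, so the quadratic form becomes $\hat{\mathbf q}^\intercal(\mathbf D^{(0)}-\mathbf D^{(1)})\hat{\mathbf q}$ with diagonal $\mathbf D^{(0)},\mathbf D^{(1)}$. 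The fully-connected matrix has spectrum $(1,0,\ldots,0)$, the identity has spectrum $(1,\ldots,1)$, and the SC matrix has spectrum in $[0,1]$ with top eigenvalue $1$ precisely because of assumption (v) (non-negative Fourier transform) together with stochasticity. This is what forces $\mathbf D^{(0)}-\mathbf D^{(1)}\succeq 0$ in one interpolation and $\preceq 0$ in the other. Without invoking (v) and the circulant diagonalization your monotonicity claim is unsupported.

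\textbf{Minor issues.} Your displayed inequalities are internally inconsistent: you first write $i^{\rm con}\le i^{\rm per}+o_L(1)$ and $i^{\rm dec}\ge i^{\rm per}-o_L(1)$, then conclude the sandwich $i^{\rm dec}-o_L(1)\le i^{\rm per}\le i^{\rm con}+o_L(1)$, which is the reverse ordering. The paper's sandwich is in fact $i^{\rm dec}_{n,L}\le i^{\rm per}_{n,w,L}\le i^{\rm con}_{n,L}$ and is \emph{sharp} for every finite $n$ (the subleading $1/n$ diagonal term also has the right sign once $\Lambda_{00}\ge 1/(L+1)$), so no $o_L(1)$ slack is needed. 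Also, the pinned/periodic comparison you open with is Proposition~\ref{lemma:openVSclosed}, not part of the present statement; Proposition~\ref{lemma:freeEnergy_sandwich_limit} is purely about the periodic system. Finally, for the Fekete step the paper works directly with superadditivity in the total size $n$: since $i^{\rm con}_{n,L}$ is just $i_{n(L+1)}$ for the single underlying system, existence of $\lim_m i_m$ immediately gives $\lim_{n\to\infty}i^{\rm con}_{n,L}=\lim_{n\to\infty}i^{\rm dec}_{n,L}=\lim_{n\to\infty}n^{-1}I(\tbf S;\tbf W)$ for every fixed $L$, with no detour through $\lim_L$.
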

\begin{proof}
Lemma \ref{lemma:superadditivity} implies that $\lim_{n\to +\infty} i_{n, L}^{\rm con} = \lim_{n\to +\infty} i_{n, L}^{\rm dec}$. One also notes that $i_{n, L}^{\rm dec} = \frac{1}{n} I(\tbf{S}; \bW)$. Thus the result follows from Lemma \ref{lemma:freeEnergy_sandwich}.
\end{proof}
In a third step an easy argument shows
\begin{proposition}\label{lemma:openVSclosed} 
Assume $P_0$ has finite first four moments. For any $0\leq w \leq L/2$ 
\begin{align}
\vert i_{n, w, L}^{\rm per} - i_{n, w, L}^{\rm cou} \vert = \mathcal{O}(\frac{w}{L})
\end{align}
\end{proposition}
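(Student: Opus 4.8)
The plan is to quantify the only difference between the two ensembles --- the pinning, which amounts to revealing the seed sub-signal $\tbf{S}_{\mathcal{B}}\defeq(s_{i_\mu})_{\mu\in\mathcal{B}}$ supported on the $2w+1$ blocks of $\mathcal{B}$ --- and to show that the information it carries is an $\mathcal{O}(w/L)$ fraction of the total. Since the coupling window $w$, the coupling matrix $\boldsymbol{\Lambda}$ and the joint law of $(\tbf{S},\bW)$ coincide for the periodic and the pinned systems, the pinned model is obtained from the periodic one by additionally conditioning on $\tbf{S}_{\mathcal{B}}$; with the normalisation $i_{n,w,L}^{\rm cou}=(n(L+1))^{-1}I_{w,L}(\tbf{S};\bW)$ of Section~\ref{sec:subadditivitystyle} this means $n(L+1)\,i_{n,w,L}^{\rm per}=I(\tbf{S};\bW)$ and $n(L+1)\,i_{n,w,L}^{\rm cou}=I(\tbf{S}_{\mathcal{B}^{\rm c}};\bW\mid\tbf{S}_{\mathcal{B}})$, where $\mathcal{B}^{\rm c}$ is the complement of $\mathcal{B}$ in $\{0,\dots,L\}$ (equivalently one reads this off the pinned free energy through \eqref{eq:energy_MI}; note in particular that no prior-entropy term $H(\tbf{S}_{\mathcal{B}})$ appears, which is what allows $P_0$ to be merely integrable rather than discrete). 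By the chain rule and independence of the blocks of $\tbf{S}$,
\begin{align}
I(\tbf{S};\bW)=I(\tbf{S}_{\mathcal{B}};\bW)+I(\tbf{S}_{\mathcal{B}^{\rm c}};\bW\mid\tbf{S}_{\mathcal{B}}),
\end{align}
so $0\le i_{n,w,L}^{\rm per}-i_{n,w,L}^{\rm cou}=\frac{1}{n(L+1)}I(\tbf{S}_{\mathcal{B}};\bW)$, and it suffices to prove $I(\tbf{S}_{\mathcal{B}};\bW)=\mathcal{O}(nw)$ with a constant depending only on $\Delta$ and the first four moments of $P_0$.

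To estimate $I(\tbf{S}_{\mathcal{B}};\bW)$ I would use $\tbf{S}_{\mathcal{B}}\perp\tbf{S}_{\mathcal{B}^{\rm c}}$ to get $I(\tbf{S}_{\mathcal{B}};\bW)\le I(\tbf{S}_{\mathcal{B}};\bW\mid\tbf{S}_{\mathcal{B}^{\rm c}})$, then condition on $\tbf{S}_{\mathcal{B}^{\rm c}}=\tbf{s}_{\mathcal{B}^{\rm c}}$: there $\bW=\tbf{U}+\sqrt{\Delta}\,\tbf{Z}$ is an AWGN-channel output whose mean $\tbf{U}$, with entries $U_{i_\mu j_\nu}=s_{i_\mu}s_{j_\nu}\sqrt{\Lambda_{\mu\nu}/n}$, is a deterministic function of $\tbf{S}_{\mathcal{B}}$ and (given $\tbf{s}_{\mathcal{B}^{\rm c}}$) is non-random only on coordinates with $\mu\in\mathcal{B}$ or $\nu\in\mathcal{B}$ and $\Lambda_{\mu\nu}\neq0$. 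The elementary Gaussian estimate $I(\tbf{U};\tbf{U}+\sqrt{\Delta}\,\tbf{Z})\le\frac{1}{2}\ln\det({\rm Id}+\Delta^{-1}{\rm Cov}(\tbf{U}))\le\frac{1}{2\Delta}\sum_k{\rm Var}(U_k)$, followed by averaging over $\tbf{s}_{\mathcal{B}^{\rm c}}$, yields
\begin{align}
I(\tbf{S}_{\mathcal{B}};\bW)\le\frac{1}{2\Delta}\sum_{\substack{(i_\mu,j_\nu)\,:\,\mu\in\mathcal{B}\ {\rm or}\ \nu\in\mathcal{B}\\ \Lambda_{\mu\nu}\neq 0}}\frac{\Lambda_{\mu\nu}}{n}\,\mathbb{E}\big[(S_{i_\mu}S_{j_\nu})^2\big],
\end{align}
each expectation being $v^2$ when $i_\mu\neq j_\nu$ and $\mathbb{E}[S^4]$ otherwise, both finite by hypothesis.

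It remains to count. Each of the $|\mathcal{B}|=2w+1$ seed blocks couples (i.e.\ has $\Lambda_{\mu\nu}\neq0$) to exactly $2w+1$ blocks, each containing $n$ indices, so the sum above runs over at most $2\,|\mathcal{B}|\,(2w+1)\,n^2=\mathcal{O}(w^2 n^2)$ index pairs; using $\Lambda_{\mu\nu}\le\Lambda^{*}=\mathcal{O}(1/w)$ (condition (iv) on $\boldsymbol{\Lambda}$ in Section~\ref{subsec:SC}) and $\max(v^2,\mathbb{E}[S^4])<\infty$, the sum is $\mathcal{O}(w^2 n^2)\cdot n^{-1}\cdot\mathcal{O}(1/w)=\mathcal{O}(wn)$, with a constant depending only on $\Delta$, $v$ and $\mathbb{E}[S^4]$. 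Dividing by $n(L+1)$ gives $0\le i_{n,w,L}^{\rm per}-i_{n,w,L}^{\rm cou}=\mathcal{O}(w/L)$, which is the claim. The one step that genuinely needs care is the identification in the first paragraph of the pinned-system mutual information with the conditional mutual information $I(\tbf{S}_{\mathcal{B}^{\rm c}};\bW\mid\tbf{S}_{\mathcal{B}})$ (so that no $H(\tbf{S}_{\mathcal{B}})$ enters, which would be infinite for continuous priors); once that is granted, the Gaussian-channel bound and the counting of coupled index pairs are routine.
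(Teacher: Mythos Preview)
Your argument is correct and takes a genuinely different route from the paper. The paper works at the level of Hamiltonians: it writes $\mathcal{H}^{\rm cou}-\mathcal{H}^{\rm per}=\delta\mathcal{H}$ (supported on index pairs with both blocks in $\mathcal{B}$), uses Jensen's inequality to sandwich $i^{\rm per}$ between $i^{\rm cou}+\frac{1}{n(L+1)}\mathbb{E}[\langle\delta\mathcal{H}\rangle_{\rm per}]$ and $i^{\rm cou}+\frac{1}{n(L+1)}\mathbb{E}[\langle\delta\mathcal{H}\rangle_{\rm cou}]$, observes that the second correction vanishes by the pinning, and bounds the first one by Gaussian integration by parts, Cauchy--Schwarz and Nishimori identities. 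Your proof bypasses all of this: you identify the pinning as conditioning on $\tbf{S}_{\mathcal{B}}$, use the chain rule to reduce the gap to $I(\tbf{S}_{\mathcal{B}};\bW)/(n(L+1))$, and bound that by the AWGN second-moment capacity estimate together with the combinatorics of the coupling window. The only moving part is the identification $n(L+1)\,i^{\rm cou}=I(\tbf{S}_{\mathcal{B}^{\rm c}};\bW\mid\tbf{S}_{\mathcal{B}})$, which you rightly flag; it does follow from the free-energy/mutual-information relation \eqref{eq:energy_MI} applied to the pinned partition function (the additive constants match, and the ratio $\mathcal{Z}^{\rm cou}/\mathcal{Z}^{\rm per}$ equals $P(\tbf{S}_{\mathcal{B}}\mid\bW)/P_0(\tbf{S}_{\mathcal{B}})$, whose expected log is exactly $I(\tbf{S}_{\mathcal{B}};\bW)$). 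Your route is shorter, purely information-theoretic, and makes the fourth-moment hypothesis transparent; the paper's route is closer to the interpolation machinery used elsewhere in the article. Note that the paper's sandwich yields $i^{\rm per}\le i^{\rm cou}$ while yours yields $i^{\rm per}\ge i^{\rm cou}$; this reflects a convention difference in whether the seed entropy is folded into the coupled ``mutual information'', and is immaterial for the $\mathcal{O}(w/L)$ conclusion.
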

\begin{proof}
See Appendix \ref{appendix-pinfree}.
\end{proof}
Since $i_{n, w, L}^{\rm cou} = (n(L+1))^{-1} I_{w,L}(\tbf{S}; \bW)$, Theorem \ref{LemmaGuerraSubadditivityStyle} is an immediate consequence of Propositions \ref{lemma:freeEnergy_sandwich_limit} and \ref{lemma:openVSclosed}. 
\subsection{A generic interpolation}\label{generic}
Let us consider two systems of same total size $n(L+1)$ with coupling matrices $\bold{\Lambda}^{(1)}$ and $\bold{\Lambda}^{(0)}$ supported on coupling windows $w_1$ and $w_0$ respectively. 
Moreover, we assume that the observations associated with the first system are corrupted by an AWGN equals to $\sqrt{\Delta / t}\bz$ while the 
AWGN corrupting the second system is $\sqrt{\Delta / (1 - t)}\bz^{\prime}$, where $Z_{ij}$ and $Z^{\prime}_{ij}$ are two i.i.d. 
standard Gaussians and $t\in[0,1]$ is the \emph{interpolation parameter}. The interpolating inference problem has the form
\begin{align}
 \begin{cases}
  w_{i_\mu j_\nu} & = s_{i_\mu} s_{j_\nu}\sqrt{\frac{\Lambda_{\mu\nu}^{(1)}}{n}} + z_{i_\mu j_\nu}\sqrt{\frac{\Delta}{t}},\\
  w_{i_\mu j_\nu} & = s_{i_\mu} s_{j_\nu}\sqrt{\frac{\Lambda_{\mu\nu}^{(0)}}{n}} + z_{i_\mu j_\nu}^\prime\sqrt{\frac{\Delta}{1-t}}
 \end{cases}
\end{align}
In this setting, at $t=1$ the interpolated system corresponds to the first system as the 
noise is infinitely large in the second one and no information is available about it, while at $t=0$ the opposite happens. 
The associated interpolating posterior distribution can be expressed
as 
\begin{align}\label{post-interp-d}
 P_{t}(\bx\vert \bs, \bz, \bz^\prime) = \frac{1}{\mathcal{Z}_{\rm int}(t)} e^{-\mathcal{H}(t, \bold{\Lambda}^{(1)},\bold{\Lambda}^{(0)}))}\prod_{\mu=0}^{L}\prod_{i_\mu=1}^n P_0(x_{i_\mu})
\end{align}
where the ``Hamiltonian'' is
 $\mathcal{H}_{\rm int}(t,\bold{\Lambda}^{(1)},\bold{\Lambda}^{(0)}) \defeq 
\mathcal{H}(t,\bold{\Lambda}^{(1)}) + \mathcal{H}(1-t,\bold{\Lambda}^{(0)})$ with\footnote{Note that since the SC system is defined on a ring, we can express the Hamiltonian in terms of forward coupling only.} 
\begin{align}
\mathcal{H}(t,\bold{\Lambda}) \defeq &\frac{t}{\Delta} \sum_{\mu=0}^L \Lambda_{\mu,\mu} \sum_{i_\mu\le j_\mu}\bigg( \frac{x_{i_\mu}^2x_{j_\mu}^2}{2n} -  \frac{s_{i_\mu}s_{j_\mu}x_{i_\mu}x_{j_\mu}}{n} - \frac{x_{i_\mu}x_{j_\mu}z_{i_\mu j_\mu} \sqrt{\Delta}}{\sqrt{n t\Lambda_{\mu,\mu}}}\bigg)\nonumber\\
+&\frac{t}{\Delta} \sum_{\mu=0}^L\sum_{\nu=\mu+1}^{\mu+w}\Lambda_{\mu,\nu}\sum_{i_\mu,j_\nu=1}^{n}\bigg( \frac{x_{i_\mu}^2x_{j_\nu}^2}{2n} -  \frac{s_{i_\mu}s_{j_\nu}x_{i_\mu}x_{j_\nu}}{n} - \frac{x_{i_\mu}x_{j_\nu}z_{i_\mu j_\nu} \sqrt{\Delta}}{\sqrt{nt\Lambda_{\mu,\nu}}}\bigg).
\end{align}
and $\mathcal{Z}_{\rm int}(t)$ is the obvious normalizing factor, the ``partition function''. The posterior average with respect to \eqref{post-interp-d} is denoted by the bracket notation $\langle - \rangle_t$. It is easy to see
that the mutual information per variable (for the interpolating inference problem) can be expressed as 
\begin{align}
i_{\rm int}(t) \defeq - \frac{1}{n(L+1)}\mathbb{E}_{{\bf S}, {\bf Z}, {\bf Z}^{\prime}}[\ln \mathcal{Z}_{\rm int}(t)] + \frac{v^2}{4\Delta} + \frac{1}{4 \Delta n(L+1)} (2\mathbb{E}[S^4] - v^2)
\end{align}

The aim of the interpolation method in the present context is to compare the mutual informations of the systems at $t=1$ and $t=0$. To do so, 
one uses the fundamental theorem of calculus
\begin{align}\label{calc}
i_{\rm int}(1) - i_{\rm int}(0) = \int_{0}^1 dt \, \frac{di_{\rm int}(t)}{dt}\,.
\end{align}
and tries to determine the sign of the integral term. 

We first prove that
\begin{align} 
& \frac{di_{\rm int}(t)}{dt} = \nonumber\\
&\frac{1}{4\Delta(L+1)} \mathbb{E}_{\bf S, \bf Z,{\bf Z}^{\prime}}\Big[\Big\la 
-\frac{1}{n^2}\Big(\sum_{\mu=0}^L\sum_{\nu=\mu-w_1}^{\mu + w_1} \Lambda_{\mu\nu}^{(1)} 
\sum_{i_\mu, j_\nu=1}^n X_{i_\mu}X_{j_\nu}S_{i_\mu}S_{j_\nu} + \sum_{\mu=0}^L \Lambda_{\mu\mu}^{(1)} \sum_{i_\mu=1}^n X_{i_\mu}^2 S_{i_\mu}^2\Big) \nonumber \\
&+ \frac{1}{n^2}\Big(\sum_{\mu=0}^L\sum_{\nu = \mu-w_0}^{\mu+w_0} \Lambda_{\mu\nu}^{(0)} \sum_{i_\mu, j_\nu=1}^n X_{i_\mu}X_{j_\nu}S_{i_\mu}S_{j_\nu} 
+ \sum_{\mu=0}^L \Lambda_{\mu\mu}^{(0)} \sum_{i_\mu=1}^n X_{i_\mu}^2 S_{i_\mu}^2 \Big) \Big\ra_{t} \Big],
\label{eq:derivative_interpolation}
\end{align}  
where $\la - \ra_t$ denotes the expectation over the posterior distribution associated with the interpolated Hamiltonian $\mathcal{H}_{\rm int}(t,\bold{\Lambda}^{(1)},\bold{\Lambda}^{(0)})$.
We start with a simple differentiation of the Hamiltonian w.r.t. $t$ which yields
\begin{align*}
\frac{d}{dt}\mathcal{H}_{\rm int}(t,\bold{\Lambda}^{(1)},\bold{\Lambda}^{(0)}) = \frac{1}{\Delta} \big( \mathcal{A}(t,\bold{\Lambda^{(1)}}) - \mathcal{B}(t,\bold{\Lambda}^{(0)}) \big),
\end{align*}
where
\begin{align*}
\mathcal{A}(t,\bold{\Lambda}^{(1)}) = & \sum_{\mu=0}^L \Lambda_{\mu\mu}^{(1)} \sum_{i_\mu\le j_\mu}\bigg( \frac{x_{i_\mu}^2x_{j_\mu}^2}{2n} 
-  \frac{s_{i_\mu}s_{j_\mu}x_{i_\mu}x_{j_\mu}}{n} - \frac{x_{i_\mu}x_{j_\mu}z_{i_\mu j_\mu} \sqrt{\Delta}}{2\sqrt{n t\Lambda_{\mu\mu}^{(1)}}}\bigg)\nonumber\\
+& \sum_{\mu=0}^L\sum_{\nu=\mu+1}^{\mu+w_1}\Lambda_{\mu\nu}^{(1)}\sum_{i_\mu,j_\nu=1}^{n}\bigg( \frac{x_{i_\mu}^2x_{j_\nu}^2}{2n} 
-  \frac{s_{i_\mu}s_{j_\nu}x_{i_\mu}x_{j_\nu}}{n} - \frac{x_{i_\mu}x_{j_\nu}z_{i_\mu j_\nu} \sqrt{\Delta}}{2\sqrt{nt\Lambda_{\mu\nu}^{(1)}}}\bigg)\nonumber \\
\mathcal{B}(t,\bold{\Lambda}^{(0)}) = &\sum_{\mu=0}^L \Lambda^{(0)}_{\mu\mu} \sum_{i_\mu\le j_\mu}\bigg( \frac{x_{i_\mu}^2x_{j_\mu}^2}{2n} -  \frac{s_{i_\mu}s_{j_\mu}x_{i_\mu}x_{j_\mu}}{n} - \frac{x_{i_\mu}x_{j_\mu}z^{\prime}_{i_\mu j_\mu} \sqrt{\Delta}}{2\sqrt{n (1-t)\Lambda^{(0)}_{\mu\mu}}}\bigg)\nonumber\\
& \sum_{\mu=0}^L\sum_{\nu=\mu+1}^{\mu+w_0}\!\Lambda^{(0)}_{\mu\nu}\!\sum_{i_\mu,j_\nu=1}^{n}\!\bigg( \frac{x_{i_\mu}^2x_{j_\nu}^2}{2n} -  \frac{s_{i_\mu}s_{j_\nu}x_{i_\mu}x_{j_\nu}}{n} - \frac{x_{i_\mu}x_{j_\nu}z^{\prime}_{i_\mu j_\nu} \sqrt{\Delta}}{2\sqrt{n(1-t)\Lambda^{(0)}_{\mu\nu}}}\bigg) .
\end{align*}
Using integration by parts with respect to the Gaussian variables $Z_{ij}$, $Z_{ij}^\prime$, one gets
\begin{align}
&\mathbb{E}_{\bf S, \bf Z, \bf Z^{\prime}}[Z_{i_\mu j_\nu} \langle X_{i_\mu} X_{j_\nu}  \rangle_{t}] =  \sqrt{\frac{t\Lambda_{\mu,\nu}}{n\Delta}}  \mathbb{E}_{\bf S, \bf Z, \bf Z^{\prime}}\Big[ \langle X_{i_\mu}^2 X_{j_\nu}^2  \rangle_{t} - \langle X_{i_\mu} X_{j_\nu}  \rangle_{t}^2 \Big]
\label{intfirst}
\\
&\mathbb{E}_{\bf S, \bf Z, \bf Z^{\prime}}[Z^{\prime}_{i_\mu j_\nu} \langle X_{i_\mu} X_{j_\nu}  \rangle_{t}] =  \sqrt{\frac{(1-t)\Lambda^0_{\mu,\nu}}{n\Delta}}  \mathbb{E}_{\bf S, \bf Z, \bf Z^{\prime}}\Big[ \langle X_{i_\mu}^2 X_{j_\nu}^2  \rangle_{t} - \langle X_{i_\mu} X_{j_\nu}  \rangle_{t}^2 \Big].
\label{eq:derivative_freeEnergy_integrationByPart}
\end{align}
Moreover an application of the Nishimori identity \eqref{eq:nishCond} shows
\begin{align}\label{eq:derivative_freeEnergy_nishimori}
\mathbb{E}_{\bf S, \bf Z, \bf Z^{\prime}}[ \langle X_{i_\mu} X_{j_\nu}  \rangle_{t}^2 ] = \mathbb{E}_{\bf S, \bf Z, \bf Z^{\prime}}[ \langle X_{i_\mu} X_{j_\nu}  S_{i_\mu} S_{j_\nu} \rangle_{t} ].
\end{align}
Combining \eqref{eq:derivative_interpolation}-\eqref{eq:derivative_freeEnergy_nishimori} and using the fact that the SC system defined on a ring satisfies
\begin{align*}
&\sum_{\mu=0}^L \Lambda_{\mu\mu} \sum_{i_\mu\le j_\mu} x_{i_\mu} x_{j_\mu}s_{i_\mu} s_{j_\mu} + \sum_{\mu=0}^L\sum_{\nu=\mu+1}^{\mu+w}\Lambda_{\mu\nu}\sum_{i_\mu,j_\nu=1}^{n} x_{i_\mu} x_{j_\nu} s_{i_\mu} s_{j_\nu} = \\
&\frac{1}{2} \sum_{\mu=0}^L\sum_{\nu=\mu-w}^{\mu+w}\Lambda_{\mu\nu}\sum_{i_\mu,j_\nu=1}^{n} x_{i_\mu} x_{j_\nu} s_{i_\mu} s_{j_\nu} + \frac{1}{2} \sum_{\mu=0}^L \Lambda_{\mu\mu} x_{i_\mu}^2 s_{i_\mu}^2,
\end{align*}
we obtain \eqref{eq:derivative_interpolation}.

Now, define the {\it overlaps} associated to each block $\mu$ as 
\begin{align}
q_\mu \defeq \frac{1}{n} \sum_{i_\mu=1}^n X_{i_\mu}S_{i_\mu}, \qquad \tilde{q}_\mu \defeq \frac{1}{n} \sum_{i_\mu=1}^n X_{i_\mu}^2 S_{i_\mu}^2.
\end{align}
Hence, (\ref{eq:derivative_interpolation}) can be rewritten as
\begin{align} \label{eq:derivative_interpolation_overlap}
 \frac{di_{\rm int}(t)}{dt} = \frac{1}{4\Delta(L+1)} \mathbb{E}_{{\bf S}, {\bf Z},{\bf Z}^{\prime}} \Big[\Big\la\mathbf{q}^{\intercal} \mathbf{\Lambda}^{(0)} \, \mathbf{q} -\mathbf{q}^{\intercal} 
 \mathbf{\Lambda}^{(1)} \, \mathbf{q} + \frac{1}{n} \, \Big(\tilde{\mathbf{q}}^{\intercal} {\rm diag} ({\mathbf{\Lambda}}^{(0)}) - \tilde{\mathbf{q}}^{\intercal} {\rm diag} ({\mathbf{\Lambda}^{(1)}}) \Big) \Big\ra_t \Big],
\end{align} 
where $\mathbf{q}^\intercal = [q_0\cdots q_L]$, $\tilde{\mathbf{q}}^\intercal = [\tilde{q}_0\cdots \tilde{q}_L]$ are row vectors and 
${\rm diag} ({\mathbf{\Lambda}})$ represents the column vector with entries $\{\Lambda_{\mu\mu}\}_{\mu=0}^L$.
The coupling matrices $\mathbf{\Lambda}^{(1)}, \mathbf{\Lambda}^{(0)}$ are real, symmetric, circulant (due to the periodicity of the ring) and thus can be diagonalized in the same Fourier basis. 
%
%
We have
\begin{align} \label{eq:derivative_generic_Fourier}
\frac{di_{\rm int}(t)}{dt} = \frac{1}{4\Delta (L+1)} \mathbb{E}_{{\bf S}, {\bf Z},{\bf Z}^{\prime}} \Big[ \Big\la \hat{\mathbf{q}}^{\intercal} 
\big(\mathbf{D}^{(0)} - \mathbf{D}^{(1)}\big) \, \hat{\mathbf{q}} + \frac{1}{n} \, \Big(\tilde{\mathbf{q}}^{\intercal} {\rm diag} ({\mathbf{\Lambda}}^{(0)}) 
- \tilde{\mathbf{q}}^{\intercal} {\rm diag} ({\mathbf{\Lambda}^{(1)}}) \Big)\Big\ra_t \Big],
\end{align}
where $\widehat{\mathbf{q}}$ is the discrete Fourier transfrom of $\mathbf{q}$ and $\mathbf{D}^{(1)}, \mathbf{D}^{(0)}$ are 
the diagonal matrices with the eigenvalues of $\mathbf{\Lambda}^{(1)},\mathbf{\Lambda}^{(0)}$. Since the coupling matrices are stochastic with non-negative Fourier transform, their
largest eigenvalue equals $1$ (and is associated to the $0$-th Fourier mode) while the remaining eigenvalues are non-negative. These properties will be essential in the following paragraphs. 
\subsection{Applications}\label{appli}
Our first application is
\begin{lemma}\label{lemma:freeEnergy_sandwich} 
Let the coupling matrix $\Lambda$ verify the requirements (i)-(v) in Sec. \ref{subsec:SC}.
The mutual informations of the decoupled, periodic SC and fully connected systems verify 
\begin{align}
 i_{n, L}^{\rm{dec}} \le i_{n, w, L}^{\rm per} \le i_{n, L}^{\rm con}.
\end{align}
\end{lemma}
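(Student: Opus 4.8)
The plan is to apply the generic interpolation of Section~\ref{generic} twice, choosing the two endpoint systems appropriately and exploiting the spectral properties of the coupling matrices. For the upper bound $i_{n,w,L}^{\rm per}\le i_{n,L}^{\rm con}$, I take $\bold{\Lambda}^{(1)}$ to be the periodic SC coupling matrix with window $w_1=w$ and $\bold{\Lambda}^{(0)}$ to be the homogeneous fully connected matrix $\Lambda^{(0)}_{\mu\nu}=(L+1)^{-1}$ (so $w_0=L/2$). Then $i_{\rm int}(1)=i_{n,w,L}^{\rm per}$ and $i_{\rm int}(0)=i_{n,L}^{\rm con}$, and by \eqref{calc} it suffices to show $\frac{d i_{\rm int}(t)}{dt}\le 0$ for all $t\in[0,1]$, which gives $i_{n,w,L}^{\rm per}-i_{n,L}^{\rm con}\le 0$. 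For the lower bound $i_{n,L}^{\rm dec}\le i_{n,w,L}^{\rm per}$, I instead take $\bold{\Lambda}^{(1)}$ to be the periodic SC matrix and $\bold{\Lambda}^{(0)}$ the decoupled matrix (the identity-block matrix corresponding to $w_0=0$), so that $i_{\rm int}(1)=i_{n,w,L}^{\rm per}$, $i_{\rm int}(0)=i_{n,L}^{\rm dec}$, and now I need $\frac{d i_{\rm int}(t)}{dt}\ge 0$.

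The key computation is the sign analysis of \eqref{eq:derivative_generic_Fourier}. The term proportional to $1/n$ involving $\tilde{\bf q}$ and the diagonals of the coupling matrices is of order $\mathcal{O}(1/n)$ (since $\Lambda_{\mu\mu}=\mathcal{O}(1/w)$ or $\mathcal{O}(1/L)$ and $\tilde q_\mu$ is bounded using the finiteness of the fourth moment of $P_0$), so it vanishes in the $n\to\infty$ limit and one must argue it does not affect the finite-$n$ inequality or — more cleanly — that the inequalities are meant in the $n\to\infty$ sense, or absorb it into an $\mathcal{O}(1/n)$ error; I will follow whichever reading the rest of Section~\ref{sec:subadditivitystyle} adopts (Proposition~\ref{lemma:freeEnergy_sandwich_limit} only uses the limiting statement). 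The dominant term is $\frac{1}{4\Delta(L+1)}\mathbb{E}[\langle \hat{\bf q}^\intercal(\mathbf{D}^{(0)}-\mathbf{D}^{(1)})\hat{\bf q}\rangle_t]$. Here the crucial input is that all coupling matrices in play are doubly stochastic with non-negative Fourier transform, hence their eigenvalue vectors all have largest entry equal to $1$ at the zeroth Fourier mode and non-negative entries elsewhere. For the fully connected matrix $\mathbf{D}^{(0)}=\mathrm{diag}(1,0,\dots,0)$, so $\mathbf{D}^{(0)}-\mathbf{D}^{(1)}$ has a zero in the zeroth mode and non-positive entries in all other modes, making the quadratic form $\le 0$ — this proves the upper bound. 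For the decoupled matrix $\mathbf{D}^{(0)}=\mathrm{diag}(1,1,\dots,1)$ (the identity), so $\mathbf{D}^{(0)}-\mathbf{D}^{(1)}$ has a zero in the zeroth mode and entries $1-(\text{eigenvalue of }\bold{\Lambda}^{(1)})\ge 0$ elsewhere (each SC eigenvalue being a convex combination of roots of unity, hence of modulus $\le1$, and by hypothesis (v) real and non-negative), making the quadratic form $\ge 0$ — this proves the lower bound. Note the reality of $\hat{\bf q}$-quadratic forms is ensured because $\bf q$ is real, so $\hat{\bf q}^\intercal \mathbf{D}\hat{\bf q}=\sum_k D_k|\hat q_k|^2$ after pairing conjugate modes.

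The main obstacle, and the only genuinely non-routine point, is establishing that the eigenvalues of the periodic SC coupling matrix are bounded above by $1$ — equivalently that $\mathbf{D}^{(0)}-\mathbf{D}^{(1)}\succeq 0$ in the decoupled comparison. This follows from double stochasticity: a doubly stochastic circulant matrix has operator norm $1$, so all eigenvalues lie in $[-1,1]$, and hypothesis (v) forces them into $[0,1]$; the zeroth mode gives exactly $1$ by row-normalization. I would state this as a short linear-algebra lemma (or cite it inline). A secondary technical nuisance is that the interpolating Hamiltonian in Section~\ref{generic} has $\sqrt{t\Lambda_{\mu\nu}}$ and $\sqrt{(1-t)\Lambda_{\mu\nu}}$ in denominators, so the integration-by-parts formulas \eqref{intfirst}--\eqref{eq:derivative_freeEnergy_integrationByPart} require $t\in(0,1)$ and $\Lambda_{\mu\nu}>0$; this is harmless since the derivative is continuous and we only integrate over $[0,1]$, but I would note that the endpoint systems are recovered as the appropriate limits. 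Assembling: integrating the sign-definite derivative over $t\in[0,1]$ and sending $n\to\infty$ yields $i_{n,L}^{\rm dec}\le i_{n,w,L}^{\rm per}\le i_{n,L}^{\rm con}$, which is the claim of Lemma~\ref{lemma:freeEnergy_sandwich}.
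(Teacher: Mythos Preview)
Your approach is essentially the paper's: apply the generic interpolation of Section~\ref{generic} twice and control the sign of the derivative via the Fourier-diagonalized quadratic form. The only cosmetic difference is that you place the periodic SC system at $t=1$ in both interpolations, whereas the paper places the \emph{comparison} system (fully connected, respectively decoupled) at $t=1$; this just flips all signs consistently and is immaterial.

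The one point where you are weaker than the paper is the treatment of the $\tilde{\bf q}$ term. You hedge and propose to absorb it into an $\mathcal{O}(1/n)$ error, noting that only the limiting statement is needed downstream. The paper instead observes that this term has the \emph{same sign} as the dominant quadratic form, so the inequality holds already at finite $n$ (for $L$ large enough). This is immediate once you notice that $\tilde q_\mu=\frac{1}{n}\sum_{i_\mu} X_{i_\mu}^2 S_{i_\mu}^2\ge 0$ and that the diagonal entries satisfy: for the fully connected matrix $\Lambda^{(0)}_{\mu\mu}=(L+1)^{-1}$, for the decoupled matrix $\Lambda^{(0)}_{\mu\mu}=1$, and for the SC matrix $\Lambda_{\mu\mu}=\Lambda_{00}\in(0,1)$ independent of $L$. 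Hence in your upper-bound interpolation the diagonal term is $\bigl(\tfrac{1}{L+1}-\Lambda_{00}\bigr)\sum_\mu\tilde q_\mu\le 0$, and in your lower-bound interpolation it is $\bigl(1-\Lambda_{00}\bigr)\sum_\mu\tilde q_\mu\ge 0$, matching the sign of the main term in each case. With this observation your proof becomes identical to the paper's and yields the finite-$n$ statement of Lemma~\ref{lemma:freeEnergy_sandwich} as written, not merely its $n\to\infty$ consequence.
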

\begin{proof}
We start with the second inequality. We choose $\Lambda_{\mu\nu}^{(1)} = (L+1)^{-1}$ for the fully connected system at $t=1$. This matrix 
has a unique eigenvalue equal to $1$ and $L$ degenerate eigenvalues equal to $0$. Therefore it is clear 
that $\mathbf{D}^{(0)} - \mathbf{D}^{(1)}$ is positive semi-definite and 
$\hat{\mathbf{q}}^{\intercal} \big(\mathbf{D}^{(0)} - \mathbf{D}^{(1)}\big)\hat{\mathbf{q}}\geq 0$.
Moreover notice that $\Lambda_{\mu\mu}^{(0)} = \Lambda_{00}$ is independent of $L$. Therefore for $L$ large enough
\begin{align}
\tilde{\mathbf{q}}^{\intercal} {\rm diag} ({\mathbf{\Lambda}}^{(0)}) - \tilde{\mathbf{q}}^{\intercal} {\rm diag} ({\mathbf{\Lambda}^{(1)}}) = 
\Big(\Lambda_{00}-\frac{1}{L+1}\Big) \sum_{\mu=0}^L\tilde{q}_{\mu} \geq 0.
\end{align}
Therefore we conclude that \eqref{eq:derivative_generic_Fourier} is positive and from \eqref{calc} $i_{n,L}^{\rm con} - i_{n, w, L}^{\rm per} \geq 0$. 
For the first inequality we proceed similarly, but this time we choose $\Lambda_{\mu\nu}^{(1)} = \delta_{\mu\nu}$ for the decoupled system which has all eigenvalues equal to $1$. Therefore 
$\mathbf{D}^{(0)} - \mathbf{D}^{(1)}$ is negative semidefinite so $\hat{\mathbf{q}}^{\intercal} \big(\mathbf{D}^{(0)} - \mathbf{D}^{(1)}\big)\hat{\mathbf{q}}\leq 0$. Moreover this time
\begin{align}
\tilde{\mathbf{q}}^{\intercal} {\rm diag} ({\mathbf{\Lambda}}^{(0)}) - \tilde{\mathbf{q}}^{\intercal} {\rm diag} ({\mathbf{\Lambda}^{(1)}}) = 
\Big(\Lambda_{00}-1\Big) \sum_{\mu=0}^L\tilde{q}_{\mu} \leq 0
\end{align}
because we necessarily have $0\leq \Lambda_{00}^{(0)} \leq 1$. We conclude that \eqref{eq:derivative_generic_Fourier} is 
negative and from \eqref{calc} $i_{n, L}^{\rm{dec}} - i_{n, w, L}^{\rm per}\leq 0$.
\end{proof}
 
The second application is 

\begin{lemma}\label{lemma:superadditivity}
Consider the mutual information of system \eqref{eq:mainProblem} and set  $i_n = n^{-1}I(\tbf{S}; \bW)$. Consider also $i_{n_1}$ and $i_{n_2}$ the mutual informations 
of two systems of size $n_1$ and $n_2$ with $n= n_1 +n_2$.
The sequence $n i_n$ is superadditive in the sense that  
\begin{align}
 n_1 i_{n_1} + n_2 i_{n_2} \leq n i_n.
\end{align}
Fekete's lemma then implies that $\lim_{n\to +\infty} i_n$ exists. 
\end{lemma}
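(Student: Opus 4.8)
The strategy is a Guerra--Toninelli type interpolation comparing the size-$n$ system with two independent subsystems of sizes $n_1$ and $n_2$, followed by Fekete's lemma. To keep the interpolation clean we first treat the variant of \eqref{eq:mainProblem} in which only the pairs $i<j$ are observed; the $n$ diagonal observations $w_{ii}=s_i^2/\sqrt n+\sqrt\Delta\,z_{ii}$ are reinstated at the end.

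Fix a partition $\{1,\dots,n\}=A\sqcup B$ with $|A|=n_1$, $|B|=n_2$. For $t\in[0,1]$ consider the inference problem in which the pair $(i,j)$ is observed through an AWGN channel whose signal-to-noise ratio interpolates linearly in $t$ between $1/n$ at $t=1$ and, at $t=0$, the value $1/n_1$ when $i,j\in A$, the value $1/n_2$ when $i,j\in B$, and $0$ when $i\in A$ and $j\in B$; explicitly $w_{ij}=s_is_j\sqrt{\theta_{ij}(t)}+\sqrt\Delta\,z_{ij}$ with $\theta_{ij}(t)=\tfrac tn+(1-t)\,\theta_{ij}(0)$. At $t=1$ this is precisely system \eqref{eq:mainProblem} on $n$ variables, so the interpolated mutual information equals $n\,i_n$; at $t=0$ the blocks $A$ and $B$ decouple into two independent copies of \eqref{eq:mainProblem} of sizes $n_1$ and $n_2$, so the interpolated mutual information equals $n_1 i_{n_1}+n_2 i_{n_2}$. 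Hence it suffices to show it is nondecreasing in $t$.

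Using the relation between mutual information and free energy (the analogue of \eqref{eq:energy_MI} for the present model), the interpolated mutual information is $-\mathbb{E}[\ln\mathcal{Z}_t]$ plus a nonrandom term affine in the $\theta_{ij}(t)$, so it is enough to control $\tfrac{d}{dt}\mathbb{E}[\ln\mathcal{Z}_t]$. Differentiating, then integrating by parts in the Gaussian variables and applying the Nishimori identity exactly as in Section~\ref{generic} (the steps \eqref{intfirst} and \eqref{eq:derivative_freeEnergy_nishimori}), one obtains, with the block overlaps $q_A\defeq\tfrac1n\sum_{i\in A}X_iS_i$ and $q_B\defeq\tfrac1n\sum_{i\in B}X_iS_i$,
\begin{align*}
\frac{d}{dt}\,\mathbb{E}[\ln\mathcal{Z}_t]=-\frac{1}{4\Delta}\,\mathbb{E}\Big\langle\;\frac{(nq_A)^2}{n_1}+\frac{(nq_B)^2}{n_2}-n\,(q_A+q_B)^2\;\Big\rangle_t\;\le\;0,
\end{align*}
the inequality being the Cauchy--Schwarz bound $\tfrac{a^2}{p}+\tfrac{b^2}{q}\ge\tfrac{(a+b)^2}{p+q}$ with $a=nq_A$, $b=nq_B$, $p=n_1$, $q=n_2$ (so $p+q=n$), applied configuration by configuration inside the bracket. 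The Nishimori identity is exactly what turns the raw derivative --- a priori an indefinite difference of ``energies'' --- into this manifestly signed quadratic form, and this is the only point where the Bayes-optimal/planted structure is used. Consequently $\mathbb{E}[\ln\mathcal{Z}_t]$ is nonincreasing, $\mathbb{E}[\ln\mathcal{Z}_n]\le\mathbb{E}[\ln\mathcal{Z}_{n_1}]+\mathbb{E}[\ln\mathcal{Z}_{n_2}]$, and, feeding this back through the free-energy relation (the affine-in-$n$ and constant corrections combining in the favourable direction), $n_1 i_{n_1}+n_2 i_{n_2}\le n\,i_n$ for the $i<j$ model.

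Since $i_n\le H(\tbf S)/n$ is bounded uniformly in $n$ (by $H(P_0)<\infty$ under Assumption~\ref{ass:h1}, or more generally by any finite single-letter bound), the superadditive sequence $(n\,i_n)_{n\ge1}$ has bounded averages and Fekete's lemma yields $\lim_{n\to\infty} i_n=\sup_n i_n\in\mathbb{R}$. The step that requires the most care is the bookkeeping of the diagonal ``self-interaction'' contributions together with the nonrandom pieces of \eqref{eq:energy_MI}: for the model \eqref{eq:mainProblem} as stated (pairs $i\le j$) these produce an $O(1)$ correction to the superadditivity identity, which is harmless for the conclusion because reinstating the $n$ channels $w_{ii}$ changes $I(\tbf S;\tbf W)$ by at most $O(1)$ --- each has signal-to-noise ratio $O(1/n)$ and bounded input --- so $i_n$ changes by $O(1/n)$ and $\lim_n i_n$ is unaffected.
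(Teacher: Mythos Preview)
Your proof is correct and follows essentially the same Guerra--Toninelli interpolation as the paper. The paper instantiates its generic two-block interpolation of Section~\ref{generic} (with $\mathbf{\Lambda}^{(0)}=\mathrm{Id}$ and $\mathbf{\Lambda}^{(1)}$ uniform) and appeals to the positive semidefiniteness of $\mathbf{D}^{(0)}-\mathbf{D}^{(1)}$, which in the $2\times2$ case is exactly your Cauchy--Schwarz inequality $\frac{a^2}{p}+\frac{b^2}{q}\ge\frac{(a+b)^2}{p+q}$; your direct computation has the small advantage of handling the unequal split $n_1\neq n_2$ explicitly, whereas the paper's framework in Section~\ref{generic} is written for equal-size blocks and defers the general case to the Guerra--Toninelli reference.
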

\begin{proof}
The proof is easily obtained by following the generic interpolation method of Sec. \ref{generic} for a coupled system with two spatial positions (i.e. $L+1=2$). We choose 
$\Lambda_{\mu\nu}^{(0)}=\delta_{\mu\nu}$, $\mu,\nu\in{0,1}$ for the ''decoupled`` system and $\Lambda_{\mu\nu}^{(1)}=1/2$ for $\mu,\nu\in{0,1}$ for the ''fully connected`` system.
This analysis is essentially identical to [\cite{guerraToninelli}] were the existence of the thermodynamic limit of the free energy for the Sherrington-Kirkpatrick mean field spin glass is proven. 
\end{proof}
\section{Proof of the replica symmetric formula (Theorem \ref{thm1})} \label{sec:proof_theorem}
In this section we provide the proof of the RS formula for the mutual information of the underlying model (Theorem \ref{thm1})
for $0 < \Delta \leq \Delta_{\rm opt}$ (Proposition \ref{cor:delta-low-regime}) and then for $\Delta \geq \Delta_{\rm opt}$ (Proposition \ref{for:above-delta-opt}). For $0 < \Delta  \leq \Delta_{\rm opt}$ the proof directly follows form the I-MMSE relation Lemma \ref{lemma:immse}, the replica bound \eqref{eq:guerrabound} and the suboptimality of the AMP algorithm. In this interval the proof doesn't require spatial coupling.
For $\Delta \geq  \Delta_{\rm opt}$ the proof uses the results of Sections \ref{sec:thresh-sat} and \ref{sec:subadditivitystyle} on the spatially coupled model. 

Let us start with two preliminary lemmas.
The first is an I-MMSE relation [\cite{GuoShamaiVerdu}] adapted to the current matrix estimation problem.
\begin{lemma}\label{lemma:immse}
Let $P_0$ has finite first four moments.
The mutual information and the matrix-MMSE are related by
\begin{align}\label{eq:GSV}
\frac{1}{n}\frac{d I(\tbf{S}; \tbf{W}) }{d\Delta^{-1}} 
= \frac{1}{4} {\rm Mmmse}_n(\Delta^{-1}) + \mathcal{O}(1/n).
\end{align}
\end{lemma}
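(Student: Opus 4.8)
The plan is to establish the I-MMSE relation directly from the definition of mutual information, writing everything in terms of the partition function $\mathcal{Z}$ in \eqref{parti} and differentiating under the expectation. First I would recall the identity \eqref{eq:energy_MI}, which gives
\begin{align}
\frac{1}{n}I(\tbf{S};\tbf{W}) = -\frac{1}{n}\mathbb{E}_{\bS,\bZ}[\ln\mathcal{Z}] + \frac{v^2}{4\Delta} + \frac{1}{4\Delta n}(2\mathbb{E}[S^4] - v^2),
\end{align}
and then differentiate with respect to $\Delta^{-1}$. The two explicit terms contribute $-v^2/4 - (2\mathbb{E}[S^4] - v^2)/(4n) = -v^2/4 + \mathcal{O}(1/n)$, so the substance is computing $\frac{d}{d\Delta^{-1}}\left(-\frac{1}{n}\mathbb{E}_{\bS,\bZ}[\ln\mathcal{Z}]\right)$.

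The main technical point is to handle the $\Delta$-dependence carefully, because $\Delta$ appears both in the Hamiltonian $\mathcal{H}(\bx|\bs,\bz)$ in \eqref{hami} \emph{and} implicitly through $\bw = n^{-1/2}\bs\bs^\intercal + \sqrt{\Delta}\bz$ if one works with $\tilde{\mathcal{Z}}(\bw)$. The cleanest route is to introduce an auxiliary SNR parameter: set $\lambda = \Delta^{-1}$, write the observation model as $\bw = \sqrt{\lambda}\,n^{-1/2}\bs\bs^\intercal + \bz$ (an equivalent rescaling), so that $P(\bx|\bw) \propto \exp\big(-\frac12\|\bw - \sqrt{\lambda}\,n^{-1/2}\bx\bx^\intercal\|_{\rm F, \le}^2\big)\prod P_0(x_i)$ and the dependence on $\lambda$ is only through the signal term. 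Differentiating $-\frac1n\mathbb{E}[\ln\tilde{\mathcal{Z}}]$ in $\lambda$ brings down, via the Gibbs bracket, terms of the form $\langle x_ix_j\rangle$ against $\bw$ and $\bx$; using $\bw = \sqrt{\lambda}\,n^{-1/2}\bs\bs^\intercal + \bz$, integration by parts on the Gaussian $\bz$ (exactly as in \eqref{intfirst}), and the Nishimori identity $\mathbb{E}\langle x_ix_j\rangle^2 = \mathbb{E}\langle x_ix_js_is_j\rangle$ and $\mathbb{E}\langle\,\cdot\,\rangle = \mathbb{E}[s_i s_j\langle\,\cdot\,\rangle]$ appropriately, one collapses the result to
\begin{align}
\frac{d}{d\lambda}\Big(-\frac1n\mathbb{E}[\ln\tilde{\mathcal Z}]\Big) = -\frac{1}{4n^2}\mathbb{E}\Big[\sum_{i,j}\big(s_is_j - \langle x_i x_j\rangle\big)^2\Big] + \big(\text{diagonal }\mathcal O(1/n)\text{ terms}\big),
\end{align}
which is $-\tfrac14{\rm Mmmse}_n(\lambda) + \mathcal{O}(1/n)$ once the $v^2/4$ and moment pieces are combined with the explicit terms from \eqref{eq:energy_MI}. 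The $\mathcal{O}(1/n)$ accounts for the diagonal $i=j$ contributions (where $\langle x_i^2\rangle$ appears, order-$n$ many of them against a $1/n^2$ prefactor) and for the distinction between $\sum_{i\le j}$ and $\sum_{i,j}$; finiteness of the first four moments of $P_0$ makes all these terms uniformly bounded.

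The main obstacle I anticipate is bookkeeping rather than conceptual: justifying the exchange of derivative and expectation (dominated convergence, using that $\ln\mathcal Z$ and its $\lambda$-derivative have moments controlled by the four moments of $P_0$ and Gaussian tails of $\bz$), and correctly tracking which normalization — $\mathcal Z$ versus $\tilde{\mathcal Z}$, $\sum_{i\le j}$ versus $\sum_{i,j}$ — produces which $\mathcal{O}(1/n)$ correction so that the stated clean relation emerges. A careful but routine application of Gaussian integration by parts plus the Nishimori identities, both already used elsewhere in the paper, closes the argument.
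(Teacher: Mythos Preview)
Your proposal is correct and follows essentially the same approach as the paper: rescale to isolate the SNR $\lambda=\Delta^{-1}$, differentiate, then use Gaussian integration by parts together with the Nishimori identities to reduce to the matrix MMSE, with the diagonal $i=j$ contributions absorbed into $\mathcal{O}(1/n)$ via the fourth-moment bound. The only cosmetic difference is the entry point: you differentiate the free energy $-\tfrac{1}{n}\mathbb{E}[\ln\mathcal{Z}]$ starting from \eqref{eq:energy_MI}, whereas the paper (Appendix~\ref{app:immse}) differentiates $H(\tbf{W})$ directly and first integrates by parts in $w_{kl}$ before introducing $\bz$; both routes land on the same identity.
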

\begin{proof}
\begin{align}
\frac{1}{n}\frac{d I(\tbf{S}; \tbf{W}) }{d\Delta^{-1}} &= 
\frac{1}{2n^2}\mathbb{E}_{\tbf{S}, \tbf {W}}\Big[\sum_{i\leq j} \big(S_iS_j - \mathbb{E}[X_iX_j\vert \tbf{W}]\big)^2\Big] 
\nonumber \\ 
&= \frac{1}{4n^2}\mathbb{E}_{\tbf{S}, \tbf {W}}\Big[ \bigl\| \tbf{S}\tbf{S}^{\intercal} - \mathbb{E}[\tbf{X}\tbf{X}^{\intercal}\vert\tbf{W}]\bigr\|_{\rm F}^2 \Big]
+\frac{1}{4n^2}\sum_{i=1}^n \mathbb{E}_{\tbf{S}, \tbf {W}}[(S_i^2 - \mathbb{E}[X_i^2 \vert \bW])^2]
\nonumber \\ 
&= \frac{1}{4} {\rm Mmmse}_n(\Delta^{-1}) + \mathcal{O}(1/n),
\end{align}
The proof details for first equality are in Appendix~\ref{app:immse}. The second equality is obatined by completing the sum and accounting for the diagonal terms. The last equality is obtained from 
\begin{align}
\mathbb{E}_{\tbf{S}, \tbf {W}}[(S_i^2 - \mathbb{E}[X_i^2 \vert \bW])^2]& 
= \mathbb{E}[S_i^4] - 2 \mathbb{E}_{S_i, \tbf {W}}[ S_i^2 \mathbb{E}[X_i^2 \vert \bW]] + 
\mathbb{E}_{\bW}[\mathbb{E}[X_i^2 \vert \bW]^2]
\nonumber \\ &
= 
 \mathbb{E}[S_i^4] - \mathbb{E}_{\bW}[\mathbb{E}[X_i^2 \vert \bW]^2]
 \nonumber \\ &
 \leq 
 \mathbb{E}[S_i^4].
\end{align}
where we have used the Nishimori identity 
$\mathbb{E}_{S_i, \tbf {W}}[ S_i^2 \mathbb{E}[X_i^2 \vert \bW]] = \mathbb{E}_{\bW}[\mathbb{E}[X_i^2 \vert \bW]^2]$ in the second equality (Appendix \ref{app:nishi}).
\end{proof}
\begin{lemma}\label{existence-thermo-limit}
The limit $\lim_{n\to +\infty}n^{-1}I(\tbf{S};\tbf{W})$ exists 
and is a concave, continuous, function of $\Delta$. 
\end{lemma}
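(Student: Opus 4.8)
The plan is to prove the three claimed properties in turn: existence of the limit, concavity, and continuity. For \emph{existence}, I would invoke Lemma \ref{lemma:superadditivity}, which has just been established: the sequence $n i_n = I(\tbf{S};\tbf{W})$ is superadditive, and Fekete's lemma gives $\lim_{n\to\infty} n^{-1}I(\tbf{S};\tbf{W}) = \sup_n n^{-1}I(\tbf{S};\tbf{W})$. Note the superadditivity proof relied only on the generic interpolation of Section \ref{generic} and not on Assumptions \ref{ass:h1}--\ref{ass:h3}, so the existence statement holds under the mild moment hypothesis mentioned after Theorem \ref{thm1}.

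For \emph{concavity} in $\Delta$ (equivalently in $\Delta^{-1}$, up to the usual care since a reparametrization does not automatically preserve concavity — I would phrase concavity in the variable that actually comes out convex/concave), I would argue at finite $n$. By the I-MMSE relation (Lemma \ref{lemma:immse}),
\begin{align}
\frac{1}{n}\frac{dI(\tbf{S};\tbf{W})}{d\Delta^{-1}} = \frac{1}{4}{\rm Mmmse}_n(\Delta^{-1}) + \mathcal{O}(1/n),
\end{align}
and the matrix-MMSE is a non-increasing function of the signal-to-noise ratio $\Delta^{-1}$ (more noise cannot decrease the error), so $n^{-1}I(\tbf{S};\tbf{W})$ is, up to an $\mathcal{O}(1/n)$ correction, a concave function of $\Delta^{-1}$. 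More cleanly, one can note that $I(\tbf{S};\tbf{W})$ is exactly a concave function of $\Delta^{-1}$ because the derivative of the full matrix-MMSE (including diagonal terms, i.e. the genuine $\frac{1}{2n^2}\mathbb{E}[\sum_{i\le j}(S_iS_j-\mathbb{E}[X_iX_j|\tbf{W}])^2]$ appearing in the first line of the proof of Lemma \ref{lemma:immse}) is monotone in $\Delta^{-1}$ by the standard MMSE-monotonicity argument; concavity passes to the pointwise limit.

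For \emph{continuity}, concavity of the limit on the open half-line $\Delta^{-1}\in(0,\infty)$ (equivalently $\Delta\in(0,\infty)$) immediately yields continuity there, since a finite concave function on an open interval is automatically continuous. This is the quickest route and avoids any uniform-convergence argument. The main (only real) obstacle is bookkeeping: making sure the $\mathcal{O}(1/n)$ term in Lemma \ref{lemma:immse} is genuinely negligible in the limit and does not spoil concavity — this is handled by observing that the \emph{exact} finite-$n$ quantity $n^{-1}I(\tbf{S};\tbf{W})$ is concave in $\Delta^{-1}$ (monotonicity of a bona fide MMSE), with no error term, so concavity is inherited by the limit, and continuity follows for free. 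The change of variable between $\Delta$ and $\Delta^{-1}$ is a decreasing smooth bijection, so continuity in one variable is equivalent to continuity in the other; I would simply state the result in whichever variable is convenient and remark on the equivalence.
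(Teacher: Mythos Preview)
Your proposal is correct and follows essentially the same route as the paper: existence via the superadditivity Lemma~\ref{lemma:superadditivity} and Fekete, concavity in $\Delta^{-1}$ via the I-MMSE relation (Lemma~\ref{lemma:immse}) together with monotonicity of the MMSE, and continuity as an immediate consequence of concavity on an open interval. You are in fact a bit more careful than the paper on two points---handling the $\mathcal{O}(1/n)$ term by noting that the \emph{exact} finite-$n$ derivative is a genuine MMSE, and flagging that concavity need not survive the reparametrization $\Delta\leftrightarrow\Delta^{-1}$ while continuity does---both of which are well taken.
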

\begin{proof}
The existence of the limit is the statement of Lemma \ref{lemma:superadditivity} in Sec. \ref{sec:subadditivitystyle}. The continuity follows from the concavity of the mutual information with respect to $\Delta^{-1}$: because the limit of a sequence of concave functions remains concave, and thus it is continuous. To see the concavity notice that the first derivative of the mutual information w.r.t $\Delta^{-1}$ equals the matrix-MMSE (Lemma \ref{lemma:immse}) and that the later cannot increase as a function of $\Delta^{-1}$.
\end{proof}

\subsection{Proof of Theorem \ref{thm1} for $0<\Delta  \leq \Delta_{\rm opt}$}\label{sec:delta-less-delta-opt}
\begin{lemma}\label{small-noise-lemma}
Assume $P_0$ is a discrete distribution. Fix $\Delta < \Delta_{\rm AMP}$. The mutual information per variable  is asymptotically given by the RS formula \eqref{rsform}.
\end{lemma}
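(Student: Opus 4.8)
The replica bound \eqref{eq:guerrabound} already gives the upper bound in \eqref{rsform}, so the plan is to prove the matching lower bound, and in fact to do so for all $\Delta<\Delta_{\rm AMP}$ at once. Write $R:=\Delta^{-1}$, $i_\infty(R):=\lim_n\frac1n I(\tbf S;\tbf W)$ (which exists by Lemma~\ref{existence-thermo-limit}) and $\phi(R):=\min_{E\in[0,v]}i_{\rm RS}(E;1/R)$. The three ingredients are: the integrated I-MMSE relation of Lemma~\ref{lemma:immse}; the sub-optimality of AMP, ${\rm Mmmse}_n(R)\le {\rm Mmse}^{(t)}_{n,{\rm AMP}}(R)$ for every $t$, together with state evolution \eqref{eq:mseampmatriciel}; and the fact that for $\Delta<\Delta_{\rm AMP}$ the potential $i_{\rm RS}(\,\cdot\,;\Delta)$ has a single interior stationary point (by Definition~\ref{amp-thresh}), which is therefore its global minimiser $E^*(\Delta)$ and also equals the state-evolution fixed point $E^{(\infty)}(\Delta)$ reached from $E^{(0)}=v$ (using Lemma~\ref{lemma:fixedpointSE_extPot_underlying}).

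First I would turn the I-MMSE relation into an integral identity for the limiting quantity. Since $R\mapsto\frac1n I(\tbf S;\tbf W)$ is concave and converges pointwise to the concave function $i_\infty$, its derivatives converge for a.e.\ $R$; combining this with Lemma~\ref{lemma:immse} and dominated convergence (${\rm Mmmse}_n\le v^2$) gives, with $m_\infty(R):=\lim_n{\rm Mmmse}_n(R)$,
\begin{align*}
i_\infty(R)=i_\infty(\infty)-\frac14\int_R^\infty m_\infty(r)\,dr,\qquad i_\infty(\infty):=\lim_{R\to\infty}i_\infty(R).
\end{align*}
Next I would derive the analogous identity for $\phi$. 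A direct differentiation of \eqref{eq:potentialfunction} in $R$ at fixed $E$ (most cleanly via the identity $\psi(\gamma):=\mathbb E_{S,Z}\ln\!\int dx\,P_0(x)\,e^{-\gamma x^2/2+x(\gamma S+\sqrt\gamma Z)}=\frac{\gamma v}{2}-I_{\rm sc}(\gamma)$ and the scalar I-MMSE $I_{\rm sc}'(\gamma)=\frac12{\rm mmse}(\gamma)$, with $\gamma=(v-E)R$) gives $\partial_R i_{\rm RS}(E;1/R)\big|_E=\frac{E^2}{4}+\frac{v-E}{2}{\rm mmse}((v-E)R)$, which collapses to $\frac14\big(v^2-(v-E^*)^2\big)$ at a stationary point $E=E^*$ because there $E^*={\rm mmse}((v-E^*)R)$ (Lemma~\ref{lemma:fixedpointSE_extPot_underlying}). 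Since $\phi$ is a minimum of smooth functions (hence locally Lipschitz) and the minimiser is an interior stationary point for $\Delta\ne\Delta_{\rm RS}$, the envelope theorem gives $\phi'(R)=\frac14 m_{\rm pot}(R)$ with $m_{\rm pot}(R):=v^2-(v-E^*(1/R))^2$; evaluating \eqref{eq:potentialfunction} at $E^*$ gives $\phi(R)=\frac{R(E^*)^2}{4}+I_{\rm sc}((v-E^*)R)$, so that (since $P_0$ is discrete, $E^*(1/R)\to0$ exponentially as $R\to\infty$) $\phi(\infty)=I_{\rm sc}(\infty)=H(P_0)$ and hence $\phi(R)=H(P_0)-\frac14\int_R^\infty m_{\rm pot}(r)\,dr$.

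It remains to compare the two representations. For $r>R_{\rm AMP}:=\Delta_{\rm AMP}^{-1}$ the single stationary point forces $E^{(\infty)}(1/r)=E^*(1/r)$, so by AMP sub-optimality and \eqref{eq:mseampmatriciel}, $m_\infty(r)\le\lim_t\lim_n{\rm Mmse}^{(t)}_{n,{\rm AMP}}(r)=v^2-(v-E^{(\infty)}(1/r))^2=m_{\rm pot}(r)$. I also need $i_\infty(\infty)=H(P_0)$: since $P_0$ is discrete, $\frac1n I(\tbf S;\tbf W)=H(P_0)-\frac1n H(\tbf S\mid\tbf W)$, and as $\Delta\to0$ the matrix $\tbf S\tbf S^{\intercal}$, hence $\tbf S$ up to a global sign, is recovered, so Fano's inequality together with $\lim_n{\rm Vmmse}_n(\Delta^{-1})\le E^{(\infty)}(\Delta)\to0$ forces $\lim_{\Delta\to0}\lim_n\frac1n H(\tbf S\mid\tbf W)=0$, i.e.\ $i_\infty(\infty)=H(P_0)=\phi(\infty)$. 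Then, for $R>R_{\rm AMP}$,
\begin{align*}
\phi(R)-i_\infty(R)=\frac14\int_R^\infty\big(m_\infty(r)-m_{\rm pot}(r)\big)\,dr\le0,
\end{align*}
which combined with $\phi(R)-i_\infty(R)\ge0$ from \eqref{eq:guerrabound} yields $i_\infty(R)=\phi(R)$, i.e.\ \eqref{rsform}, for every $\Delta<\Delta_{\rm AMP}$. The two delicate points are the justification of the integrated I-MMSE identity for $i_\infty$ (interchanging limit, differentiation and integration, controlled by concavity and boundedness of the matrix-MMSE) and the boundary evaluation $i_\infty(\infty)=H(P_0)$; the rest is bookkeeping on the potential, and I expect the boundary evaluation to be the main obstacle to make fully rigorous since it needs a uniform-in-noise control of the conditional entropy through Fano's inequality.
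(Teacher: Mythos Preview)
Your proposal is correct and follows essentially the same route as the paper: combine AMP suboptimality, state evolution, the I-MMSE relation, and the fact that for $\Delta<\Delta_{\rm AMP}$ the state-evolution fixed point is the unique (hence global) minimiser of $i_{\rm RS}$, then integrate the resulting derivative inequality and match boundary values at $\Delta\to 0$, closing against the replica bound \eqref{eq:guerrabound}. The only noteworthy technical difference is that the paper integrates the inequality $\frac{d}{d\Delta}\min_E i_{\rm RS}\le \liminf_n \frac{1}{n}\frac{dI}{d\Delta}$ on $[0,\Delta]$ and applies Fatou's lemma with the boundary evaluated at \emph{finite} $n$ (where $\lim_{\Delta\to 0}\frac{1}{n}I(\tbf S;\tbf W)=H(S)$ is immediate for discrete $P_0$), which cleanly sidesteps the interchange-of-limits issue you flag in your evaluation of $i_\infty(\infty)=H(P_0)$ via Fano.
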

\begin{proof}
By the suboptimality of the AMP algorithm we have
\begin{align}\label{startingpoint}
{\rm Mmse}_{n, \rm AMP}^{(t)}(\Delta^{-1}) \geq {\rm Mmmse}_n(\Delta^{-1}).
\end{align}
Taking limits in the order $\lim_{t\to +\infty}\limsup_{n\to +\infty}$ and using \eqref{eq:mseampmatriciel} we find
\begin{align}
v^2 - (v-E^{(\infty)})^2 \geq \limsup_{n\to +\infty}{\rm Mmmse}_n(\Delta^{-1}).
\end{align}
Furthermore, by applying Lemma \ref{lemma:immse} we obtain 
\begin{align}\label{thisineq}
\frac{v^2 - (v-E^{(\infty)})^2}{4} \geq \limsup_{n\to +\infty}\frac{1}{n}\frac{d I(\tbf{S}; \tbf{W}) }{d\Delta^{-1}}.
\end{align}
Now, for $\Delta < \Delta_{\rm AMP}$ we have $E^{(\infty)} = E_{\rm good}(\Delta)$ which is the unique and hence {\it global} minimum of $i_{\rm RS}(E; \Delta)$ over $E\in [0,v]$. Moreover, for $\Delta < \Delta_{\rm AMP}$ we have that $E^{(\infty)}(\Delta)$ is continuously differentiable $\Delta^{-1}$ with locally bounded derivative. Thus
\begin{align}\label{first_1}
\frac{d}{d\Delta^{-1}}\Big(\min_{E\in[0,v]}i_{\rm RS}(E; \Delta)\Big)
& =
\frac{di_{\rm RS}}{d\Delta^{-1}}(E^{(\infty)}; \Delta) 
\nonumber \\ &
= 
 \frac{\partial i_{\rm RS}}{\partial E}(E^{(\infty)}; \Delta) \frac{d E^{(\infty)}}{d\Delta^{-1}}
 + \frac{\partial i_{\rm RS}}{\partial\Delta^{-1}}(E^{(\infty)} ; \Delta)
 \nonumber \\ &
=  \frac{\partial i_{\rm RS}}{\partial\Delta^{-1}}(E^{(\infty)} ; \Delta) \nonumber \\
&= \frac{(v-E^{(\infty)})^2 +v^2}{4}
- \frac{\partial \mathbb{E}_{S, Z}[\cdots]}{\partial \Sigma^{-2}}\bigg\vert_{E^{(\infty)}}\frac{\partial\Sigma^{-2}}{\partial\Delta^{-1}}\bigg\vert_{E^{(\infty)}} 
\nonumber \\
&= \frac{v^2 - (v-E^{(\infty)})^2}{4},
\end{align}
where $\mathbb{E}_{S, Z}[\cdots]$ is the expectation that appears in the RS potential \eqref{eq:potentialfunction}. The third equality is obtained from 
\begin{align}
\frac{\partial\Sigma^{-2}}{\partial\Delta^{-1}}\bigg\vert_{E^{(\infty)}}  = v -E^{(\infty)}
\end{align}
and 
\begin{align}
\frac{\partial \mathbb{E}_{S, Z}[\cdots]}{\partial \Sigma^{-2}}\Big\vert_{E^{(\infty)}} = \frac{1}{2}(v-E^{(\infty)}).
\end{align}
This last identity immediately follows from
$\frac{\partial i_{\rm RS}}{\partial E}\Big\vert_{E^{(\infty)}} = 0$.
From \eqref{thisineq} and \eqref{first_1}
\begin{align}
\frac{d}{d\Delta^{-1}}(\min_{E\in [0,v]}i_{\rm RS}(E; \Delta)) \geq 
\limsup_{n\to +\infty}\frac{1}{n}\frac{d I(\tbf{S}; \tbf{W}) }{d\Delta^{-1}}, \nonumber\\
\end{align}
which is equivalent to 
\begin{align}
\frac{d}{d\Delta}(\min_{E\in [0,v]}i_{\rm RS}(E; \Delta)) \leq 
\liminf_{n\to +\infty}\frac{1}{n}\frac{d I(\tbf{S};\tbf{W}) }{d\Delta}. 
\label{middlepoint}
\end{align}
We now integrate inequality \eqref{middlepoint} over an interval 
$[0, \Delta] \subset [0, \Delta_{\rm AMP}[$
\begin{align}
\min_{E\in [0,v]}i_{\rm RS}(E; \Delta) - \min_{E\in [0,v]}i_{\rm RS}(E; 0) 
& \leq 
\int_0^\Delta d\tilde{\Delta}\,\liminf_{n\to +\infty}\frac{1}{n}\frac{d I(\tbf{S}; \tbf{W}) }{d\tilde{\Delta}}
\nonumber \\ &
\leq 
\liminf_{n\to +\infty}\int_0^\Delta d\tilde{\Delta}\,\frac{1}{n}\frac{d I(\tbf{S}; \tbf{W}) }{d\tilde{\Delta}}
\nonumber \\ &
= \liminf_{n\to +\infty}\frac{1}{n}I(\tbf{S}; \tbf{W}) - H(S).
\end{align}
The second inequality uses Fatou's Lemma and the last equality uses that for a discrete prior 
\begin{align}
\lim_{\Delta\to 0_+}I(\tbf{S}; \tbf{W}) = H(\tbf{S}) - \lim_{\Delta\to 0_+}H(\tbf{S}\vert \tbf{W}) =  nH(S).
\end{align}
In Appendix \ref{appendix_smallnoise} an explicit calculation shows that $\min_{E}i_{\rm RS}(E; 0) = H(S)$.  Therefore
\begin{align}\label{eq:lower_bound_small}
\min_{E\in [0,v]}i_{\rm RS}(E; \Delta) \leq \liminf_{n\to +\infty}\frac{1}{n}I(\tbf{S}; \tbf{W}).
\end{align}
The final step combines inequality \eqref{eq:lower_bound_small} with the replica bound \eqref{eq:guerrabound} to obtain
\begin{align}
\min_{E\in [0,v]}i_{\rm RS}(E; \Delta) & \leq 
\liminf_{n\to +\infty}\frac{1}{n}I(\tbf{S}; \tbf{W})
\leq 
\limsup_{n\to +\infty}\frac{1}{n}I(\tbf{S}; \tbf{W})
\leq 
\min_{E\in [0,v]}i_{\rm RS}(E; \Delta).
\end{align}
This shows that the limit of the mutual information exists and is equal to the RS formula for $\Delta <\Delta_{\rm AMP}$.
Note that in this proof we did not need the a-priori existence of the limit.
\end{proof}
%
%
\begin{remark}
One can try to apply the same proof idea to the regime $\Delta > \Delta_{\rm RS}$. 
Equations \eqref{startingpoint}-\eqref{middlepoint} work out exactly in the same way
because the AMP fixed point $E^{(\infty)}$ is a {\it global} minimum of $i_{\rm RS}(E; \Delta)$. Then when integrating on $]\Delta, +\infty[\subset [\Delta_{\rm RS}, +\infty[$, one finds 
\begin{align}
\limsup_{n\to +\infty}\frac{1}{n}I(\tbf{S}; \tbf{W}) \leq \min_{E\in [0, v]}i_{\rm RS}(E; \Delta).
\end{align}
This essentially gives an alternative proof of \eqref{eq:guerrabound} for $\Delta > \Delta_{\rm RS}$.
\end{remark}

\begin{lemma}\label{suboptimality}
We necessarily have $\Delta_{\rm AMP} \leq \Delta_{\rm opt}$.
\end{lemma}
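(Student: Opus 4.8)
The plan is to show that $\lim_{n\to\infty}\tfrac1n I(\tbf{S};\tbf{W})$ is analytic on the whole interval $]0,\Delta_{\rm AMP}[$; by the very definition of $\Delta_{\rm opt}$ (Definition~\ref{def-delta-opt}) this immediately forces $\Delta_{\rm opt}\geq\Delta_{\rm AMP}$, which is the claim.

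First I would invoke Lemma~\ref{small-noise-lemma}: under Assumption \ref{ass:h1} the prior $P_0$ is discrete, so for every $\Delta<\Delta_{\rm AMP}$ the asymptotic mutual information per variable equals $\min_{E\in[0,v]} i_{\rm RS}(E;\Delta)$, and---as used in the proof of that lemma---the minimum is attained at the unique stationary point $E_{\rm good}(\Delta)$ of the RS potential, which lies in the open interval $(0,v)$ (because $\mathbb{E}[S]\neq 0$ rules out $E=v$ as a stationary point, and $E=0$ is never stationary). Next I would establish analyticity of $\Delta\mapsto i_{\rm RS}(E_{\rm good}(\Delta);\Delta)$. Since $P_0(s)=\sum_\alpha p_\alpha\delta(s-a_\alpha)$ is a finite sum with bounded support, the inner $x$-integral in \eqref{eq:potentialfunction} is a finite sum of exponentials and one checks that $(E,\Delta)\mapsto i_{\rm RS}(E;\Delta)$ is real-analytic on $(0,v)\times(0,\infty)$: the only remaining integral, the Gaussian average over $Z$, preserves analyticity by a standard locally-uniform-domination argument, the $\ln$ of the (strictly positive) partition sum growing at most linearly in $|Z|$. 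Then, by the Remark following Assumption \ref{ass:h3}, $\Delta_{\rm AMP}$ is the \emph{first} horizontal inflexion point of $i_{\rm RS}$, so for $\Delta<\Delta_{\rm AMP}$ the unique stationary point is a \textbf{non-degenerate} minimum, $\partial^2 i_{\rm RS}/\partial E^2\big|_{E_{\rm good}(\Delta)}>0$ (if this second derivative vanished at some $\Delta_0<\Delta_{\rm AMP}$ we would have a solution of $\partial_E i_{\rm RS}=\partial_E^2 i_{\rm RS}=0$ below $\Delta_{\rm AMP}$, contradicting minimality of $\Delta_{\rm AMP}$). The real-analytic implicit function theorem applied to $\partial_E i_{\rm RS}(E;\Delta)=0$ then yields that $\Delta\mapsto E_{\rm good}(\Delta)$ is real-analytic on $]0,\Delta_{\rm AMP}[$, and composition gives that $\Delta\mapsto \min_E i_{\rm RS}(E;\Delta)=i_{\rm RS}(E_{\rm good}(\Delta);\Delta)$ is real-analytic there as well.

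Combining the two steps, $\lim_n\tfrac1n I(\tbf{S};\tbf{W})$ is analytic on $]0,\Delta_{\rm AMP}[$, hence $\Delta_{\rm opt}\geq\Delta_{\rm AMP}$, i.e. $\Delta_{\rm AMP}\leq\Delta_{\rm opt}$. The only mildly delicate point, which I expect to be the main obstacle, is the joint real-analyticity of $i_{\rm RS}$ together with the control that the minimizer stays strictly interior to $(0,v)$ and non-degenerate throughout $]0,\Delta_{\rm AMP}[$; everything else is bookkeeping with results already proved. (The harmless boundary case $\mathbb{E}[S]=0$ is handled, as everywhere in the paper, by slightly biasing $P_0$ and letting the bias tend to zero at the end.)
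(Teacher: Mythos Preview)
Your proof is correct and follows the same strategy as the paper: establish that $\lim_n n^{-1}I(\tbf{S};\tbf{W})=\min_E i_{\rm RS}(E;\Delta)$ is analytic on $]0,\Delta_{\rm AMP}[$ (via Lemma~\ref{small-noise-lemma}) and then invoke Definition~\ref{def-delta-opt}. The only difference is in how the analyticity of $\min_E i_{\rm RS}$ below $\Delta_{\rm AMP}$ is justified: the paper notes that uniqueness of the stationary point on $]0,\Delta_{\rm AMP}[$ forces $\Delta_{\rm AMP}\le\Delta_{\rm RS}$ and then uses that $\Delta_{\rm RS}$ is \emph{defined} as the first non-analyticity of $\min_E i_{\rm RS}$, whereas you verify the analyticity directly via the real-analytic implicit function theorem and non-degeneracy of the minimizer---in effect supplying the details behind the paper's shortcut.
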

\begin{proof}
Notice first that it not possible to have $\Delta_{\rm RS} < \Delta_{\rm AMP}$ because in the range $]0, \Delta_{\rm AMP}[$, as a function of $E$, the function $i_{\rm RS}(E;\Delta)$ has a unique stationary point. Since $\min_{E\in [0,v]} i_{\rm RS}(E; \Delta)$ is analytic
for $\Delta < \Delta_{\rm RS}$, it is analytic for $\Delta <\Delta_{\rm AMP}$. Now we proceed by contradiction: suppose we would have $\Delta_{\rm AMP} \geq \Delta_{\rm opt}$. Lemma \ref{small-noise-lemma} asserts 
that $\lim_{n\to +\infty} n^{-1} I(\tbf{S}; \bW) = \min_{E\in [0,v]} i_{\rm RS}(E; \Delta)$ for $\Delta < \Delta_{\rm AMP}$ thus we would have 
$\lim_{n\to +\infty} n^{-1} I(\tbf{S}; \bW)$ analytic at $\Delta_{\rm opt}$. This is a contradiction by definition of $\Delta_{\rm opt}$.
\end{proof}

\begin{lemma}\label{analyticity-argument-lemma}
We necessarily have $\Delta_{\rm RS} \geq \Delta_{\rm opt}$. 
\end{lemma}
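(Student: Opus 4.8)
The plan is to argue by contradiction: assume $\Delta_{\rm opt} > \Delta_{\rm RS}$ (the case $\Delta_{\rm RS} = +\infty$ being trivial since $\Delta_{\rm opt}$ is then automatically no larger), and derive a violation of the replica upper bound \eqref{eq:guerrabound} for $\Delta$ slightly above $\Delta_{\rm RS}$. The three ingredients are Lemma~\ref{small-noise-lemma} (RS formula for $\Delta<\Delta_{\rm AMP}$), the identity theorem for real-analytic functions, and the bound \eqref{eq:guerrabound}.

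First I would record the structure of $\Delta\mapsto\min_{E\in[0,v]}i_{\rm RS}(E;\Delta)$ under Assumption~\ref{ass:h3}. Write $g(\Delta)\defeq i_{\rm RS}(E_{\rm good}(\Delta);\Delta)$ for the value of the RS potential at the ``good'' stationary point $E_{\rm good}(\Delta)=T_{\rm u}^{(\infty)}(0)$. For $\Delta<\Delta_{\rm AMP}$ this is the unique stationary point and hence the global minimum; under Assumption~\ref{ass:h3} (at most three stationary points, i.e. at most one transition, cf. Fig.~\ref{fig:I-AMP}) one checks that $E_{\rm good}$ remains the global minimizer on all of $]0,\Delta_{\rm RS}[$, so that $\min_E i_{\rm RS}(E;\Delta)=g(\Delta)$ there, with $g$ real-analytic on $]0,\Delta_{\rm RS}[$. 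Moreover, since $\Delta_{\rm AMP}\le\Delta_{\rm RS}$ (noted in the proof of Lemma~\ref{suboptimality}), the good minimum is still non-degenerate at $\Delta=\Delta_{\rm RS}$, so the implicit function theorem applied to $\partial_E i_{\rm RS}(E;\Delta)=0$ shows $E_{\rm good}$, and hence $g$, extends real-analytically to an interval $]0,\Delta_{\rm RS}+\delta[$ for some $\delta>0$; and on $]\Delta_{\rm RS},\Delta_{\rm RS}+\delta[$ the global minimum $\min_E i_{\rm RS}$ coincides with a distinct analytic branch $b(\Delta)$ (the ``bad'' minimum) with $b(\Delta_{\rm RS})=g(\Delta_{\rm RS})$ and, since two distinct analytic functions cannot agree on an interval, $b(\Delta)<g(\Delta)$ strictly for $\Delta\in(\Delta_{\rm RS},\Delta_{\rm RS}+\delta)$ after possibly shrinking $\delta$.

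Next I would assemble the continuation argument. By Lemma~\ref{small-noise-lemma}, $\lim_{n\to\infty}\frac1n I(\tbf{S};\tbf{W})=\min_E i_{\rm RS}(E;\Delta)=g(\Delta)$ on the nonempty open interval $]0,\Delta_{\rm AMP}[$ (recall $\Delta_{\rm AMP}>0$). Under the contradiction hypothesis, $\lim_{n\to\infty}\frac1n I(\tbf{S};\tbf{W})$ is real-analytic on $]0,\Delta_{\rm opt}[$, which contains $]0,\Delta_{\rm RS}+\epsilon[$ for some $\epsilon>0$. The two real-analytic functions $\lim_{n\to\infty}\frac1n I(\tbf{S};\tbf{W})$ and $g$ agree on $]0,\Delta_{\rm AMP}[$, hence by the identity theorem they agree on $]0,\Delta_{\rm RS}+\epsilon'[$ with $\epsilon'\defeq\min(\epsilon,\delta)>0$. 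But then for $\Delta\in(\Delta_{\rm RS},\Delta_{\rm RS}+\epsilon')$ the replica bound \eqref{eq:guerrabound} gives
\begin{align*}
g(\Delta)=\lim_{n\to\infty}\tfrac1n I(\tbf{S};\tbf{W})\le\min_{E\in[0,v]}i_{\rm RS}(E;\Delta)=b(\Delta)<g(\Delta),
\end{align*}
a contradiction. Hence $\Delta_{\rm opt}\le\Delta_{\rm RS}$, which is the claim. As a by-product, on $]0,\Delta_{\rm opt}[\subseteq]0,\Delta_{\rm RS}[$ the same identity-theorem step gives $\lim_{n\to\infty}\frac1n I(\tbf{S};\tbf{W})=\min_E i_{\rm RS}(E;\Delta)$, i.e. Theorem~\ref{thm1} for $\Delta<\Delta_{\rm opt}$.

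The only delicate point is the structural input of the second paragraph: that on $]0,\Delta_{\rm RS}[$ the global minimizer of $i_{\rm RS}(\cdot;\Delta)$ is exactly the ``good'' branch, that this branch continues analytically through $\Delta_{\rm RS}$, and that beyond $\Delta_{\rm RS}$ the global minimum jumps to a strictly lower analytic branch. This is precisely where Assumption~\ref{ass:h3} is used, together with $\Delta_{\rm AMP}\le\Delta_{\rm RS}$ (so that the horizontal-inflexion degeneracy of the good minimum happens strictly before $\Delta_{\rm RS}$, or not at all). Everything else — the identity theorem and the replica bound — is already in hand.
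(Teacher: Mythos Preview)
Your argument is correct and reaches the conclusion by a genuinely different route from the paper. Both proofs share the first half: agree by Lemma~\ref{small-noise-lemma} that $\lim_n n^{-1}I(\tbf S;\tbf W)=\min_E i_{\rm RS}(E;\Delta)=g(\Delta)$ on $]0,\Delta_{\rm AMP}[$, then invoke the identity theorem to propagate equality with the analytic good branch $g$ up to $\Delta_{\rm RS}$. The divergence is in how the contradiction is produced. The paper never continues $g$ past $\Delta_{\rm RS}$; instead it reuses the I-MMSE relation and AMP suboptimality (i.e.\ the chain \eqref{startingpoint}--\eqref{middlepoint}, valid because for $\Delta>\Delta_{\rm RS}$ the AMP fixed point \emph{is} the global minimizer) and integrates from $\Delta_{\rm RS}$ upward to force $\lim_n n^{-1}I=\min_E i_{\rm RS}$ on all of $]0,\Delta_{\rm opt}[$, whence $\min_E i_{\rm RS}$ would be analytic at $\Delta_{\rm RS}$. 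You instead push the analytic branch $g$ through $\Delta_{\rm RS}$ via the implicit function theorem, extend the equality $\lim_n n^{-1}I=g$ by the identity theorem, and collide directly with the replica bound \eqref{eq:guerrabound} since $g>b=\min_E i_{\rm RS}$ just above $\Delta_{\rm RS}$. Your route is more elementary in that it avoids re-invoking I-MMSE and Fatou; the price is the structural check that $E_{\rm good}$ is a non-degenerate minimum at $\Delta_{\rm RS}$.

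One small correction to your parenthetical reasoning: the horizontal inflexion at $\Delta_{\rm AMP}$ is the \emph{birth} of the bad minimum near $E=v$, not a degeneracy of $E_{\rm good}$. The good minimum only degenerates later (when it merges with the intermediate local maximum, case (iv) of Fig.~\ref{fig:I-AMP}), which under Assumption~\ref{ass:h3} occurs strictly after $\Delta_{\rm RS}$. So your conclusion that $E_{\rm good}$ is non-degenerate at $\Delta_{\rm RS}$ is right, but the justification is that at $\Delta_{\rm RS}$ there are three distinct stationary points (good min, local max, bad min) with equal values at the two minima---not anything about $\Delta_{\rm AMP}$.
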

\begin{proof}
If $\Delta_{\rm RS} =+\infty$ then we are done, so we suppose it is finite. The proof 
proceeds by contradiction: suppose $\Delta_{\rm RS} < \Delta_{\rm opt}$.  So we assume $\Delta_{\rm RS}\in [\Delta_{\rm AMP}, \Delta_{\rm opt}[$ (in the previous lemma we showed that this must be the case). For $\Delta \in \, ]0, \Delta_{\rm RS}[$ we have $\min_{E\in [0, v]}i_{\rm RS}(E; \Delta) = i_{\rm RS}(E_{\rm good}(\Delta); \Delta)$ which is an analytic function  in this interval. By definition of $\Delta_{\rm opt}$, the function 
$\lim_{n\to +\infty}\frac{1}{n}I(\tbf{S}; \tbf{W})$ is analytic in $]0, \Delta_{\rm opt}[$. Therefore {\it both functions are analytic on} $]0, \Delta_{\rm RS}[$ and since by Lemma~\ref{small-noise-lemma} they are equal for $]0, \Delta_{\rm AMP}[\subset ]0, \Delta_{\rm RS}[$, {\it they must 
be equal on the whole range} $]0, \Delta_{\rm RS}[$. This implies that the two functions are equal at $\Delta_{\rm RS}$ because they are continuous. Explicitly,
\begin{align}\label{eq:equalityatpoint}
\min_{E}i_{\rm RS}(E; \Delta) = \lim_{n\to +\infty}\frac{1}{n}I(\tbf{S}; \tbf{W})\vert_{\Delta} \qquad  \forall \ \Delta \in \, ]0, \Delta_{\rm RS}].
\end{align}
Now, fix some $\Delta \in \, ]\Delta_{\rm RS}, \Delta_{\rm opt}[$. Since this $\Delta$ is greater than $\Delta_{\rm RS}$ the fixed point of state evolution $E^{(\infty)}$ is also the global minimum of $i_{\rm RS}(E;\Delta)$. Hence 
exactly as in \eqref{startingpoint}-\eqref{middlepoint} we can show that for 
$\Delta \in \, ]\Delta_{\rm RS}, \Delta_{\rm opt}[$, \eqref{middlepoint} is verified.
This time, combining \eqref{eq:guerrabound}, \eqref{middlepoint} and the assumption $\Delta_{\rm RS}\in [\Delta_{\rm AMP}, \Delta_{\rm opt}[$, leads to a 
contradiction, and hence we must have $\Delta_{\rm RS} \geq \Delta_{\rm opt}$. To see explicitly how the contradiction appears, integrate \eqref{middlepoint} 
on $]\Delta_{\rm RS}, \Delta[\subset ]\Delta_{\rm RS}, \Delta_{\rm opt}[$, and use Fatou's Lemma, to obtain
\begin{align}
\min_{E}i_{\rm RS}(E; \Delta) - \min_{E}i_{\rm RS}(E; \Delta_{\rm RS})
\leq & \liminf_{n\to +\infty}\Big(\frac{1}{n}I(\tbf{S}; \tbf{W})\vert_{\Delta} - \frac{1}{n}I(\tbf{S}; \tbf{W})\vert_{\Delta_{\rm RS}}\Big)
\nonumber \\
=& 
\lim_{n\to +\infty}\frac{1}{n}I(\tbf{S}; \tbf{W})\vert_{\Delta} - 
\lim_{n\to +\infty}\frac{1}{n}I(\tbf{S}; \tbf{W})\vert_{\Delta_{\rm RS}}.
\end{align}
From \eqref{eq:equalityatpoint} and \eqref{eq:guerrabound} we obtain 
$\min_{E}i_{\rm RS}(E; \Delta) = \lim_{n\to +\infty}\frac{1}{n}I(\tbf{S}; \tbf{W})$ 
{\it when} $\Delta_{\rm AMP} \leq \Delta_{\rm RS} < \Delta < \Delta_{\rm opt}$. But from \eqref{eq:equalityatpoint}, this equality is also true for $0 < \Delta \leq \Delta_{\rm RS}$. So the equality is valid in the whole interval $]0, \Delta_{\rm opt}[$ and therefore $\min_{E}i_{\rm RS}(E; \Delta)$ is analytic at $\Delta_{\rm RS}$. But this is impossible by the definition of $\Delta_{\rm RS}$. 
\end{proof} 
\begin{proposition}\label{cor:delta-low-regime}
Assume $P_0$ is a discrete distribution. Fix $\Delta \leq \Delta_{\rm opt}$. The mutual information per variable  is asymptotically given by the RS formula \eqref{rsform}.
\end{proposition}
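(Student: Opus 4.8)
The plan is to combine the three lemmas just proved with an elementary analytic-continuation argument; no new estimates are needed. By Lemma~\ref{small-noise-lemma} the identity $\lim_{n\to\infty}\frac1n I(\tbf S;\tbf W)=\min_{E\in[0,v]}i_{\rm RS}(E;\Delta)$ already holds on the open interval $]0,\Delta_{\rm AMP}[$, and by Lemma~\ref{suboptimality} we have $\Delta_{\rm AMP}\le\Delta_{\rm opt}$; it thus remains only to propagate this equality up to (and including) $\Delta_{\rm opt}$.

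First I would record that both sides are analytic in $\Delta$ on the whole interval $]0,\Delta_{\rm opt}[$. For the left-hand side this is exactly the definition of $\Delta_{\rm opt}$ (Definition~\ref{def-delta-opt}), the limit itself existing by Lemma~\ref{existence-thermo-limit}. For the right-hand side, $\min_{E\in[0,v]}i_{\rm RS}(E;\Delta)$ is analytic on $]0,\Delta_{\rm RS}[$ by the definition of the potential threshold $\Delta_{\rm RS}$, and Lemma~\ref{analyticity-argument-lemma} yields $\Delta_{\rm RS}\ge\Delta_{\rm opt}$, so $]0,\Delta_{\rm opt}[\,\subseteq\,]0,\Delta_{\rm RS}[$.

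Since $]0,\Delta_{\rm opt}[$ is a connected open interval and the two analytic functions coincide on the nonempty subinterval $]0,\Delta_{\rm AMP}[$ (nonempty because $\Delta_{\rm AMP}>0$ by Definition~\ref{amp-thresh}), the identity theorem for real-analytic functions forces them to agree on all of $]0,\Delta_{\rm opt}[$. Both functions are moreover continuous at $\Delta_{\rm opt}$ — the mutual information limit by Lemma~\ref{existence-thermo-limit}, and $\min_{E\in[0,v]}i_{\rm RS}(E;\Delta)$ because $i_{\rm RS}$ is jointly continuous and the minimisation runs over the compact set $[0,v]$ — so letting $\Delta\uparrow\Delta_{\rm opt}$ extends the equality to $\Delta=\Delta_{\rm opt}$. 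This proves \eqref{rsform} for every $\Delta\le\Delta_{\rm opt}$.

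I do not expect any genuine obstacle in this argument: the mathematical content has been absorbed into Lemmas~\ref{small-noise-lemma}--\ref{analyticity-argument-lemma} (which rest on the I-MMSE relation, the replica upper bound \eqref{eq:guerrabound}, and the suboptimality of AMP) and, through $\Delta_{\rm RS}\ge\Delta_{\rm opt}$, on the threshold-saturation and interpolation results of Sections~\ref{sec:thresh-sat}--\ref{sec:subadditivitystyle}. The only points needing a word of care are the degenerate case $\Delta_{\rm AMP}=\Delta_{\rm opt}$, where the continuation step is vacuous and one concludes directly by continuity, and the verification that the overlap interval $]0,\Delta_{\rm AMP}[$ on which Lemma~\ref{small-noise-lemma} applies is genuinely nonempty.
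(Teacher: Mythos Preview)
Your proof is correct and follows essentially the same analytic-continuation argument as the paper: equality on $]0,\Delta_{\rm AMP}[$ from Lemma~\ref{small-noise-lemma}, analyticity of both sides on $]0,\Delta_{\rm opt}[$ via the definition of $\Delta_{\rm opt}$ and Lemma~\ref{analyticity-argument-lemma}, then the identity theorem and continuity at the endpoint. One small correction to your closing remarks: Lemma~\ref{analyticity-argument-lemma} (and hence this proposition) does \emph{not} rely on the spatial-coupling results of Sections~\ref{sec:thresh-sat}--\ref{sec:subadditivitystyle}; the inequality $\Delta_{\rm RS}\ge\Delta_{\rm opt}$ is obtained purely from the I-MMSE relation, the replica bound, and AMP suboptimality, as the paper explicitly notes at the start of Section~\ref{sec:proof_theorem}.
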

\begin{proof}
Lemma~\ref{small-noise-lemma} says that the two functions, $\lim_{n\to +\infty} n^{-1}I(\tbf{S}; \bW)$ and $\min_{E\in [0, v]} i_{\rm RS}(E; \Delta)$, are equal for $\Delta <\Delta_{\rm AMP}$ and Lemma~\ref{analyticity-argument-lemma} implies that {\it both} functions are analytic for $\Delta <\Delta_{\rm opt}$. Thus they must be equal on the whole range $\Delta < \Delta_{\rm opt}$. Since we also know they are continuous, then they are equal also at $\Delta =\Delta_{\rm opt}$.
\end{proof}
\subsection{Proof of Theorem \ref{thm1} for $\Delta \geq  \Delta_{\rm opt}$}\label{sec:everything-together}
We first need the following lemma where spatial coupling comes into the play.

\begin{lemma}\label{cor:equality-of-deltas}
The optimal threshold is given by the potential threshold: $\Delta_{\rm opt} = \Delta_{\rm RS}$.
\end{lemma}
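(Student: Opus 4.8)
The plan is to split the claimed equality into the two inequalities $\Delta_{\rm opt}\le\Delta_{\rm RS}$ and $\Delta_{\rm opt}\ge\Delta_{\rm RS}$. The first is already in hand: it is precisely Lemma~\ref{analyticity-argument-lemma}. All the remaining work goes into the reverse inequality $\Delta_{\rm opt}\ge\Delta_{\rm RS}$, which is where the spatially coupled construction is brought to bear. Concretely, I would prove the chain \eqref{eq:chain_delta},
\begin{align*}
\Delta_{\rm RS}\;\le\;\Delta^{\rm c}_{\rm AMP}\;\le\;\Delta^{\rm c}_{\rm opt}\;=\;\Delta_{\rm opt},
\end{align*}
and then combine it with Lemma~\ref{analyticity-argument-lemma} to sandwich everything: $\Delta_{\rm RS}\le\Delta^{\rm c}_{\rm AMP}\le\Delta^{\rm c}_{\rm opt}=\Delta_{\rm opt}\le\Delta_{\rm RS}$, forcing all four quantities to coincide.

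The first inequality $\Delta_{\rm RS}\le\Delta^{\rm c}_{\rm AMP}$ is exactly the threshold saturation statement (Theorem~\ref{thres-sat-lemma}), used here as a black box: by introducing the seed (pinning) and taking $L\to\infty$ and then $w\to\infty$, the AMP threshold of the coupled chain is pushed up to at least the potential threshold $\Delta_{\rm RS}$ of the underlying model. The last equality $\Delta^{\rm c}_{\rm opt}=\Delta_{\rm opt}$ is the immediate corollary of Theorem~\ref{LemmaGuerraSubadditivityStyle}: for every fixed $w$ and every $\Delta>0$ the coupled and underlying mutual informations per variable have the same double limit ($n\to\infty$, then $L\to\infty$), so the two limiting curves agree as functions of $\Delta$ and in particular their first non-analyticity points are equal (existence of the underlying limit being guaranteed by Lemma~\ref{existence-thermo-limit}/Lemma~\ref{lemma:superadditivity}).

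The middle inequality $\Delta^{\rm c}_{\rm AMP}\le\Delta^{\rm c}_{\rm opt}$ is obtained by transplanting the argument of Lemma~\ref{small-noise-lemma} and Lemma~\ref{suboptimality} to the coupled chain. Under Assumption~\ref{ass:h2} state evolution tracks the AMP matrix-MSE on the coupled system; for $\Delta<\Delta^{\rm c}_{\rm AMP}$ the coupled SE converges to the good profile $\tbf E_{\rm good}(\Delta)$, which is the global minimizer of the coupled potential $f^{\rm c}_{\rm RS}$ of \eqref{def:RSf_SC} and depends on $\Delta^{-1}$ with locally bounded derivative. Writing the coupled I-MMSE relation (the verbatim analog of Lemma~\ref{lemma:immse}, differentiating $I_{w,L}(\tbf S;\tbf W)$ in $\Delta^{-1}$), the coupled replica upper bound (the analog of \eqref{eq:guerrabound}, proved by the same interpolation as in Appendix~\ref{app:upperBound}), and the AMP suboptimality inequality, then integrating from $\Delta=0$ exactly as in \eqref{startingpoint}--\eqref{eq:lower_bound_small}, one concludes that $\lim_{L\to\infty}\lim_{n\to\infty}(n(L+1))^{-1}I_{w,L}(\tbf S;\tbf W)$ equals $\min f^{\rm c}_{\rm RS}$ and is analytic on $]0,\Delta^{\rm c}_{\rm AMP}[$; hence $\Delta^{\rm c}_{\rm opt}\ge\Delta^{\rm c}_{\rm AMP}$.

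I expect the main obstacle to be this middle step. One must verify that every ingredient of the uncoupled argument genuinely transfers to the spatially coupled setting without circularity: existence and continuity/concavity of the limiting coupled mutual information in $\Delta$, the coupled Guerra-type bound, the fact that $\tbf E_{\rm good}(\Delta)$ is a bona fide global minimizer of $f^{\rm c}_{\rm RS}$ in the relevant range, and — most delicately — that the limits $n\to\infty$, $L\to\infty$, $w\to\infty$ are taken in the order matching Definition~\ref{def:AMPcoupled} and that the pinning boundary contributes a negligible $\mathcal O(w/L)$ correction (this is where Propositions~\ref{lemma:openVSclosed} and~\ref{lemma:freeEnergy_sandwich_limit} enter). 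By contrast, the threshold-saturation input (Theorem~\ref{thres-sat-lemma}) and the invariance input (Theorem~\ref{LemmaGuerraSubadditivityStyle}) are already established in Sections~\ref{sec:thresh-sat} and~\ref{sec:subadditivitystyle} and can be cited directly.
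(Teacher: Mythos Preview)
Your proposal is correct and follows essentially the same approach as the paper: the paper's proof is exactly the sandwich $\Delta_{\rm RS}\le\Delta^{\rm c}_{\rm AMP}\le\Delta^{\rm c}_{\rm opt}=\Delta_{\rm opt}\le\Delta_{\rm RS}$, invoking Theorem~\ref{thres-sat-lemma}, the suboptimality of AMP on the coupled system (with a footnote that this is shown by repeating Lemmas~\ref{small-noise-lemma} and~\ref{suboptimality} for the coupled model, just as you outline), Theorem~\ref{LemmaGuerraSubadditivityStyle}, and Lemma~\ref{analyticity-argument-lemma}. Your discussion of the middle inequality is more detailed than the paper's, but the strategy is identical.
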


\begin{proof}
It suffices to see that 
\begin{align}
\Delta_{\rm RS}\leq \Delta^{\rm c}_{\rm AMP}\leq \Delta^{\rm c}_{\rm opt} = \Delta_{\rm opt} \leq \Delta_{\rm RS}.
\end{align}
The first inequality is the threshold saturation result of Theorem \ref{thres-sat-lemma} in Section \ref{sec:thresh-sat}. The second inequality is due the suboptimality of the AMP algorithm.\footnote{More precisely one shows by the same methods Lemmas \ref{small-noise-lemma} and \ref{suboptimality} for the spatially coupled system.}
The equality is a consequence of Theorem \ref{LemmaGuerraSubadditivityStyle} in Section \ref{sec:subadditivitystyle}. Indeed, equality of asymptotic mutual informations of the coupled and underlying system implies that they must be non-analytic at the same value of $\Delta$. 
Finally, the last inequality is the statement of Lemma \ref{analyticity-argument-lemma} in Section \ref{sec:delta-less-delta-opt}.
\end{proof}

\begin{proposition}\label{for:above-delta-opt}
Assume $P_0$ is a discrete distribution. Fix $\Delta \geq \Delta_{\rm opt}$. The mutual information per variable  is asymptotically given by the RS formula \eqref{rsform}.
\end{proposition}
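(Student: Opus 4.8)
The case $\Delta = \Delta_{\rm opt}$ is already contained in Proposition~\ref{cor:delta-low-regime}, and if $\Delta_{\rm RS} = +\infty$ then $\Delta_{\rm opt} = \Delta_{\rm RS} = +\infty$ by Lemma~\ref{cor:equality-of-deltas} and there is nothing left to prove; so the plan is to treat the case $\Delta > \Delta_{\rm opt} = \Delta_{\rm RS}$ with $\Delta_{\rm RS}$ finite. The strategy is to rerun the derivative-plus-integration argument of Lemma~\ref{small-noise-lemma}, but now starting the integration at $\Delta_{\rm RS}$ rather than at $0$, anchoring it with the value of the mutual information at $\Delta_{\rm RS}$ which is already known from Proposition~\ref{cor:delta-low-regime}.

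First I would note that for $\Delta > \Delta_{\rm RS}$ the state evolution fixed point $E^{(\infty)}$ obtained from $E^{(0)} = v$ is the \emph{global} minimizer of $E\mapsto i_{\rm RS}(E;\Delta)$ on $[0,v]$ (by Assumption~\ref{ass:h3} and the definition of $\Delta_{\rm RS}$ there is a unique stationary minimum past $\Delta_{\rm RS}$), and that $\Delta\mapsto E^{(\infty)}(\Delta)$ is continuously differentiable with locally bounded derivative on $]\Delta_{\rm RS},\infty[$ by the implicit function theorem applied to $\partial_E i_{\rm RS} = 0$ away from the horizontal inflexion point. Consequently the chain \eqref{startingpoint}--\eqref{middlepoint} --- suboptimality of AMP, the state evolution tracking \eqref{eq:mseampmatriciel}, the I-MMSE relation (Lemma~\ref{lemma:immse}) and the envelope computation \eqref{first_1} --- applies verbatim and yields, for all $\Delta > \Delta_{\rm RS}$,
\begin{align}
\frac{d}{d\Delta}\Big(\min_{E\in[0,v]} i_{\rm RS}(E;\Delta)\Big) \leq \liminf_{n\to\infty}\frac{1}{n}\frac{dI(\tbf{S};\tbf{W})}{d\Delta}.
\end{align}

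Next I would integrate this inequality over $]\Delta_{\rm RS},\Delta[$, use Fatou's lemma to pull the $\liminf$ out of the integral, and invoke both the existence of $\lim_{n\to\infty} n^{-1}I(\tbf{S};\tbf{W})$ (Lemma~\ref{existence-thermo-limit}) and the equality $\lim_{n\to\infty} n^{-1}I(\tbf{S};\tbf{W})\big|_{\Delta_{\rm RS}} = \min_E i_{\rm RS}(E;\Delta_{\rm RS})$ coming from Proposition~\ref{cor:delta-low-regime} (recall $\Delta_{\rm RS} = \Delta_{\rm opt}$). This gives $\min_{E\in[0,v]} i_{\rm RS}(E;\Delta) \leq \liminf_{n\to\infty} n^{-1} I(\tbf{S};\tbf{W})$, which together with the replica upper bound \eqref{eq:guerrabound} sandwiches the mutual information and establishes \eqref{rsform} for $\Delta > \Delta_{\rm opt}$. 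Combined with Proposition~\ref{cor:delta-low-regime} this proves the statement for all $\Delta \geq \Delta_{\rm opt}$, and hence completes the proof of Theorem~\ref{thm1}.

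The only delicate point is the envelope identity \eqref{first_1} in this new regime, i.e. that $\frac{d}{d\Delta}\min_E i_{\rm RS}(E;\Delta)$ equals the partial derivative of $i_{\rm RS}$ evaluated at the minimizer $E^{(\infty)}$ with the $\partial_E$-term dropped; this is where one needs the smoothness and non-degeneracy of $E^{(\infty)}(\Delta)$ on $]\Delta_{\rm RS},\infty[$ noted above. Everything else is a direct rerun of arguments already used for $\Delta \le \Delta_{\rm opt}$, so I do not expect any genuinely new obstacle --- the real work has been front-loaded into Lemma~\ref{cor:equality-of-deltas} (spatial coupling, via Theorems~\ref{thres-sat-lemma} and \ref{LemmaGuerraSubadditivityStyle}) and Lemma~\ref{analyticity-argument-lemma} (the analyticity argument).
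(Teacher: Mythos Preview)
Your proposal is correct and follows essentially the same approach as the paper: derive inequality \eqref{middlepoint} for $\Delta>\Delta_{\rm RS}$ via AMP suboptimality and the I-MMSE relation, integrate it from $\Delta_{\rm RS}=\Delta_{\rm opt}$ upwards using Fatou, anchor with Proposition~\ref{cor:delta-low-regime} and Lemma~\ref{existence-thermo-limit}, and close the sandwich with \eqref{eq:guerrabound}. Your explicit handling of the boundary cases and of the smoothness of $E^{(\infty)}(\Delta)$ needed for the envelope identity is slightly more careful than the paper's write-up, but the substance is the same.
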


\begin{proof}
We already remarked in section \ref{sec:delta-less-delta-opt} that for $\Delta > \Delta_{\rm RS}$,
\begin{align}
\frac{d}{d\Delta}(\min_{E}i_{\rm RS}(E; \Delta)) \leq 
\liminf_{n\to +\infty}\frac{1}{n}\frac{d I(\tbf{S};\tbf{W}) }{d\Delta}. 
\end{align}
Now we integrate on an interval $]\Delta_{\rm RS}, \Delta]$ both sides of the inequality. 
Since from Lemma \ref{cor:equality-of-deltas} we have that $\Delta_{\rm RS} = \Delta_{\rm opt}$, it is equivalent to integrate from 
$\Delta_{\rm opt}$ upwards\footnote{This is the point we did not yet know in section \ref{sec:delta-less-delta-opt}.}
\begin{align}
\int_{\Delta_{\rm opt}}^{\Delta}d\tilde{\Delta}\, \frac{d}{d\tilde{\Delta}}(\min_{E}i_{\rm RS}(E; \tilde{\Delta})) \leq 
\int_{\Delta_{\rm opt}}^{\Delta} d\tilde{\Delta}\, \liminf_{n\to +\infty}\frac{1}{n}\frac{d I(\tbf{S};\tbf{W}) }{d\tilde{\Delta}}.
\end{align}
By Fatou's lemma the inequality is preserved if we bring the $\liminf$ outside of the integral, thus
\begin{align}
\min_{E}i_{\rm RS}(E; \Delta) - \min_{E}i_{\rm RS}(E; \Delta_{\rm RS}) & \leq \liminf_{n\to +\infty}\biggl\{
\frac{1}{n}I(\tbf{S};\tbf{W})\Big\vert_{\Delta} -  \frac{1}{n}I(\tbf{S};\tbf{W})\Big\vert_{\Delta_{\rm opt}}\biggr\}
\nonumber \\ &
= \lim_{n\to +\infty}\frac{1}{n}I(\tbf{S};\tbf{W})\Big\vert_{\Delta} -  
\lim_{n\to +\infty}\frac{1}{n}I(\tbf{S};\tbf{W})\Big\vert_{\Delta_{\rm opt}}.
\end{align}
To get the last line we have used the existence of the thermodynamic limit (see Lemma \ref{existence-thermo-limit}).
But we already know from Proposition \ref{cor:delta-low-regime} that 
$\min_{E}i_{\rm RS}(E; \Delta_{\rm opt}) = \lim_{n\to +\infty}\frac{1}{n}I(\tbf{S};\tbf{W})\Big\vert_{\Delta_{\rm opt}}$. Therefore
\begin{align}
\min_{E}i_{\rm RS}(E; \Delta) \leq \lim_{n\to +\infty}\frac{1}{n}I(\tbf{S};\tbf{W}),
\end{align}
which together with \eqref{eq:guerrabound} ends the proof.
\end{proof}

\section{Proof of corollaries \ref{cor:MMSE} and \ref{perfamp}} \label{sec:proof_coro}
In this section, we provide the proofs of Corollary \ref{cor:MMSE} and Corollary \ref{perfamp} concerning the MMSE formulae and the optimality of the AMP algorithm.
We first show the following result about the matrix and vector MMSE's in Definition \ref{def:mmse}.
\begin{lemma}\label{lemma:Mmmse}
Assume the prior $P_0$ has finite first four moments and recall the second moment is called $v$. The matrix and vector MMSE verify
\begin{align}
{\rm Mmmse}_n
\le \big( v^2-(v - {\rm Vmmse}_n)^2 \big) + \mathcal{O}(\frac{1}{n}).
\end{align}
\end{lemma}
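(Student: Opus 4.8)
The plan is to express both the matrix-MMSE and the vector-MMSE in terms of the overlap statistics of the posterior and then compare them. First I would recall that, by the Nishimori identity, for the Bayes-optimal posterior one has $\mathbb{E}_{\bS,\bW}[\mathbb{E}[X_iX_j\vert\bW]\,S_iS_j] = \mathbb{E}_{\bW}[\mathbb{E}[X_iX_j\vert\bW]^2]$ and likewise $\mathbb{E}[\mathbb{E}[X_i\vert\bW]\,S_i] = \mathbb{E}[\mathbb{E}[X_i\vert\bW]^2]$. Expanding the Frobenius square in \eqref{eq:Mmmse},
\begin{align}
n^2\,{\rm Mmmse}_n = \sum_{i,j=1}^n \mathbb{E}\big[(S_iS_j)^2\big] - \sum_{i,j=1}^n \mathbb{E}_{\bW}\big[\mathbb{E}[X_iX_j\vert\bW]^2\big],
\end{align}
where the cross term has been simplified via Nishimori. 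The first sum equals $n^2 v^2$ up to the diagonal correction: $\sum_{i,j}\mathbb{E}[S_i^2S_j^2] = n^2 v^2 + \sum_i(\mathbb{E}[S_i^4]-v^2) = n^2 v^2 + \mathcal{O}(n)$. So the task reduces to lower-bounding $\sum_{i,j}\mathbb{E}_{\bW}[\mathbb{E}[X_iX_j\vert\bW]^2]$.

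The key step is the inequality $\mathbb{E}_{\bW}[\mathbb{E}[X_iX_j\vert\bW]^2] \ge \mathbb{E}_{\bW}[\mathbb{E}[X_i\vert\bW]\mathbb{E}[X_j\vert\bW]]^2$ summed appropriately — more precisely, I would introduce the vector $m_i(\bW) \defeq \mathbb{E}[X_i\vert\bW]$ and use that, conditionally on $\bW$, the posterior over $\bX$ is a product measure is \emph{not} true, but one can still write $\mathbb{E}[X_iX_j\vert\bW] = m_i m_j + {\rm Cov}(X_i,X_j\vert\bW)$, and the covariance contributes non-negatively after the Nishimori-type manipulations. Concretely, using Nishimori again, $\sum_{i,j}\mathbb{E}_{\bW}[\mathbb{E}[X_iX_j\vert\bW]^2] = \sum_{i,j}\mathbb{E}_{\bS,\bW}[\langle X_iX_j\rangle S_iS_j] = \mathbb{E}_{\bS,\bW}[\langle (\bX^\intercal\bS)^2\rangle]$ — wait, one must be careful with the two-replica structure; the cleanest route is $\sum_{i,j}\mathbb{E}_{\bW}[m_i m_j \mathbb{E}[X_iX_j\vert\bW]] \le \sum_{i,j}\mathbb{E}_{\bW}[\mathbb{E}[X_iX_j\vert\bW]^2]$ by Cauchy–Schwarz in the replica pair, combined with $\sum_{i,j}\mathbb{E}_{\bW}[m_im_j\mathbb{E}[X_iX_j\vert\bW]] = \sum_{i,j}\mathbb{E}_{\bW}[m_i^2m_j^2] = \mathbb{E}_{\bW}[(\sum_i m_i^2)^2] \ge (\mathbb{E}_{\bW}[\sum_i m_i^2])^2 = (n v - n\,{\rm Vmmse}_n)^2$, where the last identity uses Nishimori ($\mathbb{E}_{\bW}[\sum_i m_i^2] = \sum_i\mathbb{E}[S_im_i] = nv - n\,{\rm Vmmse}_n$) and Jensen. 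Assembling: $n^2{\rm Mmmse}_n \le n^2 v^2 - (nv - n\,{\rm Vmmse}_n)^2 + \mathcal{O}(n)$, which is the claim after dividing by $n^2$.

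The main obstacle I anticipate is getting the replica/Cauchy–Schwarz step exactly right: the quantity $\sum_{i,j}\mathbb{E}_{\bW}[\mathbb{E}[X_iX_j\vert\bW]^2]$ involves two independent replicas $\bX,\bX'$ from the posterior, and one wants $\mathbb{E}_{\bW}[\langle \bX^\intercal\bX'\rangle^2]$ to dominate $\mathbb{E}_{\bW}[\langle\bX\rangle^\intercal\langle\bX\rangle]^2 = \mathbb{E}_{\bW}[\|\bm\|^2]^2$; this is precisely an application of $\mathbb{E}[U^2]\ge\mathbb{E}[U]^2$ with $U = \langle\bX^\intercal\bX'\rangle$ restricted to one replica being "frozen", i.e. $\langle\bX^\intercal\bX'\rangle$ has conditional mean $\|\bm\|^2$. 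One must also track that the off-diagonal versus diagonal bookkeeping (both in $\sum_{i,j}S_i^2S_j^2$ and in the $\|\cdot\|_F^2$ expansion) only produces $\mathcal{O}(n)$ terms, which is where the finite-fourth-moment hypothesis on $P_0$ is used. Everything else is routine expansion and the Nishimori identity from Appendix \ref{app:nishi}.
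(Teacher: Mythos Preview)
Your overall plan is right and essentially matches the paper's: expand both MMSE's via Nishimori, isolate an $\mathcal{O}(1/n)$ diagonal correction using the fourth-moment hypothesis, and reduce the comparison to a variance-type positivity. However, the specific chain you settle on contains a wrong step. The claimed Nishimori identity
\[
\sum_{i,j}\mathbb{E}_{\bW}\big[m_im_j\,\mathbb{E}[X_iX_j\vert\bW]\big]\;=\;\sum_{i,j}\mathbb{E}_{\bW}\big[m_i^2m_j^2\big]
\]
is false in general: the Nishimori identity gives $\mathbb{E}_{\bW}[\langle X_iX_j\rangle\, m_im_j]=\mathbb{E}_{\bS,\bW}[S_iS_j\,m_im_j]$, which is \emph{not} $\mathbb{E}_{\bW}[m_i^2m_j^2]$ (take $n=1$ with posterior $p\delta_1+(1-p)\delta_0$ to see $p^3\neq p^4$). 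So your middle equality breaks and the chain as written does not close.

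The clean route is the one you briefly touched on and then abandoned: introduce the overlap $q=\tfrac{1}{n}\sum_i S_iX_i$. By Nishimori, $\tfrac{1}{n^2}\sum_{i,j}\mathbb{E}_{\bW}[\langle X_iX_j\rangle^2]=\mathbb{E}_{\bS,\bW}[\langle q^2\rangle]$ and $v-{\rm Vmmse}_n=\tfrac{1}{n}\sum_i\mathbb{E}_{\bW}[m_i^2]=\mathbb{E}_{\bS,\bW}[\langle q\rangle]$. Hence
\[
\frac{1}{n^2}\sum_{i,j}\mathbb{E}_{\bW}[\langle X_iX_j\rangle^2]-(v-{\rm Vmmse}_n)^2=\mathbb{E}_{\bS,\bW}[\langle q^2\rangle]-\big(\mathbb{E}_{\bS,\bW}[\langle q\rangle]\big)^2\ge 0,
\]
which is exactly the variance inequality you were aiming for. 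This is precisely the paper's argument; no Cauchy--Schwarz detour is needed, and there is nothing delicate about the ``two-replica structure'' once you write everything in terms of the single overlap $q(\bS,\bX)$.
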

\begin{proof}
For this proof we denote $\langle \cdot \rangle $ the expectation w.r.t the posterior distribution \eqref{eq:posterior_dist}. The matrix and vector MMSE then read
\begin{align}
{\rm Mmmse}_{n} &\defeq \frac{1}{n^2}\mathbb{E}_{\bS, \bW}\Big[ \bigl\| \bS\bS^{\intercal} - \langle\bX\bX^{\intercal}\rangle \bigr\|_{\rm F}^2 \Big], \label{eq:matrix_mmse}\\ 
{\rm Vmmse}_{n} &\defeq \frac{1}{n}\mathbb{E}_{\bS, \bW}\Big[ \bigl\| \bS - \langle\bX\rangle \bigr\|_{2}^2 \Big]. \label{eq:vector_mmse}
\end{align}
Expanding the Frobenius norm in (\ref{eq:matrix_mmse}) yields
\begin{align}
{\rm Mmmse}_{n} &= \frac{1}{n^2}\mathbb{E}_{\bS, \bW}\Big[ \sum_{i,j=1}^n ( S_i S_j - \langle X_i X_j\rangle)^2 \Big] =\frac{1}{n^2}\mathbb{E}_{\bS, \bW}\Big[ \sum_{i,j=1}^n  S_i^2 S_j^2 - \langle X_i X_j\rangle^2 \Big] \nonumber \\
&= \mathbb{E}_{\bS}\Big[\Big(\frac{1}{n}\sum_{i=1}^n S_i^2\Big)^2 \Big] - \frac{1}{n^2} \sum_{i,j=1}^n \mathbb{E}_{\bS, \bW}[\langle X_i X_j \rangle^2], 
\end{align}
where the second equality follows from $\mathbb{E}_{\bS, \bW}[\langle X_i X_j\rangle^2]=\mathbb{E}_{\bS, \bW}[ S_i S_j \langle X_i X_j\rangle]$, implied by the Nishimori 
identity \eqref{eq:nishCond}. Similarly, using $\mathbb{E}_{\bS, \bW} [\langle X_i \rangle^2] = \mathbb{E}_{\bS, \bW} [S_i \langle X_i\rangle]$ implied by the Nishimori identity, 
(\ref{eq:vector_mmse}) simplifies to
\begin{align}
{\rm Vmmse}_{n} = v - \frac{1}{n}\sum_{i=1}^n \mathbb{E}_{\bS, \bW}[\langle X_i\rangle^2 ].
\end{align}
Hence,
\begin{align}
{\rm Mmmse}_n
-  \big( v^2-(v - {\rm Vmmse}_n)^2 \big) = \mathcal{A}_{n} - \mathcal{B}_{n},
\end{align}
with
\begin{align}
\mathcal{A}_{n} &\defeq  \mathbb{E}_{\bS}\Big[\Big(\frac{1}{n}\sum_{i=1}^n S_i^2\Big)^2 \Big] - v^2, \\
\mathcal{B}_{n} &\defeq \frac{1}{n^2} \sum_{i,j=1}^n \Big(\mathbb{E}_{\bS, \bW}[ \langle X_i X_j \rangle^2 ] - \mathbb{E}_{\bS, \bW}[\langle X_i \rangle^2 ]\mathbb{E}_{\bS, \bW}[\langle X_j \rangle^2 ]\Big).\label{eq:Mmmse_Vmmse_diff_1}
\end{align}
Since the signal components $\{S_i\}$ are i.i.d and $P_0$ has finite first four moments, $\mathcal{A}_n = \mathcal{O}(1/n)$. 
It remains to show that $\mathcal{B}_n \geq 0$. This is most easily seen as follows. By defining the {\it overlap}
\begin{align}
 q(\bX,\bS) \defeq \frac{1}{n}\sum_{i=1}^n S_iX_i
\end{align}
and using the Nishimori identities
$\mathbb{E}_{\bS, \bW}[\langle X_i \rangle^2 ] = \mathbb{E}_{\bS, \bW}[S_i\langle X_i \rangle ]$ and $\mathbb{E}_{\bS, \bW}[\langle X_i X_j\rangle^2]=\mathbb{E}_{\bS, \bW}[ S_i S_j \langle X_i X_j\rangle]$, 
we observe that 
\begin{align}
\mathcal{B}_{n} & = \mathbb{E}_{\bS, \bW}[ \langle q^2\rangle ] -  \mathbb{E}_{\bS, \bW}[\langle q \rangle]^2 
\nonumber \\ &
=
\mathbb{E}_{\bS, \bW}[ (q - \mathbb{E}_{\bS, \bW}[\langle q \rangle])^2]
\end{align}
which is non-negative.
%
\end{proof}
\begin{remark}
Using ideas similar to [\cite{KoradaMacris}] to prove concentration of overlaps in inference problems suggest that Lemma \ref{lemma:Mmmse} holds with an equality
when suitable ``side observations'' are added. 
\end{remark}
\subsection{Proof of Corollary \ref{cor:MMSE}} \label{sec:proof_coro_1}
We first show how to prove the expression \eqref{coro1} for the asymptotic ${\rm Mmmse}_n$ by taking the limit $n\to +\infty$ on both sides of \eqref{eq:GSV}. 
First notice that since $n^{-1} I({\tbf S}; \bW)$ is a sequence of concave functions with respect to $\Delta^{-1}$, the limit when $n\to +\infty$ is also concave 
and differentiable for almost all $\Delta^{-1}$ and at all differentiability points we have (by a standard theorem of real analysis on convex functions)
\begin{align}\label{limderexchg}
\lim_{n\to + \infty} \frac{1}{n}\frac{d}{d\Delta^{-1}} I(\tbf{S}; \tbf{W})
= 
 \frac{1}{n}\frac{d}{d\Delta^{-1}}\lim_{n\to + \infty} I(\tbf{S}; \tbf{W})\,.
\end{align}
Thus from Lemma \ref{lemma:immse} and Theorem \ref{thm1} we have for all $\Delta \neq \Delta_{\rm RS}$
\begin{align}\label{mmseirs}
\lim_{n\to + \infty}{\rm Mmmse}_n(\Delta^{-1}) = 
4 \frac{d}{d\Delta^{-1}}\min_{E\in[0,v]}i_{\rm RS}(E; \Delta).
\end{align}
It remains to compute the right hand side. 
Let $E_0(\Delta)$ denote the (global) minimum of $i_{\rm RS}(E; \Delta)$. For $\Delta \neq \Delta_{\rm RS}$ this is a differentiable function of $\Delta$ with locally bounded derivative. 
Hence using a similar calculation to the one done in \eqref{first_1}, we obtain
\begin{align}\label{first}
\frac{d}{d\Delta^{-1}}\Big(\min_{E\in[0,v]}i_{\rm RS}(E; \Delta)\Big)
& =
\frac{di_{\rm RS}}{d\Delta^{-1}}(E_{0}; \Delta) 
\nonumber \\ &
= 
 \frac{\partial i_{\rm RS}}{\partial E}(E_{0}; \Delta) \frac{d E_{0}}{d\Delta^{-1}}
 + \frac{\partial i_{\rm RS}}{\partial\Delta^{-1}}(E_{0}; \Delta)
 \nonumber \\ &
=  \frac{\partial i_{\rm RS}}{\partial\Delta^{-1}}(E_{0} ; \Delta).
\end{align}
To compute the partial derivative with respect to 
$\Delta$ we first note that $\frac{\partial i_{\rm RS}}{\partial E}\Big\vert_{E_0} = 0$  implies
\begin{align}\label{second}
 0 = -\frac{v-E_0}{2\Delta} - \frac{\partial \mathbb{E}_{S, Z}[\cdots]}{\partial \Sigma^{-2}}\bigg\vert_{E_0} \frac{\partial\Sigma^{-2}}{\partial E}\bigg\vert_{E_0}, 
\end{align}
where $\mathbb{E}_{S, Z}[\cdots]$ is the expectation that appears 
in the RS potential \eqref{eq:potentialfunction}.
This immediately gives 
\begin{align}
 \frac{\partial \mathbb{E}_{S, Z}[\cdots]}{\partial \Sigma^{-2}}\Big\vert_{E_0} = \frac{1}{2}(v-E_0)
\end{align}
Thus
\begin{align}\label{third}
\frac{\partial i_{\rm RS}}{\partial\Delta^{-1}}(E_{0} ; \Delta) & = \frac{(v-E_0)^2 +v^2}{4}
- \frac{\partial \mathbb{E}_{S, Z}[\cdots]}{\partial \Sigma^{-2}}\bigg\vert_{E_0}\frac{\partial\Sigma^{-2}}{\partial\Delta^{-1}}\bigg\vert_{E_0} 
\nonumber \\ &
= \frac{v^2 - (v-E_0)^2}{4}. 
\end{align}
From \eqref{mmseirs}, \eqref{first}, \eqref{third} we obtain the desired result, formula \eqref{coro1}.

We now turn to the proof of \eqref{coro2} for the 
expression of the asymptotic vector-MMSE. 
From Lemma \ref{lemma:Mmmse} and the suboptimality of the AMP algorithm (here $E_0(\Delta)$ is the global minimum of $i_{\rm RS}(E;\Delta)$ and 
$E^{(\infty)}$ the fixed point of state evolution)
\begin{align}
v^2-(v - E_0)^2 & = \lim_{n\rightarrow\infty} {\rm Mmmse}_n
\nonumber \\ & 
\le \liminf_{n\rightarrow\infty} \big( v^2-(v - {\rm Vmmse}_n)^2 \big) 
\nonumber \\ &
\le 
\limsup_{n\rightarrow\infty} \big( v^2-(v - {\rm Vmmse}_n)^2 \big)
\nonumber \\ &
\le v^2-(v - E^{(\infty)})^2 ,
\end{align}
For $\Delta \notin [\Delta_{\rm AMP},\Delta_{\rm RS}]$, we have that $E_0 = E^{(\infty)}$ which ends the proof.   
\subsection{Proof of Corollary \ref{perfamp}} \label{sec:proof_coro_1}
In view of \eqref{eq:mseampmatriciel} we have
\begin{align}
\lim_{t\to +\infty}\lim_{n\to +\infty} {\rm Vmse}_{n, \rm AMP}^{(t)}(\Delta^{-1}) &= E^{(\infty)},
\label{eq:proofCorollaryAMP_1}
\\
\lim_{t\to +\infty}\lim_{n\to +\infty} {\rm Mmse}_{n, \rm AMP}^{(t)}(\Delta^{-1})
&= v^2 - (v - E^{(\infty)})^2.
\label{eq:proofCorollaryAMP_2}
\end{align}
For $\Delta \notin [\Delta_{\rm AMP},\Delta_{\rm RS}]$ we have $E^{(\infty)} = {\rm{argmin}}_{E\in [0, v]} i_{\rm RS}(E; \Delta)$ and also the two formulas of Corrolary \ref{cor:MMSE}
hold. This directly implies the formulas \eqref{ampcoro1} and \eqref{ampcoro2}.  

On the other hand for $\Delta \in [\Delta_{\rm AMP},\Delta_{\rm RS}]$ we have $E^{(\infty)} > {\rm{argmin}}_{E\in [0, v]} i_{\rm RS}(E; \Delta)$, 
so
using the monotonicity of $E^{(t)}$ leads to strict inequalities in \eqref{eq:proofCorollaryAMP_1} and \eqref{eq:proofCorollaryAMP_2} and thus to \eqref{ampsub1} and \eqref{ampsub2}.

\appendix
\section{Upper bound on the mutual information} \label{app:upperBound}
For the completeness of this work we revisit the proof of the upper bound (\ref{eq:guerrabound}) on
mutual information. This result was already obtained by [\cite{krzakala2016mutual}] using a Toninelli-Guerra type 
interpolation and is used in this paper so we only sketch the main steps. 

We consider the following interpolating inference problem 
\begin{equation}\label{eq:denoising}
\begin{cases}
w_{ij} = \frac{s_is_j}{\sqrt n} + \sqrt{\frac{\Delta}{t}} z_{ij},\\
y_i = s_i + \sqrt{\frac{\Delta}{m(1-t)}} z_i^{\prime},
\end{cases}
\end{equation}
with $m \defeq v-E \in [0,v]$, and $Z_i^{\prime} \sim \mathcal{N}(0,1)$. For $t=1$ we find back the original problem (\ref{eq:mainProblem}) since the $y_i$ observations become useless and for $t=0$ we have 
a set of decoupled observations from a Gaussian channel. 
The interpolating posterior distribution associated to this set of observations is 
\begin{align}\label{t-post}
P_t(\bx \vert \bs, \bz, \bz^\prime) \defeq \frac{e^{-\mathcal{H}(t)}\prod_{i=1}^n P_0(x_i)}{\int \big\{\prod_{i=1}^n dx_iP_0(x_i)\big\} e^{-\mathcal{H}(t)}} \defeq \frac{1}{\mathcal{Z}(t)}e^{-\mathcal{H}(t)}\prod_{i=1}^n P_0(x_i)
\end{align}
where
\begin{align*}
\mathcal{H}(t) & = \sum_{i\le j=1}^n \Big( \frac {t}{2\Delta n} x_i^2 x_j^2 - \frac{t}{\Delta n} x_i x_j s_i s_j  - \sqrt{ \frac{t}{n \Delta}}  x_i x_j   z_{ij}\Big)
\nonumber \\ &
\quad 
+ 
\sum_{i=1}^n \Big(\frac{ m(1-t)}{2\Delta} x_i^2 - 
\frac{m(1-t)}{\Delta}x_is_i + \sqrt{\frac{m(1-t)}{\Delta}}x_iz_i^\prime\Big) .
\end{align*}
can be interpreted as a ``Hamiltonian'' and the normalizing factor $\mathcal{Z}(t)$ is interpreted as a ``partition function''. We adopt the Gibbs ``bracket'' notation $\langle -\rangle_t$ for the expectation with respect to the posterior
\eqref{t-post}. The mutual information associated to interpolating inference problem is 
\begin{align}
 i(t) = - \frac{1}{n} \mathbb{E}_{\bf S, \bf Z, \bf Z^{\prime}}[\ln \mathcal{Z}(t)] + \frac{v^2}{4\Delta} + \frac{1}{4 \Delta n} (2\mathbb{E}[S^4] - v^2)
\end{align}
Note that on one hand $i(1) = \frac{1}{n}I(\tbf{S}; \tbf{W})$ the mutual information of the original matrix factorization problem and on the other hand 
\begin{align}\label{eq:freeEnergy_Denoising}
i_0 &= - \frac{1}{n}\mathbb{E}_{\bf S, \bf Z^{\prime}} \Big[\ln\Big(\int \big\{\prod_{i=1}^n dx_iP_0(x_i)\big\}
e^{-\frac{m\| \mathbf{x} \|_2^2}{2\Delta} 
+ \mathbf{x}^{\intercal}\bigl(\frac{m\mathbf{S}}{\Delta} + \sqrt{\frac{m}{\Delta}}\mathbf{Z}^{\prime}\bigr)}\Big)\Big]  + \frac{v^2}{4\Delta} + \frac{1}{4 \Delta n} (2\mathbb{E}[S^4] - v^2) \nonumber \\ 
&= - \mathbb{E}_{S, Z^{\prime}} \Big[\ln\Big(\int dx\,P_0(x)
e^{-\frac{mx^2}{2\Delta} 
+ x\bigl(\frac{mS}{\Delta} + \sqrt{\frac{m}{\Delta}}Z^{\prime}\bigr)}\Big)\Big] + \frac{v^2}{4\Delta} + \frac{1}{4 \Delta n} (2\mathbb{E}[S^4] - v^2)
\nonumber \\
& = i_{\rm RS}(E;\Delta) - \frac{m^2}{4\Delta} + \frac{1}{4 \Delta n} (2\mathbb{E}[S^4] - v^2),
\end{align}
From the fundamental theorem of calculus
\begin{equation} \label{eq:freeEnergy_originalVSdenoising_link}
i(1) - i(0) = - \frac{1}{n} \int_0^1 dt \frac{d}{dt} \mathbb{E}_{\bf S, \bf Z, \bf Z^{\prime}}[\ln \mathcal{Z}(t)]
\end{equation}
so we get
\begin{align}\label{weget}
\frac{1}{n}I(\tbf{S} ; \tbf{W}) = 
i_{\rm RS}(E;\Delta) -  \frac {m^2}{4\Delta} + \frac{1}{4 \Delta n} (2\mathbb{E}[S^4] - v^2) - \frac{1}{n} \int_0^1 dt \frac{d}{dt} \mathbb{E}_{\bf S, \bf Z, \bf Z^{\prime}}[\ln \mathcal{Z}(t)].
\end{align}
We proceed to the computation of the derivative under the integral over $t$.  Denoting by $\langle - \rangle_t$ the expectation with respect to the posterior \eqref{t-post}, we have 
\begin{align}
\frac{d}{dt}\mathbb{E}_{\bf S, \bf Z, \bf Z^{\prime}}[\ln \big(\mathcal{Z}(t)\big)] = 
\mathbb{E}_{\bf S, \bf Z, \bf Z^{\prime}}\big[-\big\langle \frac{d \mathcal{H}(t)}{d t}\big\rangle_{t}\big].
\end{align}
Hence, a simple differentiation of the Hamiltonian w.r.t. $t$ yields
\begin{align}\label{eq:derivative_freeEnergy_1}
\frac{d}{dt}\mathbb{E}_{\bf S, \bf Z, \bf Z^{\prime}}[\ln \big(\mathcal{Z}(t)\big)] = \mathbb{E}_{\bf S, \bf Z, \bf Z^{\prime}}\Big[&
 \sum_{i\le j=1}^n\Big(-\frac {\langle X_i^2 X_j^2 \rangle_{t} }{2\Delta n} + \frac{\langle X_i X_j \rangle_{t}S_i S_j }{\Delta n} + \frac {Z_{ij}\langle X_i X_j  \rangle_{t}}{2\sqrt{n \Delta t}} \Big) \nonumber \\
&+ \sum_{i=1}^n \Big( m \frac{\langle X_i^2 \rangle_{t}}{2\Delta} - m
\frac{\langle X_i\rangle_{t}S_i}{\Delta} - \frac{Z^{\prime}_i \langle X_i  \rangle_{t} }{2}  \sqrt{\frac{m}{\Delta (1-t)}}\Big) \Big] .
\end{align}
We now simplify this expression using integration by parts with respect to the Gaussian noises and the Nishimori identity \eqref{eq:nishCond} in Appendix \ref{app:nishi}. 
Integration by parts with respect to $Z_{ij}$ and $Z_i^\prime$ yields
\begin{align}\label{eq:DCT1}
\mathbb{E}_{\bf S, \bf Z, \bf Z^{\prime}}[Z_{ij} \langle X_i X_j  \rangle_{t}] 
= \mathbb{E}_{\bf S, \bf Z, \bf Z^{\prime}}[ \partial_{Z_{ij}} \langle X_i X_j  \rangle_{t}] 
= \sqrt{\frac{t}{n\Delta}}  \mathbb{E}_{\bf S, \bf Z, \bf Z^{\prime}}\Big[ \langle X_i^2 X_j^2  \rangle_{t} - \langle X_i X_j  \rangle_{t}^2 \Big] 
\end{align}
and
\begin{align}\label{eq:DCT2}
\mathbb{E}_{\bf S, \bf Z, \bf Z^{\prime}}[Z_{i}^{\prime} \langle X_i \rangle_{t}] = \mathbb{E}_{\bf S, \bf Z, \bf Z^{\prime}}[ \partial_{Z_{i}^{\prime}} \langle X_i \rangle_{t}] = \sqrt{\frac{m(1-t)}{\Delta}} \mathbb{E}_{\bf S, \bf Z, \bf Z^{\prime}}\Big[ \langle X_i^2 \rangle_{t} - \langle X_i  \rangle_{t}^2 \Big].
\end{align}
An application of the Nishimori identity yields 
\begin{align}\label{eq:nishimori_upperBound}
\mathbb{E}_{\bf S, \bf Z, \bf Z^{\prime}}[ \langle X_i X_j  \rangle_{t} S_i S_j ] = \mathbb{E}_{\bf S, \bf Z, \bf Z^{\prime}}[ \langle X_i X_j  \rangle_{t}^2 ]
\end{align}
and 
\begin{align}\label{eq:nshiother}
\mathbb{E}_{\bf S, \bf Z, \bf Z^{\prime}}[ \langle X_i  \rangle_{t} S_i] = \mathbb{E}_{\bf S, \bf Z, \bf Z^{\prime}}[ \langle X_i  \rangle_{t}^2 ]
\end{align}
Combining (\ref{eq:derivative_freeEnergy_1}) - (\ref{eq:nshiother}) we get
\begin{align*}
 \frac{1}{n}\frac{d}{dt} \mathbb{E}_{\bf S, \bf Z, \bf Z^{\prime}} [\ln \big(\mathcal{Z}(t)\big)] 
 & =\frac 1{2\Delta n^2} \sum_{i\le j=1}^n \mathbb{E}_{\bf S, \bf Z, \bf Z^{\prime}} [ \langle X_i X_j S_i S_j \rangle_{t}]  
 -\frac{m}{2\Delta n} \sum_{i=1}^n  \mathbb{E}_{\bf S, \bf Z, \bf Z^{\prime}} [\langle X_i S_i \rangle_{t} ]
 \nonumber \\ &
 = 
 \frac 1{4\Delta}\mathbb{E}_{\bf S, \bf Z, \bf Z^{\prime}} \big[\langle q(\tbf{S}, \bX)^2\rangle_t - 2\langle q(\tbf{S}, \bX)\rangle_t m\big] 
 + \frac{1}{4\Delta n^2}\sum_{i=1}^n \mathbb{E}_{\bf S, \bf Z, \bf Z^{\prime}}[\langle X_i^2\rangle_t S_i^2]
 \end{align*}
where we have introduced the  ``overlap'' $q(\tbf{S}, \bX) \defeq n^{-1}\sum_{i=1}^n S_i X_i$. Replacing this result in \eqref{weget} we obtain the remarkable sum rule
(recall $m \defeq v-E$)
\begin{align}
\frac{1}{n}I(\tbf{S} ; \tbf{W}) & = 
i_{\rm RS}(E;\Delta) -  \frac{1}{4\Delta}
\int_0^1 dt \mathbb{E}_{\bf S, \bf Z, \bf Z^{\prime}} \big[\langle (q(\tbf{S}, \bX) -  m)^2\rangle_t\big] 
\nonumber \\ &
\quad
- \frac{1}{4\Delta n^2}\sum_{i=1}^n \mathbb{E}_{\bf S, \bf Z, \bf Z^{\prime}}[\langle X_i^2\rangle_t S_i^2] + \frac{1}{4 \Delta n} (2\mathbb{E}[S^4] - v^2).
\end{align}
Thus for any $E\in [0,v]$ we have 
\begin{align}
\limsup_{n\to +\infty} \frac{1}{n}I(\tbf{S} ; \tbf{W}) \leq 
i_{\rm RS}(E;\Delta)
\end{align}
and \eqref{eq:guerrabound} follows by optimizing the right hand side over $E$. 
\section{Relating the mutual information to the free energy} \label{app:freeEnergy}
The mutual information between $\bS$ and $\bW$ is defined as $I(\bS; \bW)  = H (\bS)  - H (\bS \vert \bW)$ with
\begin{align}
H (\bS \vert \bW) = - \mathbb{E}_{\bS, \bW}\big[\ln P(\bS \vert \bW)\big] = - \mathbb{E}_{\bS, \bZ}\big[ \langle \ln P(\bX \vert \bS,\bZ) \rangle \big].
\end{align}
By substituting the posterior distribution in \eqref{gibbsi}, one obtains
\begin{align}
H (\bS \vert \bW) =  \mathbb{E}_{\bS, \bZ}[ \ln \mathcal{Z}] + \mathbb{E}_{\bS, \bZ}\big[ \langle \mathcal{H}(\bX\vert \bS, \bZ) \rangle \big] + H(\bS).
\end{align}
Furthermore, using the Gaussian integration by part as in \eqref{eq:DCT1} and the Nishimori identity \eqref{eq:nishimori_upperBound} yield
\begin{align}
\mathbb{E}_{\bS, \bZ}\big[ \langle \mathcal{H}(\bX\vert \bS, \bZ) \rangle \big] = -\frac{1}{2 \Delta n} \mathbb{E}_{\bS} \big[\!\sum_{i\leq j=1}^n (S_i^2 S_j^2) \,\big] = - \frac{1}{4\Delta} \big( v^2 (n-1) + 2 \mathbb{E}[S^4] \big). 
\end{align}
Hence, the normalized mutual information is given by \eqref{eq:energy_MI}, which we repeat here for better referencing
\begin{align}\label{eq:energy_MI_app}
\frac{1}{n} I(\tbf{S}; \bW) = -\frac{1}{n} \mathbb{E}_{\bS, \bZ}[ \ln \mathcal{Z}] + \frac{v^2}{4\Delta} + \frac{1}{4 \Delta n} (2\mathbb{E}[S^4] - v^2).
\end{align}

Alternatively, one can define the mutual information as  $I(\bS; \bW)  = H (\bW)  - H (\bW \vert \bS)$. For the AWGN, it is easy to show that
\begin{align}\label{eq:condEntropy_app}
H (\bW \vert \bS) = \frac{n(n+1)}{4} \ln (2 \pi \Delta e).
\end{align}
Furthermore, $H(\bW) = - \mathbb{E}_{\bW}\big[\ln P(\bW)\big]$ with
\begin{align}
P(\bw) = \int \big\{\prod_{i=1}^n dx_i P_0(x_i)\big\} P(\bw \vert \bx) &=
\frac{1}{(2 \pi \Delta)^{\frac{n(n+1)}{4}}}\int \big\{\prod_{i=1}^n dx_i P_0(x_i)\big\} e^{-\frac{1}{2\Delta}\sum_{i\leq j}\big(\frac{x_i x_j}{\sqrt n} - w_{ij}\big)^2} \nonumber \\
& = \frac{\tilde{\mathcal{Z}}}{(2 \pi \Delta)^{\frac{n(n+1)}{4}}},
\end{align}
where $\tilde{\mathcal{Z}}$ is the partition function with complete square \eqref{eq:posterior_dist}. Hence, $H(\bW)$ reads
\begin{align}\label{eq:Entropy_app}
H(\bW) &= \frac{n(n+1)}{4} \ln (2 \pi \Delta)- \mathbb{E}_{\bW}\big[\ln \tilde{\mathcal{Z}} \big] \nonumber \\
&= \frac{n(n+1)}{4} \ln (2 \pi \Delta)- \mathbb{E}_{\bS,\bZ}\big[\ln \mathcal{Z} \big] + \frac{1}{2\Delta} \mathbb{E}_{\bS,\bZ}\big[\sum_{i\leq j} (\frac{S_i S_j}{\sqrt{n}} + \sqrt{\Delta} Z_{ij})^2 \big],
\end{align}
with $\mathcal{Z}$ the simplified partition function obtained after expanding the square \eqref{gibbsi}. A straightforward calculation yields
\begin{align}\label{eq:Entropy_1_app}
\frac{1}{2\Delta} \mathbb{E}_{\bS,\bZ}\big[\sum_{i\leq j} (\frac{S_i S_j}{\sqrt{n}} + \sqrt{\Delta} Z_{ij})^2 \big] &= \frac{n(n+1)}{4} + \frac{1}{2 \Delta n} \mathbb{E}_{\bS} \big[\sum_{i\leq j} (S_i^2 S_j^2)\big] \nonumber \\
&=  \frac{n(n+1)}{4} \ln (e) + \frac{1}{4\Delta} \big( v^2 (n-1) + 2 \mathbb{E}[S^4] \big).
\end{align}
Finally, combining \eqref{eq:condEntropy_app}, \eqref{eq:Entropy_app} and \eqref{eq:Entropy_1_app} yields the same identity \eqref{eq:energy_MI_app}.
\section{Proof of the I-MMSE relation}\label{app:immse}
For completeness, we give a detailed proof for the I-MMSE relation of Lemma \ref{lemma:immse} following the lines of [\cite{GuoShamaiVerdu}]. In the calculations below differentiations, expectations  and integrations commute (see Lemma~8 in [\cite{GuoShamaiVerdu}]). All the matrices are symmetric and $Z_{ij}\sim \mathcal{N}(0,1)$ for $i\leq j$. 

Instead of  \eqref{eq:mainProblem} it is convenient to work with the equivalent model $w_{ij} = \frac{s_i s_j}{\sqrt{n\Delta}} + z_{ij}$ and set $s_is_j =u_{ij}$. In fact, all subsequent calculations do not depend on the rank of the matrix ${\bf u}$ and are valid for any finite rank matrix estimation problem as long as the noise is Gaussian. 
The mutual information is $I({\bf S};{\bf W}) = H({\bf W}) - H({\bf W} | {\bf S})$ and 
$H({\bf W}|{\bf S}) = \frac{n(n+1)}{2}\ln(\sqrt{2\pi e})$. Thus 
\begin{align}\label{derivativeIM}
\frac{1}{n}\frac{d I({\bf S};{\bf W})}{d\Delta^{-1}} = \frac{1}{n}\frac{d H({\bf W})}{d\Delta^{-1}}.
\end{align}
We have $H({\bf W}) = -\mathbb{E}_{{\bf W}} [\ln P({\bf W})]$ where
\begin{align}
P({\bf w}) &= \mathbb{E}_{{\bf U}} [P({\bf w} | {\bf U})] =  \mathbb{E}_{{\bf U}} \Big[(2\pi)^{-\frac{n(n+1)}{4}} e^{-\frac{1}{2}\sum_{i\le j} \big(\frac{{U}_{ij}}{\sqrt{n\Delta}} - w_{ij}\big)^2}\Big].
\label{eq:defPwgivenU}
\end{align}
Differentiating w.r.t $\Delta^{-1}$
\begin{align}
\frac{dH({\bf W})}{d\Delta^{-1}}= -\mathbb{E}_{\bf U}\Big[\int d{\bf w} (1+\ln P({\bf w})) \frac{dP({\bf w}|{\bf U})}{d\Delta^{-1}}\Big]\label{eq:derH_1}
\end{align}
%
%
and 
\begin{align}
\frac{dP({\bf w}|{\bf U})}{d\Delta^{-1}}&=\sqrt{\frac{\Delta}{4n}} \sum_{k\le l} U_{kl}\Big(w_{kl} - \frac{U_{kl}}{\sqrt{\Delta n}}\Big) e^{-\frac{1}{2} \sum_{i\le j} \big(\frac{U_{ij}}{\sqrt{n\Delta}} - w_{ij}\big)^2} (2\pi)^{-\frac{n(n+1)}{2}}\\
&= -\sqrt{\frac{\Delta}{4n}} \sum_{k\le l}U_{kl}\frac{d}{dw_{kl}}e^{-\frac{1}{2} \sum_{i\le j} \big(\frac{U_{ij}}{\sqrt{n\Delta}} - w_{ij}\big)^2}(2\pi)^{-\frac{n(n+1)}{2}}.
\end{align}
Replacing this last expression in \eqref{eq:derH_1}, using an integration by part w.r.t $w_{kl}$ (the boundary terms can be shown to vanish), then Bayes formula, and finally \eqref{eq:defPwgivenU}, one obtains
\begin{align}
\sqrt{\frac{4n}{\Delta}}\frac{dH({\bf W})}{d\Delta^{-1}}&= \sum_{k\le l} \mathbb{E}_{\bf U}\Big[U_{kl} \int d{\bf w}(1+\ln P({\bf w}))  \frac{d}{dw_{kl}}e^{-\frac{1}{2} \sum_{i\le j}\big(\frac{U_{ij}}{\sqrt{n\Delta}} - w_{ij}\big)^2}(2\pi)^{-\frac{n(n+1)}{2}}\Big] \nonumber \\
&=-\sum_{k\le l} \int d{\bf w} \, \mathbb{E}_{\bf U}\Big[U_{kl} \frac{P({\bf w}|{\bf U})}{P({\bf w})} \Big]  \frac{dP({\bf w})}{dw_{kl}} \nonumber \\
&= -\sum_{k\le l} \int d{\bf w} \, \mathbb{E}_{{\bf U}|{\bf w}}\big[U_{kl}\big] \mathbb{E}_{\bf U} \Big[\frac{dP({\bf w}|{\bf U})}{dw_{kl}} \Big] \nonumber \\
&=\sum_{k\le l} \int d{\bf w} \, \mathbb{E}_{{\bf U}|{\bf w}}\big[U_{kl}\big] \mathbb{E}_{\bf U} \Big[\Big(w_{kl}-\frac{U_{kl}}{\sqrt{n\Delta}}\Big) P({\bf w}|{\bf U})\Big] \nonumber\\
&=\mathbb{E}_{\bf W}\Big[\sum_{k\le l} \mathbb{E}_{{\bf U}|{\bf W}}\big[U_{kl}\big]\Big({W}_{kl}- \frac{1}{\sqrt{n\Delta}}\mathbb{E}_{{\bf U}|{\bf W}} \big[U_{kl}\big]\Big)\Big].
\end{align}
Now we replace ${\bf w} = \frac{{\bf u}^0}{\sqrt{n\Delta}} + {\bf z}$, where ${\bf u}^0$ is an independent copy of ${\bf u}$. 
We denote $\mathbb{E}_{\bf W}[\cdot] = \mathbb{E}_{{\bf U}^0,{\bf Z}}[\cdot]$ the joint expectation. The last result then reads
\begin{align}\label{thelastresult}
\frac{dH({\bf W})}{d\Delta^{-1}}&=\frac{1}{2n}\mathbb{E}_{\bf W}\Big[\sum_{k\le l} \mathbb{E}_{{\bf U}|{\bf W}}\big[U_{kl}\big] \Big({U}^0_{kl}- \mathbb{E}_{{\bf U}|{\bf W}} \big[U_{kl}\big] + {Z}_{kl}\sqrt{n\Delta}\Big)\Big].
\end{align}
Now note the two Nishimori identities (see Appendix~\ref{app:nishi})
\begin{align}
\mathbb{E}_{\bf W}\Big[\mathbb{E}_{{\bf U}|{\bf W}}\big[U_{kl}\big]{U}^0_{kl}\Big] &= \mathbb{E}_{\bf W}\Big[\mathbb{E}_{{\bf U}|{\bf W}}\big[U_{kl}\big]^2\Big],\\
\mathbb{E}_{\bf W}\Big[({U}^0_{kl})^2\Big] &= \mathbb{E}_{\bf W}\Big[\mathbb{E}_{{\bf U}|{\bf W}}\big[U_{kl}^2\big] \Big],
\end{align}
and the following one obtained by a Gaussian integration by parts
\begin{align}\label{following}
\sqrt{n\Delta} \, \mathbb{E}_{\bf W}\Big[\mathbb{E}_{{\bf U}|{\bf W}}\big[U_{kl}\big]{Z}_{kl}\Big] &= \mathbb{E}_{\bf W}\Big[\mathbb{E}_{{\bf U}|{\bf W}}\big[U_{kl}^2\big]- \mathbb{E}_{{\bf U}|{\bf W}}\big[U_{kl}\big]^2\Big].
\end{align}
Using the last three identities, equation \eqref{thelastresult} becomes 
\begin{align}
\frac{1}{n}\frac{dH({\bf W})}{d\Delta^{-1}} & =  \frac{1}{2n^2} \mathbb{E}_{\bf W}\Big[\sum_{k\le l}\mathbb{E}_{{\bf U}|{\bf W}}\big[U_{kl}^2\big]- \mathbb{E}_{{\bf U}|{\bf W}}\big[U_{kl}\big]^2\Big] 
\nonumber \\ &
=\frac{1}{2n^2}\mathbb{E}_{\bf W}\Big[\sum_{k\le l} \Big(U_{kl}^0 - \mathbb{E}_{{\bf U}|{\bf W}}\big[U_{kl}\big]\Big)^2 \Big],
\end{align}
which, in view of \eqref{derivativeIM}, ends the proof.
\section{Nishimori identity} \label{app:nishi}
Take a random vector ${\tbf S}$ distributed according to some known prior $P_0^{\otimes{n}}$ and an observation $\bW$ is drawn from some known 
conditional distribution $P_{\bW\vert {\tbf S}}(\bw|\bs)$. 
Take $\bX$ drawn from a posterior distribution (for example this may be \eqref{eq:posterior_dist})
$$
P(\bx|\bw) = \frac{P_0^{\otimes{n}}(\bx)P_{\bW\vert {\tbf S}}(\bw|\bx)}{P(\bw)}.
$$
Then for any (integrable) function $g(\bs, \bx)$ the Bayes formula implies 
\begin{align}
\mathbb{E}_{\bS}\mathbb{E}_{\bW|\bS}\mathbb{E}_{\bX|\bW} [g(\bS, \bX)]
= \mathbb{E}_{\bW}  \mathbb{E}_{\bX'|\bW}\mathbb{E}_{\bX|\bW} [g(\bX', \bX)]
\end{align}
where $\bX, \bX'$ are independent random vectors distributed according to the posterior distribution. Therefore
\begin{align}
\mathbb{E}_{\bS,\bW}\mathbb{E}_{\bX|\bW} [g(\bS, \bX)] = \mathbb{E}_{\bW}  \mathbb{E}_{\bX'|\bW}\mathbb{E}_{\bX|\bW} [g(\bX', \bX)].\label{eq:nishCond}
\end{align}

In the statistical mechanics literature this identity is sometimes called the Nishimori identity and we adopt this language here. For model \eqref{eq:mainProblem} for example
we can express $\bW$ in the posterior in terms of $\bS$ and $\bZ$ which are independent and  $\mathbb{E}_{\bX|\bW}[ - ] = \langle -\rangle$. Then the Nishimori identity  reads 
\begin{align}\label{statform}
 \mathbb{E}_{\bS,\bZ}[\langle g(\bS, \bX)\rangle] = \mathbb{E}_{\bS, \bZ}[  \langle g(\bX', \bX)\rangle ].
\end{align}
An important case for $g$ depending only on the first argument is 
$\mathbb{E}_{\bS}[g(\bS)] = \mathbb{E}_{\bS, \bZ}[  \langle g(\bX)\rangle ]$.

Special cases that are often used in this paper are
\begin{align}
\begin{cases}
\mathbb{E}_{\bS,\bZ}[S_i \langle X_i\rangle] = \mathbb{E}_{\bS,\bZ}[\langle X_i\rangle^2]\\
\mathbb{E}_{\bS,\bZ}[S_iS_j \langle X_iX_j\rangle] = \mathbb{E}_{\bS,\bZ}[\langle X_iX_j\rangle^2]\\
\mathbb{E}[S^2] = \mathbb{E}_{\bS, \bZ}[\langle X_i^2\rangle]. 
\end{cases}
\end{align}
A mild generalization of \eqref{statform} which is also used is
\begin{align}
 \mathbb{E}_{\bS,\bZ}[S_i S_j\langle X_i\rangle\langle X_j\rangle] = \mathbb{E}_{\bS,\bZ}[\langle X_iX_j\rangle \langle X_i\rangle \langle X_j\rangle].
\end{align}
We remark that these identities are used with brackets $\langle - \rangle$ corresponding to various ``interpolating'' posteriors.

\section{Proof of Lemmas~\ref{lemma:fixedpointSE_extPot_underlying} and \ref{lemma:fixedpointSE_extPot}} \label{app:Lemma14}
We show the details for Lemma \ref{lemma:fixedpointSE_extPot_underlying}. The proof of Lemma \ref{lemma:fixedpointSE_extPot}
follows the same lines. 
A straightforward differentiation of $f_{\rm RS}^{\rm u}$ w.r.t. $E$ gives  
\begin{align}
\frac{df^{\rm u}_{\rm {RS}}(E;\Delta)}{dE} = \frac{E-v}{2\Delta} + \frac{1}{2\Delta} \mathbb{E}_{{Z}, {S}}\Big[\Big\langle-{X}^2 + 2 {X}{S} + ZX \sqrt{\frac{\Delta}{v-E}} \Big\rangle \Big]. 
\label{eq:app_fixedPointF}
\end{align}
Recall that here the posterior expectation $\langle\cdot\rangle$ is defined by \eqref{eq:defBracket_posteriorMean}. A direct application of the Nishimori condition gives
\begin{align}
v \defeq \mathbb{E}_{S}[S^2] &= \mathbb{E}_{S, Z} [\langle X^2 \rangle],\label{eq:app_nish1} \\
\mathbb{E}_{S, Z} [S \langle X\rangle] &= \mathbb{E}_{S, Z} [\langle X \rangle^2],\label{eq:app_nish2}
\end{align}
which implies 
\begin{align}
\mathbb{E}_{S, Z} [(S - \langle X \rangle_E)^2] &= \mathbb{E}_{S, Z} [\langle X^2 \rangle] - \mathbb{E}_{S, Z} [\langle X \rangle^2]. \label{eq:app_EisV}
\end{align}
Thus from \eqref{eq:app_fixedPointF} we see that stationary points of $f_{\rm RS}^{\rm u}$ satisfy
\begin{align}
E = 2v - 2\mathbb{E}_{S, Z} [\langle X \rangle^2] - \sqrt{\frac{\Delta}{v-E}} \mathbb{E}_{S, Z} [Z\langle X \rangle]. \label{eq:app_EisV_1}
\end{align}
Now using an integration by part w.r.t $Z$, one gets 
\begin{align}
\sqrt{\frac{\Delta}{v-E}} \mathbb{E}_{S, Z} [Z\langle X \rangle ] = v - \mathbb{E}_{S, Z} [\langle X\rangle^2],
\end{align}
which allows to rewrite \eqref{eq:app_EisV_1} as
\begin{align}
E  = v- \mathbb{E}_{S, Z} [\langle X\rangle^2] 
= \mathbb{E}_{S, Z} [(S - \langle X \rangle)^2]
\end{align}
where the second equality follows from \eqref{eq:app_nish1} and \eqref{eq:app_EisV}. Recalling the expression \eqref{def:SEop_underlying} of the state evolution operator we recognize the equation $E= T_{\rm u}(E)$.
\section{Analysis of $i_{\rm RS}(E; \Delta)$ for $\Delta \to 0$} \label{appendix_smallnoise}
In this appendix, we prove that $\lim_{\Delta\to 0} \min_{E}i_{\rm RS}(E; \Delta) = H(S)$. First, a simple calculation leads to the following relation between $i_{\rm RS}$ and the mutual information of the scalar denoising problem for $E\!\in\! [0, v]$
\begin{align}\label{eq:app_RS_denoising}
i_{\rm RS}(E; \Delta) = I\big(S;S+\Sigma(E)Z\big) + \frac{E^2}{4\Delta}, 
\end{align}
where $Z \sim \mathcal{N}(0,1)$ and $\Sigma(E)^2 \defeq \Delta/ (v  -  E)$. Note that as $\Delta \rightarrow 0$, $\Sigma(E)\rightarrow 0$ (for $E \neq v$). Therefore, $\lim_{\Delta\to 0} I\big(S;S+\Sigma(E)Z\big) = H(S)$. Now let $E_0$ be the global minimum of $i_{\rm RS}(E; \Delta)$. By evaluating both sides of \eqref{eq:app_RS_denoising} at $E_0$ and taking the limit $\Delta \rightarrow 0$, it remains to show that $E_0^2/(4\Delta) \rightarrow 0$ as $\Delta \rightarrow 0$ (i.e. $E_0^2 \rightarrow 0$ faster than $\Delta$). Since $E_0$ is the global minimum of the RS potential, then $E_0 = T_{{\rm u}}(E_0) = {\rm mmse}(\Sigma(E_0)^{-2})$ by Lemma \ref{lemma:fixedpointSE_extPot_underlying}. Moreover, one can show, under our assumptions on $P_0$, that the scalar MMSE function scales as
\begin{align}
{\rm mmse}(\Sigma^{-2}) = \mathcal{O}(e^{-c\Sigma^{-2}}),
\end{align}
with $c$ a non-negative constant that depends on $P_0$ [\cite{BMDK_trans2017}]. Hence, $E_0^2/(4\Delta) \rightarrow 0$ as $\Delta \rightarrow 0$, which ends the proof.


%
\section{Proof of Proposition \ref{lemma:openVSclosed}}\label{appendix-pinfree}
%

Call $\mathcal{H}^{\rm per}$ and $\langle-\rangle_{\rm per}$ the Hamiltonian and posterior average associated to the periodic SC system with mutual information $i_{n,w,L}^{\rm per}$. 
Similarly call $\mathcal{H}^{\tilde{\rm cou}}$ and $\langle-\rangle_{\rm cou}$ the Hamiltonian and posterior average associated to the pinned SC system with mutual information $i_{n,w,L}^{\rm cou}$.
The Hamiltonians 
satisfy the identity $\mathcal{H}^{\rm cou} - \mathcal{H}^{\rm per} = \delta \mathcal{H}$ with
\begin{align}
&\delta \mathcal{H}=  \sum_{\mu \in \mathcal{B}} \Lambda_{\mu,\mu} \sum_{i_\mu\le j_\mu}\bigg[ \frac{x_{i_\mu}^2x_{j_\mu}^2 + s_{i_\mu}^2s_{j_\mu}^2}{2n\Delta} -  \frac{s_{i_\mu}s_{j_\mu}x_{i_\mu}x_{j_\mu}}{n\Delta} - \frac{(x_{i_\mu}x_{j_\mu}-s_{i_\mu}s_{j_\mu})z_{i_\mu j_\mu}}{\sqrt{n \Delta \Lambda_{\mu,\mu}}}\bigg]\nonumber\\
+ &\sum_{\mu \in \mathcal{B}}\sum_{\nu\in \{\mu+1:\mu+w\}\cap \mathcal{B}}\Lambda_{\mu,\nu}\sum_{i_\mu\le j_\nu}\bigg[ \frac{x_{i_\mu}^2x_{j_\nu}^2 + s_{i_\mu}^2s_{j_\nu}^2}{2n\Delta} -  \frac{s_{i_\mu}s_{j_\nu}x_{i_\mu}x_{j_\nu}}{n \Delta} - \frac{(x_{i_\mu}x_{j_\nu}-s_{i_\mu}s_{j_\nu})z_{i_\mu j_\nu} }{\sqrt{n \Delta \Lambda_{\mu,\nu}}}\bigg]\nonumber\\
+ &\sum_{\mu \in \mathcal{B}}\sum_{\nu\in \{\mu-w:\mu-1\}\cap \mathcal{B}}\Lambda_{\mu,\nu}\sum_{i_\mu>j_\nu }\bigg[ \frac{x_{i_\mu}^2x_{j_\nu}^2 + s_{i_\mu}^2s_{j_\nu}^2}{2n\Delta} -  \frac{s_{i_\mu}s_{j_\nu}x_{i_\mu}x_{j_\nu}}{n\Delta} - \frac{(x_{i_\mu}x_{j_\nu}-s_{i_\mu}s_{j_\mu})z_{i_\mu j_\nu} }{\sqrt{n\Delta\Lambda_{\mu,\nu}}}\bigg] \nonumber. 
\end{align}
%
%
It is easy to see that 
\begin{align}\label{eq:energy_openVSclosed}
 i_{n, w, L}^{\rm per} - i_{n, w, L}^{\rm cou} &= \frac{1}{n(L+1)}\mathbb{E}_{{\bf S}, {\bf Z}}[\ln  \langle e^{ -\delta \mathcal{H}} \rangle_{\rm cou} ],\\
  i_{n, w, L}^{\rm cou} - i_{n, w,L}^{\rm per}& = \frac{1}{n(L+1)}\mathbb{E}_{{\bf S}, {\bf Z}}[\ln  \langle e^{ \delta \mathcal{H}} \rangle_{\rm per} ].
\end{align}
and using the convexity of the exponential, we get
\begin{equation}\label{eq:openVSclosed_sandwich}
i_{n, w, L}^{\rm cou } + \frac{\mathbb{E}_{{\bf S}, {\bf Z}}\big[ \langle \delta \mathcal{H} \rangle_{\rm per} \big]}{n(L+1)} 
\le i_{n, w, L}^{\rm per} 
\le i_{n, w, L}^{\rm cou} + \frac{\mathbb{E}_{{\bf S}, {\bf Z}}\big[ \langle \delta \mathcal{H} \rangle_{\rm cou} \big]}{n(L+1)}.
\end{equation}
Due to the pinning condition we have
$\mathbb{E}_{{\bf S}, {\bf Z}}[ \langle \delta \mathcal{H}(\bX) \rangle_{\rm cou} ] = 0$, and thus we get the upper bound $i_{n, w, L}^{\rm per}
\le i_{n, w, L}^{\rm cou}$. Let us now look at the lower bound. We note that by the Nishimori identity in  Appendix~\ref{app:nishi}, as long as $P_0$ has finite first four moments,
we can find 
constants $K_1, K_2$ independent of $n, w,L$ such that
$\mathbb{E}_{{\bf S}, {\bf Z}} [\langle X_{i_\mu}^2X_{j_\nu}^2 \rangle_{\rm per}]\leq K_1$ 
and $\mathbb{E}_{{\bf S}, {\bf Z}}[\langle X_{i_\mu}^4\rangle_{\rm per}]\leq K_2$. First we use Gaussian integration by parts to eliminate $z_{i_\mu j_\nu}$ from the brackets,
the Cauchy-Schwartz inequality, and the Nishimori identity of Appendix~\ref{app:nishi}, to get an upper bound where only fourth order moments of signal are involved. Thus as long as $P_0$ has finite first four moments we find
\begin{align}\label{eq:openVSclosed_order}
\frac{|\mathbb{E}_{{\bf S}, {\bf Z}}[ \langle \delta \mathcal{H}(\bX) \rangle_{\tilde{\rm c}} ]|}{ n(L+1)} &\le C \frac{\Lambda^*(2w+1)^2}{L+1} = \mathcal{O}(\frac{w}{L}).
\end{align}
for some constant $C > 0$ independent of $n, w, L$ and we recall $\Lambda_* \defeq \sup\Lambda_{\mu\nu} = \mathcal{O}(w^{-1})$.
Thus we get the lower bound $i_{n, w, L}^{\rm cou } - \mathcal{O}(\frac{w}{L}) \le i_{n, w, L}^{\rm per}$. This completes the proof of Proposition \ref{lemma:openVSclosed}.

\acks{The work of Jean Barbier and Mohamad Dia was supported by the Swiss National Foundation for Science grant number 200021-156672. We thank
Thibault Lesieur for help with the phase diagrams.}
\vskip 0.2in
\bibliography{sample}

\begin{thebibliography}{45}
\providecommand{\natexlab}[1]{#1}
\providecommand{\url}[1]{\texttt{#1}}
\expandafter\ifx\csname urlstyle\endcsname\relax
  \providecommand{\doi}[1]{doi: #1}\else
  \providecommand{\doi}{doi: \begingroup \urlstyle{rm}\Url}\fi

\bibitem[Amini and Wainwright(2008)]{amini2008high}
Arash~A Amini and Martin~J Wainwright.
\newblock High-dimensional analysis of semidefinite relaxations for sparse
  principal components.
\newblock In \emph{Information Theory, 2008. ISIT 2008. IEEE International
  Symposium on}, pages 2454--2458. IEEE, 2008.

\bibitem[Baik et~al.(2005)Baik, Ben~Arous, and P{\'e}ch{\'e}]{baik2005phase}
Jinho Baik, G{\'e}rard Ben~Arous, and Sandrine P{\'e}ch{\'e}.
\newblock Phase transition of the largest eigenvalue for nonnull complex sample
  covariance matrices.
\newblock \emph{Annals of Probability}, pages 1643--1697, 2005.

\bibitem[Barak et~al.(2016)Barak, Hopkins, Kelner, Kothari, Moitra, and
  Potechin]{barak2016nearly}
Boaz Barak, Samuel~B Hopkins, Jonathan Kelner, Pravesh~K Kothari, Ankur Moitra,
  and Aaron Potechin.
\newblock A nearly tight sum-of-squares lower bound for the planted clique
  problem.
\newblock \emph{arXiv preprint arXiv:1604.03084}, 2016.

\bibitem[Barbier and Macris(2018)]{BM_stochastic2017}
Jean Barbier and Nicolas Macris.
\newblock The adaptive interpolation method: a simple scheme to prove replica
  formulas in bayesian inference.
\newblock \emph{Probability Theory and Related Fields}, Oct 2018.
\newblock ISSN 1432-2064.
\newblock \doi{10.1007/s00440-018-0879-0}.
\newblock URL \url{https://doi.org/10.1007/s00440-018-0879-0}.

\bibitem[Barbier et~al.(2016{\natexlab{a}})Barbier, Dia, Macris, and
  Krzakala]{BDMK_alerton2016}
Jean Barbier, Mohamad Dia, Nicolas Macris, and Florent Krzakala.
\newblock The mutual information in random linear estimation.
\newblock In \emph{2016 54th Annual Allerton Conference on Communication,
  Control, and Computing (Allerton)}, pages 625--632, 2016{\natexlab{a}}.

\bibitem[Barbier et~al.(2016{\natexlab{b}})Barbier, Dia, Macris, Krzakala,
  Lesieur, and Zdeborov\'{a}]{barbierNIPS2016}
Jean Barbier, Mohamad Dia, Nicolas Macris, Florent Krzakala, Thibault Lesieur,
  and Lenka Zdeborov\'{a}.
\newblock Mutual information for symmetric rank-one matrix estimation: A proof
  of the replica formula.
\newblock In \emph{Advances in Neural Information Processing Systems 29}, pages
  424--432. 2016{\natexlab{b}}.

\bibitem[Barbier et~al.(2017{\natexlab{a}})Barbier, Dia, and
  Macris]{BDM_trans2017}
Jean Barbier, Mohamad Dia, and Nicolas Macris.
\newblock Universal sparse superposition codes with spatial coupling and gamp
  decoding.
\newblock \emph{arXiv preprint arXiv:1707.04203}, 2017{\natexlab{a}}.

\bibitem[Barbier et~al.(2017{\natexlab{b}})Barbier, Macris, Dia, and
  Krzakala]{BMDK_trans2017}
Jean Barbier, Nicolas Macris, Mohamad Dia, and Florent Krzakala.
\newblock Mutual information and optimality of approximate message-passing in
  random linear estimation.
\newblock \emph{arXiv preprint arXiv:1701.05823}, 2017{\natexlab{b}}.

\bibitem[Bayati and Montanari(2011)]{BayatiMontanari10}
Mohsen Bayati and Andrea Montanari.
\newblock The dynamics of message passing on dense graphs, with applications to
  compressed sensing.
\newblock \emph{IEEE Trans. on Information Theory}, 57\penalty0 (2):\penalty0
  764 --785, 2011.

\bibitem[Bickel and Chen(2009)]{bickel2009nonparametric}
Peter~J Bickel and Aiyou Chen.
\newblock A nonparametric view of network models and newman--girvan and other
  modularities.
\newblock \emph{Proceedings of the National Academy of Sciences}, 106\penalty0
  (50):\penalty0 21068--21073, 2009.

\bibitem[Cai et~al.(2010)Cai, Cand{\`e}s, and Shen]{cai2010singular}
Jian-Feng Cai, Emmanuel~J Cand{\`e}s, and Zuowei Shen.
\newblock A singular value thresholding algorithm for matrix completion.
\newblock \emph{SIAM Journal on Optimization}, 20\penalty0 (4):\penalty0
  1956--1982, 2010.

\bibitem[Caltagirone et~al.(2014)Caltagirone, Franz, Morris, and
  Zdeborov\'a]{SilvioFranz}
Francesco Caltagirone, Silvio Franz, Richard~G. Morris, and Lenka Zdeborov\'a.
\newblock Dynamics and termination cost of spatially coupled mean-field models.
\newblock \emph{Phys. Rev. E}, 89:\penalty0 012102, Jan 2014.

\bibitem[Cand{\`e}s and Recht(2009)]{candes2009exact}
Emmanuel~J Cand{\`e}s and Benjamin Recht.
\newblock Exact matrix completion via convex optimization.
\newblock \emph{Foundations of Computational mathematics}, 9\penalty0
  (6):\penalty0 717--772, 2009.

\bibitem[Chen and Xu(2014)]{chen2014statistical}
Yudong Chen and Jiaming Xu.
\newblock Statistical-computational tradeoffs in planted problems and submatrix
  localization with a growing number of clusters and submatrices.
\newblock \emph{arXiv preprint arXiv:1402.1267}, 2014.

\bibitem[d'Aspremont et~al.(2007)d'Aspremont, El~Ghaoui, Jordan, and
  Lanckriet]{d2007direct}
Alexandre d'Aspremont, Laurent El~Ghaoui, Michael~I Jordan, and Gert~RG
  Lanckriet.
\newblock A direct formulation for sparse pca using semidefinite programming.
\newblock \emph{SIAM review}, 49\penalty0 (3):\penalty0 434--448, 2007.

\bibitem[Decelle et~al.(2011)Decelle, Krzakala, Moore, and
  Zdeborov{\'a}]{decelle2011asymptotic}
Aurelien Decelle, Florent Krzakala, Cristopher Moore, and Lenka Zdeborov{\'a}.
\newblock Asymptotic analysis of the stochastic block model for modular
  networks and its algorithmic applications.
\newblock \emph{Physical Review E}, 84\penalty0 (6):\penalty0 066106, 2011.

\bibitem[Deshpande and Montanari(2014)]{6875223}
Yash Deshpande and Andrea Montanari.
\newblock Information-theoretically optimal sparse pca.
\newblock In \emph{Information Theory (ISIT), 2014 IEEE International Symposium
  on}, pages 2197--2201, June 2014.
\newblock \doi{10.1109/ISIT.2014.6875223}.

\bibitem[Deshpande et~al.(2015)Deshpande, Abbe, and
  Montanari]{deshpande2015asymptotic}
Yash Deshpande, Emmanuel Abbe, and Andrea Montanari.
\newblock Asymptotic mutual information for the two-groups stochastic block
  model.
\newblock \emph{arXiv:1507.08685}, 2015.

\bibitem[Dia(2018)]{DiaThesis}
Mohamad Dia.
\newblock \emph{High-Dimensional Inference on Dense Graphs with Applications to
  Coding Theory and Machine Learning}.
\newblock PhD thesis, EPFL IC School, Lausanne, 2018.

\bibitem[Franz and Toninelli(2004)]{FranzToninelli}
Silvio Franz and Fabio~Lucio Toninelli.
\newblock Finite-range spin glasses in the kac limit: free energy and local
  observables.
\newblock \emph{Journal of Physics A: Mathematical and General}, 37\penalty0
  (30):\penalty0 7433, 2004.

\bibitem[Giurgiu et~al.(2016)Giurgiu, Macris, and Urbanke]{giurgiu2016}
Andrei Giurgiu, Nicolas Macris, and R{\"u}diger Urbanke.
\newblock Spatial coupling as a proof technique and three applications.
\newblock \emph{IEEE Transactions on Information Theory}, 62\penalty0
  (10):\penalty0 5281--5295, Oct 2016.

\bibitem[Guerra(2005)]{guerra2005introduction}
Francesco Guerra.
\newblock An introduction to mean field spin glass theory: methods and results.
\newblock \emph{Mathematical Statistical Physics}, pages 243--271, 2005.

\bibitem[Guerra and Toninelli(2002)]{guerraToninelli}
Francesco Guerra and Fabio~Lucio Toninelli.
\newblock The thermodynamic limit in mean field spin glass models.
\newblock \emph{Commun. Math. Phys.}, 230\penalty0 (1):\penalty0 71--79, 2002.

\bibitem[Guo et~al.(2005)Guo, Shamai, and Verd{\'{u}}]{GuoShamaiVerdu}
Dongning Guo, Shlomo Shamai, and Sergio Verd{\'{u}}.
\newblock Mutual information and minimum mean-square error in gaussian
  channels.
\newblock \emph{IEEE Trans. on Information Theory}, 51, 2005.

\bibitem[Hajek et~al.(2015)Hajek, Wu, and Xu]{hajek2015submatrix}
Bruce Hajek, Yihong Wu, and Jiaming Xu.
\newblock Submatrix localization via message passing.
\newblock \emph{arXiv preprint arXiv:1510.09219}, 2015.

\bibitem[Hassani et~al.(2010)Hassani, Macris, and Urbanke]{hassani2010coupled}
S~Hamed Hassani, Nicolas Macris, and R{\"u}diger Urbanke.
\newblock Coupled graphical models and their thresholds.
\newblock In \emph{IEEE Information Theory Workshop (ITW)}, 2010.

\bibitem[Javanmard and Montanari(2013)]{Montanari-Javanmard}
Adel Javanmard and Andrea Montanari.
\newblock State evolution for general approximate message passing algorithms,
  with applications to spatial coupling.
\newblock \emph{Journal of Information and Inference}, 2\penalty0 (2):\penalty0
  115--144, 2013.

\bibitem[Johnstone and Lu(2004)]{johnstone2004sparse}
Iain~M Johnstone and Arthur~Yu Lu.
\newblock Sparse principal components analysis.
\newblock \emph{Unpublished manuscript}, 7, 2004.

\bibitem[Johnstone and Lu(2012)]{johnstone2012consistency}
Iain~M Johnstone and Arthur~Yu Lu.
\newblock On consistency and sparsity for principal components analysis in high
  dimensions.
\newblock \emph{Journal of the American Statistical Association}, 2012.

\bibitem[Karrer and Newman(2011)]{karrer2011stochastic}
Brian Karrer and Mark~EJ Newman.
\newblock Stochastic blockmodels and community structure in networks.
\newblock \emph{Physical Review E}, 83\penalty0 (1):\penalty0 016107, 2011.

\bibitem[Keshavan et~al.(2009)Keshavan, Oh, and Montanari]{keshavan2009matrix}
Raghunandan~H Keshavan, Sewoong Oh, and Andrea Montanari.
\newblock Matrix completion from a few entries.
\newblock In \emph{Information Theory, 2009. ISIT 2009. IEEE International
  Symposium on}, pages 324--328. IEEE, 2009.

\bibitem[Korada and Macris(2009)]{KoradaMacris}
Satish~Babu Korada and Nicolas Macris.
\newblock Exact solution of the gauge symmetric p-spin glass model on a
  complete graph.
\newblock \emph{Journal of Statistical Physics}, 136\penalty0 (2):\penalty0
  205--230, 2009.

\bibitem[Krzakala et~al.(2016)Krzakala, Xu, and
  Zdeborov{\'a}]{krzakala2016mutual}
Florent Krzakala, Jiaming Xu, and Lenka Zdeborov{\'a}.
\newblock Mutual information in rank-one matrix estimation.
\newblock \emph{arXiv preprint arXiv:1603.08447}, 2016.

\bibitem[Kudekar et~al.(2011)Kudekar, Richardson, and
  Urbanke]{kudekar2011threshold}
Shrinivas Kudekar, Thomas~J Richardson, and R{\"u}diger Urbanke.
\newblock Threshold saturation via spatial coupling: Why convolutional ldpc
  ensembles perform so well over the bec.
\newblock \emph{IEEE Trans. on Inf. Th.}, 57, 2011.

\bibitem[Lelarge and Miolane(2017)]{lelarge2017}
Marc Lelarge and L\'{e}o Miolane.
\newblock Fundamental limits of symmetric low-rank matrix estimation.
\newblock \emph{arXiv preprint arXiv:1611.03888}, 2017.

\bibitem[Lesieur et~al.(2015{\natexlab{a}})Lesieur, Krzakala, and
  Zdeborov\'a]{lesieur2015mmse}
Thibault Lesieur, Florent Krzakala, and Lenka Zdeborov\'a.
\newblock Mmse of probabilistic low-rank matrix estimation: Universality with
  respect to the output channel.
\newblock In \emph{2015 53rd Annual Allerton Conference on Communication,
  Control, and Computing (Allerton)}, pages 680--687, Sept 2015{\natexlab{a}}.
\newblock \doi{10.1109/ALLERTON.2015.7447070}.

\bibitem[Lesieur et~al.(2015{\natexlab{b}})Lesieur, Krzakala, and
  Zdeborov{\'a}]{lesieur2015phase}
Thibault Lesieur, Florent Krzakala, and Lenka Zdeborov{\'a}.
\newblock Phase transitions in sparse pca.
\newblock In \emph{Information Theory (ISIT), 2015 IEEE International Symposium
  on}, pages 1635--1639. IEEE, 2015{\natexlab{b}}.

\bibitem[Massouli{\'e}(2014)]{massoulie2014community}
Laurent Massouli{\'e}.
\newblock Community detection thresholds and the weak ramanujan property.
\newblock In \emph{Proceedings of the 46th Annual ACM Symposium on Theory of
  Computing}, pages 694--703. ACM, 2014.

\bibitem[Matsushita and Tanaka(2013)]{matsushita2013low}
Ryosuke Matsushita and Toshiyuki Tanaka.
\newblock Low-rank matrix reconstruction and clustering via approximate message
  passing.
\newblock In \emph{Advances in Neural Information Processing Systems}, pages
  917--925, 2013.

\bibitem[Rangan and Fletcher(2012)]{rangan2012iterative}
Sundeep Rangan and Alyson~K Fletcher.
\newblock Iterative estimation of constrained rank-one matrices in noise.
\newblock In \emph{Information Theory Proceedings (ISIT), 2012 IEEE
  International Symposium on}, pages 1246--1250. IEEE, 2012.

\bibitem[Ricci-Tersenghi et~al.(2016)Ricci-Tersenghi, Javanmard, and
  Montanari]{ricci2016performance}
Federico Ricci-Tersenghi, Adel Javanmard, and Andrea Montanari.
\newblock Performance of a community detection algorithm based on semidefinite
  programming.
\newblock In \emph{Journal of Physics: Conference Series}, volume 699, page
  012015. IOP Publishing, 2016.

\bibitem[Saade et~al.(2014)Saade, Krzakala, and
  Zdeborov{\'a}]{saade2014spectral}
Alaa Saade, Florent Krzakala, and Lenka Zdeborov{\'a}.
\newblock Spectral clustering of graphs with the bethe hessian.
\newblock In \emph{Advances in Neural Information Processing Systems}, pages
  406--414, 2014.

\bibitem[Saade et~al.(2015)Saade, Krzakala, and Zdeborov{\'a}]{saade2015matrix}
Alaa Saade, Florent Krzakala, and Lenka Zdeborov{\'a}.
\newblock Matrix completion from fewer entries: Spectral detectability and rank
  estimation.
\newblock In \emph{Advances in Neural Information Processing Systems}, pages
  1261--1269, 2015.

\bibitem[Yedla et~al.(2014)Yedla, Jian, Nguyen, and Pfister]{yedla2014maxwell}
Arvind Yedla, Yung-Yih Jian, Phong~S Nguyen, and Henry~D Pfister.
\newblock A simple proof of maxwell saturation for coupled scalar recursions.
\newblock \emph{IEEE Trans. on Inf. Theory}, 60\penalty0 (11):\penalty0
  6943--6965, 2014.

\bibitem[Zou et~al.(2006)Zou, Hastie, and Tibshirani]{zou2006sparse}
Hui Zou, Trevor Hastie, and Robert Tibshirani.
\newblock Sparse principal component analysis.
\newblock \emph{Journal of computational and graphical statistics}, 15\penalty0
  (2):\penalty0 265--286, 2006.

\end{thebibliography}
\end{document}